\newcommand{\what}[1]{\widehat{#1} }
\newcommand{\Sym}{\op{Sym}}
\newcommand{\norm}[1]{\left\| #1 \right\|}
\newcommand{\Lap}{\triangle}
\newcommand{\vac}[1]{|#1 \rangle}
\newcommand{\mc}{\mathcal}
\newcommand{\ep}[1]{\left\langle #1 \right\rangle} 
\renewcommand{\i}{\mathrm{i}} 
\renewcommand{\b}{\mathrm{b}}
\renewcommand{\c}{\mathrm{c}}
\newcommand{\mf}{\mathfrak}
\newcommand{\op}{\operatorname}
\newcommand{\Oo}{\mathscr{O}}
\newcommand{\mbb}{\mathbb}
\newcommand{\msf}[1]{\mathsf{#1}}
\newcommand{\cI}{\mathcal{I}}
\newcommand{\ip}{\,\lrcorner\,}
\newcommand{\g}{\mathfrak{g}}
\newcommand{\cS}{\mathcal{S}}
\newcommand{\wt}[1]{\widetilde{#1}}
\newcommand{\F}{\mathbb{F}}
\newcommand{\br}[1]{\overline{#1}} 
\newcommand{\til}[1]{\widetilde{#1}} 
\newcommand{\lap}{\triangle}
\newcommand{\sT}{\mathsf{T}}
\newcommand{\sg}{\mathsf{g}}
\newcommand{\cK}{\mathcal{K}}
\renewcommand{\bz}{\bar z}
\newcommand{\CS}{\mathcal{CS}}
\newcommand{\vho}{\varrho}
\newcommand{\sa}{\mathsf{a}}
\renewcommand{\sb}{\mathsf{b}}
\renewcommand{\sc}{\mathsf{c}}
\newtheorem{theorem}{Theorem}
\title{Burns space and holography}
\author[a]{Kevin Costello,}
\author[b]{Natalie M. Paquette}
\author[c]{and Atul Sharma}
\affiliation[a]{Perimeter Institute for Theoretical Physics,\\ 51 Caroline Street, Waterloo, Ontario, Canada\vspace{0.1cm}}
\emailAdd{kcostello@perimeterinstitute.ca}
\affiliation[b]{Department of Physics,\\
University of Washington, Seattle, WA, 98195, USA\vspace{0.1cm}}
\emailAdd{npaquett@uw.edu}
\affiliation[c]{Center for the Fundamental Laws of Nature \& Black Hole Initiative,\\Harvard University, Cambridge, MA, 02138, USA \vspace{0.1cm}}
\emailAdd{atulsharma@fas.harvard.edu}
\begin{document} 

\abstract{We elaborate on various aspects of our top-down celestial holographic duality wherein the semiclassical bulk spacetime is a 4d asymptotically flat, self-dual K\"ahler geometry known as Burns space. The bulk theory includes an open string sector comprising a 4d WZW model and a closed string sector called ``Mabuchi gravity'' capturing fluctuations of the K{\"a}hler potential. Starting with the type I topological B-model on the twistor space of flat space, we obtain the twistor space of Burns space from the backreaction of a stack of $N$ coincident D1 branes, while the chiral algebra is obtained from (a twist of) the brane worldvolume theory. One striking consequence of this duality is that all loop-level scattering amplitudes of the theory on Burns space can be expressed as correlation functions of an explicit 2d chiral algebra.  

We also present additional large-$N$ checks, matching several 2 and 3-point amplitudes and their collinear expansions in the WZW$_4$ sector, and the mixed WZW$_4$-Mabuchi sector, of the bulk theory to the corresponding 2 and 3-point vacuum correlators and operator product expansions in the dual chiral algebra. Key features of the duality, along with our main results, are summarized in the introduction.}

\maketitle
\flushbottom

\section{Introduction \& Conclusion}
\label{sec:intro}

The holographic principle is by now widely believed to be intrinsic to any complete description of quantum gravity. Holography has been easiest to formulate in (asymptotically) anti-de Sitter space, with asymptotia at spatial infinity and observables equivalent to those, e.g. correlation functions, in a boundary conformal field theory. Most difficult, but also most relevant for the real world, is the case of de Sitter space, with its temporal infinities and the attendant subtleties in defining observables. Formulating a holographic principle in asymptotically flat spacetimes, with null asymptotia and a well-defined S-matrix, may now serve as a Goldilocks case for concrete formulations of the holographic principle, particularly in the interesting case when the gravitating spacetime is four dimensional, which falls outside of the purview of the high-dimensional matrix models understood so far (see, e.g., \cite{BFSS, IKKT}). 

There is a renewed surge of interest in holography for asymptotically flat spacetimes on the heel of refined studies of asymptotic symmetry algebras (see \cite{Strominger:2017zoo} for a review), most recently exhibiting 2d chiral algebraic structures at the classical level \cite{Guevara:2021abz, Strominger:2021mtt,Himwich:2021dau,Adamo:2021lrv}. Related suites of proposals collectively known as ``celestial holography'' have posited that the form of a putative holographic dual for such theories takes the form of an exotic, non-unitary conformal field theory (often called the celestial CFT or CCFT) supported on the celestial sphere at asymptotic null infinity. For reviews and results, see \cite{Raclariu:2021zjz, Pasterski:2021raf, Pasterski:2021rjz} and references therein. 

There are many challenges to interpreting a would-be CFT dual at null infinity, ranging from the puzzling (``How can we understand a CFT exhibiting an integral shift symmetry in its conformal weights, which is the avatar of 4d translation invariance in an eigenbasis of boost weights?'') to the fundamental (``How can 4d gravitational dynamics in flat spacetimes be encoded locally on the celestial sphere, particularly in such a way as to guarantee stringent CFT axioms?'') Various bottom-up approaches have been suggested to ameliorate these issues, for instance by leveraging judicious transforms on tempered distributions called shadow and light transforms in the latter case \cite{Pasterski:2017kqt, Atanasov:2021cje, Sharma:2021gcz}.

In this paper, we elaborate on an explicit, example top-down duality engineered from string theory, after the fashion of the AdS/CFT correspondence. This example was constructed with the aim of sharpening and addressing some of these questions. While we do not claim the strong features exhibited by our duality are necessary to describe or construct a CCFT (or other type of holographic dual for flat spacetime) outside of this toy example, we do stress that it highlights some \textit{sufficient} conditions for such a construction to exist. In particular, one is guaranteed to obtain a 2d chiral algebra with a local, associative OPE if its dual theory is integrable. Further, in 4d the condition of integrability geometrizes to a theory admitting a local holomorphic lift to twistor space. Recent advances in computing and cancelling gauge anomalies in 6d holomorphic theories \cite{Costello:2019jsy, Costello:2021bah} enable us to study a class of such 4d integrable theories at the quantum level by appealing to their twistorial uplift. Furthermore, the holomorphic theory on twistor space can, in favorable cases, be identified with sectors of a topological string theory, to which the methods of twisted holography \cite{CG, Costello:2020jbh}, especially homological algebra \cite{PW}, can be applied. Important earlier work, closely related to twisted holography, on the protected boundary chiral algebra subsector of 4d $\mathcal{N}=4$ and its bulk dual can be found in \cite{Bonetti:2016nma}.

When the 4d ``bulk'' theories are not integrable, one must contend with the rich non-analyticities of scattering amplitudes (outside special classes of scattering problems, such as the tree level MHV sector in gauge theory), in which the local twistorial perspective will be insufficient (and a non-local twistorial approach may not be the most expedient). Mapping such scattering problems to the celestial sphere results in non-localities, in violation of the standard axioms of the chiral algebras, which must be reckoned with in any CCFT.\footnote{Though see \cite{Bakalov_2003} for a proposed nonlocal mathematical extension of vertex algebras pre-dating celestial holography.} Relatedly, CCFT states will acquire non-vanishing anomalous scaling dimensions. Furthermore, in classically integrable 4d theories, there is a map between obstructions to the integrability of its quantization, and gauge anomalies in the local twistorial uplift. These non-vanishing anomalies in 6d encoding 4d non-integrability at the quantum level have recently been understood to lead to violations in associativity of (the quantum deformation of) celestial chiral algebras \cite{Costello:2022upu, Bittleston:2022jeq, Fernandez:2023abp,Zeng:2023qqp}. It will be crucial to address these issues when formulating the holographic dual of a general 4d bulk theory.

Still, we believe it is worth understanding in detail even the non-generic situation in which the 4d bulk theory is integrable, since there a ``celestial'' chiral algebra of asymptotic symmetries can be described explicitly without any modifications of the standard CFT axioms. Whether or how the above issues can be addressed in a general theory, admitting true black hole solutions and chaotic dynamics, is beyond the scope of this work. We will construct such a well-defined holographic duality from the top-down, in type I string theory. In the remainder of this introduction, we will motivate and sketch this construction.

It is perhaps not surprising that the setting for a holographic duality for integrable theories lies in the context of the topological string (see, e.g., \cite{Aganagic:2003qj} for relevant earlier work) or twisted holography, and it is precisely in the setting of twisted theories, which are either holomorphic or topological theories, where the technique of Koszul duality has the most teeth: in the twisted world, the symmetries of a system enhance dramatically to infinite-dimensional algebras, providing constraints powerful enough to fix some or all of the resulting dynamics. In the developing subject of twisted holography, it is expected that one may even be able to use this enormous symmetry to compute or fix observables to all orders in a $1/N$ expansion. The precise duality we study in this paper arises as follows.

We study the topological B-model of the type I string, compactified on twistor space.\footnote{Defining the topological B-model on a non-Calabi Yau manifold requires a slight modification of the usual topological string construction, which we describe in the main text.} As is well-known, the vacuum state of the type I string includes both open and closed string sectors described in the topological string framework by a holomorphic Chern-Simons theory and a Kodaira-Spencer theory of gravity, respectively. When these sectors are coupled, this theory has no gauge anomaly \cite{Costello:2019jsy} if and only if the gauge group is $\SO(8)$. Consequently, it reduces to an integrable theory in 4d spacetime. Roughly, the holomorphic Chern-Simons theory reduces to a 4d WZW model \cite{Donaldson:1985zz, Losev:1995cr} with target manifold the group manifold $\SO(8)$, while the Kodaira-Spencer theory reduces to a theory, described by the Mabuchi functional \cite{mabuchi1986k}, of dynamical fluctuations of the K{\"a}hler potential governing the 4d K{\"a}hler metric. We stress that in 4d flat spacetime, the $\SO(8)$ WZW$_4$ model has vanishing scattering amplitudes, as befits an integrable theory: \textit{WZW$_4$ theories for general $G$ are not integrable at the quantum level. Equivalently, their local holomorphic uplifts to twistor space cannot be consistently quantized}. Again, while the 4d models for other $G$ are interesting and perfectly well-behaved quantum field theories, we do not know how to define a dual 2d chiral algebra for these theories. Indeed, we expect naive attempts to define a chiral algebra of asymptotic symmetries for these theories will result in failures of associativity when quantum effects are incorporated.

To get a full-fledged holographic duality, we add a large number of D-branes in 6d, which backreact on the twistor space of flat space and so, in turn, backreact on 4d flat spacetime. In particular, we add $N \gg 1$ Euclidean B-model D1-branes wrapping the zero section $\mathbb{CP}^1$ of the twistor fibration. This $\mathbb{CP}^1$ is precisely the celestial sphere from the 4d point of view. The worldvolume theory of these branes in the B-model is a 2d chiral algebra, arising from dimensional reduction of holomorphic Chern-Simons theory \cite{Witten:1992fb, CG}, supported on the celestial sphere, as desired in a celestial holographic duality. Furthermore, we explicitly compute the backreaction in the 6d Kodaira-Spencer theory, amounting to a deformation of the background complex structure, and find that we obtain the so-called twistor space of Burns space \cite{Lebrun91explicitself-dual}. Reducing to 4d, the deformed spacetime metric is the asymptotically flat (Euclidean) Burns metric \cite{burns1986twistors}. Therefore, the 4d ``bulk'' theory in our holographic duality can be described semiclassically as a coupled $\SO(8)$ WZW$_4$  model + Mabuchi gravity on Burns space. Crucially, our 4d theories acquire nontrivial scattering amplitudes (i.e. nonvanishing scattering at 4 points and higher) in this curved background, which can be matched to chiral correlators. In particular, we can explicitly match collinear limits of this scattering with the OPE of the dual chiral algebra per standard celestial reasoning. We do so in the $N \rightarrow \infty$ limit in this paper, though we expect the duality holds at finite $N$, and we plan to study this in future work.

Our construction employs the twistor space associated to analytically continued spacetime, in which one may work without specializing to a particular spacetime signature (or only doing so at a later stage). We remark that our full nonperturbative duality\footnote{Perturbatively, one may continue to Lorentzian signature and compute scattering amplitudes in the Burns space metric \cite{Hawking:1979hw}, and our duality will still hold.} is perhaps most naturally formulated in Euclidean signature spacetime. One way to see this is to notice that, as is familiar from the self-dual sectors of gauge theory and gravity, evaluating the action on nontrivial saddles in Lorentzian signature requires the introduction of complex fields. However, in any signature, the theories under consideration in this paper are non-unitary. While a CCFT dual may be expected to be non-unitary on general grounds, a non-unitary bulk spacetime theory is a priori less desirable; indeed, we view it as a serious shortcoming of our stringent construction, closely related to the existence of a twistorial uplift/associative OPE, that must be overcome in other top-down constructions of flat space holography.

The WZW$_4$ model, plus Mabuchi gravity, is an interacting and non-renormalizable system. One surprising consequence of this conjectural  holographic duality is that it implies a non-perturbative (finite-$N$) isomorphism between scattering in  this theory on a curved spacetime (Burns space) and correlation functions in a 2d chiral algebra!  That such a strong dynamical statement holds in an asymptotically flat spacetime is a testament to the underlying simplicity of the 4d theory.

Many questions remain about this duality, and putative dualities for generic theories in asymptotically flat spacetimes. In future work, we aim to study, for instance, non-planar corrections to this duality and more aspects of its embedding into full 10d string theory. A smaller, immediate puzzle is whether or not the other integrable/twistorial theories studied in \cite{Costello:2021bah,Costello:2022upu,Costello:2022wso,Bittleston:2022nfr} admit 10d uplifts to string theory, and hence their own twisted holographic correspondences. Similarly, seeing as Burns space is only scalar-flat but not Ricci-flat, the study of celestial chiral algebras on Ricci-flat self-dual backgrounds like ALE spaces also promises to be a rich avenue for exploration \cite{Bittleston:2023bzp}. 

The most pressing question in this program is the incorporation of complete gravitational dynamics, including all metric degrees of freedom, into the 4d spacetime. Our holographic duality enjoys a standard gravitational description in 6d, with fluctuating metric components corresponding to complex structure deformations and gauged diffeomorphisms. The passage from 6d to 4d involves a gauge fixing of these diffeomorphisms, leaving us with a Liouville-like closed string sector, as we describe in more detail in the main text, and in particular not a covariant theory. Again, this is perhaps unsurprising due to the close connections between integrable or exactly-solvable models in various dimensions, but it deprives us of the richness of gravitational physics in this toy model. We expect that incorporating genuine gravitational dynamics in a 4d spacetime will require a potentially quite dramatic modification of standard CFT axioms, or perhaps a holographic formulation of a different type such as an adaptation of matrix theory to lower dimensions. 

\subsection{Summary of the paper}
This work is a companion to our Letter \cite{Costello:2022jpg}. In the remainder of this paper we expand on this duality in detail.    
\begin{enumerate}
	\item In section \ref{sec:bmodel} we introduce the topological B-model on twistor space.   The topological B-model is typically studied on Calabi-Yau manifolds and, with a careful treatment of boundary conditions, can be generalized to complex $3$-folds equipped with a meromorphic volume form.  A result of Pontecorvo \cite{pontecorvo} shows that twistor spaces of scalar-flat K\"ahler $4$-manifolds are equipped with a meromorphic $3$-form, and so are the natural twistor spaces on which to place the topological B-model.  We show that the closed string sector of the topological B-model gives rise to Mabuchi gravity: a theory of a dynamical scalar field interpreted as the K\"ahler potential.  The equations of motion of Mabuchi gravity imply that the associated K\"ahler metric has vanishing scalar curvature. We show, following \cite{Bittleston:2020hfv,Penna:2020uky}, that the open string sector is the WZW$_4$ model \cite{Donaldson:1985zz, Nair:1991bf, Losev:1995cr}. Our analysis focuses on the type I topological B-model \cite{Costello:2019jsy}, whose open string sector consists of holomorphic Chern-Simons for $\SO(8)$. 
	\item In section \ref{sec:burns} we consider the backreaction of a stack of D1 branes in the topological B-model on the twistor space of flat $\R^4$.  The analysis follows that in \cite{CG}.  We find that the backreacted twistor space contains $\SL_2(\C)$, and is the twistor space of the asymptotically flat Burns metric \cite{burns1986twistors,Lebrun91explicitself-dual}, a certain scalar-flat K\"ahler metric on the complex manifold $\Oo(-1) \to \CP^1$.   The backreaction changes the topology of the four-dimensional spacetime by introducing a non-contractible $S^2$. This was first anticipated in \cite{Hartnoll:2004rv} and is somewhat similar to the Gopakumar-Vafa \cite{Gopakumar:1998ki} geometric transition.

	\item In section \ref{sec:dual}, we analyze the holographic dual chiral algebra.  This is the theory living on a stack of $N$ D1 branes wrapping a curve in the twistor space of flat $\R^4$. The algebra is the BRST reduction of a collection of free symplectic bosons by $\op{Sp}(N)$, and can also be realized as the chiral algebra associated by \cite{Beem:2013sza} to a family of $\mathcal{N}=2$ SCFTs in dimension $4$.
 
    An important subtlety in our analysis is that this chiral algebra has point defects associated to the locus where the meromorphic volume form on twistor space has poles. Conformal blocks of the chiral algebra in the presence of these defects are shown to be in one-to-one correspondence with the Hilbert space of the bulk theory on Burns space.
    
	\item In section \ref{sec:bdry}, we describe the relationship between our Burns space holography and familiar $\AdS_3 \times S^3$ holography. The twistor space of Burns space contains $\SL_2(\C) \simeq \AdS_3 \times S^3$.  The holographic dual chiral algebra lives on the boundary of $\AdS_3$.  We identify where this boundary lives in twistor space, and we show how  the twistor lines of Burns space  are copies of $S^2$ which foliate $\AdS_3$.
	\item In section \ref{sec:twistor_dictionary} we analyze the holographic dictionary on twistor space in more detail. We explicitly identify the states on twistor space which match the large-$N$ states in our chiral algebra, and we show that the defects in our chiral algebra can be seen from the holographic dual theory on twistor space. 

	In this section we prove one of our main results: the tree level scattering amplitudes of WZW$_4$ plus Mabuchi gravity on Burns space have a surprising symmetry enhancement, from the $\SU(2) \times \U(1)$ isometry group of the Burns geometry to a larger $\SU(2) \times \SU(2)$ group, and indeed to an infinite-dimensional algebra.  This allows us to identify these amplitudes with vacuum correlators in the chiral algebra, in the absence of defects.  

	We conjecture  (with quite strong evidence) that this result continues to hold at loop level.  This conjecture, together with the holographic correspondence and the Penrose transform, implies something rather remarkable: \emph{all} scattering amplitudes of WZW$_4$ plus Mabuchi gravity on Burns space can be expressed as the correlators of a rather simple chiral algebra.    Since the chiral algebra is relatively simple, this implies that these scattering amplitudes are all in principle computable.

		At tree level, the amplitudes for WZW$_4$ coincide with the all-plus amplitudes for Yang-Mills theory. This result implies that these Yang-Mills amplitudes on Burns space match planar correlators in the chiral algebra.

	\item In sections \ref{sec:dict} and \ref{sec:ops}, we initiate the analysis of the holographic duality directly on Burns space, as opposed to its twistor space.  We identify the scattering states on Burns space holographically dual to the single-trace operators in the large $N$ chiral algebra.  For WZW$_4$ states, these are certain explicit solutions of the Laplace equation.  For states of Mabuchi gravity, they solve instead a fourth-order equation involving the Paneitz operator.  We phrase the conjectured duality as a match between scattering amplitudes of explicit states on Burns space and chiral algebra correlators. 

	\item In section \ref{sec:tests}, we turn to tests of the duality directly on Burns space, in the planar limit.    We compute the tree level two-point function of states in the WZW$_4$ sector by the standard holographic method \cite{Witten:1998qj}. This had already been computed by Hawking, Page and Pope \cite{Hawking:1979hw} by a different method, yielding the same result.\footnote{Hawking et al.\ were computing the two-point scattering amplitude of a conformally-coupled scalar on $\CP^2$, but since this manifold is conformally equivalent to Burns space, and the scalar curvature of Burns space vanishes, the result is the same.}   The result is a certain Bessel function, whose series expansion matches perfectly the two-point function of states in the chiral algebra.   

		We also compute certain terms in the collinear limits of 3-point WZW$_4$ amplitudes, and find that they match exactly with the OPE coefficients of the dual chiral algebra.  A similar calculation is performed for the OPE and 3-point amplitude involving two WZW$_4$ states and one ``graviton'' (K{\"a}hler scalar). 

\end{enumerate}


\section{Topological strings on twistor space}
\label{sec:bmodel}

In this section, we provide a concise review of B-model topological strings on twistor spaces of self-dual spacetimes. The B-model can only be studied on Calabi-Yau manifolds, but twistor spaces are not Calabi-Yau. Nevertheless, the twistor space of a spacetime with a self-dual \emph{K\"ahler} metric comes equipped with a meromorphic 3-form. Excising the polar divisor of this 3-form from the twistor space yields a non-compact Calabi-Yau 3-fold.  The topological string makes sense with defects (or holomorphic boundary conditions) along the divisor.   We will study a type I version of the open+closed topological B-model on two particular examples of such 3-folds: the twistor spaces of flat space and Burns space.

\subsection{Twistor geometry}
\label{sec:twistor}

A metric on a 4-manifold is termed \emph{self-dual} (SD) if its Weyl tensor is self-dual. Self-dual metrics span integrable subsectors of Einstein as well as conformal gravity \cite{Mason:1991rf}. Integrability in four dimensions is often 
intimately linked to the existence of a twistor space: a complex 3-fold that encodes 4-dimensional physics in terms of complex analytic geometry. As a smooth manifold, the twistor space $\pi : Z\to M$ of a smooth, oriented 4-manifold $M$ with Riemannian metric $g$ is given by the bundle of pointwise metric- and orientation-compatible almost complex structures. It has $\SO(4)/\U(2)\simeq\CP^1$ fibers and is diffeomorphic to the unit sphere bundle $S(\Lambda^-)$ of the rank 3 bundle of anti-self-dual (ASD) 2-forms $\Lambda^-\to M$. Alternatively, it is diffeomorphic to the projective 2-spinor bundle of $M$.

$Z$ can be equipped with the Atiyah-Hitchin-Singer almost complex structure. To construct this, one splits the tangent space of every point $p\in Z$ into vertical and horizontal components using the Levi-Civita connection of $g$ induced on $S(\Lambda^-)$. The almost complex structure at $p$ is taken to be the direct sum of the standard complex structure on $\CP^1$ with the almost complex structure on the tangent space $T_{\pi(p)}M\otimes\C$ parametrized by $p$. It is well-known that this becomes an integrable almost complex structure on $Z$ if and only if $g$ is self-dual \cite{Penrose:1976js, Atiyah:1978wi}; see also \cite{Woodhouse:1985id} for a review. The $\CP^1$ fiber over a point $x\in M$ is known as the \emph{twistor line} corresponding to $x$ and will be denoted $L_x$. When this almost complex structure is integrable, these twistor lines become holomorphic rational curves in $Z$ cut out by the so-called \emph{incidence relations}. They have normal bundle $\CO(1)\oplus\CO(1)$. Moreover, the antipodal map on each $L_x$ gives rise to an antiholomorphic involution without fixed points, giving $Z$ a real structure.

When $g$ is self-dual as well as Ricci-flat, the twistor space $Z$ is called a \emph{nonlinear graviton} \cite{Penrose:1976js}. In this case, $(M,g)$ becomes a hyperk\"ahler manifold. Such metrics describe self-dual Einstein gravity without a cosmological constant. Instead, in the rest of this work we will mainly be interested in the case when $g$ is not necessarily Ricci-flat but is only required to be scalar-flat and K\"ahler.

If $g$ is a K\"ahler metric on a 4-manifold $M$, the results of \cite{derd:1983,lebrun1986topology} show that it is self-dual\footnote{With respect to the orientation in which the K\"ahler form is anti-self-dual.} if and only if it is scalar-flat, i.e., has zero Ricci scalar $R=0$. So we will usually refer to such self-dual metrics as \emph{scalar-flat K\"ahler}. As observed by Hitchin in \cite{hitchinkahler}, the twistor space $Z$ of any SD 4-manifold $(M,g)$ is spin, i.e., its canonical bundle $K_Z$ always admits a square root $K_Z^{1/2}$. If $M$ itself is spin, then $K_Z$ also admits a fourth root $K_Z^{1/4}$. Furthermore, a theorem by Pontecorvo \cite{pontecorvo} shows that if $g$ is self-dual as well as K\"ahler, then it gives rise to a globally holomorphic section $\check{\omega}$ of $K_Z^{-1/2}$. This can be inverted to produce a meromorphic section $\Omega = \check{\omega}^{-2}$ of $K_Z$. 

Another standard fact is that the restriction of $K_Z$ to every twistor line is given by $\CO(-4)\to\CP^1$. As a result, $K_Z^{-1/2}|_{L_x}\simeq\CO(2)$, so the section $\check{\omega}|_{L_x}$ has two zeroes on every twistor line $L_x$. These correspond to a pair of almost complex structures on the tangent space $T_xM\otimes\C$. In fact, Pontecorvo constructs the section $\check{\omega}$ in such a way that one of these zeroes parametrizes precisely the integrable almost complex structure $J$ with respect to which $g$ is K\"ahler, and the other zero corresponds to the conjugate almost complex structure $-J$. As we vary $L_x$, the zeroes of $\check{\omega}$ sweep out a quadric in $Z$ that acts as the polar divisor of $\Omega$. The K\"ahler form $\omega$ on $M$ is recovered by performing a contour integral
\be\label{omegacirc}
\omega(x) = \frac{1}{2\pi}\oint_{\Gamma\subset L_x}\Omega
\ee
where the contour $\Gamma$ separates the zeroes of $\check{\omega}|_{L_x}$ and circles the zero at $(x,J|_x)$ clockwise. Our orientation and normalization conventions will be such that the resulting K\"ahler form is a real ASD 2-form on $M$ and satisfies $\omega^2 = -2\,\vol_g$. 

For the reader's convenience, we provide a more extensive review of this construction using local coordinates on twistor space in appendix \ref{app:twistor}. See also \cite{Hartnoll:2004rv} for a comparable review.

\paragraph{Example: Flat space.} The paradigmatic example of the twistor correspondence is the twistor space of flat space $\R^4$ with its Euclidean metric. This helps us in setting up our local coordinates and spinor conventions. We will mainly follow the conventions of \cite{Costello:2021bah, Sharma:2022arl}.

Let $x^\mu$ denote the standard coordinates on $\R^4$. We can define double null coordinates $x^{\al\dal}$, $\al=1,2,\dal=\dot1,\dot2$, by setting
\be
x^{\al\dal} = \frac{1}{\sqrt2}\begin{pmatrix}x^0+\im x^3&&x^2+\im x^1\\-x^2+\im x^1&&x^0-\im x^3\end{pmatrix}\,.
\ee
The indices $\al,\dal$ are spinor indices of $\SU(2)$ of opposite chirality. In terms of these, the Euclidean metric can be expressed as
\be\label{flatmetric}
\d s^2 = \delta_{\mu\nu}\d x^\mu\d x^\nu =  \eps_{\al\beta}\,\eps_{\dal\dot\beta}\,\d x^{\al\dal}\,\d x^{\beta\dot\beta}\,.
\ee
Here, $\eps_{\al\beta},\eps_{\dal\dot\beta}$ are $2\times2$ Levi-Civita symbols that we will ubiquitously use to raise, lower and contract spinor indices with the conventions
\be\label{conventions}
\lambda^\al = \eps^{\al\beta}\lambda_\beta\,,\qquad\lambda_\beta = \lambda^\al\eps_{\al\beta}\,,\qquad\mu^{\dal} = \eps^{\dal\dot\beta}\mu_{\dot\beta}\,,\qquad\mu_{\dot\beta} = \mu^{\dal}\eps_{\dal\dot\beta}
\ee
etc. Spinor contractions will be abbreviated using square and angle brackets
\be
\la\lambda\,\kappa\ra \vcentcolon= \lambda^\al\kappa_\al\,,\qquad [\mu\,\nu]\vcentcolon=\mu^{\dal}\nu_{\dal}
\ee
that are invariant under $\SL_2(\C)$ rotations of the dotted or undotted indices.

We will mostly work with the following complex coordinates built out of $x^{\al\dal}$,
\be\label{utox}
u^{\dal} \vcentcolon= x^{1\dal}\,,\qquad \hat u^{\dal} \vcentcolon= x^{2\dal} = (-\overline{u^{\dot2}},\overline{u^{\dot1}})\equiv(-\bar u^2,\bar u^1)\,.
\ee
In the last equality, we are using the convention that the complex conjugate of a dotted spinor is an undotted spinor. In contrast, the spinor $\hat u^{\dal}$ built from the complex conjugates transforms in the same $\SU(2)$ representation as $u^{\dal}$ due to the property $\widehat{Lu} = L\hat u$ valid for matrices $L\in\SU(2)$. Because of this, the map $u^{\dal}\mapsto\hat u^{\dal}$ is known as \emph{quaternionic conjugation}. In these coordinates, the Euclidean metric reads
\be\label{euclid}
\d s^2 = 2\,\|\d u\|^2 = 2\left(|\d u^{\dot1}|^2 + |\d u^{\dot2}|^2\right)\,.
\ee
Here $\|u-v\|^2 = |u^{\dot1}-v^{\dot1}|^2+|u^{\dot2}-v^{\dot2}|^2$ is the Euclidean norm. We note the useful relations
\be
[\hat u\,u] = \|u\|^2\,,\qquad x^\mu x_\mu=2\|u\|^2\,.
\ee
The factor of $2$ here is a convenient convention that is common in twistor theory \cite{Penrose:1984uia,Penrose:1986uia}. The associated volume form is
\be\label{orientation}
\d^4x\equiv\d x^0\wedge\d x^1\wedge\d x^2\wedge\d x^3 = \d u^{\dot1}\wedge\d\bar u^1\wedge\d u^{\dot2}\wedge\d\bar u^2
\ee
which defines our orientation convention.

\medskip

The twistor space of $\R^4$ is traditionally denoted $\PT$. It is the total space of a rank 2 holomorphic vector bundle over the Riemann sphere,
\be
\PT = \CO(1)\oplus\CO(1)\to\CP^1\,.
\ee
Let $z$ be an affine coordinate along its $\CP^1$ base, and let $v^{\dal}=(v^{\dot1},v^{\dot2})$ denote coordinates along the $\C^2$ fibers. Under $z\mapsto1/z$, the fiber coordinates transform as
\be\label{zto1/z}
z\mapsto\frac{1}{z}\implies v^{\dal}\mapsto\frac{v^{\dal}}{z}\,.
\ee
This holomorphic transition function, defined on the annulus $z\neq0,\infty$, endows $\PT$ with the structure of a complex 3-fold. Equivalently, it is also standard to view $\PT$ as an open subset of $\CP^3$ given by $\PT=\CP^3-\CP^1$. As an application of this second viewpoint, one defines line bundles $\CO(n)\to\PT$ as restrictions of the standard line bundles $\CO(n)\to\CP^3$ to this subset, or equivalently as pull-backs from $\CP^1$.

Every point $x\in\R^4$ corresponds to a holomorphic global section of $\CO(1)\oplus\CO(1)\to\CP^1$,
\be
L_x\;:\;v^{\dal} = u^{\dal}+z\hat u^{\dal}\,.
\ee
These are taken to be the twistor lines. The line bundles $\CO(n)$ restrict to the standard line bundles $\CO(n)\to\CP^1$ on each $L_x$. The antipodal map of $\CP^1$ extends to $\PT$ as a fixed-point-free antiholomorphic involution
\be
(z,v^{\dal})\mapsto\left(-\frac{1}{\bar z},-\frac{\hat v^{\dal}}{\bar z}\right)\,,\qquad \hat v^{\dal} \vcentcolon=(-\bar v^2,\bar v^1)\,.
\ee
The twistor lines $L_x$ are invariant under this map, i.e., if $(z,v^{\dal})$ lies on $L_x$, then so does its antipodal point. Since knowing two points on a (projective) line completely determines the line, we obtain a diffeomorphism $\R^4\times\CP^1\to\PT$ given by $(x,z)\mapsto(v^{\dal},z) = (u^{\dal}+z\hat u^{\dal},z)$. Moreover, this identifies the twistor lines $L_x$ as the fibers of the projection $\R^4\times\CP^1\to\R^4$.

\medskip

The Euclidean metric \eqref{euclid} is K\"ahler in the complex coordinates $u^{\dal}$. It has K\"ahler form
\be\label{omegaflat}
\omega = \im\,\eps_{\dal\dot\beta}\,\d u^{\dal}\wedge\d\hat u^{\dot\beta} = \im\,\bigl(\d u^{\dot1}\wedge\d\bar u^1 + \d u^{\dot2}\wedge\d\bar u^2\bigr)\,.
\ee
This is an ASD 2-form and satisfies $\frac12\,\omega^2 = -\d^4x$. It is instructive to obtain this K\"ahler form from twistor space via Pontecorvo's theorem. The canonical bundle of $\PT$ is $\Oo(-4)$. To get the K\"ahler form $\omega $, we take $\check{\omega} = z$ as our choice of global section of $K_{\PT}^{-1/2} = \CO(2)$. The associated meromorphic 3-form is given by
\be
\Omega  = \check{\omega}^{-2} = \frac{\d z\wedge\d v^{\dot1}\wedge\d v^{\dot2}}{z^2}\,.
\ee
Using the transition function \eqref{zto1/z}, we see that this has poles of order $2$ at $z=0$ and $z=\infty$ each. 

We can use the diffeomorphism $v^{\dal}=u^{\dal}+z\hat u^{\dal}$ to pull back $\Omega$ to $\R^4\times\CP^1$,
\be
\Omega  = \frac{\d z\wedge(\d u^{\dot1}-z\,\d\bar u^2)\wedge(\d u^{\dot2}+z\,\d\bar u^1)}{z^2}\,.
\ee
Integrating it in $z$ along a contour that surrounds the pole at $z=0$ picks out the K\"ahler form of flat space:
\be
\frac{1}{2\pi}\oint_{|z| = 1}\Omega = \frac{1}{2\pi\im}\oint_{|z| = 1}\frac{\d z}{z}\wedge\omega = \omega \,.
\ee
In future sections, we will come across similar calculations in the context of more interesting scalar-flat K\"ahler manifolds.


\subsection{Open strings and the WZW$_4$ model}
\label{sec:open}
As we have seen, twistor spaces of scalar-flat K\"ahler manifolds are complex $3$-folds equipped with meromorphic volume forms.  The topological B-model can be studied on any complex $3$-fold with a holomorphic volume form, and careful choices of boundary conditions \cite{CG, Costello:2021bah} allow one to define the topological B-model on $3$-folds equipped with a meromorphic volume form as well.  This suggests a general correspondence between topological string theory and field theories on scalar-flat K\"ahler manifolds. In this subsection and the next, we will describe the $4$-dimensional theories corresponding to the open and closed topological B-model.  

We first consider the open string sector. Let $Z$ be a complex 3-fold equipped with a globally holomorphic 3-form $\Omega$. Holomorphic Chern-Simons theory on $Z$ has the classical action
\be\label{hcs}
S_\text{hCS}[\cA] = \frac{1}{(2\pi\im)^3}\int_Z\Omega\wedge\tr\left(\frac12\,\cA\wedge\dbar\cA + \frac{1}{3}\,\cA\wedge \cA\wedge \cA\right)\,.
\ee
It is a gauge theory governing the integrability of smooth partial connections $\cA\in\Omega^{0,1}(Z,\g)$ on complex vector bundles $E\to Z$. Here, $\dbar$ is the antiholomorphic exterior derivative on $Z$, and $\g$ is the Lie algebra of the gauge group. When $Z$ is Calabi-Yau, holomorphic Chern-Simons arises as the string field theory of open strings in the topological B-model with target space $Z$ \cite{Witten:1992fb}. We will focus on a type I analogue of the B-model introduced in \cite{Costello:2019jsy}. Its gauge group $G$ is determined by a stack of space-filling D5 branes. The gauge group is arbitrary at the classical level, but will be chosen to be $\SO(8)$ for the open string sector to couple to the closed string sector in an anomaly-free manner at the quantum level.

We will be interested in the case when $Z$ is the twistor space of some SD 4-manifold $M$. This case was studied for flat twistor space $Z=\PT$ in \cite{Costello_talk, Bittleston:2020hfv,Penna:2020uky,Costello:2021bah}, but their analysis generalizes straightforwardly to curved twistor spaces.\footnote{One generally needs a K\"ahler metric to write the worldsheet theory of the $(2,2)$ $\sigma$-model, although the B-model topological twist does not require this. It is also standard to use a K\"ahler metric to impose harmonic gauges in the string field actions, although again this is not strictly necessary. For a compact self-dual 4-manifold $M$, twistor space is K\"ahler if and only if $M=S^4$ or $\CP^2$ \cite{hitchinkahler}. More generally, it is bimeromorphic to K\"ahler if and only if it is Moishezon \cite{lebrun1992twistors,lebrun1992bi,campana}. We will always work on spacetimes (eg.\ $\C^2$ or $\CO(-1)$) obtained from removing points from compact 4-manifolds (viz.\ $S^4$ or $\CP^2$) possessing K\"ahler twistor spaces. This corresponds to removing projective lines from the twistor spaces, and the K\"ahler structure restricts naturally to the resulting geometry.} 
The canonical bundle of a general twistor space $Z$ is nontrivial. However, as we reviewed in the previous section, if the self-dual metric on $M$ is also K\"ahler, then $Z$ does admit a meromorphic 3-form $\Omega$ with double poles on a quadric $\check{\omega}=0$. So we can still study holomorphic Chern-Simons on $Z-\{\check{\omega}=0\}$ by taking this $\Omega$ as the 3-form in the action \eqref{hcs}. This is equivalent to studying \eqref{hcs} on $Z$ but with ``boundary conditions'' on the partial connection $\cA$ that ensure a well-defined variational principle. In order for the holomorphic Chern-Simons Lagrangian to be free of poles, we will demand the boundary conditions
\be\label{abdry}
\cA \in \Omega^{0,1}(Z,\CO(-D)\otimes\End\,E)\,,
\ee
that is, $\cA$ vanishes holomorphically to first order on the divisor $D = \{\check{\omega}=0\}$.  Geometrically, this means that the holomorphic bundle built from $\cA$ is fixed on the divisor $D$. If we assume that our background bundle $E$ is trivialized on $D$, then the bundles obtained from varying $E$ are also trivialized.

We can write $\cA=\check{\omega}\,\varphi$ for some smooth $\varphi\in\Omega^{0,1}(Z,K_Z^{1/2}\otimes\End\,E)$. Using $\Omega=\check{\omega}^{-2}$, the Lagrangian in terms of $\varphi$ is found to be (up to normalization)
\be
\frac12\int  \op{tr} \varphi \dbar \varphi + \frac{1}{3} \int \check{\omega} \op{tr} \varphi^3 \,.
\ee
The integrands in both terms in this equation are $(0,3)$ forms valued in the canonical bundle $K_Z$. This is because $\varphi$ is twisted by $K_Z^{1/2}$ and $\check{\omega}$ is twisted by $K_Z^{-1/2}$.  It is important to note that the kinetic term in this exression is non-degenerate, and that the interaction term tends to zero on Pontecorvo's quadric where $\check{\omega}$ vanishes.

Smooth automorphisms of $E$ induce gauge transformations
\be\label{asym}
\cA \mapsto h^{-1}\cA h + h^{-1}\dbar h\,,\qquad h\in\Gamma(\text{Aut}\,E)\,.
\ee
Holomorphic Chern-Simons on $Z$ as defined above is invariant under those gauge transformations for which the transformed field continues to satisfy the boundary condition \eqref{abdry}. We will also restrict attention to vector bundles $E\to Z$ whose restrictions $E|_{L_x}$ to every twistor line $L_x$ are trivial.\footnote{The moduli space of $\SO(8)$ bundles on $\CP^1$ has two components classified by $\Z_2$, the fundamental group of $\SO(8)$. So bundles on $\PT$ that restrict to non-trivial bundles on $\CP^1$ may indeed occur. Unfortunately, their spacetime interpretation is far from clear. A preliminary line of attack for Penrose transforming bundles that are non-trivial on a finite number of twistor lines is described in \cite{Sparling:1990}, and such bundles have in fact already started to occur in celestial holography in the guise of twistorial monopoles \cite{Garner:2023izn}! Alternatively, one can use $\Spin(8)$ as the gauge group (which a priori isn't ruled out by chiral anomaly cancellation), and all stable $\Spin(8)$ bundles on $\CP^1$ are trivial.} 

Each twistor line carries two canonical points we call $0$ and $\infty$, where Pontecorvo's quadric intersects $L_x$.  The boundary conditions for our gauge field mean that the bundle is trivialized on the quadric, and so at the points $0,\infty \in L_x$.  Because the bundle is trivial on $L_x$, the trivialization at $0$ extends uniquely to a trivialization on all of $L_x$ (and similarly for the trivialization at $\infty$).  The two trivalizations differ by a point $\sg(x) \in G$. This means that we have associated a space-time field
\be
\sg \in \text{Maps}(M,G)
\ee
to a field configuration on twistor space.  

An equivalent way to think of $\sg$ is that it is the value of an open holomorphic Wilson line wrapping $L_x$.  An open Wilson line is gauge invariant because the gauge field vanishes at $0$ and $\infty$.

Explicitly, we can find a frame $H\in\text{Maps}(Z,G)$ for sections of $E$ that trivializes $\cA$ when restricted to each $L_x$, i.e.,
\be\label{aLxH}
\cA|_{L_x} = H^{-1}\dbar|_{L_x}H\qquad\forall\;x\in M\,.
\ee
$H$ can be gauge fixed to equal the identity matrix in a neighborhood of $\infty$ on each $L_x$, and equal $\sg(x)$ on a neighborhood of $0$.
One can decompose $\cA$ into horizontal and vertical $(0,1)$-forms with respect to the Levi-Civita connection induced on $Z$. The horizontal part of $\cA$ enters quadratically in the action and can be integrated out by imposing its equation of motion. As reviewed in appendix \ref{app:twac}, in the frame \eqref{aLxH} we can partially solve for $\cA$ using this equation of motion. This yields
\be\label{atoH}
\cA = H^{-1}(\dbar+\pi_{0,1}A)H\,,
\ee
written in terms of a $\g$-valued spacetime 1-form $A$ (trivially pulled back to $Z$ using the projection $\pi:Z\to M$). The notation $\pi_{p,q}\al$ denotes the $(p,q)$-part of any $(p+q)$-form $\al$ on $Z$. 

The new field $A$ acts as a spacetime gauge field; alternatively it may be thought of as the zero mode of $\cA$ under KK reduction of holomorphic Chern-Simons along the $\CP^1$ fibers of $Z\to M$. Higher KK modes drop out and end up never contributing to the reduction. Using the boundary conditions on $\cA$ and $H$ mentioned above, we obtain
\be
A = -\dbar\sg\,\sg^{-1}\,,
\ee
fixing $A$ in terms of a single ``positive helicity'' degree of freedom $\sg$. Here and in what follows, whenever $\dbar$ acts on a spacetime object, it represents the dbar operator on $M$ in the complex structure associated to its K\"ahler metric $g$.

\medskip

The compactification of \eqref{hcs} along the fibers of $Z\to M$ is also briefly reviewed in appendix \ref{app:twac}. It is performed by plugging the solution \eqref{atoH} for $\cA$ into \eqref{hcs} and integrating fiberwise. The dynamical dependence of \eqref{atoH} is purely along $M$ up to factors of the frame $H$; and even the frame is pure gauge except at the two poles of $\Omega$ where it can be completely fixed in terms of $\sg$. As a result, the integral over the fibers can be performed explicitly without generating any KK modes. This results in a 4-dimensional Wess-Zumino-Witten (WZW$_4$) model on the scalar-flat K\"ahler manifold $M$,
\be\label{Swzw}
S_{\text{WZW}_4}[\sg, \omega] = -\frac{\im}{8\pi^2}\int_M\omega\wedge\tr\!\left(\sg^{-1}\p\sg\wedge\sg^{-1}\dbar\sg\right) + \frac{\im}{24\pi^2}\int_{M\times[0,1]}\omega\wedge\tr\!\left(\tilde\sg^{-1}\d\tilde\sg\right)^3\,.
\ee
This action is a functional of the dynamical field $\sg$ as well as the choice of background K\"ahler form $\omega$ on $M$. Its first term is a standard kinetic term, with $\p$ and $\dbar$ denoting holomorphic and antiholomorphic exterior derivatives on $M$. The second is a 5-dimensional Wess-Zumino term. $\tilde\sg$ is an extension of $\sg$ to $M\times[0,1]$ representing a homotopy of $\sg$ to a fixed reference profile $\sg_0$. We will implicitly choose $\sg_0 = \mathbbm{1}$ for the rest of this work. $\d\tilde\sg$ represents the 5-dimensional exterior derivative of $\tilde\sg$ on $M\times[0,1]$, while $\omega$ continues to denote the 4-dimensional K\"ahler form on $M$ in both terms.

The WZW$_4$ model has been studied in many contexts \cite{Donaldson:1985zz,Losev:1995cr,Nair:1991bf}. As reviewed here, it describes the effective spacetime dynamics of open topological strings on twistor space. In the past, it provided a classical action principle for the self-dual Yang-Mills equation on $M$ \cite{Mason:1991rf}, although the relation to gauge theory does not persist beyond tree level.\footnote{Nonetheless, WZW$_4$ seems to be related to 4d $\cN=2$ heterotic strings at tree as well as loop level \cite{Ooguri:1991ie}.} The equation of motion of $\sg$ reads
\be\label{geom}
\omega\wedge\dbar(\sg^{-1}\p\sg) = 0\,.
\ee
This is known as Yang's equation \cite{Yang:1977zf}. On its support, the gauge field $A=-\dbar\sg\,\sg^{-1}$ solves the self-dual Yang-Mills equation $F_A^- = 0$ on $M$. As in the case for a 2-dimensional WZW model, the derivation of \eqref{geom} involves cancelling certain contributions from the kinetic term against contributions from the Wess-Zumino term. For the action and the variational problem to be well-defined, one requires that the Wess-Zumino term be independent of the 5d extension $\tilde\sg$ of the 4d field $\sg$. This imposes the quantization condition \cite{Losev:1995cr}
\be
\frac{\omega}{2\pi}\in H^2(M,\Z)\,,
\ee
showing that $\omega$ acts as a 4d analogue of the Kac-Moody level familiar from 2d WZW models.

For perturbative calculations, it is convenient to work locally on the group manifold of $G$ in terms of an adjoint-valued scalar $\phi$ by writing
\be\label{phidef}
\sg = \e^{\phi}\,,\qquad\phi\in\text{Maps}(M,\g)\,.
\ee
As its 5-dimensional extension, one can take $\tilde\sg = \e^{t\phi}$, where $t\in[0,1]$ is a coordinate along the interval. Performing such a field redefinition and integrating out $t$ converts the action \eqref{Swzw} to
\be\label{phiac}
S_{\text{WZW}_4}[\phi,\omega] = -\frac{\im}{8\pi^2}\int_M\omega\wedge\tr\left(\p\phi\wedge\dbar\phi - \frac{1}{3}\,\phi\,[\p\phi,\dbar\phi] + \mathrm{O}(\phi^4)\right)\,,
\ee
where $[\p\phi,\dbar\phi]\equiv\p\phi\wedge\dbar\phi+\dbar\phi\wedge\p\phi$ is standard notation. The field equation of $\phi$ reads
\be\label{phieom}
\omega\wedge\bigg(\p\dbar\phi + \frac{1}{2}\,[\p\phi,\dbar\phi] + \mathrm{O}(\phi^3)\bigg) = 0\,.
\ee
The corresponding linearized field equation is $\omega\wedge\p\dbar\phi=0$, which is just the Laplace equation associated to the K\"ahler metric on $M$. 

We will use this form of the action in later sections to build Feynman rules and derive some simple 2- and 3-point tree amplitudes of WZW$_4$ on Burns space. Remembering the classical equivalence of this model with (a gauge-fixed formulation of) SD Yang-Mills, we will often refer to these amplitudes as tree level all-plus ``gluon'' amplitudes, though we emphasize that WZW$_4$ is not actually a gauge theory.


\subsection{Closed strings and Mabuchi gravity}
\label{sec:closed}

The closed string sector of the topological B-model governs complex structure deformations of a Calabi-Yau 3-fold $Z$. Locally, a complex structure deformation is described by a Beltrami differential $\mu\in\Omega^{0,1}(Z,T^{1,0}Z)$ that deforms its dbar operator
\be
\dbar\mapsto\dbar + \mu\,.
\ee
The deformed almost complex structure is integrable if and only if $\mu$ solves the Maurer-Cartan equation
\be\label{mceq}
\dbar\mu + \frac{1}{2}\,[\mu,\mu] = 0\,.
\ee
Here, $[\mu,\mu]$ denotes the wedge product on the $(0,1)$-form factors in $\mu$ and the Lie bracket on the $(1,0)$-vector field factors. 

The Bershadsky-Cecotti-Ooguri-Vafa (BCOV) \cite{Bershadsky:1993cx} theory -- also known as Kodaira-Spencer gravity -- is an action principle that gives rise to this Maurer-Cartan equation as its equation of motion. It arises as the string field theory describing closed strings in the B-model \cite{Bershadsky:1993cx}. The BCOV action is nonlocal in nature,
\be
S_\text{BCOV}[\mu] = \frac{1}{(2\pi\im)^3}\int_Z\frac12\,\dbar(\mu\ip\Omega)\wedge\p^{-1}(\mu\ip\Omega) + \frac16\, (\mu^3\!\ip\Omega)\wedge\Omega\,,
\ee
where $\p$ is the holomorphic exterior derivative on $Z$ and $\lrcorner$ denotes interior product. The field $\eta$ is constrained by
\be\label{peta}
\p\eta=0\,,\qquad\eta\vcentcolon=\mu\ip\Omega\in\Omega^{2,1}(Z)\,.
\ee
This implies that the $(2,1)$-form $\eta$ as defined here is locally $\p$-exact,
\be
\eta = \p\gamma\,,\qquad\gamma\in\Omega^{1,1}(Z)\,,
\ee
so that $\p^{-1}\eta=\gamma$ modulo $\ker\p$. In spite of having a nonlocal kinetic term, the BCOV action has a perfectly well-behaved equation of motion and perturbative expansion. It is also invariant under smooth diffeomorphisms of $Z$ generated by exponentiating the linearized transformations
\be
\delta \mu = \dbar\xi + \mu\ip\p\xi - \xi\ip\p \mu
\ee
for some $\xi\in\Omega^0(Z,T^{1,0}Z)$ satisfying $\p(\xi\ip\Omega)=0$.

Again, we specialize to the case when $Z$ is the twistor space of a scalar-flat K\"ahler spacetime.  Our boundary conditions are simply that the $(2,1)$ form $\eta$ remains smooth at the divisor $\check{\omega}=0$. This implies that $\mu$ must vanish to second order at $\check{\omega}=0$.  The BCOV Lagrangian is then smooth everywhere and leads to a well-defined variational problem.  Its kinetic term is unchanged when written in terms of $\eta$, so that it has a well-defined propagator and pertubative expansion.  Its diffeomorphism symmetry is also reduced to the consideration of those diffeomorphisms of $Z$ that preserve these boundary conditions. At the level of linearized diffeomorphisms, this simply requires that $\xi$ vanish to second order at $\check{\omega}=0$.

\medskip

The reduction of the BCOV action to spacetime gives rise to a theory of scalar-flat K\"ahler fluctuations of the background scalar-flat K\"ahler metric \cite{Costello:2021bah}. Compactification along the twistor lines is now much more involved, so we only provide an executive summary in appendix \ref{app:bcovred}. 

The dynamical field on spacetime is found to be a single scalar field $\rho(x)$. In the linearized theory, this is related to the Beltrami differential by a Penrose integral formula
\be
\rho(x) = \int_{L_x}\gamma\,.
\ee
This scalar field is interpreted as a perturbation of the background K\"ahler potential of $M$
\be
K\mapsto\cK = K+\rho\,.
\ee
The function $\rho$ depends on the choice of $(1,1)$ form $\gamma$ with $\partial \gamma = \eta$.  However, the closed $1$-form $\d \rho$ on spacetime does not depend on the choice of $\gamma$.  This means that $\rho$ is defined up to the addition of a constant. The deformed K\"ahler metric takes the form
\be
\d s^2 = 2\,\frac{\p^2\cK}{\p u^{\dal}\p\hat u^{\dot\beta}}\,\d u^{\dal}\,\d\hat u^{\dot\beta}
\ee
written in local complex coordinates $u^{\dal},\hat u^{\dal}$ that are modeled after the flat space coordinates of \eqref{utox}.

When $\mu$ obeys its equation of motion on twistor space, the deformed K\"ahler metric on spacetime continues to be scalar-flat. The Ricci scalar can be viewed either as a trace of the Ricci tensor with respect to the metric, or as a trace of the Ricci form with respect to the K\"ahler form. Using the latter viewpoint, scalar-flatness is best imposed as the orthogonality of the K\"ahler form and the Ricci form:
\be
\varpi\wedge\cP = 0\,,
\ee
where $\varpi$ and $\cP$ are the deformed metric's K\"ahler and Ricci forms,
\begin{align}
\varpi &= \omega+\im\,\p\dbar\rho = \im\,\p\dbar\cK\,,\label{defomega}\\
\cP &= P - \im\,\p\dbar\log\frac{\varpi^2}{\omega^2}\,,\label{defricci}
\end{align}
and $P=-\im\,\p\dbar\log\det(\p_{u^{\dal}}\p_{\hat u^{\dot\beta}}K)$ denotes the Ricci form of the background K\"ahler metric on $M$. 

There is a standard action functional for this equation known as the Mabuchi functional, first introduced in \cite{mabuchi1986k} for the study of constant scalar curvature K\"ahler metrics. We will propose the following variant of this functional as the gravitational sector of our duality:
\be\label{mabuchi}
S_\text{M}[\rho] = -\frac{1}{8\pi^2}\int_M\varpi^2\log\frac{\varpi^2}{\omega^2} + \im\,P\wedge\p\rho\wedge\dbar\rho\,.
\ee
For the scalar-flat case under consideration, this is equivalent to the Mabuchi functional (as displayed for instance in the recent review \cite{phong2007lectures}) up to boundary terms. So we continue to refer to it as the Mabuchi action. A few integrations by parts show that this can also be written in a background invariant form reminiscent of a $\U(1)$ WZW$_4$ model,
\be\label{mabuchiK}
S_\text{M}[\rho] = -\frac{\im}{8\pi^2}\int_M\cP\wedge\p\cK\wedge\dbar\cK\,.
\ee
Since boundary terms tend to be a tricky point in holography, we will choose to treat \eqref{mabuchi} as our bulk action in what follows. 
We note that all forms of the action are invariant under the gauge symmetry of $\rho$ which shifts it by a constant. They are also invariant under K\"ahler transformations $\cK(u,\hat u)\mapsto\cK(u,\hat u) + f(u) + \tilde f(\hat u)$ up to boundary terms. 

Remembering $\varpi=\omega+\im\,\p\dbar\rho$ and background scalar-flatness $\omega\wedge P=0$, it is straightforward to compute the variation of \eqref{mabuchi} with respect to $\rho$,
\be
\begin{split}
    \delta S_\text{M} &= -\frac{1}{4\pi^2}\int_M\im\,\p\dbar(\delta\rho)\wedge\varpi\left(\log\frac{\varpi^2}{\omega^2}+1\right)-\delta\rho\,\varpi\wedge P\\
    &= \frac{1}{4\pi^2}\int_M\delta\rho\;\varpi\wedge\cP\,,
\end{split}
\ee
having dropped any boundary terms. Hence, the critical points of the Mabuchi functional correspond to scalar-flat K\"ahler metrics on $M$ obtained from perturbing a fixed background scalar-flat K\"ahler metric. 

The field content of Mabuchi gravity also couples to the WZW$_4$ model in a canonical fashion. One simply makes the replacement
\be
\omega\mapsto\varpi\,,\qquad S_{\text{WZW}_4}[\phi,\omega]\mapsto S_{\text{WZW}_4}[\phi,\varpi]
\ee
in the action \eqref{Swzw}. At the level of twistor space, this coupling arises from replacing $\dbar\mapsto\dbar+\mu$ in the holomorphic Chern-Simons action \eqref{hcs}. If $\rho$ is taken to be a globally defined scalar field, $\omega$ and $\varpi=\omega+\im\,\p\dbar\rho$ live in the same K\"ahler class. In particular, the K\"ahler class of $\varpi$ continues to be integral, which was essential for the WZW$_4$ action to be well-defined. In what follows, we will take this to be the case.

To better understand the perturbative expansion of Mabuchi gravity, let us expand \eqref{mabuchi} to cubic order in $\rho$. Recalling the formula $\lap\rho = 4\im\omega\wedge\p\dbar\rho/\omega^2$ for the Laplacian on K\"ahler 4-manifolds, we can first compute the expansion of $\log\varpi^2/\omega^2$,
\begingroup
\allowdisplaybreaks
    \begin{align}
    \log\frac{(\omega+\im\,\p\dbar\rho)^2}{\omega^2} &= \log\left(1+\frac{2\im\omega\wedge\p\dbar\rho}{\omega^2} - \frac{(\p\dbar\rho)^2}{\omega^2}\right) = \log\left(1+\frac{\lap\rho}{2} - \frac{(\p\dbar\rho)^2}{\omega^2}\right)\nonumber\\
    &= \frac{\lap\rho}{2} - \frac{(\p\dbar\rho)^2}{\omega^2} - \frac{(\lap\rho)^2}{8} + \frac{\lap\rho}{2}\frac{(\p\dbar\rho)^2}{\omega^2} + \frac{(\lap\rho)^3}{24} + \mathrm{O}(\rho^4)\,,
\end{align}
\endgroup
To cubic order, the corresponding expansion of the Mabuchi action reads
\be\label{rhoac}
S_\text{M}[\rho] = -\frac{1}{64\pi^2}\int_M(\lap\rho)^2\omega^2 + 8\im P\wedge\p\rho\wedge\dbar\rho - 4(\lap\rho)(\p\dbar\rho)^2 - \frac{1}{6}(\lap\rho)^3\omega^2 + \mathrm{O}(\rho^4)\,,
\ee
where we have dropped globally exact terms like $(\p\dbar\rho)^2$ and $\omega\wedge\p\dbar\rho$ from the expanded Lagrangian. In flat space, the background Ricci form vanishes, $P=0$. Up to normalization conventions for the K\"ahler form, this action then reduces to the one derived in \cite{Costello:2021bah} by compactifying BCOV theory along the $\CP^1$ fibers of the twistor space of $\R^4$. On a more general background, it is also appended with the term $P\wedge\p\rho\wedge\dbar\rho$ proportional to the background Ricci form. This extra term is crucial for deriving the correct linearized wavefunctions for $\rho$, which will enter some of the tests of our holographic duality in later sections.

The relative normalization between the WZW$_4$ action and the Mabuchi action is fixed by demanding that the coupled theory lifts to an anomaly-free, holomorphic theory on twistor space. This is accomplished through a Green-Schwarz mechanism that we review below. Alternatively, on spacetime, it is determined by demanding that the one loop 4-point $\phi$ amplitude vanish in the coupled theory in flat space. The interested reader may refer to \cite{Costello:2021bah} for more details of the precise normalizations.


\subsection{Anomaly cancellation and renormalizability}
\label{sec:coupled}
Holomorphic Chern-Simons theory and BCOV theory are naturally coupled, which is immediate from their origins as open (resp., closed) string sectors of topological string theory.
Holomorphic Chern-Simons theory on a complex three-fold suffers from a one-loop gauge anomaly, associated to the diagram in Figure \ref{fig:anomaly}.
\begin{figure}
\centering
	\includegraphics[scale=0.22]{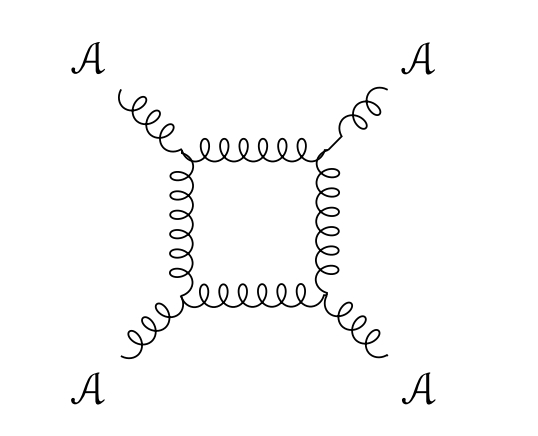}
	\caption{The gauge anomaly for holomorphic Chern-Simons. \label{fig:anomaly} }	
\end{figure}
In \cite{Costello:2019jsy}, it was shown that, for certain gauge groups, this anomaly can be cancelled by a Green-Schwarz mechanism, involving the exchange of a closed string field. This is depicted in Figure \ref{fig:gs}
.
\begin{figure}
\centering
\includegraphics[scale=0.21]{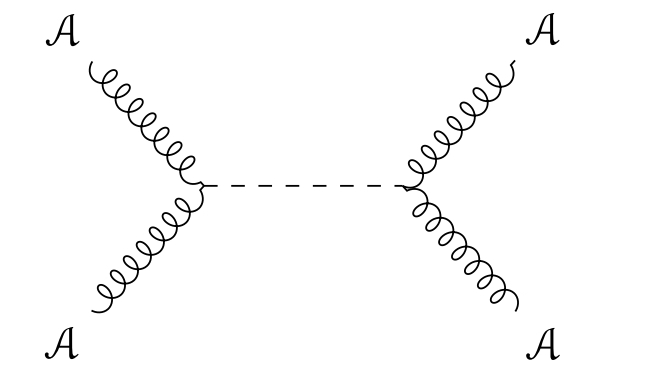}
	\caption{The Green-Schwarz mechanism, showing cancellation of the one-loop open string anomaly by a tree level diagram involving exchange of closed string fields.  \label{fig:gs}} 
\end{figure}
(This tadpole cancellation mechanism was previously known in the topological A-model from the world-sheet perspective \cite{Walcher:2007qp}.)  In \cite{Costello:2019jsy} it was shown that with holomorphic Chern-Simons gauge group $\SO(8)$, anomaly cancellation occurs \emph{at all orders\footnote{There is a folklore belief that anomalies only occur at one loop, but this is false. A counterexample is given in \cite{CWYII}. }        in perturbation theory}.  

Further, the constraint that all anomalies are cancelled fixes all counter-terms uniquely.  This is worth noting, because BCOV theory and holomorphic Chern-Simons theory are both non-renormalizable by power counting. Intuitively, one should think that renormalizability of these theories is due to the large amount of gauge symmetry. In contrast to ordinary gauge theory and gravity, the group of gauge transformations in these theories which preserve the zero field configuration is infinite dimensional. It is the group of holomorphic gauge transformations  or the group of holomorphic volume-preserving diffeomorphisms respectively. 

This result has an important consequence for the four-dimensional theories we are considering, which was emphasized in \cite{Costello:2021bah}.
\begin{theorem}
	Mabuchi gravity coupled to WZW$_4$ for the group $\SO(8)$ admits a canonically defined quantization on a scalar-flat K\"ahler manifold, despite being non-renormalizable. 
\end{theorem}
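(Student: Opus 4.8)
The plan is to prove the statement not directly in four dimensions --- where the WZW$_4$--Mabuchi system is manifestly interacting and power-counting non-renormalizable --- but by transporting the problem to the six-dimensional twistor space $Z$, where it becomes holomorphic Chern--Simons for $\SO(8)$ coupled to BCOV theory. The preceding subsections establish that fiberwise compactification along the $\CP^1$ fibers of $\pi:Z\to M$ sends hCS to WZW$_4$ and BCOV to Mabuchi gravity, and crucially that this reduction is \emph{exact}: the higher Kaluza--Klein modes decouple and contribute nothing (the hCS integral over the fibers is performed ``without generating any KK modes,'' and the analogous statement for BCOV is recorded in the appendix). Consequently an anomaly-free, gauge-invariant quantization of the holomorphic theory on $Z$ descends verbatim to a quantization of the 4d theory on $M$, and I would reduce the theorem to producing such a quantization upstairs.

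On $Z$ I would work in the BV--BRST formalism for holomorphic field theories, in which the obstruction to quantization is a cohomology class: the one-loop gauge anomaly of holomorphic Chern--Simons depicted in Figure~\ref{fig:anomaly}, valued in the local cohomology of the theory. The essential input is the result of \cite{Costello:2019jsy}: for gauge group $\SO(8)$ this anomaly is cancelled by a Green--Schwarz mechanism in which a closed-string BCOV field is exchanged (Figure~\ref{fig:gs}), and --- contrary to the folklore that anomalies appear only at one loop --- this cancellation persists to \emph{all} orders in perturbation theory. Since the anomaly is a \emph{local} class, the mechanism is geometric and applies to the twistor space of any scalar-flat K\"ahler $4$-manifold, not merely $\PT$; this is precisely the generalization from flat to curved twistor spaces invoked earlier.

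The heart of the argument, and its main obstacle, is the \emph{uniqueness} half of the claim: that a \emph{canonically} defined quantization exists despite non-renormalizability. Power counting admits infinitely many local counterterms, so a priori there is an infinite-dimensional ambiguity. The resolution is that the gauge symmetries here are themselves infinite-dimensional --- holomorphic gauge transformations for hCS and holomorphic volume-preserving diffeomorphisms for BCOV --- and the demand that the Green--Schwarz-corrected effective action remain anomaly-free order by order rigidifies the admissible counterterms. Concretely, I would argue that the degree-zero local BV cohomology controlling counterterm ambiguities, \emph{subject to the constraint of anomaly cancellation at every order}, collapses so that no freedom remains and the counterterms are fixed uniquely. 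This is the technically demanding step: it requires computing, or invoking from \cite{Costello:2019jsy, Costello:2021bah}, the relevant obstruction--deformation complex and showing that the only invariant local functionals are those already present plus the ones forced by the Green--Schwarz coupling.

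Finally I would descend, pulling the canonically quantized, anomaly-free theory on $Z$ back through the exact fiber reduction to obtain a canonically defined quantization of Mabuchi gravity coupled to $\SO(8)$ WZW$_4$ on an arbitrary scalar-flat K\"ahler $M$, which is the assertion. I would close by noting that the relative normalization of the two 4d actions is not a free parameter in this scheme but is pinned down by the same anomaly-cancellation condition --- equivalently, as remarked above, by demanding that the one-loop four-point $\phi$ amplitude vanish in the coupled theory in flat space --- so that the quantization really is canonical rather than merely consistent.
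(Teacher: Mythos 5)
Your proposal is correct and follows essentially the same route as the paper: lift to the $\SO(8)$ holomorphic Chern--Simons plus BCOV system on twistor space, invoke the all-orders Green--Schwarz anomaly cancellation of \cite{Costello:2019jsy}, use the infinite-dimensional holomorphic gauge/volume-preserving diffeomorphism symmetry to argue that anomaly cancellation fixes all counterterms uniquely, and characterize the 4d quantization as the one that lifts to this local twistorial theory. The paper's own argument is precisely this citation-level chain (with the uniqueness of counterterms quoted from \cite{Costello:2019jsy,Costello:2021bah} rather than re-derived), so your added BV-cohomological detail is elaboration rather than a departure.
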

This canonically defined quantization is characterized by asking that it lifts, at the quantum level, to a local theory on twistor space.  From the spacetime perspective, any theory that lifts to twistor space in a local manner has the following features:
\begin{enumerate} 
	\item It has no scattering amplitudes on flat space.
	\item On flat space, the analytically-continued RG flow is periodic, with imaginary period $2 \pi \i$. (This means that all coupling constants only depend on the logarithm $\lambda$ of the scale by periodic functions $e^{n \lambda}$ for $n$ an integer).  
	\item On any scalar-flat K\"ahler manifold, all correlation functions are meromorphic functions of the complexified spacetime coordinates. 
\end{enumerate}
These features strongly constrain all interactions and counter-terms.  For instance, in \cite{Costello:2021bah} it was shown that the cubic term in the Mabuchi functional is distinguished (among interactions of dimension $6$) by the constraint that it does not generate any logarithmic OPEs.  


\subsection{The Mabuchi theory as a gravitational theory}

Mabuchi gravity is a theory of a dynamical scalar field, which has the geometric interpretation as the K\"ahler potential of a metric. The diffeomorphism symmetry, however, is \emph{not} gauged. 

Gravitational actions where diffeomorphism symmetry is not gauged are familiar in other contexts. For instance, the Liouville theory in dimension $2$ is a Lagrangian for a scalar field $\phi$ which can also be interpreted as a model for quantum gravity (e.g. \cite{Polyakov:1981rd}).  If one interprets $e^{\phi}$ as a conformal factor multiplying a flat metric, then Liouville's equation for $\phi$ is equivalent to Einstein's equation.  

In Liouville theory, as in Mabuchi theory, one does not gauge diffeomorphism symmetries, so it is a gauge-fixed form of Einstein gravity. Earlier work coupling Liouville and Mabuchi actions in dimension $2$ appears in \cite{Ferrari:2011we}. 

One might object that Mabuchi gravity should not really be thought of as ``gravitational'' for this reason.  One might further object that standard holography, where the boundary CFT has a well-behaved stress tensor, cannot hold unless the bulk gravitational theory gauges diffeomorphism symmetry. 

Our holographic duality is saved because it is \emph{not} a standard holographic duality on four-dimensional spacetime. It only becomes a standard holographic duality on twistor space.  In the next section, we will show that the twistor space of the Burns metric is essentially $\SL_2(\C) \simeq \mbb{H}^3 \times S^3$, i.e., Euclidean AdS$_3 \times S^3$ (up to certain extra boundary divisors).  On twistor space, our duality is a completely standard holographic duality. Moreover, diffeomorphisms \emph{are} gauged on twistor space! It is only the boundary conditions at $0,\infty$ on the twistor $\CP^1$'s that allow us to fully gauge-fix the diffeomorphism symmetry.   


\section{Burns space from brane backreaction}
\label{sec:burns}

To construct our holographic duality, we start with the twistor space of $\R^4$. Recall that this was given by the bundle $\PT = \CO(1)\oplus\CO(1)\to\CP^1$. We will wrap a stack of $N$ D1 branes on the zero section of this bundle. Wrapping D1 branes on the zero section is a natural choice from the point of view of celestial holography: by construction, the zero section is the celestial sphere of the origin of flat four-dimensional spacetime (i.e. prior to backreacting). We will compute the backreaction as a deformation of complex structure.  The resulting backreacted geometry will get identified with the twistor space of a scalar-flat K\"ahler spacetime known as Burns space.


\subsection{The physical string setup}
\label{sec:branes}

Before passing to the twisted holographic setup, it is useful to outline the 10d physical string uplift of our basic model. We will restrict ourselves to the string theory compactified on the twistor space of flat space (i.e. before computing the backreaction arising from the addition of $N$ D-branes), and will pursue the computation of the backreaction to the twistor space of Burns space only in the corresponding topological string theory. We leave to future work the study of the open/closed duality in the full physical string theory.

The ingredients for our basic setup have been anticipated in \cite{Costello:2021bah}. The starting point is a $3$-fold $X$ which is a fibration $\pi: X \rightarrow \CP^1$ where the fiber is an elliptic K3 surface. As explained in \cite{Costello:2021bah}, we require that the fibration must be chosen to ensure the following condition on the canonical bundle
\begin{equation}
    K_X \simeq \pi^* \CO(2).
\end{equation}
It is not hard to show that one can construct such bundles over the base $\CP^1$ (see, e.g., the suggestion by D. Maulik in \cite{Costello:2021bah} for one example).   
Then we can build a kind of twistor space  
\begin{equation}
    \mathbb{PT}_X = \pi^*(\CO(1)\oplus \CO(1)) \rightarrow X.
\end{equation}
As a smooth manifold, $\mbb{PT}_X = X \times \R^4$.  This fibers (holomorphically) over the ordinary twistor space $\mbb{PT} = \CP^1 \times \R^4$, and the fibers are elliptic K3 surfaces. Points in complexified spacetime are $\CP^1$'s in ordinary twistor space. In our modified twistor space $\PT_X$, points in complexified spacetime are copies of $X$ inside $\PT_X$.  Because the normal bundle to $X$ in $\PT_X$ is $\pi^\ast \Oo(1) \oplus \pi^\ast \Oo(1)$, there is a four-dimensional moduli space of such.  

Our model is given by the type I string compactified on $\mathbb{PT}_X$. As usual, we must include 32 space-filling D9 branes plus an O9 plane. In flat space, the D9 branes contribute an $\SO(32)$ gauge group, though in our curved geometry we must turn on nontrivial gauge fluxes on the D9 worldvolume to cancel the resulting anomaly. At a generic point in the vector bundle moduli space, this breaks the gauge group to $\SO(8)$. (This is perhaps most easily seen by S-dualizing to the $\SO(32)$ heterotic string. When compactifying the heterotic string on a Calabi-Yau, one must turn on vector bundle moduli corresponding to flat gauge bundles. Such a vector bundle at a generic point in the moduli space of a K3 fibration will break the gauge group from $\SO(32) \rightarrow \SO(8)$, i.e. has second Chern class equal to 16 \cite{Aspinwall:1996mn}). For tadpole cancellation on a compact geometry, one must also wrap a prescribed number of D5 branes on $X$. We will elide the details of this computation by considering a decompactification limit of our geometry, so that we may add an arbitrary number $N$ of D5s on $X$. In this limit, the $\SO(8)$ gauge group becomes a flavor symmetry group as its coupling goes to zero, presaging the role it will play in our chiral algebra.

From this basic model, one can consider a variety of duality frames. We will content ourselves with making contact with the IIB frame sketched in \cite{Costello:2021bah}, though it is interesting to note than an explicit F-theory uplift of this geometry, with non-Higgsable SO(8) gauge group, can be constructed \footnote{We are grateful to M. Kim for very useful discussions on this.}. 

First, we may consider four T-dualities along the K3 fiber of $X$ (for example, we may consider the Kummer, or $T^4/\mathbb{Z}_2$, locus in the K3 moduli space to explicitly perform this operation). This produces the type $\text{I}'$ string on $\mathbb{PT}_X$, where the D9/O9 system has become a D5/O5 system wrapping the twistor space of $\R^4$, namely the directions $\CO(1)\oplus \CO(1) \rightarrow \CP^1$, and the D5 branes have become $N$ D1 branes on the base $\CP^1$. Notice that since the D1 and D5 branes are parallel, this is a 1/4-BPS system, preserving 8 supercharges. 

An additional two T-dualities can be performed on the elliptic fiber directions of the K3 fiber to reach a frame with a D7/O7 system and $N$ D3 branes. The type I string on the $\mathbb{PT}_X$ geometry is equivalent, as always, to an orientifold of the IIB string. From the IIB perspective, we have a D7/O7 system hosting an SO(8) gauge group, with $N$ D3 branes hosting an $\Sp(N)$ gauge group.\footnote{The geometric part of the $\mathbb{Z}_2$ involution can be taken to act naturally on the elliptic fiber of $\mathbb{PT}_X$, yielding a $T^2/\mathbb{Z}_2 \simeq \CP^1$ fibered over $\CP^1 \rightarrow \CP^1$.}
To this 10d duality frame, we may turn on an Omega-background along the (decompactification limit of the) four fiber directions (i.e. the K3 fiber directions in the $\mathbb{PT}_X$ geometry). This procedure results in an effective compactification down to the 6 twistor space directions, and an effective Euclidean D1/D5/O5- system. From this IIB frame one may pass to the topological string B-model ``subsector'', or equivalently the type I topological string.
It is this 6d type I topological string theory on $\CO(1) \oplus \CO(1) \rightarrow \mathbb{CP}^1$, with noncompact D5 branes supporting an $\SO(8)$ flavor symmetry and compact D1 branes supporting an $\Sp(N)$ gauge symmetry, that we will study in the remainder of this work.


\subsection{The backreacted twistor geometry}
\label{sec:backreact}

For the rest of this section, we revert to working in the simpler setting of topological strings. Let us start with $\PT = \CO(1)\oplus\CO(1)\to\CP^1$ and wrap $N\geq0$ D1 branes along the zero section $v^{\dal}=0$ (see section \ref{sec:twistor} for a review of our notation). This zero section is the twistor line $L_0$ corresponding to the origin of $\R^4$. We study the brane backreaction by adding a source term to the equation of motion \eqref{mceq} of BCOV theory on $\PT$.

Following \cite{CG}, the bulk-brane coupling adds a source term
\be
\frac{N}{2\pi\im}\int_{L_0}\p^{-1}(\mu\ip\Omega)\,,\qquad\Omega = \frac{\d z\,\d^2v}{z^2}\,,
\ee
to the action of BCOV theory.  As a result, the D1 branes backreact to generate a nontrivial complex structure deformation $\dbar\mapsto \dbar+\mu$ determined by the Maurer-Cartan equation with a delta function source
\be\label{dbarV}
\dbar \mu + \frac{1}{2}\,[\mu,\mu] = N\,(2\pi\im)^2\,\bar\delta^2(v)\,z^2\frac{\p}{\p z}\,.
\ee
Here, $\dbar = \d\bar z\,\p_{\bar z}+\d\bar v^\al\,\p_{\bar v^\al} = \d\bar z\,\p_{\bar z}+\d\hat v^{\dal}\,\p_{\hat v^{\dal}}$ is the antiholomorphic exterior derivative on $\PT$, and $\bar\delta^2(v) = \delta(v^{\dot1})\delta(v^{\dot2})\,\d\bar v^1\wedge\d\bar v^2$ denotes the complex delta distribution with support at $v^{\dal}=0$. The factor of $z^2$ on the right implies that the source vanishes at $z=0,\infty$. It ensures consistency with the boundary conditions that $\mu$ (and hence the left hand side) vanishes to second order at $z=0,\infty$ each. In particular, replacing $z\mapsto 1/z$ maps $z^2\p_z\mapsto-\p_z$, showing that $1/z$ acts like a more natural coordinate on our D1 brane worldvolume. We will occasionally return to this observation in later sections. It was mainly used in \cite{Costello:2022jpg} to simplify the expressions of certain scattering amplitudes.

The solution for the Beltrami differential that we are looking for is given by
\be\label{musol}
\mu = -N\,\frac{[\hat v\,\d\hat v]}{\|v\|^4}\,z^2\frac{\p}{\p z}\,.
\ee
It is clear that $\mu$ vanishes to second order at $z=0$. Under $z\mapsto 1/z$, we map $v^{\dal}\mapsto v^{\dal}/z$, so that $\mu$ also vanishes to second order at $z=\infty$. Since $[\hat v\,\d\hat v] = \overline{[v\,\d v]} = \bar v^2\d\bar v^1-\bar v^1\d\bar v^2$, we observe that $-[\hat v\,\d\hat v]/\|v\|^4$ is the Bochner-Martinelli kernel of the dbar operator on each $\C^2$ fiber at fixed $z$. Hence,
\be
\dbar\,\frac{[\hat v\,\d\hat v]}{\|v\|^4} = \dbar|_z\,\frac{[\hat v\,\d\hat v]}{\|v\|^4} = -(2\pi\im)^2\,\bar\delta^2(v)\,.
\ee
At the same time, $[\mu,\mu]\propto[\hat v\,\d\hat v]\wedge [\hat v\,\d\hat v] = 0$. So \eqref{musol} solves \eqref{dbarV}. $\mu$ is also divergence-free with respect to the holomorphic volume form $\d z\,\d^2v/z^2$, i.e., it satisfies the constraint \eqref{peta}. This is checked in two steps:
\begin{align}
&\eta = \mu\ip\frac{\d z\wedge\d^2v}{z^2} = N\,\frac{[\hat v\,\d\hat v]\wedge\d^2v}{\|v\|^4}\\
\implies&\p\eta = 2N\,\frac{[\hat v\,\d\hat v]\wedge[\hat v\,\d v]\wedge\d^2v}{\|v\|^6} = 0\,.
\end{align}
Our derivation of the solution for $\mu$ is completely analogous to the backreaction computed starting from the resolved conifold $\CO(-1)\oplus\CO(-1)$ in \cite{CG}, up to the factor of $z^2$ that comes from starting with $\CO(1)\oplus\CO(1)$ instead.

Define the set
\be\label{ZN}
Z_N \vcentcolon= \PT-\{v^{\dal}=0\}
\ee
and equip it with the deformed complex structure $\dbar+\mu$. We will refer to $Z_N$ with this complex structure as our backreacted geometry, with the convention that $Z_0$ corresponds to zero backreaction. $Z_N$ can again be covered by the two patches $z\neq0$ and $z\neq\infty$, so let us work in the patch $z\neq0$. 

A smooth function $f(z,v)$ is holomorphic in the deformed complex structure if and only if $(\dbar+\mu)f = 0$. Equivalently, it is holomorphic if it is annihilated by the deformed $(0,1)$-vector fields
\be\label{01vec}
\frac{\p}{\p\bar z}\,,\qquad\frac{\p}{\p\hat v^{\dal}} + \frac{N\hat v_{\dal}}{\|v\|^4}\,z^2\frac{\p}{\p z}\,.
\ee
These annihilate $v^{\dal}$ as well as the combinations
\be\label{wsol}
w^{\dal} = \frac{v^{\dal}}{z} - \frac{N\hat v^{\dal}}{\|v\|^2}\,,
\ee
so that any three out of the four quantities $v^{\dal},w^{\dal}$ can provide holomorphic coordinates on the patch $\{z\neq0\}\subset Z_N$. Contracting \eqref{wsol} with $\hat v_{\dal}$ and using the relation $[\hat v\,v]=\|v\|^2$, we can also invert \eqref{wsol} to solve for $z$, 
\be\label{zvw}
z = \frac{\|v\|^2}{[\hat v\,w]}\,.
\ee
This is no longer a holomorphic coordinate as it depends non-holomorphically on $v^{\dal}$. A similar analysis can also be performed in the ``antipodal patch'' $\{z\neq\infty\}$, but we will provide a more global description shortly.

Since $Z_N$ has three complex dimensions, the four holomorphic quantities $v^{\dal},w^{\dal}$ must be interrelated. Indeed, they are easily seen to satisfy
\be\label{conifold}
[v\,w]\equiv v^{\dot2}w^{\dot1} - v^{\dot1}w^{\dot2} = N\,.
\ee
Whenever $N\neq0$, we can rewrite this as the condition
\be\label{vwmat}
\frac{1}{\sqrt N}\begin{pmatrix}w^{\dot1}&&v^{\dot1}\\w^{\dot2}&&v^{\dot2}\end{pmatrix}\in\SL_2(\C)\,,
\ee
i.e., that this $2\times2$ matrix has unit determinant. Therefore, the map $(v^{\dal},z)\mapsto(v^{\dal},w^{\dal})$ identifies this patch of $Z_N$ with the group manifold of $\SL_2(\C)$. The complex structure of $Z_N$ is then simply the standard complex structure of $\SL_2(\C)$.

This mechanism of obtaining $Z_N$ from $\PT$ by brane backreaction is a twistorial analogue of the twisted holographic backreaction computed in \cite{CG, Costello:2020jbh}, and is also redolent of the geometric transition from the resolved conifold to the deformed conifold.

\paragraph{Deformed twistor lines.} We wish to interpret $Z_N$ as the twistor space of some asymptotically flat spacetime. To do this, we need to find a 4-parameter family of rational curves that are holomorphic with respect to $\dbar+\mu$ and have normal bundle $\CO(1)\oplus\CO(1)$. They are also required to be invariant under a suitable real structure on $Z_N$. These will act as our deformed twistor lines. 

Let us first quote the result. In the coordinates $v^{\dal},w^{\dal}$, the deformed twistor lines are captured by the new incidence relations
\begin{equation}\label{hcurve}
    v^{\dal}(\sigma) = u^{\dal} + \sigma\hat u^{\dal}\,,\qquad w^{\dal}(\sigma) = \hat u^{\dal} + \frac{u^{\dal}}{\sigma}\left(1+\frac{N}{\|u\|^2}\right)\,,
\end{equation}
where $\sigma$ is a stereographic coordinate on $\CP^1\simeq S^2$.  These satisfy $[v(\sigma)\,w(\sigma)]=N$ and are parametrized by the four real moduli contained in $u^{\dal}\in\C^2-0$. As before, $\hat u^{\dal}=(-\bar u^2,\bar u^1)$. Along such a curve, the coordinate $z$ varies non-holomorphically:
\be
z(\sigma) = \sigma-\frac{N\sigma}{N+(1+|\sigma|^2)\|u\|^2}\,.
\ee
As $N\to0$, we get $z(\sigma)\to\sigma$ and the curves revert to the twistor lines of $\PT$. 

The twistor lines \eqref{hcurve} are defined as written for $u \neq 0$.  However, they do make sense in the $\norm{u} \to 0$ limit. If we send $\norm{u} \to 0$ while making the reparametrization $\sigma \mapsto \frac{\sigma}{ \norm{u}}$, the twistor line \eqref{hcurve} becomes
\begin{equation} \label{hcurve0}
	v^{\dal}(\sigma) = \frac{\sigma\hat u^{\dal}} {\|u\|} \,,\qquad w^{\dal}(\sigma) = \frac{N u^{\dal}}{\sigma\|u\|}\,,
\end{equation}
which is well-defined in the $\|u\|\to0$ limit. This twistor line depends only on the phase of $u/\|u\|$, but a change of phase can also be absorbed into a reparametrization of $\sigma$.  Therefore it only depends on the projectivization of the 2-spinor $u^{\dal} \in \C^2-0$. 

What we have learned is that twistor lines \eqref{hcurve} are associated, not to points in $\C^2$, but to points in its blow-up $\til{\C}^2$ at the origin.  In other words, the backreaction on twistor space has changed the topology of spacetime by blowing up its origin. We will see this again from the perspective of the spacetime metric shortly. 

More generally, wrapping D1 branes along the twistor line of a point $x\in\C^2$ will induce a geometric transition that blows up $\C^2$ at $x$. Hence, our analysis concretely realizes the expectations outlined in the prescient work of Hartnoll and Policastro \cite{Hartnoll:2004rv}.

\medskip
Being patchwise diffeomorphic to $\SL_2(\C)$, the deformed geometry $Z_N$ comes equipped with a natural real structure induced by the antiholomorphic map
\be\label{burnsreal}
(v^{\dal},w^{\dal})\mapsto(-\hat w^{\dal},\hat v^{\dal})\,.
\ee
Due to the property $\hat{\hat v}^{\dal}=-v^{\dal}, \hat{\hat w}^{\dal}=-w^{\dal}$, this map is an involution on $\C^4$. Complex conjugation shows that $v^{\dal}w_{\dal}=N\iff(-\hat w^{\dal})\hat v_{\dal} = N$, so it descends to a well-defined involution on $\SL_2(\C)$. The set of fixed points of \eqref{burnsreal} would have been $\{w^{\dal}=\hat v^{\dal}\}$, but such points satisfy $[v\,w]=[v\,\hat v]=-\|v\|^2<0$. Since $N>0$, they cannot satisfy $[v\,w]=N$, so this subset of $\C^4$ does not intersect $\SL_2(\C)$. Hence, our involution is fixed-point-free. 

Crucially, our curves \eqref{hcurve} are invariant under this involution. This is best seen by reparametrizing the curves through a change of coordinate
\be
\sigma\mapsto\sigma\,\sqrt{1+\frac{N}{\|u\|^2}}\,.
\ee
The equations of the curve now read\footnote{This symmetric form of the curves was communicated to us by Lionel Mason.}
\be
v^{\dal}(\sigma) = u^{\dal} + \sigma\hat u^{\dal}\sqrt{1+\frac{N}{\|u\|^2}}\,,\qquad w^{\dal}(\sigma) = \hat u^{\dal} + \frac{u^{\dal}}{\sigma}\sqrt{1+\frac{N}{\|u\|^2}}\,.
\ee
In this parametrization, we observe that if our curve passes through $(v^{\dal},w^{\dal})$ at a given point $\sigma=\sigma_0$, then it also passes through $(-\hat w^{\dal},\hat v^{\dal})$ at the antipodal point $\sigma=-1/\bar\sigma_0$. So our curves are the required \emph{real} twistor lines. For practical calculations however, it will be easiest to stick with the parametrization \eqref{hcurve}.

\medskip

Actually, there are some further subtleties in this identification. Taken at face value, the curves in \eqref{hcurve} map the poles $\sigma=0,\infty$ to the $[v\,w]\to0$ boundary of $\SL_2(\C)$. To obtain a genuine twistor space, we need to enlarge $Z_N$ so as to also include such boundary points. This is resolved by the observation that $Z_N$ embeds into the twistor space of $\CP^2$. 

As we will see shortly, the backreaction on spacetime produces the Burns geometry, which is a manifold conformally equivalent to $\CP^2$ with a point at $\infty$ removed (cf.\ the discussion around equation \eqref{burnstocp2}). As a smooth manifold, the latter is diffeomorphic (in an orientation-reversing manner) to the blow-up of $\C^2$ at the origin, which includes $\C^2-0$ as a coordinate patch. Since twistor space  as a complex manifold (without any volume form) only depends on the conformal structure on the $4$-manifold, the full twistor space of the Burns geometry is the same as that of $\CP^2$, with a point at $\infty$ removed.

The twistor space of $\CP^2$ equipped with its Fubini-Study metric is the flag variety $\F$ of points in lines in $\CP^2$ \cite{Atiyah:1978wi,Ward:1980am}
\be\label{Fdef}
\F = \bigl\{([V^k],[W_k])\in\CP^2\times\CP^2\,\bigl|\,V^kW_k=0\bigr\}\,,\qquad k\in\{1,2,3\}\,.
\ee
There exists a (non-holomorphic) twistor fibration $\F\to\CP^2$ whose projection map is given by a cross product of 3-vectors
\be
([V],[W])\mapsto [\bar V\times W]\equiv[\eps^{klm}\bar V_lW_m]
\ee
with $\eps^{klm}$ the Levi-Civita symbol, and $\bar V_l\equiv\overline{V^l}$ denoting componentwise complex conjugation. The fibers of this fibration are rationally embedded $\CP^1$'s that get identified with twistor lines. Hence, $\F$ is diffeomorphic to a sphere bundle over $\CP^2$ as required.

$Z_N$ embeds holomorphically into $\F$ via the map
\be\label{ZNtoF}
(v^{\dal},w^{\dal})\mapsto [V^k] = [v^{\dal},-\sqrt N]\,,\;[W_k]=[w_{\dal},\sqrt N]\,.
\ee
So, $Z_N$ is the locus in the flag variety where both $V^3$ and $W_3$ are non-zero. To obtain the full twistor space, we need to adjoin the locus where exactly one of $V^3, W_3$ is zero. This is the quadric $V^3W_3=0$, except for the locus $V^3=W_3=0$ that forms the twistor line of the point at infinity $[0,0,1]\in\CP^2$ (the twistor lines are described below in more detail). This will play the role of Pontecorvo's quadric in the backreacted geometry. Because of $V^kW_k=0$, this is the same as the quadric $V^1W_1+V^2W_2=v^{\dal}w_{\dal}=0$.

The locus where $V^3 = 0$ arises when the vector $v^{\dal}$ tends to $\infty$ while $w^{\dal}$ remains finite.  By adjoining this locus, we include the $\sigma = \infty$ point in the parametrized curve \eqref{hcurve}. Similarly, the locus $W_3 = 0$ arises when $w^{\dal} \to \infty$ so that $\sigma \to 0$ in \eqref{hcurve}.

The divisor $\{V^3 = 0, W_3\neq0\}$ is a copy of the blow-up of $\C^2$ at $0$. To see this, we note that on this locus, we can set $W_3 = 1$.  Then the region is parameterized by $v^{\dal}$ and $w^{\dal}$ satisfying $[v\,w] = 0$, where $v$ is projective.  If $w \neq 0$, then $v$ must be the projectivization of $w$. If $w = 0$, then $v$ is arbitrary; so we have inserted into the $\C^2$ parametrized by $w$ a copy of $\CP^1$ at $0$. Similarly for the locus $\{W^3 = 0,V^3\neq0\}$. 

Thus, the full twistor space we are interested in is obtained by adjoining to our backreacted space $\SL_2(\C)$ two copies of the blow-up of $\C^2$.  Every twistor line intersects each of these divisors in a point.   We continue to refer to this completed twistor space as $Z_N$ by a mild abuse of notation.  

The flag variety also includes the locus $V^3 = W_3 = 0$. On this locus, $v,w$ are projective vectors and $[v\,w] = 0$. Thus, $v$ is proportional to $w$, and this locus is a copy of $\CP^1$. It is the diagonal $\CP^1$ in $\CP^1_v \times \CP^1_w$.  This $\CP^1$ locus is not part of the twistor space of Burns space: it is the $\CP^1$ corresponding to the point in $\CP^2$ we have removed. It will be part of the asymptotic boundary.

We can now view our deformed incidence relations \eqref{hcurve} as twistor lines of $\CP^2$. To see this, let
\begin{align}
    U^k&=\left(u^{\dal},\frac{\|u\|^2}{\sqrt N}\right) = \left(u^{\dot1},u^{\dot2},\frac{\|u\|^2}{\sqrt N}\right)\,,\\
    \bar U_k &\equiv\overline{U^k} = \left(-\hat u_{\dal},\frac{\|u\|^2}{\sqrt N}\right) = \left(\bar u^1,\bar u^2,\frac{\|u\|^2}{\sqrt N}\right)\,.
\end{align}
$[U^k]$ gives a point on $\CP^2$, and $[\bar U_k]$ is its complex conjugate. Equivalently, we are using $\sqrt N u^{\dal}/\|u\|^2$ as affine coordinates on a patch of $\CP^2$ (these are related to the standard affine coordinates by an orientation reversing inversion $u^{\dal}\mapsto \sqrt{N}u^{\dal}/\|u\|^2$). The twistor line associated to $U^k$ is a rational curve in the flag variety $\F$. It is abstractly given by the set
\be\label{hcurveCP2}
\{([V^k],[W_k])\in\CP^2\times\CP^2\,|\,V^k W_k = U^kW_k=V^k\bar U_k=0\}\,.
\ee
This can be rationally parametrized in terms of a stereographic coordinate $\sigma\in\CP^1$ by composing \eqref{hcurve} with the map \eqref{ZNtoF}. As we are working with homogeneous coordinates, we are now able to write this composition in a way that is well-defined at $\sigma=0,\infty$,
\be
    [V^k] = \left[u^{\dal}+\sigma\hat u^{\dal},-\sqrt N\right]\,,\quad
    [W_k] = \left[\biggl(1+\frac{N}{\|u\|^2}\biggr)u_{\dal}+\sigma\hat u_{\dal},\sqrt N\sigma\right]\,.
\ee
This parametrization manifestly satisfies $V^k W_k = U^kW_k=V^k\bar U_k=0$. 

We can also show that these are \emph{real} twistor lines of $\CP^2$, i.e., they are preserved by the natural fixed-point-free antiholomorphic involution on $\F$ that maps $([V^k],[W_k])\mapsto([\bar W^k],[\bar V_k])$, where $\bar V_k\equiv\overline{V^k}$, $\bar W^k\equiv\overline{W_k}$. To verify this, one computes
\be
    [\bar W^k] = \left[\biggl(1+\frac{N}{\|u\|^2}\biggr)\hat u^{\dal}-\bar\sigma u^{\dal},\sqrt N\bar\sigma\right]\,,\quad
    [\bar V_k] = \left[-\hat u_{\dal}+\bar\sigma u_{\dal},-\sqrt N\right]\,,
\ee
from which one can derive that $\bar W^k\bar V_k = U^k\bar V_k=\bar W^k\bar U_k=0$. So both $([V^k],[W_k])$ and its conjugate $([\bar W^k],[\bar V_k])$ lie on the same twistor line. We conclude that we have found the correct twistor lines required for constructing a self-dual spacetime with a metric of Euclidean signature. Since the twistor construction is conformally invariant, this metric will necessarily be conformal to the Fubini-Study metric.

In this picture, it is also clear that the twistor lines \eqref{hcurve} correspond to points in the patch $\{U^3\neq0\} \simeq \C^2-0\subset\CP^2-\{[0,0,1]\}$. Points like $(v^{\dal},w^{\dal}) = (\hat\zeta^{\dal}/\|\zeta\|,N\zeta^{\dal}/\|\zeta\|)$ also belong to the flag variety. But they project down to the set of points $[U^k] = [\zeta^{\dal},0]$ in $\CP^2$. Viewed as points on the blow-up of $\C^2$ at $0$, these are just the points on the exceptional divisor of the blow-up, $[\zeta^{\dal}]\in\CP^1$. This is why they did not belong to any of our twistor lines \eqref{hcurve}. Instead, the appropriate twistor lines are found by setting $U^k=(\zeta^{\dal},0)$ in the unparametrized definition \eqref{hcurveCP2}. We can also easily guess a parametrization for them as maps $\CP^1-\{0,\infty\}\to\SL_2(\C)$,
\be
v^{\dal}(\sigma) = \frac{\sigma \hat\zeta^{\dal}}{\|\zeta\|}\,,\qquad w^{\dal}(\sigma) = \frac{N\zeta^{\dal}}{\sigma\|\zeta\|}\,.
\ee
These agree with \eqref{hcurve0} if one sets $u^{\dal}=t\,\zeta^{\dal}$ with $t\in\C$, $[\zeta^{\dal}]\in\CP^1$, and absorbs the phase of $t$ in a reparametrization of $\sigma$. If needed, they may be written in a more symmetric fashion by rescaling $\sigma\mapsto\sigma\sqrt{N}$. Together with the lines \eqref{hcurve}, these foliate $Z_N$. 

\paragraph{The backreacted spacetime geometry.} 
Finally, let us obtain the scalar-flat K\"ahler manifold that $Z_N$ corresponds to. We will derive the metric on $\C^2-0\subset\til\C^2$, and it will be shown to extend to the exceptional divisor of the blow-up in the next section. 

The moduli space of the curves \eqref{hcurve} is $\C^2-0$ and has complex coordinates $u^{\dal}$. Since $Z_N$ embeds into the twistor space of $\CP^2$, the SD metric one obtains on $\C^2-0$ from the Penrose transform will be in the same conformal class as Fubini-Study (though with the opposite orientation). To fix a K\"ahler metric in this conformal class, we need to prescribe a meromorphic 3-form on $Z_N$.  

There is a unique up to scale holomorphic volume form on $\SL_2(\C)$ which is invariant under both the left and right actions of $\SL_2(\C)$.    It is conveniently described by means of the Poincar\'{e} residue. Given an analytic hypersurface $f(z_i)=0$ embedded in $\C^n$ -- with $z_i$ the coordinates on $\C^n$ -- a natural holomorphic volume form on it is given by the residue of $\d^nz/f$ at $f=0$. For $\SL_2(\C)\subset Z_N$, a global $(3,0)$-form is obtained by computing a residue of the volume form of $\C^4$ along $\SL_2(\C)$,
\be\label{3-form}
\Omega_N = -\text{Res}_{Z_N}\frac{\d^2v\wedge\d^2w}{[v\,w]-N} = -\frac{\d^2v\wedge\d w^{\dot1}}{v^{\dot1}}\,,
\ee
where the overall sign defines the orientation in which $\Omega_N$ reduces to $z^{-2}\d z\wedge\d v^{\dot1}\wedge\d v^{\dot2}$ as $N\to0$ (see \eqref{omegazv} below). 

This can also be written as a weightless meromorphic 3-form on the embedding of $Z_N$ in $\F$, 
\be\label{omegaF}
\Omega_N = \text{Res}_\F\,\frac{N}{(V^3W_3)^2}\frac{\D^2V\wedge\D^2W}{V\cdot W}\,,
\ee
with manifest second order poles on the quadric $V^3W_3=0$. In this expression, $\D^2V = \frac12\,\eps_{ijk}V^i\d V^j\wedge\d V^k$ and $\D^2W = \frac12\,\eps^{ijk}W_i\d W_j\wedge\d W_j$ are weighted holomorphic top forms on the two copies of $\CP^2$ involved in the definition \eqref{Fdef}, and the residue is taken at the hypersurface $V\cdot W=0$ cutting out $\F$. Pulling back by the embedding \eqref{ZNtoF}, and using $V^3W_3 = -(V^1W_1+V^2W_2)=-v^{\dal}w_{\dal}$ on $\F$, we see that \eqref{omegaF} reduces to \eqref{3-form} away from the polar quadric $V^3W_3=0$.

An important property of $\Omega_N$ is that its periods measure D1 brane charge \cite{Aganagic:2003db,Aganagic:2003qj}. This is the B-model analogue of D branes carrying RR charges in type II strings. Indeed, suppose we rewrite $\Omega_N$ in the undeformed coordinates $z,v^{\dal}$. Substituting for $w^{\dot1}$ from \eqref{wsol}, we can reexpress \eqref{3-form} as
\be\label{omegazv}
\Omega_N = \d^2v\wedge\left(\frac{\d z}{z^2} + \frac{N\,\D\hat v}{\|v\|^4}\right)\,.
\ee
If we integrate this over any 3-cycle surrounding the brane locus $v^{\dal}=0$, we find a multiple of the brane charge. For instance, integrating it over $\|v\|=1$ at fixed $z$ yields
\be
\int_{z,\|v\|=1}\Omega = N\int_{\|v\|=1}\d^2v\wedge\D\hat v = (2\pi\im)^2N
\ee
where the 3-sphere integral may be performed using standard techniques. These periods are necessarily quantized for the path integral of holomorphic Chern-Simons \eqref{hcs} to be invariant under gauge transformations with non-zero winding number. This provides us a twistorial reason for the quantization of $N$. 

We can use $\Omega_N$ to construct the backreacted spacetime geometry. As in flat space, the incidence relations \eqref{hcurve} provide a diffeomorphism from the projective spinor bundle of $\C^2-0$ to our backreacted twistor space $Z_N$. Pulling back $\Omega_N$ by this diffeomorphism yields
\be
\Omega_N = \frac{\d^2v}{\sigma^2}\wedge\left(\d\sigma+\frac{N[u\,\d u]}{\|u\|^4}\right)\,.
\ee
This is derived by expressing $\d w^{\dal}$ in terms of $\d\sigma,\d u^{\dal},\d v^{\dal}$, so that all $\d v^{\dal}$ dependence drops when wedging against $\d^2v$. As expected, it has a pair of second-order poles in $\sigma$, located at $\sigma=0,\infty$ in our choice of frame for the projective spinor bundle. Explicitly substituting
\be
\d^2v = \d v^{\dot1}\wedge\d v^{\dot2} = (\d u^{\dot1}-\sigma\,\d\bar u^2-\bar u^2\d\sigma)\wedge(\d u^{\dot2}+\sigma\,\d\bar u^1+\bar u^1\d\sigma)
\ee
and applying the contour integral formula \eqref{omegacirc}, we obtain the 2-form
\be\label{omegaburns}
\omega = \frac{1}{2\pi}\oint_{\sigma=0}\Omega_N = \im\left(\eps_{\dal\dot\beta}\,\d u^{\dal}\wedge\d\hat u^{\dot\beta} + N\,\frac{[u\,\d u]\wedge[\hat u\,\d\hat u]}{\|u\|^4}\right)
\ee
on $\C^2-0$. It is hermitian and satisfies $\d\omega=0$, so it gives the K\"ahler form of a hermitian metric on $\C^2-0$. The Penrose transform ensures that it is a self-dual (and hence scalar-flat) metric. Introducing the holomorphic and antiholomorphic derivatives $\p = \d u^{\dal}\,\p_{u^{\dal}}$, $\dbar = \d\hat u^{\dal}\,\p_{\hat u^{\dal}}$ on $\C^2$, we find
\be
\omega = \im\,\p\dbar K\,,\qquad K = \|u\|^2 + N\log\|u\|^2\,.
\ee
In the limit $N\to0$ of zero backreaction, $K$ reverts to the K\"ahler potential of the flat metric. 


\subsection{Burns space}
\label{sec:ahs}

Let $x^{\al\dal} = (u^{\dal},\hat u^{\dal})$ be complex coordinates on $\C^2-0$ as before (see \eqref{utox} for our conventions). The K\"ahler metric obtained from the K\"ahler form \eqref{omegaburns} is known as the \emph{Burns metric} \cite{burns1986twistors}. It is given by
\be\label{burns}
    \d s^2 = 2\left(\|\d u\|^2 + N\,\frac{|[u\,\d u]|^2}{\|u\|^4}\right)\,,
\ee
where $|[u\,\d u]|^2 = |u^{\dot2}\d u^{\dot1}-u^{\dot1}\d u^{\dot2}|^2$ as usual. This is Riemannian for $N>0$. 

Its isometry group is $\U(2)$, generated by rotations on the dotted index:
\be
u^{\dal}\mapsto L^{\dal}{}_{\dot\beta}u^{\dot\beta}\,,\qquad L\in\U(2)\,.
\ee 
This is a $\U(1)\times\SU(2)$ subgroup of the $\Spin(4)\simeq\SU(2)\times\SU(2)$ isometry group of flat space: the left-handed $\SU(2)$ is broken down to $\U(1)$, whereas the right-handed $\SU(2)$ remains unbroken and acts on the dotted indices in the usual way. The two $\U(1)$ subgroups of this isometry group generate circle actions in the $u^{\dot1}$ and $u^{\dot2}$ complex planes, each of which can be used to rewrite the Burns metric as a metric on a circle bundle over AdS$_3$. This is reviewed later on in this section.

Following the general discussion in section \ref{sec:open} and \ref{sec:closed}, the type I B-model compactifies to Mabuchi gravity plus an $\SO(8)$ WZW$_4$ model in the classical background of the Burns metric. For the purposes of holography, we would like to work in the large $N$ limit. To obtain a meaningful limit, one needs to perform the change of coordinates
\be\label{utoNu}
u^{\dal}\mapsto\sqrt{N}u^{\dal}\,.
\ee
Doing this, one obtains an alternative expression for the Burns metric,
\be
\d s^2 = 2N\left(\|\d u\|^2 + \frac{|[u\,\d u]|^2}{\|u\|^4}\right)\,.
\ee
The corresponding K\"ahler potential becomes $K=N(\|u\|^2+\log\|u\|^2)$. 

In this coordinate system, the WZW$_4$ action (including the coupling to the K\"ahler perturbation $\rho$) rescales as
\be
S_{\text{WZW}_4}[\phi,\omega+\im\,\p\dbar\rho] \mapsto N S_{\text{WZW}_4}\!\left[\phi,\omega|_{N=1}+\frac{\im\,\p\dbar\rho}{N}\right]
\ee
where $\omega|_{N=1}$ is the canonically normalized K\"ahler form obtained by setting $N=1$ in \eqref{omegaburns}. This is the form of the action employed in our Letter \cite{Costello:2022jpg}. The quantization of $N$ follows from the quantization of the K\"ahler form required by the WZW$_4$ model. Rescaling $\phi\mapsto\phi/\sqrt{N}$ then turns $1/\sqrt N$ into the topological string coupling. The $\phi^3$ coupling becomes $1/\sqrt{N}$, which is the open string coupling constant, while the $\rho\phi^2$ coupling is $1/N$, the closed string coupling constant. This allows for a Green-Schwarz anomaly cancellation of the gauge and gravitational anomalies in twistor space \cite{Costello:2019jsy,Costello:2021bah}.

The rescaling of the Mabuchi functional can also be evaluated straightforwardly without any need to rescale $\rho$. The K\"ahler form changes to $\omega\mapsto\omega|_{N=1}N$ while the Laplacian maps to $\lap\mapsto\lap|_{N=1}N^{-1}$ (see \eqref{burnslap}), so the kinetic term in \eqref{rhoac} immediately becomes independent of $N$. The $\rho^3$ coupling is seen to be $1/N$ as expected. 

Ideally, this is the form of the action one should use to take the large $N$ limit, but in what follows we will continue to work with the original metric \eqref{burns}. This will prove useful since most of our calculations will turn out to be expressible as formal power series in $N$. This does not cause any trouble with taking a large $N$ limit because to work in the large $N$ limit simply means working at tree level in the bulk in most scenarios of interest. 

\medskip

For now, let us continue to explore the geometry associated to \eqref{burns}. The square root of the determinant of the Burns metric is
\be
\sqrt{|g|} = 1+\frac{N}{\|u\|^2}\,.
\ee
Using this, we can compute its Ricci form
\be\label{ricciburns}
\begin{split}
    P &= -\im\,\p\dbar\log\sqrt{|g|}\\
    &= \frac{\im N}{\|u\|^2(N+\|u\|^2)}\left(\frac{[u\,\d u]\wedge[\hat u\,\d\hat u]}{\|u\|^2}+\frac{[\hat u\,\d u]\wedge[u\,\d\hat u]}{N+\|u\|^2}\right)\,.
\end{split}
\ee
This is non-vanishing, so the metric is not Ricci-flat as expected. On the other hand, noticing that the K\"ahler form \eqref{omegaburns} can be reexpressed as
\be\label{omegaburns1}
\omega = \im\left(1+\frac{N}{\|u\|^2}\right)\left(\frac{[u\,\d u]\wedge[\hat u\,\d\hat u]}{\|u\|^2}-\frac{[\hat u\,\d u]\wedge[u\,\d\hat u]}{N+\|u\|^2}\right)\,,
\ee
it becomes immediate that the Ricci form satisfies $\omega\wedge P=0$. This confirms that the Burns metric is scalar-flat, i.e., has Ricci scalar $R=0$. Importantly, this tells us that the Burns metric has cosmological constant $\Lambda=0$,\footnote{As opposed to the more general possibility of SD metrics of constant scalar curvature $R=24\Lambda\neq0$.} distinguishing it from the asymptotically AdS geometries obtained from backreaction in past holographic setups.

The singularity at the origin $u^{\dal}=0$ is not a genuine curvature singularity, as is easily corroborated by computing the squared Ricci tensor or the Kretschmann scalar,
\be
R^{\mu\nu}R_{\mu\nu} = \frac{4N^2}{(N+\|u\|^2)^4}\,,\qquad R^{\mu\nu\rho\sigma}R_{\mu\nu\rho\sigma} = \frac{32N^2}{(N+\|u\|^2)^4}\,,
\ee
which are both finite at the origin. It simply signals that the spacetime is incomplete. 

In fact, we can easily extend \eqref{burns} from a metric on $\C^2-0$ to a metric on $\widetilde\C^2$, the blow-up of $\C^2$ at the origin, by identifying $\widetilde\C^2$ with the total space of the tautological line bundle $\CO(-1)\to\CP^1$. To see this, let $\zeta^{\dal}$ denote homogeneous coordinates on the base $\CP^1$. And let $t\in\C$ be a holomorphic coordinate along the fibers of $\CO(-1)$. We can perform the coordinate transformation
\be\label{utsub}
u^{\dal} = t\,\zeta^{\dal}
\ee
which is bijective for $u^{\dal}\neq0$ and maps to the exceptional divisor $\{t=0$, $[\zeta^{\dal}]\in\CP^1\}$ at the origin $u^{\dal}=0$. Under non-zero rescalings $\zeta^{\dal}\mapsto s\zeta^{\dal}$, the fiber coordinate transforms as $t\mapsto s^{-1}t$ and $u^{\dal}$ remains invariant. Note that since $t=0$ is the origin of $\C^2$, the metric \eqref{burns} is naively ill-defined there. But since $[\zeta\,\zeta]=0$, one finds that $[u\,\d u]=t\,[\zeta\,\d\zeta]$, so the substitution \eqref{utsub} converts \eqref{burns} into
\be\label{burnst}
\d s^2 = 2\left(\|\zeta\,\d t+t\,\d\zeta\|^2+N\,\frac{|[\zeta\,\d\zeta]|^2}{\|\zeta\|^4}\right)\,.
\ee
Since $t$ drops out of the second term, this resulting metric is non-singular at the locus $t=0$. Hence, we can extend the metric freely to $t=0$ and thus to the total space of $\CO(-1)$. 

Let $\D\zeta\vcentcolon=[\zeta\,\d\zeta]$ denote the standard weight $2$ frame for the holomorphic cotangent bundle of $\CP^1$. We can complete this into a frame $\D\zeta,\D t$ for the holomorphic cotangent bundle of $\CO(-1)$ by using the standard Chern connection 1-form on $\CO(-1)$,
\be
\D t \vcentcolon= \d t + \frac{[\hat\zeta\,\d\zeta]}{\|\zeta\|^2}\,t\,.
\ee
This frame is designed to be conformally covariant under local non-zero rescalings:
\be\label{locres}
(\zeta^{\dal},t)\mapsto\left(s(\zeta,t)\,\zeta^{\dal},\frac{t}{s(\zeta,t)}\right)\implies (\D\zeta,\D t)\mapsto\left(s(\zeta,t)^2\,\D\zeta,\frac{\D t}{s(\zeta,t)}\right)\,.
\ee
The metric \eqref{burnst} can then be compactly expressed as
\be\label{burnst1}
\d s^2 = 2\left\{\|\zeta\|^2|\D t|^2 + \left(N+|t|^2\|\zeta\|^2\right)\frac{|\D\zeta|^2}{\|\zeta\|^4}\right\}\,.
\ee
It is seen to be invariant under the local rescalings \eqref{locres}, so it is well-defined everywhere on $\CO(-1)$. 

In our analysis, we have been slightly cavalier about length dimensions by viewing both $u^{\dal}$ and $N$ as dimensionless. We can restore standard length dimensions for the coordinates $u^{\dal}$ by replacing $N$ by a length scale $N\ell^2$. Here, $\ell\sim\sqrt{\alpha'}$ is essentially the string length and is physically interpreted as setting the size of the blow-up at the origin. As $t\to0$ with $N>0$, \eqref{burnst1} reduces to $N/2$ times the usual stereographic metric $4|\D\zeta|^2/\|\zeta\|^4$ on the exceptional divisor $\CP^1$. Upon changing $N\mapsto N\ell^2$, this shows that the radius of the exceptional divisor is precisely $\sqrt{\frac N2}\,\ell$. The limit $N\to0$ of zero backreaction is equivalent to turning off the blow-up. Since the topological string is classically exact in $\alpha'$, we will suppress $\ell$ consistently throughout this work by setting it to 1.

\medskip

The Burns metric is a complete, \emph{asymptotically flat}, zero scalar curvature metric on $\widetilde\C^2$ \cite{burns1986twistors,lebrun1988counter,lebrun1988poon}, and our discussion of its twistor space has been adapted from the more extensive discussion presented in \cite{Lebrun91explicitself-dual}. We conclude that the completed spacetime geometry obtained from D1 brane backreaction in the twistorial B-model is that of $\widetilde\C^2$ equipped with this metric. We refer to it variously as \emph{Burns space} or \emph{Burns geometry}.

\paragraph{Asymptotic flatness.} In Euclidean signature, a metric on a connected, non-compact 4-manifold $M$ is called asymptotically flat (more appropriately asymptotically Euclidean) if it asymptotes to the Euclidean metric with a falloff of at least $\mathrm{O}(r^{-2})$ with respect to some 4-radius $r=\sqrt{x^\mu x_\mu}$. More precisely, suppose there exists a compact set $K\subset M$ such that $M-K$ is diffeomorphic to the complement of a closed ball in $\R^4$. Let $x^\mu$ denote coordinates on this $\R^4$. If we can find a large enough $K$ for which the metric components in these coordinates have the falloff
\be
g_{\mu\nu} = \delta_{\mu\nu} + \mathrm{O}(r^{-2})\,,
\ee
then the metric $g$ is said to be \emph{asymptotically flat} \cite{Lebrun91explicitself-dual,lebrun1992twistors}. This property is manifest for the Burns metric in the complex coordinates $u^{\dal}\in\C^2-0$ employed in this section, as its difference $2Nu_{\dal}\hat u_{\dot\beta}\d u^{\dal}\d\hat u^{\dot\beta}/\|u\|^4$ from the flat metric has the componentwise falloff $\mathrm{O}(\|u\|^2/\|u\|^4) = \mathrm{O}(\|u\|^{-2})$.

\paragraph{Relation to $\CP^2$ and spacetime foam.} At the level of the topology of spacetime, the backreaction has precipitated a geometric transition from flat space $\C^2\simeq\R^4$ to its blow-up at the origin $\widetilde\C^2\simeq\CO(-1)$! In fact, it was already conjectured in the early work on twistor strings \cite{Hartnoll:2004rv} that wrapping D1 branes on twistor lines corresponds to blowing up the corresponding points in spacetime; our results give a concrete realization of the simplest case of this proposal. The work of \cite{Hartnoll:2004rv} was originally motivated from the subject of spacetime foam, where $\CP^2$ plays a key role. 

The Burns metric is seen to be conformal to the Fubini-Study metric on $\CP^2$ by performing the inversion $u^{\dal}\mapsto \sqrt{N}u^{\dal}/\|u\|^2$ \cite{lebrun1988poon},
\be\label{burnstocp2}
\begin{split}
    \d s^2_\text{Burns} &\mapsto 2N\left(1+\frac{1}{\|u\|^2}\right)^2\frac{\|\d u\|^2+|[u\,\d u]|^2}{(1+\|u\|^2)^2}\\
    &= 2N\left(1+\frac{1}{\|u\|^2}\right)^2\;\d s^2_\text{Fubini-Study}\,.
\end{split}
\ee
The conformal factor is clearly $\sqrt{2N}\,(1+\|u\|^{-2})$. Though we are working in an affine patch, one can check that this conformal transformation extends everywhere on $\CP^2$ except for its singularity $u^{\dal}=0$. Introducing homogeneous coordinates $[U^k] = [u^{\dal},1]$ on $\CP^2$, this is seen to be the point $[0,0,1]$, which one may interpret as the ``point at infinity''. We conclude that Burns space is conformally diffeomorphic to $\CP^2$ minus a point equipped with its Fubini-Study metric.\footnote{It is easily checked that $\CP^2$ minus a point is biholomorphic to $\CO(1)$ in an orientation-preserving manner. The diffeomorphism to $\CO(-1)$ -- the dual to $\CO(1)$ -- is instead orientation-reversing and not holomorphic.}

Historically, because of this fact, Burns space made an appearance in models of spacetime foam \cite{Hawking:1979hw,Hawking:1979pi}. These works studied Euclidean analogues of scattering amplitudes on gravitational instantons like $\CP^2$. In particular, they studied conformally coupled scalars on $\CP^2$. Due to the conformal equivalence between $\CP^2$ and Burns space, some of the results of their perturbative analysis will beautifully carry over to our context modulo appropriate conformal rescalings.

\medskip

In the rest of this section, we review some further interesting properties of Burns space which are potentially relevant to celestial holography but can be safely skipped on first reading.

\paragraph{Realization as an Einstein-Maxwell instanton.} We have taken the viewpoint that the Burns metric is a classical background of Mabuchi gravity. It has a non-zero Ricci form \eqref{ricciburns} so is clearly not Ricci-flat, therefore it does not represent a purely gravitational instanton in the usual sense. Since Burns space is asymptotically flat, this is consistent with the observation of \cite{Gibbons:1979xn} that any finite action, SD Ricci-flat spacetime that is asymptotically Euclidean must be $\R^4$. Nonetheless, the Burns metric (and more generally any scalar-flat K\"ahler metric) does act as an Einstein-Maxwell instanton \cite{Flaherty:1978wh,Dunajski:2006vs}, supported by the non-self-dual Maxwell flux
\be
F = \omega + P
\ee
where $\omega$ and $ P$ are its K\"ahler and Ricci forms given in \eqref{omegaburns} and \eqref{ricciburns} respectively.

To see this, let us compute the Riemann curvature of Burns space. Introduce a complex orthonormal coframe $\theta^{\al\dal}$ for the Burns metric,\footnote{Such complexified coframes are convenient to work with but do not obey the reality conditions \eqref{realtheta} assumed in appendix \ref{app:twistor}. If needed, they can always be rotated to ones that do by complexified local Lorentz transformations on the spinor indices.}
\be\label{btet}
\theta^{1\dal} = \d u^{\dal} - \frac{N\hat u^{\dal}}{\|u\|^4}\,[u\,\d u]\,,\qquad \theta^{2\dal} = \d\hat u^{\dal}\,.
\ee
It satisfies $\d s^2 = \eps_{\al\beta}\,\eps_{\dal\dot\beta}\,\theta^{\al\dal}\,\theta^{\beta\dot\beta}$. It is standard to use $\Sigma^{\al\beta}=\theta^\al{}_{\dal}\wedge\theta^{\beta\dal}$ and $\tilde\Sigma^{\dal\dot\beta} = \theta_\al{}^{\dal}\wedge\theta^{\al\dot\beta}$ as the bases of ASD and SD 2-forms on spacetime \cite{Capovilla:1991qb}, see appendix \ref{app:twistor} for more details. The ASD spin connection is determined from $\d\Sigma^{\al\beta} = 2\,\Gamma^{(\al}{}_{\gamma}\wedge\Sigma^{\beta)\gamma}$,
\be
\Gamma_{12} = \Gamma_{21} = \frac{N[u\,\d\hat u]}{2\,\|u\|^2(N+\|u\|^2)}\,,\qquad\Gamma_{11} = \Gamma_{22} = 0\,.
\ee
The associated ASD Riemann curvature 2-form is easily confirmed to be pure Ricci and self-dual
\be\label{Phiburns}
\begin{split}
    &R_{\al\beta} = \d\Gamma_{\al\beta} + \Gamma_\al{}^\gamma\wedge\Gamma_{\gamma\beta} = \Phi_{\al\beta\dal\dot\beta}\tilde\Sigma^{\dal\dot\beta}\,,\\ 
    &\Phi_{\al\beta\dal\dot\beta} = \frac{No_{(\al}\iota_{\beta)}u_{(\dal}\hat u_{\dot\beta)}}{\|u\|^2(N+\|u\|^2)^2}\,.
\end{split}
\ee
Here, $o_\al=(1,0)$ and $\iota_\al=(0,1)$ are a constant spinor basis. This gives another proof of the fact that the metric is self-dual and scalar-flat. 

The Einstein-Maxwell equations read
\begin{align}
    &R_{\mu\nu}-\frac{1}{2}g_{\mu\nu}R = T_{\mu\nu} = F_{\mu\rho}F_\nu{}^\rho - \frac{1}{4}g_{\mu\nu}F^2\,,\label{gremax}\\
    &\d F = \d\star F = 0\,,\label{Femax}
\end{align}
where $R_{\mu\nu}$ is the Ricci tensor, $F$ is the Maxwell field strength, and $\star$ is the Hodge dual. Decompose $F$ into ASD and SD parts, $F = F_{\al\beta}\Sigma^{\al\beta} + \tilde F_{\dal\dot\beta}\tilde\Sigma^{\dal\dot\beta}$. In an orthonormal frame like ours, the spinor equivalent of the Ricci tensor decomposes as $R_{\al\dal\beta\dot\beta} = 2\Phi_{\al\beta\dal\dot\beta} - \frac{1}{4}\eps_{\al\beta}\eps_{\dal\dot\beta}R$ \cite{Wald:1984rg}, whereas that of the Maxwell stress tensor reduces to $T_{\al\dal\beta\dot\beta} = -2F_{\al\beta}\tilde F_{\dal\dot\beta}$. The Einstein equation \eqref{gremax} then decomposes into two irreducible equations
\be\label{emax}
\Phi_{\al\beta\dal\dot\beta} + F_{\al\beta}\tilde F_{\dal\dot\beta} = 0\,,\qquad R=0\,.
\ee
Now, in our orientation with volume form $\sqrt{|g|}\,\d u^{\dot1}\wedge\d\bar u^1\wedge\d u^{\dot2}\wedge\d\bar u^2$, the K\"ahler form $\omega$ is ASD, a standard fact in K\"ahler geometry. On the other hand, the Ricci form $ P$ is SD, which can be checked directly but is mainly a consequence of scalar-flatness. Setting $F=\omega+ P$, we find
\be\label{maxwell}
\begin{split}
    F_{\al\beta}\Sigma^{\al\beta} &= \omega\implies F_{\al\beta} = \im\,o_{(\al}\iota_{\beta)}\,,\\
    \tilde F_{\dal\dot\beta}\tilde\Sigma^{\dal\dot\beta} &=  P\implies\tilde F_{\dal\dot\beta} = \frac{\im Nu_{(\dal}\hat u_{\dot\beta)}}{\|u\|^2(N+\|u\|^2)^2} \,.
\end{split}
\ee
The equations \eqref{emax} are then immediately satisfied by \eqref{Phiburns} and \eqref{maxwell} by construction. At the same time, $F=\omega+ P$ solves the Maxwell equations \eqref{Femax} because $\omega$ and $P$ are co-closed by virtue of being closed and either ASD or SD.

This shows that Burns space can be supported within Einstein gravity via a Maxwell flux. The Maxwell background \eqref{maxwell} is not self-dual, as $F_{\al\beta}\neq0$. So this is necessarily a solution of the non-self-dual theory. Nonetheless, this fact has been used for generating new solutions of supergravity in the works \cite{Bobev:2011kk,Bobev:2012af}, which may be of interest for building more general setups for flat holography.

\paragraph{Hyperbolic foliation.} The Burns metric also admits other geometrical incarnations. One that may be of potential interest for holography is as a metric on the total space of a circle bundle over an open subset of Euclidean AdS$_3$ \cite{Lebrun91explicitself-dual}. 

This can be understood by a judicious change of coordinates. For simplicity, let us revert to working on $\C^2-0$. The Burns metric has a pair of circle isometries: rotations in the $u^{\dot1}$ complex plane or the $u^{\dot2}$ complex plane. In the symplectic structure associated to the K\"ahler form $\omega$, both of these are generated by hamiltonian vector fields. Introduce new coordinates $t\in(0,2\pi)$, $q\in(0,\infty)$, $(x,y)\in\R^2$ defined by the relations
\be
t = \arg(u^{\dot1})\,,\qquad q^2 = 2\,|u^{\dot1}|^2\left(1+\frac{N}{\|u\|^2}\right),\qquad u^{\dot2} = \frac{x+\im y}{\sqrt2}\,.
\ee
$t$ provides a coordinate along the orbits of the circle action rotating the $u^{\dot1}$ complex plane, and $q^2$ is precisely the Hamiltonian generating this circle action. These coordinates are well-defined away from the fixed point set of this circle action. The latter is a copy of $\R^2$ given by $\{u^{\dot1}=0\}$. The coordinate patch on $\C^2-0$ obtained from removing this $\R^2$ (which in particular includes the origin) is $\R^4-\R^2 \simeq \text{AdS}_3\times S^1$. The phase $t$ acts as a coordinate along $S^1$, while $(x,y,q)$ provide Poincar\'e coordinates on AdS$_3$. Actually, $q$ only covers the full range $(0,\infty)$ when $(x,y)\neq0$. When $(x,y)=0$, one sees from its definition that $q^2 = 2(|u^{\dot1}|^2+N)>2N$. So this coordinate patch covers $S^1$ times the open subset $\text{AdS}_3-\{x=y=0,q\leq N\}$. 

In these coordinates, the Burns metric \eqref{burns} transforms into
\be
\d s^2 = q^2\left(\frac1V\left(\d t+A\right)^2 + V\,\frac{\d x^2+\d y^2+\d q^2}{q^2}\right)\,.
\ee
The overall conformal factor of $q^2$ ensures asymptotic flatness. The data $(V,A)$ is independent of $t$ and describes a magnetic monopole on AdS$_3$ centered at $(x,y,q)=(0,0,\sqrt{2N})$. The scalar potential $V$ of this monopole is found to be
\be
V = 1+\frac12\,\bigg(\frac{q^2+x^2+y^2+2N}{\sqrt{(q^2+x^2+y^2+2N)^2-8Nq^2}}-1\bigg)\,.
\ee
It solves the AdS$_3$ Laplace equation and approaches $V\to1$ in the flat limit $N\to0$. Recalling the expression for the hyperbolic distance between the points $(x,y,q)$ and $(0,0,\sqrt{2N})$,
\be
\beta = \cosh^{-1}\!\left(\frac{x^2+y^2+q^2+2N}{2q\sqrt{2N}}\right)\,,
\ee
we can express $V$ in terms of the fundamental solution $1/(\e^{2\beta}-1)$ of the AdS$_3$ Laplace equation,
\be
V = 1 + \frac{1}{\e^{2\beta}-1}\,.
\ee
The associated 3-dimensional vector potential reads
\be
A = \frac{(x\,\d y-y\,\d x)}{2(x^2+y^2)}\,\bigg(\frac{q^2+x^2+y^2-2N}{\sqrt{(q^2+x^2+y^2+2N)^2-8Nq^2}}-1\bigg)\,.
\ee
This satisfies $\d A = \star_{\text{AdS}_3}\d V$ and vanishes as $N\to0$. 

In this sense, the Burns metric and its multicentered cousins found by LeBrun in \cite{Lebrun91explicitself-dual} act as hyperbolic analogues of the Gibbons-Hawking metrics \cite{Hawking:1976jb,Gibbons:1978tef}. This form of the metric may be helpful in relating holography on Burns space to more standard AdS$_3$/CFT$_2$ dualities, potentially realizing the hopes of \cite{Ball:2019atb}. It could also help in finding multicentered generalizations of asymptotically flat holography, perhaps in analogy with recent generalizations of twisted holography that include multiple stacks of D1 branes \cite{Budzik:2022hcd}.


\section{The holographic dual chiral algebra}
\label{sec:dual}

In this section, we describe the celestial holographic dual to the B-model on the twistor space of Burns space. By compactifying the latter along twistor lines, one obtains a duality relating WZW$_4$ + Mabuchi gravity on Burns space to this celestial CFT.

The dual CFT is a purely chiral system, living on $\CP^1$ with two defects placed at $z = 0$ and $z = \infty$.  We will first describe the system without the defects.

The chiral algebra is the algebra living on $N$ D1 branes in the type I topological B-string.  This algebra is an orientifold of the chiral algebra living on D1 branes in the type II topological B-string, which is the algebra studied in \cite{CG}.   The orientifold procedure was computed in \cite{Costello:2019jsy}. 

An alternative description of the chiral algebra is that it is the algebra associated by the superconformal localization procedure of \cite{Beem:2013sza, Oh:2019bgz} to the family of 4d $\mathcal{N}=2$ supersymmetric gauge theories with $\Sp(N)$ gauge symmetry, $\SO(8)$ flavour symmetry  and matter as described below. These are non-unitary chiral algebras with negative Kac-Moody levels and therefore, by the usual Sugawara relation, negative central charges. 

The algebra is built by performing BRST reduction on a collection of free symplectic bosons. The gauge group for the BRST reduction is $ \Sp( N )$. 

Let $V$ denote the fundamental representation of dimension $2N$, and let $i$ denote an index for a basis of $V$.  The symplectic pairing is $\omega_{ij}$.\footnote{This is not to be confused with the K{\"a}hler potential.} The fundamental fields of the dual are symplectic bosons carrying conformal weights $(\frac12,0)$ each:
\begin{equation}
\begin{split}
I_{r i} &\in \Omega^{\frac12, 0}(\CP^1, \C^8 \otimes V) \\
X^{\dot{\alpha}}_{ij}  & \in \Omega^{\frac12,0} (\CP^1, \wedge^2_0 V \otimes \C^2). 
\end{split}
\end{equation}
The fields $X^{\dot{\alpha}}$ transform under the 4d Lorentz group as right-handed spinors. They describe D1-D1 strings. The notation $\wedge^2_0 V$ means we are using the trace-free exterior square, so that
\begin{equation} \label{tracefree}
	\omega^{ij} X^{\dot{\alpha}}_{ij}  = 0\,.  
\end{equation}
$\omega^{ij}$ here is the inverse of $\omega_{ij}$, with the convention $\omega^{ij}\omega_{jk}=\delta^i_k$, and can be used to raise indices. This is the matter content summarized in our last work \cite{Costello:2022jpg}.

Occasionally, it will prove convenient for our holographic calculations to keep the trace part of  $X^{\dal}_{ij}$ explicitly in the chiral algebra, rather than only taking the trace-free exterior square. This is analogous to the choice to work with $\U(N)$ rather than $\SU(N)$ gauge group in 4d $\mathcal{N}=4$ super Yang-Mills. It corresponds to the center-of-mass degrees of freedom in the bulk, or dually the $E[1,0], E[0, 1]$ (see the definition \eqref{CFT}) generators in the chiral algebra, which can be consistently removed. We will find it convenient to keep the center of mass modes nonvanishing when building our holographic dictionary in section \ref{sec:ops}, and use them in section \ref{sec:gravope} to test the gravitational sector of our duality.

The matter fields $I_{ri}$ describe D1-D5 strings. The index $r$ in $I_{ri}$ is an $\SO(8)$ index for a basis of $\C^8$.  The Lagrangian for the free theory is
\begin{equation} 
\int \omega^{ij} \delta^{rs}  I_{r i} \dbar I_{s j }  + \int \eps_{\dot{\alpha} \dot{\beta}} \omega^{ik} \omega^{jl}     X^{\dot{\alpha}}_{ij}\dbar X^{\dot{\beta}}_{kl}. 
\end{equation}  
This leads to the OPE (where we have absorbed some factors of $2 \pi \i$ into rescaling the fields)
\begin{align}
		I_{ri} (z)\, I_{sj} (z') &\sim \frac{\delta_{rs} \omega_{ij} }{z - z' } \,, \label{IIope}	\\
		X^{\dot{\alpha}}_{ i j }(z)\,  X^{\dot{\beta}}_{ k l }(z') &\sim \frac{ \eps^{\dot{\alpha} \dot{\beta} }  } { z - z'}\left( \omega_{i[k|} \omega_{j|l]} - \frac{\omega_{ij} \omega_{kl}}{2N}   \right)\,. \label{XXope}
\end{align}
The tensor structure in the $XX$ OPE ensures that both sides are trace-free. 

When we want to keep the center of mass modes in play, we simply modify the $XX$ OPE by dropping the trace-free condition:
\be\label{XXcom}
X^{\dot{\alpha}}_{ i j }(z)\,  X^{\dot{\beta}}_{ k l }(z') \sim \frac{ \eps^{\dot{\alpha} \dot{\beta} }\omega_{i[k|} \omega_{j|l]} } { z - z'}\,. 
\ee
For the rest of this section, we continue to work with a trace-free $X$, but our considerations can be easily modified to include the center of mass modes.\footnote{For example, including the center of mass modes does not affect cancellation of the $\Sp(N)$ gauge anomaly because the trace $\omega^{ij}X^{\dal}_{ij}$ is $\Sp(N)$ invariant and does not couple to the $\Sp(N)$ gauge field.}

We then perform the BRST reduction with respect to $\Sp(N)$.  This means we introduce $\b-\c$ ghosts transforming in the adjoint representation of $\Sp(N)$.  The operator $\b$ is of ghost number $-1$ and spin $1$, $\c$ is of ghost number $1$ and spin $0$, and they have the OPE
\begin{equation} 
	\b^{\sa}(z) \c_{\sb}(z') \sim \frac{\delta^\sa_\sb }{z - z'}\,,
\end{equation}
where $\sa,\sb,\dots$ are indices in the adjoint of $\mathfrak{sp}(N)$. The BRST current is 
\begin{multline}
	J_\text{BRST} = \frac12\, f^{\sb\sc}{}_\sa :\! \b^\sa\c_\sb \c_\sc\!:  + \frac{1}{2}\,\delta^{rs} T^{\sa ij} :\!\c_\sa I_{ri} I_{sj}\!:\\ + \frac{1}{2}:\!\c_\sa X^{\dot{\alpha}}_{ij} X^{\dot{\beta}}_{kl}\!: \eps_{\dot{\alpha} \dot{\beta} } (T^{\sa ik} \omega^{jl} - T^{\sa jk } \omega^{il} - T^{\sa il} \omega^{jk} + T^{\sa jl} \omega^{ik} ) 
\end{multline}
where $T^{\sa}$ are generators of $\mf{sp}(N)$, and $T^{\sa ij}$ is the matrix element for their action in the fundamental representation of $\mf{sp}(N)$. 

The BRST procedure produces the ADHM constraint equation
\begin{equation}\label{adhm}
    [X_{\dot 1}, X_{\dot 2}] + \delta^{rs}I_rI_s = 0
\end{equation} which enables us to commute $X$ operators past each other at the cost of introducing a double-trace term. Since the latter contributions are suppressed in the planar limit, we will be able to freely commute $X$'s in our computations that follow. 

Let us check that this system is free of the BRST anomaly.  If $\mathcal{X}$ denotes an element of $\mf{sp}(N)$, the anomaly which prevents the BRST operator squaring to zero is given by
\begin{equation} 
2 \op{Tr}_{\mf{sp}(N)} (\cX^2)  - \op{Tr}_{\C^8 \otimes V \oplus \C^2 \otimes \wedge^2_0 V} (\cX^2).  
\end{equation} 
We can check whether or not this expression vanishes by evaluating it for $\cX$ in a rank $1$ subalgebra, which we take to be 
\begin{equation} 
 \mf{sl}(2) =\mf{sp}(1)    \subset \mf{sp}(N). 
\end{equation}
 Therefore, let us decompose the fundamental of $\mf{sp}(N)$ as a representation of $\mf{sl}(2)$: 
\begin{equation} 
V = W \oplus \C^{2N - 2} 
\end{equation}  
where $W$ is the fundamental of $\mf{sl}(2)$. 

Then,
\begin{equation}
\begin{split}
\wedge^2_0 V &=  (2N-2) W \\ 
 \mf{sp}(N) &= S^2 V = S^2 W \oplus (2 N-2) W 
\end{split} 
\end{equation}
We are taking the trace in the virtual representation 
\begin{equation} 
2 \wedge^2_0 V \oplus 8 V  \ominus 2 S^2 V  
\end{equation}
We see that the factors of $(2 N-2) S_{\pm}$ cancel, and we are left with
\begin{equation} 
8 W \ominus 2 S^2 W
\end{equation}
If we take $\cX = h$  to be a basis for the Cartan of $\mf{sl}(2)$, then the trace of $\cX^2$ is
\begin{equation} 
16 - 16 = 0
\end{equation} 
as desired.


\subsection{Large $N$ BRST cohomology}

States on Burns space will be dual to BRST invariant single-trace operators of the chiral algebra in the large $N$ limit. The BRST cohomology can be computed at large $N$ using tools from homological algebra \cite{loday1984cyclic, tsygan1983homology}.  Details are provided in appendix \ref{app:BRST_computation}. Massive states like giant gravitons \cite{Budzik:2021fyh} are also expected to play an interesting role at finite $N$, but they lie beyond the scope of this work.

In what follows, for each $\dal=\dot1,\dot2,$ we will view $(X^{\dal})^i{}_j \equiv \omega^{ik}X^{\dal}_{kj}$ as endomorphisms of the fundamental representation $V$ of $\Sp(N)$. Similarly, $I_r{}^i\equiv\omega^{ij}I_{rj}$ and $I_{ri}$ will be viewed as a vector or covector in the fundamental representation, with $r$ an $\SO(8)$ index.  A basis for the BRST cohomology is 
\begin{equation}\label{CFT} 
	\begin{split} 
		J_{rs} [m,n] &= \omega\bigl( I_{r} , (X^{\dot1})^{(m}(X^{\dot2})^{n)} I_s\bigr)\,,\\
		E[m,n] &= \op{Tr} \bigl( (X^{\dot1})^{(m}(X^{\dot2})^{n)}\bigr) \,,\\ 
		F[m,n] &= \op{Tr} \left( [X\,\p X] (X^{\dot1})^{(m}(X^{\dot2})^{n)} \right) + \text{terms with ghosts}\,.
	\end{split}
\end{equation}
For example, $J_{rs}[1,0]=I_{ri}(X^{\dot1})^i{}_jI_s{}^j$, $E[2,0]=\omega^{ij}(X^{\dot1})_{ik}(X^{\dot1})^{k}{}_j$, etc.   

In these expressions, $(X^{\dot1})^{(m}(X^{\dot2})^{n)}$ is shorthand notation for the symmetrized product of these $n+m$ matrices. Also, $[X\,\p X]=X^{\dal}\p X_{\dal}$ uses the notation for spinor contractions introduced earlier. The currents $F[m,n]$ receive corrections containing ghosts that ensure BRST closure, but we have suppressed these as we will not be needing them in the sequel. (In particular, $F[0,0]$ is the stress tensor operator).

The currents $J_{rs}[m,n]$ are anti-symmetric in the $r,s$ indices, because $\omega$ is anti-symmetric and because
\begin{equation} 
	\omega (I_r, X^{\dot{\alpha}} I_s ) = \omega ( X^{\dot{\alpha}} I_r, I_s ). 
\end{equation}
Therefore we can view $r,s$ as being an index in the adjoint for $\mf{so}(8)$. We will often suppress the $\Sp(N)$ inner product and write these currents simply as $J_{rs}[m,n] = I_r (X^{\dot1})^{(m}(X^{\dot2})^{n)}I_s$, interpreting $I_r$ as a covector and $I_s$ as a vector of $\Sp(N)$. To lighten notation, the composite index $rs$ will also often be replaced with an adjoint index $a$ for $\mf{so}(8)$.


\subsection{Defect boundary conditions from Koszul duality}
\label{sec:chiralbdy}

To complete the duality, we would like to prescribe boundary conditions on our chiral algebra fields near the defects at $z=0,\infty$. We will obtain boundary conditions on the composite currents $J[k,l], E[k,l], F[k,l]$ with relative ease using the Koszul duality \footnote{One can also use Koszul duality to study maps between modules of Koszul dual algebras, which has been explored recently from a twistorial perspective in \cite{Garner:2023izn}. It would be interesting to study Koszul dual modules further, but we will not do so in this work.} approach espoused in \cite{Costello:2020jbh,PW,Costello:2022wso}. Boundary conditions on the fundamental fields $X,I$ that give rise to these boundary conditions on the composite currents are described in the next section.

Koszul duality helps determine the most general couplings of the D1-brane chiral algebra to topological strings in the bulk. The chiral algebra couples in a standard way to the fields of the type I B-model on $\C^3$, with volume form 
\begin{equation} 
	\d z\, \d^2 v \,.
\end{equation}
We are interested in a chiral algebra which couples to the fields of the type I B-model on $\PT$, the twistor space of $\R^4$, with the meromorphic volume form
\begin{equation} 
	\Omega = z^{-2}\, \d z \,\d^2 v\,. 
\end{equation}
Building such couplings will lead us to impose appropriate boundary conditions on the chiral algebra operators.

Let us first recall how the chiral algebra couples to $\SO(8)$ holomorphic Chern-Simons theory on $\C^3$ with the volume form $\d z\, \d^2 v$. If 
\begin{equation} 
	\mc{A} \in \Omega^{0,1}(\C^3, \mf{so}(8) ) 
\end{equation}
is the gauge field of holomorphic Chern-Simons theory, the coupling (at tree level) is by
\begin{equation} \label{Acoup}
\sum_{m,n \ge 0} \frac{1}{m! n!} \int_{v^{\dal} = 0}\d z  \; J[m,n]\;\partial_{v^{\dot1}}^m \partial_{v^{\dot2}}^n \mc{A} \,.
\end{equation}
Including $z = \infty$, flat space $\C^3$ naturally completes to the resolved conifold 
\begin{equation} 
	\Oo(-1)\oplus\Oo(-1) \to\CP^1\,. 
\end{equation}
Since $\Oo(-1)$ is the square root of the canonical bundle on $\CP^1$, the operator $\partial_{v^{\dal}}$ is of spin $-\frac{1}{2}$. Since $J[m,n]$ is of spin $1 + \frac{1}{2} (m+n)$, the whole expression is of spin $1$ and is therefore something that can be integrated against the $(0,1)$ form $\mc{A}$.  

Away from the locus $z = 0, \infty$, this formula can be adapted to twistor space. This patch of twistor space is isomorphic to a patch of $\C^3$ when we use the coordinates $z, z^{-1} v^{\dal}$. The point is that the volume form on twistor space in these coordinates is 
\begin{equation} 
	\Omega=\d z\wedge \d (z^{-1} v^{\dot1} )\wedge \d (z^{-1} v^{\dot2})\, . 
\end{equation}
Therefore, the theory on the D1 brane couples to holomorphic Chern-Simons theory on this patch by the formula
\begin{equation} \label{Jcoup}
	\sum_{m,n \ge 0} \frac{1}{m! n!} \int_{v^{\dal} = 0}\d z \;z^{m+n} J[m,n]\;\partial_{v^{\dot1}}^m \partial_{v^{\dot2}}^n \mc{A} 
\end{equation}
which is obtained by replacing $v^{\dal}\mapsto z^{-1}v^{\dal}$ in \eqref{Acoup}. This makes sense, because on twistor space $\partial_{v^{\dal}}$ is of spin $\frac{1}{2}$, but $z \partial_{v^{\dal}}$ is of spin $-\frac{1}{2}$.\footnote{A description in homogeneous coordinates may be helpful for practitioners of twistor theory. If one introduces homogeneous coordinates $[\mu^{\dal},\lambda_\al]$ on $\PT=\CP^3-\CP^1$ by setting $z=\lambda_2/\lambda_1, v^{\dal}=\mu^{\dal}/\lambda_1$, then the fields $X,I$ live on the twistor line $\mu^{\dal}=0$ and are valued in $\CO(-1)$ each. The composite current $J[m,n]$ is $\CO(-m-n-2)$-valued. So it couples to the bulk through the coupling \begin{equation*}
    \frac{1}{m!n!}\int_{\mu^{\dal}=0}\la\lambda\,\d\lambda\ra\;(\lambda_1\lambda_2)^{m+n}J[m,n]\;\p_{\mu^{\dot1}}^m\p_{\mu^{\dot2}}^n\cA\,.
\end{equation*}
This reduces to the coupling in \eqref{Jcoup} if one sets $\lambda_\al=(1,z)$, $\mu^{\dal}=v^{\dal}$. Similar expressions can be written for couplings to $E[m,n]$, $F[m,n]$.} 

A similar, but more complicated, analysis applies to the closed string fields. The closed string field in the topological string is an element $\eta \in \Omega^{2,1}$ which is $\partial$-closed. We can write 
\begin{equation} 
	\eta = \partial \gamma
\end{equation}
for a $(1,1)$-form $\gamma$, defined up to $\gamma\sim\gamma+\p\al$ for some $\al\in\Omega^{0,1}$. Then the coupling between the $E,F$ towers and $\eta$ is given by
\begin{equation} 
	\begin{split} 
		& \sum_{m,n\geq0}\frac{1}{m! n!}\int E[m,n]\;\mc{L}_{\partial_{v^{\dot1}}} ^m \mc{L}_{\partial_{v^{\dot2}}}^n \gamma \\
		&+  \sum_{m,n\geq0}\frac{1}{m! n!}\int \d z\;  F[m,n]\;\mc{L}_{\partial_{v^{\dot1}}} ^m \mc{L}_{\partial_{v^{\dot2}}}^n \iota_{\partial_{v^{\dal}}} \mc{L}_{\partial_{v_{\dal}}}\gamma. 
	\end{split}
\end{equation}
Here, $\cL_\xi\gamma$ denotes Lie derivative with respect to a vector $\xi$, and $\iota_\xi\gamma\equiv\xi\ip\gamma$ is standard abbreviation for the interior product. It is easy to check that this coupling transforms by a total derivative under $\gamma\mapsto\gamma+\p\alpha$.

This is the expression for the coupling on $\C^3$. When we transform to twistor space, by the analysis above, we find that the coupling becomes
\begin{equation} \label{EFcoup}
	\begin{split} 
		&  \sum_{m,n\geq0}\frac{1}{m! n!}\int z^{m+n}E[m,n]\;\mc{L}_{\partial_{v^{\dot1}}} ^m \mc{L}_{\partial_{v^{\dot2}}}^n \gamma \\
		&+  \sum_{m,n\geq0}\frac{1}{m! n!}\int \d z\;  z^{m+n+2}F[m,n]\;\mc{L}_{\partial_{v^{\dot1}}} ^m \mc{L}_{\partial_{v^{\dot2}}}^n \iota_{\partial_{v^{\dal}}} \mc{L}_{\partial_{v_{\dal}}}\gamma. 
	\end{split}
\end{equation}
Recalling the expression \eqref{CFT} for $E,F$ in terms of $X$, we see that each appearance of $X^{\dal}$ introduces a $z \partial_{v^{\dal}}$ into the coupling.

This analysis tells us that the chiral algebra we have constructed is not quite the Koszul dual to the algebra of bulk operators. Recall \cite{Costello:2020jbh, PW} that the Koszul dual chiral algebra is the universal algebra that couples to the bulk system.  Since the coupling between our chiral algebra and the bulk system has zeroes at $z = 0,\infty$, it is not the most general system we can couple in a gauge invariant way.  The most general system will allow the currents $J,E,F$ to have the following poles at $z = 0,\infty$:
\begin{equation} 
	\begin{split} 
		J[m,n] & \ \ \text{ pole of order } n+m+1 \text { at } z = 0,\infty \\
E[m,n] & \ \ \text{ pole of order } n+m \text { at } z = 0,\infty \\
F[m,n] & \ \ \text{ pole of order } n+m+2 \text { at } z = 0,\infty  
	\end{split}
\end{equation}
Since $\cA$ has a first order zero and $\gamma$ is regular at $z=0,\infty$, the couplings \eqref{Jcoup} and \eqref{EFcoup} remain regular for these configurations of poles.

The modes $J_l[m,n]$, $E_l[m,n]$, $F_l[m,n]$ (where $l$ refers to the spin of the mode) are defined by the expansions\footnote{We are using conventions where modes with negative mode numbers $l$ (for large enough $|l|$) are the \emph{annihilation} operators, which is the opposite of the usual convention in physics.}
\begin{equation}
    \begin{split}
        J[m,n](z) &= \sum_l\frac{J_l[m,n]}{z^{1+\frac{m+n}{2}-l}}\\
        E[m,n](z) &= \sum_l\frac{J_l[m,n]}{z^{\frac{m+n}{2}-l}}\\
        F[m,n](z) &= \sum_l\frac{J_l[m,n]}{z^{2+\frac{m+n}{2}-l}}\,,
    \end{split}
\end{equation}
where $l$ runs over $\Z + \frac{m+n}{2}$. In terms of these, the vacuum vector at $z = 0$ satisfies
\begin{equation} 
	\begin{split} 
		J_l[m,n]  \vac{v_0} & = 0 \text { for } 2l +  m + n < 0 \\
E_l[m,n]  \vac{v_0} & = 0 \text { for } 2l +  m + n < 0 \\
F_l[m,n]  \vac{v_0} & = 0 \text { for } 2l  + m + n < 0\,.
 \label{eqn:vaccuum_ideal}
	\end{split}
\end{equation}
These relations define a module for the mode algebra of the chiral algebra.  More formally, the relations \eqref{eqn:vaccuum_ideal} define a left ideal in the mode algebra of the chiral algebra, and quotienting by this left ideal gives the desired module.

What is not obvious, however, is that this module is of the desired size.   Let us explain what we mean by this.  A (topological) basis for the mode algebra of the chiral algebra is given by ordered products of the generators $J_l[m,n]$, $E_l[m,n]$, $F_l[m,n]$  in some ordering prescription.  We would like a basis of the module to be given, in a similar way, by products of the generators 
\begin{equation} 
	\begin{split} 
		J_l[m,n] & \text { for } 2l +  m + n \ge  0 \\
		E_l[m,n]   &  \text { for } 2l +  m + n \ge 0 \\
		F_l[m,n]   &  \text { for } 2l  + m + n \ge 0\,.
		\label{eqn:vaccuum_module}
	\end{split} 
\end{equation}
These products should be taken in some chosen ordering for the currents listed above. 

It could, however, happen that the module is too small, or even zero.  This can occur if the commutator of two generators satisfying the equality \eqref{eqn:vaccuum_ideal} is a product of expressions which satisfy the opposite inequalities \eqref{eqn:vaccuum_module}, forcing certain composite states to be zero.   

In the Appendix \ref{app:boundary_size}, we will verify that the module is indeed of the correct size.   


\subsection{Microscopic definition of the boundary conditions}\label{sec:bdydef}

So far, we have defined the boundary conditions for the chiral algebra by prescribing the poles that elements of the BRST cohomology acquire.   In this section we will see how to define the boundary condition before we take BRST cohomology, by considering the behaviour of the fundamental fields $I,X$ and the $\b-\c$ ghosts.

Recall that we require $J[m,n]$ to have a pole at $z = 0,\infty$ of order $m+n+1$, whereas $E[m,n]$ has a pole of order $m+n$ and $F[m,n]$ has a pole of order $m+n+2$.  The poles in $E,F$ are easy to obtain: we simply allow the $X$ fields to have a first order pole at $z = 0,\infty$. 

The pole for $J[m,n]$ is more challenging to obtain, because we would need to give $I$ a pole of order $\frac{1}{2}$.  Fortunately, as Davide Gaiotto explained to us, it is possible to give $I$ a pole of order $\frac{1}{2}$, by giving a kind of Ramond puncture\footnote{This is an example of the notion of a twisted module known from vertex algebra literature \cite{frenkel2004twisted}.} at $z = 0$, $z = \infty$.

Normally, a Ramond puncture is defined by choosing a square root of the canonical bundle on a punctured surface that does not extend across the puncture.  Because we want a Ramond-like puncture only for the $I$ fields but not for the $X$ fields, we will do something slightly different.

Our Ramond-like puncture is defined by considering the system on $\C^\times$, and placing a defect on a line stretching between $0$ and $\infty$. Along this defect, the field $I$ has a branch cut and changes sign. The other fields are unaffected. 

Once we have this branch cut, it makes sense to give $I$ a pole of order $\frac{1}{2}$.  We also give $X$ a pole of order $1$, and  the $\b$-ghost a pole of order $2$. 

Let us describe this more formally in terms of modes.  The field $I$ has modes $I_l$ of integer spin (suppressing $\SO(8)$ and $\Sp(N)$ indices), as do $\b,\c$. The field $X$ has modes of half-integer spin. The vector $\vac{v_0}$ is annihilated by the modes:
\begin{equation} 
	\begin{split} 
		I_l \vac{v_0} &= 0 \text { for } l < 0 \\
		X^{\dal}_l \vac{v_0} &= 0 \text { for } l < -\frac{1}{2} \\
		\b_l \vac{v_0} &= 0 \text { for } l < -1 \\
		\c_l \vac{v_0} &= 0 \text { for } l < 0 
	\end{split}
\end{equation}
For this to make sense, we need to know that the ideal of elements annihilating $\vac{v_0}$ is closed under the BRST operator.  The BRST operator is schematically of the form
\begin{equation} 
	\begin{split} 
		Q I_l &= \sum_{r+s = l} \c_r I_s \,,\qquad Q X^{\dal}_l = \sum_{r+s = l} \c_r X^{\dal}_s \\
		Q \c_l &=  \sum_{r+s = l} \c_r c_s  \,,\qquad
		Q \b_l = \sum_{r+s = l} \left( :\!I_r I_s\!: + \eps_{\dal\dot\beta} :\!X^{\dal}_r X^{\dot\beta}_s\!:    \right). 
	\end{split}
\end{equation}
It is easy to check that if the left hand side of this expression annihilates $\vac{v_0}$, then so does the right hand side.   
With these boundary conditions, we have the desired poles in the currents $J[m,n]$, $E[m,n]$, $F[m,n]$. For instance, the lowest spin component of $J[m,n]$ that survives on the boundary is of the form
\begin{equation} 
	I_{0} ( X^{\dot1}_{-\frac{1}{2}})^r ( X^{\dot2}_{-\frac{1}{2}})^s I_{-1} 
\end{equation}
which is of spin $ - \frac{1}{2}(r+s)$.  This is what we would find from a current of spin $1+ \frac{1}{2}(r+s)$ with a pole of order $r+s+1$.   

We verify in the appendix \ref{app:BRST_boundary} that, at large $N$, the BRST cohomology of this microscopic description of the module reproduces the description given earlier.


\subsection{Conformal blocks of the chiral algebra with defects} 
\label{sec:blocks}

Given a chiral algebra, perhaps with defects, we say a \emph{conformal block} is a way of defining correlation functions of local operators in a way consistent with all OPEs, and with all boundary behaviour determined by the defects.  

In this section we will find that the chiral algebra has an infinite-dimensional space of conformal blocks, which we will identify with the Hilbert space of the 4d system on a small $S^3$.   We only compute the conformal blocks at infinite $N$.  In 4d, we compute the Hilbert space via an ansatz that matches it with the space of local operators in the free theory.    It is important to note that there is no Hamiltonian in our analysis: it is purely kinematic.  

Let us first explain why we would expect it to be true.  In \cite{Costello:2022wso}, it was shown that for a theory on flat space coming from a holomorphic theory on twistor space, conformal blocks for the celestial chiral algebra are isomorphic to local operators.  This was derived both by a formal QFT argument, and an explicit calculation (see also \cite{Bu:2022dis}). 

By the state-operator correspondence, the space of local operators in the 4d theory is the Hilbert space on $S^3$.   We therefore need to show how to generalize the argument of \cite{Costello:2022wso} to a 4d geometry which is not flat.  As in \cite{Costello:2022wso}, we will give both a formal argument and an explicit calculation.

Since our chiral algebra is generated by the towers $J,E,F$, correlation functions are determined by the correlation functions with only $J,E,F$ insertions. We will determine the conformal blocks by analyzing the ambiguity in the correlation functions of these insertions.

As a warm-up, let us first consider correlators of the operator $J = J[0,0]$.  This is a usual Kac-Moody current for $\mf{so}(8)$. If we do not change the boundary conditions, then the correlators
\begin{equation} 
	\ep{J (z_1) \dots J(z_n)}  
\end{equation}
are meromorphic functions of $z_1,\dots,z_n$. They have poles only at $z_i = z_j$, and the polar part is determined by the OPE.  Because $J$ is of spin $1$, the correlator has a second-order zero  at $z_i = \infty$.  This $n$-point function is completely determined by the $(n-1)$-point function.  The only ambiguity in defining the correlation function is present in the definition of the zero-point function $\ep{1}$, which can be an arbitrary constant. 

This means that in this case the vector space of conformal blocks is one dimensional.\footnote{The reader may notice that the correlation functions are over-determined.  This is related to the fact that the Kac-Moody algebra has \emph{derived} conformal blocks, in the sense introduced by Beilinson-Drinfeld \cite{beilinson2004chiral}. }  

Let us see what changes when we allow $J(z)$ to have a first order pole at zero and infinity. The pole at infinity is to be taken in addition to the second order zero prescribed by the fact that $J$ is of spin $1$, so that in total we have a first order zero at $\infty$.

Suppose, in this case, we have defined the $k$-point function
\begin{equation} 
	\ep{J(z_1) \dots J(z_k)}  
\end{equation}
for $k < n$, in a way consistent with the OPE. Let us consider the ambiguity in defining the $n$-point function. 

The poles in the correlator $\ep{J (z_1) \dots J(z_n)}$ at $z_i = z_j$ are determined by the OPEs and the $k$-point correlators for $k < n$.  This means that the $n$-point function is determined up to the addition of a meromorphic function $\cF(z_1,\dots, z_n)$ which is regular at $z_i = z_j$.  The function $\cF$ can have a first order pole at $z_i = 0$ and must have a first order zero at $z_i = \infty$.
This means that $\cF$ is proportional to
\begin{equation} 
	\cF(z_1,\dots,z_n) \propto \frac{1}{z_1 \dots z_n}. 
\end{equation}

In this case, we see that for each $n$, we are free to add an arbitrary multiple of $z_1^{-1} \cdots z_n^{-1}$ to the correlation function. This expression can depend also on the Lie algebra indices of the current, but in a symmetric way. We find that the full space of conformal blocks is the symmetric algebra
\begin{equation} 
	\Sym^\ast \g = \oplus_{n \ge 0} \Sym^n \g 
\end{equation}

Now let us turn to the conformal blocks of our full chiral algebra generated by $J[r,s]$, $E[r,s]$, $F[r,s]$.  It is natural to expect, following the analysis above, that the ambiguity in defining the $n$-point correlation function -- which one can call the $n$-point conformal block -- is the $n$th symmetric power of the $1$-point conformal block.  If so, the computation of the conformal blocks amounts to computing the $1$-point conformal block, which is quite easy.   Now let us check this property:
\begin{proposition}
	The full space of conformal blocks is isomorphic to the symmetric algebra on the conformal blocks for the one-point function. \label{prop:conformalblocks}
\end{proposition}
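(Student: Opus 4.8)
The plan is to promote the warm-up computation into an isomorphism of graded vector spaces by running the same induction on the number of insertions, but now for the full tower of generators. Write $\mathcal{B}_1$ for the space of one-point conformal blocks, i.e.\ the space of allowed one-point functions $\langle\mathcal{O}(z)\rangle$ as $\mathcal{O}$ ranges over the generators $J[m,n]$, $E[m,n]$, $F[m,n]$; by the defect pole prescription each such one-point function is a Laurent monomial in $z$ whose degree is constrained to the window fixed by the poles at $z=0,\infty$, so $\mathcal{B}_1$ is an explicit finite-dimensional space (containing, in particular, the copy of $\g$ produced above by $J=J[0,0]$). A conformal block is the same data as a consistent family of meromorphic correlators $\langle\mathcal{O}_1(z_1)\cdots\mathcal{O}_n(z_n)\rangle$ of the generators, symmetric in the pairs $(z_i,\mathcal{O}_i)$, whose polar parts along the diagonals $z_i=z_j$ are fixed by the OPEs of the composite currents and whose behaviour at $z_i=0,\infty$ is governed by the defects.

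First I would set up the induction exactly as in the $J$-only case. Assuming all correlators with fewer than $n$ insertions have been fixed, the singular part of the $n$-point correlator along every diagonal $z_i=z_j$ is determined by the OPE together with the already-constructed $(n-1)$-point data. Hence the $n$-point correlator is pinned down up to an ambiguity function $\mathcal{F}(z_1,\dots,z_n)$ that is regular off the diagonals and carries, in each variable separately, only the poles at $z_i=0,\infty$ permitted by the defect. The key structural step is to identify the space of such $\mathcal{F}$: a meromorphic function on $(\CP^1)^n$ whose only poles lie along the divisors $\{z_i=0\}$ and $\{z_i=\infty\}$ is, by the one-variable fact that a function on $\CP^1$ with poles only at $0,\infty$ is a Laurent polynomial, a Laurent polynomial in each $z_i$ separately, hence a finite sum $\sum_{\vec{k}}c_{\vec{k}}\prod_i z_i^{k_i}$ with each exponent in the window fixed by the operator in slot $i$. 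Thus, before symmetrisation, the level-$n$ ambiguity is the tensor product $\bigotimes_{i=1}^n\mathcal{B}_1$, and imposing Bose symmetry in the pairs $(z_i,\mathcal{O}_i)$ cuts it down to $\Sym^n\mathcal{B}_1$. Summing the independent contributions over all $n$ produces a natural map from the space of conformal blocks to $\bigoplus_{n\geq0}\Sym^n\mathcal{B}_1=\Sym^\ast\mathcal{B}_1$.

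The main obstacle, and the crux of the argument, is to show that this map is an isomorphism, i.e.\ that the level-wise ambiguities are genuinely free and mutually independent. Two points must be dispatched. The first is consistency with the \emph{regular} parts of the OPEs, which feed the $n$-point data into $(n-1)$-point correlators of composite operators $:\!\mathcal{O}_i\mathcal{O}_j\!:$; this is automatic, because those composite correlators are \emph{defined} by the point-split limit rather than independently constrained, so adding any $\mathcal{F}\in\Sym^n\mathcal{B}_1$ merely propagates to definite (and consistent) values downstream. The second, more serious point is that no hidden relation forces a level-$n$ choice to be determined by lower levels — the danger, flagged in the discussion preceding \eqref{eqn:vaccuum_module}, that a commutator of two ``boundary'' modes could land among the opposite-inequality modes and collapse the module. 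Establishing that this does not happen is exactly the statement that the module built from the defect boundary conditions has the expected size.

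I would therefore close the proof by invoking the size computation of Appendix~\ref{app:boundary_size}, which guarantees that the generators \eqref{eqn:vaccuum_module} freely span the module so that the accumulated ambiguities realise $\Sym^\ast\mathcal{B}_1$ without relations, together with the formal argument of \cite{Costello:2022wso} identifying conformal blocks on a twistorial background with bulk local operators. The factorisation step above is the concrete avatar of that identification: it says the defect renders the multi-point blocks a free symmetric (Fock-space) construction on the single-point blocks, which is precisely what one expects from a free bulk Hilbert space $\Sym^\ast(\text{single-particle states})$.
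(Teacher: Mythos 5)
Your argument reproduces the paper's warm-up and its identification of the level-$n$ ambiguity with $\Sym^n\mathcal{B}_1$, but the inductive scaffolding you hang it on is exactly the one the paper rejects. You write that "the singular part of the $n$-point correlator along every diagonal $z_i=z_j$ is determined by the OPE together with the already-constructed $(n-1)$-point data." For the full algebra this is false as stated: the singular part of the OPE of two generators $E,F,J$ is in general a normally ordered product of \emph{several} such generators, and the $(n-1)$-point function with one composite insertion $:\!\mathcal{O}\mathcal{O}'\!:$ is not $(n-1)$-point data — it is a point-split limit of a $k$-point function of generators with $k\ge n$. Your own patch for "consistency with the regular parts" concedes that composite correlators are "defined by the point-split limit," which makes the circularity explicit rather than resolving it: the level-$n$ singular parts then depend on level-$\ge n$ data, and the induction on the number of insertions is not well-founded. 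The paper's proof fixes precisely this by introducing the \emph{width} of a single-trace operator (the number of fundamental fields $I,X$ needed to build it, so $J[r,s],F[r,s]$ have width $r+s+2$ and $E[r,s]$ has width $r+s$) and inducting on the sum of the number of insertions and the total width; since each OPE contraction removes two fundamental fields, the OPE of operators of widths $l_1,l_2$ produces expressions of total width $\le l_1+l_2-2$, and this combined quantity strictly decreases even when the number of single-trace insertions goes up. Once the induction is run in that order, the identification of the residual ambiguity with monomials $z_1^{k_1}\cdots z_n^{k_n}$ constrained by the spins and the defect poles goes through exactly as you describe.

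Two smaller points. First, $\mathcal{B}_1$ is not finite-dimensional: there are infinitely many towers $J[m,n]$, $E[m,n]$, $F[m,n]$, and the full one-point block space is the infinite-dimensional space identified later with the Hilbert space on a small $S^3$; only the block space of each fixed generator is finite-dimensional. Second, the appeal to Appendix~\ref{app:boundary_size} answers a different question — that the quotient of the mode algebra by the vacuum ideal has the expected basis — and does not substitute for the well-foundedness of the induction on correlators; the proposition is proved in the paper without invoking that appendix.
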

\begin{proof}
The proof is similar to the proof in the case of the Kac-Moody algebra, except that we cannot simply work by induction on the number of insertions. This is because in the larger algebra, the OPE of two operators $E,F,J$ will in general be expressed in terms of normally ordered products of a number of such operators.  The $(n-1)$-point function where one of the insertions is a normally ordered product can be expressed in terms of a $k$-point function without normally ordered products, where $k \ge n$.  Therefore the inductive argument will not work.  

	Instead, we need to introduce a measure of complexity of the operators which decreases when we take the normally ordered product. We say the \emph{width} of a single-trace operator is the number of fundamental fields $I,X$ required to build it. Thus, $F[r,s]$, $J[r,s]$ have width $r+s+2$ and $E[r,s]$ has width $r+s$. The OPE between single-trace operators of width $l_1,l_2$ is a sum of expressions of width $\le l_1 + l_2 -2$. 

	Suppose we are studying the correlation functions of $n$ single-trace operators $E,F,J$ of total width $l$.  Suppose by induction we have already defined a consistent set of correlation functions with $k$ insertions of single-trace operators of total width $m$ where $k + m < n + l$.  Then,  the $n$-point function of total width $l$ is defined up to the addition of a function $\cF(z_1,\dots,z_n)$ which is regular at $z_i = z_j$. Thus, $\cF$ is a sum of products of the monomials $z_1^{k_1} \dots z_n^{k_n}$ where the allowed values of $k_1,\dots,k_n$ are determined by the spins of the fields and the boundary conditions -- that is, by the conformal blocks of the $1$-point functions.  
\end{proof} 
Finally, the one-point conformal blocks are easy to determine. We will write the one-point conformal blocks as a representation of the symmetry group of the system, which is $\SO(8) \times \SL_2(\C) \times \C^\times$.  Here $\SL_2(\C) \times \C^\times$ is the complexification of the isometry group of Burns space, and this group acts on twistor space in a way fixing the boundary divisors.  In terms of the currents $E,F,J$, $\C^\times$ is the rotation of the $z$-plane. The currents $E[m,n]$ for $m+n$ fixed transform in the irreducible spin $(m+n)/2$ representation of $\SL_2(\C)$, and similarly for $F[m,n]$, $J[m,n]$. Finally, $\SO(8)$ only acts on $J[m,n]$, which transforms in the adjoint representation.

Let $J[r]$, $E[r]$, $F[r]$ denote the collection of currents $J[m,n], E[m,n], F[m,n]$ with $m+n = r$.  Let $S_+$ denote the fundamental representation of $\SL_2(\C)$.  Let $S_-$ denote the two-dimensional representation of $\C^\times$ with weights $(\frac{1}{2},-\frac{1}{2})$. (The terminology is because these representations come from the spin representations of $\op{Spin}(4)$). 

Then, the one-point function of $J[r]$, $E[r]$, $F[r]$ live in the following representations:
\begin{equation} 
	\begin{split} 
		J[r] & \in \Sym^r S_+ \otimes \Sym^{r} S_- \otimes \mf{so}_8		\\
		E[r] & \in \Sym^r S_+ \otimes \Sym^r S_- \\
		F[r] & \in \Sym^r S_+ \otimes \Sym^{r} S_- 
	\end{split}
\end{equation}
In order to compare with what we find for the field theory on Burns space, it will be useful to write this in terms of sections of vector bundles on $\CP^1$.  We find the one-point conformal blocks of the $E,F,J$ towers are:
\begin{equation} 
	\begin{split} 
		J & \in H^0 (\CP^1,  \mf{so}_8 \otimes \Sym^\ast ( S_+ \otimes \Oo(1) ) )\\
		E & \in  H^0 (\CP^1,   \Sym^\ast ( S_+ \otimes \Oo(1) ) )\\
		F & \in  H^0 (\CP^1,   \Sym^\ast ( S_+ \otimes \Oo(1) ) ).
	\end{split} 
\end{equation}

\subsection{Comparison of conformal blocks with the Hilbert space in four dimensions}
We want to show that the space of conformal blocks of the chiral algebra is isomorphic to the Hilbert space of the four-dimensional dual theory on a small $3$-sphere surrounding the ``core'' of Burns space.  In the coordinates $u^{\dal}$ where the Burns space K\"ahler potential is
\begin{equation} 
	\norm{u}^2 + N\log \norm{u}^2 
\end{equation}
the relevant $S^3$ is just that where $\norm{u}^2 = \eps$, $\eps$ small.  On flat space, the Hilbert space for this $S^3$ is the space of local operators.  On Burns space, where the apparent singularity in the metric has been resolved by blowing up the origin, this Hilbert space represents the space of operators wrapping the $S^2$ that replaces the origin. 

We note that this Hilbert space does not have a Hamiltonian, because a neighbourhood of this $S^3$ does not have a time-translation symmetry.  

We will show that the Hilbert space and the space of conformal blocks are isomorphic as representations of the $\SO(8) \times \SU(2) \times \U(1)$ symmetry algebra present on both systems. As we explained earlier, the interpretation of this is the following.  To define the correlation functions of the chiral algebra, we need to choose a conformal block.  To define the scattering amplitudes on Burns geometry, we are free to modify the theory by placing an operator in the center of Burns space.  Our analysis shows that this additional data is the same in each case.  

The Hilbert space is built in the usual way. First, we build the phase space of the model, and then we apply geometric quantization.   Since there is no Hamiltonian, the phase space is simply a symplectic manifold defined from the Cauchy data.    We will model the Cauchy data as simply that of a collection of free scalars in $\mf{so}(8)$.  One might object that by doing this we will be missing more global features, such as Skyrmions, which are states with topological charge in $\pi_3(\SO(8))$. However, we will show shortly that Skyrmions do not arise in our model. 

The phase space is also insensitive to the metric. Therefore there is no change if we work with the flat metric, rather than the Burns metric.   Since the free theory is conformally invariant, this Hilbert space can then be computed by the state-operator correspondence: it is the same as  the space of local operators in the free theory.

Finally, the space of local operators in the free theory can be computed using twistor space and matched with what we found above from studying conformal blocks.  (This is a version of the argument given in \cite{Costello:2022wso} relating conformal blocks to local operators).  The Hilbert space is the symmetric algebra on the single-particle Hilbert space, which is identified with the space of local operators which are linear functionals in the fields.  To prove the correspondence with conformal blocks, we therefore need to match the space of those local operators which are linear in the fields, with the one-point conformal blocks. 

The space of local operators which are linear on the fields is the dual vector space to the vector space of solutions to the equations of motion of the free theory.  For a free scalar field valued in $\mf{so}_8$, the space of solutions to the equations of motion is (by the Penrose transform)
\begin{equation} 
	H^1(\PT, \Oo(-2) ) \otimes \mf{so}_8. 
\end{equation}
By pushing forward the structure sheaf of $\PT$ to $\mbb{CP}^1$, this is the same as
\begin{equation} 
	H^1(\mbb{CP}^1, \Oo(-2) \otimes \Sym^\ast (S_+ \otimes \Oo(-1) ) ) \otimes \mf{so}_8. 
\end{equation}
Serre duality tells us that the dual vector space of this is
\begin{equation} 
	H^1(\mbb{CP}^1, \Oo(-2) \otimes \Sym^\ast (S_+ \otimes \Oo(-1) ) )^\vee\otimes \mf{so}_8 = H^0(\mbb{CP}^1, \Sym^\ast (S_+ \otimes \Oo(1) ))\otimes \mf{so}_8   
\end{equation}
which is exactly the one-point conformal blocks of the $J$-tower of currents.

A similar analysis applies to the closed string fields.  The space of field configurations $\phi$ satisfying the fourth-order equation $\Lap \Lap \phi = 0$ is the same as the space of pairs of solutions to the ordinary Laplace equation $\Lap \phi_i = 0$, $i = 1,2$.  If we take $\phi$ up to the addition of a constant, then we take $\phi_1$ up to the addition of a constant.  This means that the space of linear local operators for the  closed string fields is
\begin{equation} 
	H^1(\mbb{CP}^1, \Oo(-1) \otimes \Sym^{> 0} (S_+ \otimes\Oo(-1) ) ) \oplus  H^1(\mbb{CP}^1, \Oo(-1) \otimes \Sym^{\ast} (S_+ \otimes \Oo(-1) ) ) 
\end{equation}
By Serre duality, this matches exactly with the contribution of the $E$ and $F$ towers to the one-point conformal blocks.  The fact that the first factor above contains only the positive symmetric powers matches the fact that there is no $E[0,0]$ current (recall from \eqref{CFT} that $E[0,0]=1$). 

\paragraph{Absence of Skyrmions.} Finally, we should mention why we do not need to consider topologically nontrivial states, i.e.\ Skyrmions, in the open string sector. It turns out that they can not arise in WZW$_4$, as we will see from a twistor space analysis. 

Consider the twistor space of $\R^4 - 0$.  This is the complement of the zero section in $\PT$, and so is a fibration over $\CP^1$ with fibers $\C^2 - 0$.  Twistor space has two patches, lying over the loci where $z \neq 0$ and $z \neq \infty$ in $\CP^1$.  Over these patches twistor space is $\C \times (\C^2 - 0)$. A field configuration on these patches is a holomorphic bundle, and every such bundle is topologically trivial.

A topologically nontrivial field configuration on spacetime would come equipped with topologically nontrivial gluing data on the overlap $\C^\times\times(\C^2 - 0)$ of the two patches.  The gluing data is a holomorphic map 
\begin{equation*} 
	\C^\times\times(\C^2 - 0) \to \SO(8,\C). 
\end{equation*}
Associated to this is a class in $\pi_3(\SO(8,\C))$, obtained by restricting this gluing map to $S^3 \subset \C^2 -0$. This class is the topological charge of the Skyrmion in spacetime. Our goal is to show that this class is zero.

$\SO(8,\C)$ is an affine algebraic variety, and hence a closed subvariety of some $\C^n$.  Hartog's theorem tells us that any holomorphic map from $\C^2 - 0$ to $\C^n$ extends to a map from all of $\C^2$.  Therefore, the gluing data gives a topologically trivial class in $\pi_3(\SO(8,\C))$.

\section{Where does celestial CFT live?}
\label{sec:bdry}

Up till now, our analysis has been fairly top-down and string theoretic. But for a genuine holographic interpretation, we must also connect to the plethora of bottom-up approaches developed in the literature \cite{Pasterski:2021raf}. To do this, we need to construct a physical picture of the holographic boundary where the celestial dual lives. In this section, we make precise the identification of this boundary as a boundary of the twistor space of Burns space.

This is accomplished by exhibiting the twistor space of Burns space as AdS$_3\times S^3$. The boundary of this twistorial AdS$_3$ bulk will act as our Euclidean proxy for the celestial sphere. This is where the boundary data for solutions of linearized free field equations lives. In our case, this spherical boundary also gets appended with two extra copies of the non-negative reals $[0,\infty]$ living at its north and south poles. At the level of celestial CFT, these play the role of point defects living at the poles.  Thereafter, the rough idea will be to foliate AdS$_3$ by twistor lines. $S^2$ compactification along the twistor lines yields Burns space. This converts familiar AdS$_3\times S^3$ holography into ``exotic'' celestial holography!


\subsection{Looking for AdS$_3$} 

As we saw in section \ref{sec:backreact}, the twistor space of Burns space is obtained from the complete flag variety $\F=\F(1,2,3)$ of points in lines in $\CP^2$ by subtracting a single copy of $\CP^1$. This led us to conclude that Burns space itself is diffeomorphic to $\CP^2$ minus a point. It is the $\CP^2$ viewpoint that most naturally helps us understand the boundary structure of our holographic duality. The removed point corresponds precisely to spatial infinity $\|u\|\to\infty$ on Burns space (in the sense of Euclidean signature one-point compactifications).

Recall that the flag variety $\F$ was the set of points $([V^k],[W_k])\subset\CP^2\times\CP^2$ satisfying $V\cdot W = 0$. This was the twistor space of $\CP^2$, which we coordinatized by homogeneous coordinates $[U^k]$. Complex conjugation was denoted $U^k\mapsto \bar U_k\equiv\overline{U^k}$. The twistor line corresponding to $[U^k]$ was cut out by the equations $V\cdot W = U\cdot W = V\cdot\bar U=0$. In what follows, we will use the notation $V^k = (V^{\dal},V^3)$, $W_k=(W_{\dal},W_3)$, $U^k = (U^{\dal},U^3)$ etc.\ for convenience.

Introduce affine coordinates $u^{\dal} = U^{\dal}/U^3$ on $\CP^2$. To obtain Burns space in these coordinates, one performs the diffeomorphism $u^{\dal}\mapsto \sqrt{N}u^{\dal}/\|u\|^2$. Being an inversion, this maps the origin $u^{\dal}=0$ of this affine patch to the spatial infinity of Burns space. In homogeneous coordinates, this is what we called the \emph{point at infinity} $[U^k]=[0,0,1]$. When constructing Burns space, we excised this point from $\CP^2$. To study its conformal boundary, it will prove more useful to compactify Burns space into $\CP^2$ by adding back this point. Doing this, we can instead treat $[0,0,1]$ as the spacetime boundary. In the flag variety, the twistor line associated to $[0,0,1]$ is the copy of $\CP^1$ given by\footnote{Not to be confused with the notation for the chiral algebra field $I_{ri}$.}
\be
I = \{V^3 = W_3 = 0 \}\,.
\ee
We call this the \emph{line at infinity}.

The brane backreaction generates a copy of $\SL_2(\C)$ cut out by $[v\,w]=N$. The boundary of $\SL_2(\C)$ would be the set of points $(v,w)\in\C^4$ satisfying $[v\,w]=0$. We can append these to $\SL_2(\C)$ by an appropriate compactification. In our setup, $\SL_2(\C)$ is compactified by embedding it in the flag variety via the map $V^{\dal} = -v^{\dal}V^3/\sqrt{N}$, $W_{\dal} = w_{\dal}W_3/\sqrt{N}$ given in \eqref{ZNtoF}. The image of this map is the open set $V^3W_3\neq0$. The complement of this image in $\F$ is the boundary divisor $V^3W_3=0$. This is a union of the following sets:
\be
    D_0 = \{W_3=0\}\,,\qquad
    D_\infty = \{V^3=0\}\,.
\ee
Each of these occurs with multiplicity one. $D_0$ and $D_\infty$ are each a copy of the blow-up of $\CP^2$ at a point. Their intersection equals the line at infinity: $D_0\cap D_\infty=I$.

The twistor space of Burns space is given by $Z_N = \F-I = \SL_2(\C)\cup(D_0\cup  D_\infty - I)$.   From the 4d perspective, the divisors $D_0-I$, $D_\infty-I$ of $Z_N$ are not part of asymptotic infinity. Each of them maps isomorphically onto $\til\C^2$, the complex manifold underlying Burns space.

\medskip

In \cite{CG}, the authors compactified the deformed conifold $\SL_2(\C)$ to a quadric in $\CP^4$. The resulting geometry had an $\SL_2(\C)\times\SL_2(\C)$ symmetry. Quotienting out an $\SU(2)$ subgroup of one of the copies of $\SL_2(\C)$ led to Euclidean AdS$_3 \simeq \SL_2(\C)/\SU(2)$. Their holographic dual lived on the boundary of this AdS$_3$. 

In our current setting, we have instead compactified the $\SL_2(\C)$ by embedding it in the flag variety. The $\U(1)_\text{left}\times\SU(2)_\text{right}$ isometries of Burns space lift to an action of the complexification $\C^\times\times\SL_2(\C)$ on the flag variety:
\be
([V],[W])\mapsto \bigl([L^{\dal}{}_{\dot\beta}V^{\dot\beta},s^{-1}V^3]\,,\;[L_{\dal}{}^{\dot\beta}W_{\dot\beta},sW_3]\bigr)\,,\quad L\in\SL_2(\C)\,,\;s\in\C^\times\,.
\ee
The $\C^\times$ is our defect conformal group. We want to quotient $\F$ by $\SU(2)\subset\SL_2(\C)$ and show that the resulting geometry is again AdS$_3$. We would also like to visualize the projection of the twistor lines onto this AdS$_3$.

On the $\SL_2(\C)$ patch $V^3W_3\neq0$, the $\SU(2)$ action is free and has $S^3$ orbits. Quotienting out $S^3$ gives AdS$_3$ (without its boundary) as the orbit space. Poincar\'e coordinates on this AdS$_3$ can be constructed out of the natural $\SU(2)$ invariants in the game. Using the conventions of \cite{CG}, we introduce the Poincar\'e coordinates $x\in\CP^1$, $\vho>0$,
\be\label{xy}
\begin{split}
    x &= \frac{[\hat w\,v]}{\|w\|^2} = -\frac{W_3\hat W^{\dal}V_{\dal}}{V^3\hat W^{\dal}W_{\dal}}\,,\\
    \vho &= \frac{1}{\|w\|^2} = \frac{|W_3|^2}{N\hat W^{\dal}W_{\dal}}\,.
\end{split}
\ee
The boundary of AdS$_3$ would usually be the copy of $\CP^1$ obtained by taking a union of the limits $\vho\to0$ and $\vho\to\infty$.

The boundary divisors $D_0$ and $ D_\infty$ project down to copies of $[0,\infty]$ in $\F/\SU(2)$, so they cannot cover the $S^2$ boundary of AdS$_3$. We would have naively expected the line at infinity $I$ to then map to the boundary of AdS$_3$. A similar projection does give the correct boundary of AdS$_3$ in the case of the $\CP^4$ compactification of $\SL_2(\C)$ \cite{CG}. But unfortunately, in our flag variety compactification, $I$ is the orbit of the point $([1,0,0],[0,1,0])\in\F$ under the $\SU(2)$ action. So the $\SU(2)$ quotient collapses it to a point. Hence, the line at infinity cannot by itself be the boundary that we are looking for. This is consistent with the fact that $I$ is the fixed point set of the left-handed $\C^\times$ action. It is only acted on by the right-handed $\SL_2(\C)$ symmetries which constitute the flavor symmetry of our dual chiral algebra, whereas we expect the $\C^\times$ to be the nontrivial defect conformal symmetry of our chiral algebra sphere.

This analysis shows that, unlike the $\CP^4$ compactification, the flag variety compactification of $\SL_2(\C)$ simply does not contain the boundary divisor relevant to a conventional notion of holography. Nonetheless, it turns out that this situation can be easily ameliorated. The correct boundary is instead found by blowing up $I$. 


\subsection{The boundary of twistor space}

We begin by blowing up the line at infinity $I\subset\F$ to obtain the variety\footnote{This blowup is not a twistor space because it is compact and K\"ahler, and according to a theorem of Hitchin \cite{hitchinkahler}, the only compact K\"ahler twistor spaces are $\CP^3$ and the flag variety $\F$.  We thank Claude LeBrun for the warning! Nonetheless, it \emph{is} an important building block in the Donaldson-Friedman construction of twistor spaces of connected sums of $\CP^2$ \cite{donaldson1989connected}.}
\be
\wt\F = \text{Bl}_{I}\F\,.
\ee
Away from $V^3=W_3=0$, this is biholomorphic to $\F-I$. But the line at infinity has now been replaced by the exceptional divisor
\be
E=\CP^1\times\CP^1 = I\times\CS^2\,.
\ee
The first $\CP^1$ is a copy of the line at infinity. The second $\CP^1$ has been suggestively named $\CS^2$, as it will turn out to be the correct boundary on which our holographic dual lives. In this sense, we interpret it as our Euclidean analogue of the ``celestial sphere''. $E$ consists of a union of projective lines $\CP^1\times z$, one for each point $z\in\CS^2$.\footnote{In the terminology familiar from work on scattering amplitudes, $\CS^2$ is the sphere of undotted spinor-helicity variables $\lambda_\al=(1,z)$, while the deprojectivization of $I$ is ``Fourier conjugate'' to the 2-plane of dotted spinor-helicity variables $\tilde\lambda_{\dal}$ (one expects this to be a precise correspondence in split signature).} 

$\til\F$ is a compactification of the twistor space of the Burns geometry that will encode the asymptotic structure at infinity.  
The complement of $\SL_2(\C)$ in $\til{\F}$ now consists of three divisors. Two of them are the preimages $\br D_0$, $\br D_\infty$ of the divisors $D_0,D_\infty$ by the blow-up morphism $\til\F\to\F$, but now they no longer intersect. The third divisor is $E$, which intersects $\br{D}_0$ along $\CP^1\times0$ and $\br{D}_\infty$ along $\CP^1\times\infty$. 

One may define the blow-up in coordinates by working in affine patches. Because $V^{\dot1}W_{\dot1}+V^{\dot2}W_{\dot2}=0$ along $V^3=W_3=0$, the line $I$ is contained within the union of the two patches $\{V^{\dot2}\neq0, W_{\dot1}\neq0\}$ and $\{V^{\dot1}\neq0, W_{\dot2}\neq0\}$ of $\F$. Let us work in the first of these. In this patch, the blow-up is described by the set of points $([V],[W],z)\in\CP^2\times\CP^2\times\CP^1$ satisfying
\be\label{Ftildef}
 V\cdot W = 0\,,\qquad \frac{W_3}{W_{\dot1}}=z\,\frac{V^3}{V^{\dot2}}\,.
\ee
The $\C^\times$ action $(V^3,W_3)\mapsto (s^{-1}V^3,sW_3)$ extends to $\CS^2$ as the $\C^\times$ defect conformal symmetry $z\mapsto s^2z$. 

The line at infinity $I$ is acted on by the right-handed $\SL_2(\C)$ symmetry as usual. Since the right-handed $\SU(2)$ action collapses $I$ to a point, the exceptional divisor $E$ projects down to $\CS^2$ under the $\SU(2)$ quotient. In the patch under consideration, $V^{\dal}W_{\dal}=0$ is solved to find $V^{\dal}=-V^{\dot2}W^{\dal}/W_{\dot1}$. Using this alongside $V^3=W_3=0$, the Poincar\'e coordinates \eqref{xy} reduce to
\be
x = z\,,\qquad  \vho = 0
\ee
on the image of $E$ in the $\SU(2)$ quotient. A similar analysis may be performed in the other patch $\{V^{\dot1}\neq0, W_{\dot2}\neq0\}$ by setting $W_3/W_{\dot2}=-\tilde z\,V^3/V^{\dot1}$ for some $\tilde z\in\CP^1$, where the sign is fixed by demanding that $\tilde z = z$ along $I$ on the overlap of the two patches.

This confirms that the boundary of AdS$_3$ coincides with the celestial sphere $\CS^2$ created by the blow-up. So $\CS^2$ is the $z$-plane where the chiral algebra lives. 
In this way, twisted holography makes precise the bottom-up intuition that celestial conformal field theory lives on a 2-sphere at infinity \cite{Pasterski:2016qvg}. In our Euclidean setting, although we cannot access the standard notion of a spacetime celestial sphere, we have just shown that our holographic plane is nevertheless a 2-sphere that acts as the boundary of the twistor space of Burns space.

\medskip

Using these facts, we may recast Burns holography as an example of AdS$_3$ holography. To see this explicitly, we need to understand how compactification of $\SL_2(\C)\simeq\AdS_3\times S^3$ along the twistor lines affects the AdS$_3$ factor. 

Let us rewrite the twistor lines corresponding to the $\C^2-0$ patch of Burns space in compact notation,
\be
v^{\dal}(\sigma) = u^{\dal} + \sigma\vartheta\hat u^{\dal}\,,\qquad w^{\dal}(\sigma) = \hat u^{\dal} + \sigma^{-1}\vartheta u^{\dal}\,,\qquad \vartheta \vcentcolon=\sqrt{1+\frac{N}{\|u\|^2}}\,.
\ee
The quantity $\vartheta$ is the fourth root of the metric determinant of the Burns metric. Under a quotient by the right-handed rotations $(v,w)\mapsto(Lv,Lw)$, $L\in\SU(2)$, a curve associated to $u$ is identified with the curve associated to $Lu$. So in the $\SU(2)$ quotient, the curves do not drop in dimension, but all the curves along any $\SU(2)$ orbit get identified with each other. Every curve in the quotient corresponds to an $S^3$ worth of curves in $\SL_2(\C)$. 

We can compute their images in the orbit space AdS$_3=\SL_2(\C)/\SU(2)$. In the Poincar\'e coordinates introduced in \eqref{xy}, the curves project down to
\be\label{xyproj}
x(\sigma) = \frac{\sigma\vartheta(1+|\sigma|^2)}{\vartheta^2+|\sigma|^2}\,,\qquad \vho(\sigma) = \frac{|\sigma|^2}{\|u\|^2(\vartheta^2+|\sigma|^2)}\,.
\ee
Plugging in $\sigma=0$, we observe that each of these curves starts at $x=0$, $\vho=0$. Setting $\sigma=\infty$ shows that each curve also ends at $x=\infty$, with the curve corresponding to $u^{\dal}$ asymptoting to $\vho=\|u\|^{-2}$. 

These curves are best visualized by working in the Poincar\'e disk coordinates
\be
(y_1,y_2,y_3) = \left(\frac{x+\bar x}{|x|^2+(N\vho+1)^2},\frac{\im(\bar x-x)}{|x|^2+(N\vho+1)^2},\frac{|x|^2+N^2\vho^2-1}{|x|^2+(N\vho+1)^2}\right)\,.
\ee
For positive $N$, this change of coordinates maps AdS$_3$ to the unit ball $y_1^2+y_2^2+y_3^2<1$. The equations \eqref{xyproj} are now recognized as parametrizing the spheroids
\be
\vartheta^{2}(y_1^2+y_2^2)+y_3^2=1
\ee
with $1/\vartheta$ acting as the half-length of two of the principal axes.

Similarly, setting $u^{\dal}=t\,\zeta^{\dal}$, the twistor lines associated to points $[\zeta^{\dal}]\in\CP^1$ on the exceptional divisor $t=0$ of Burns space take the form
\be
v^{\dal}(\sigma) = \frac{\sqrt{N}\hat\zeta^{\dal}}{\|\zeta\|}\,\sigma\,,\qquad w^{\dal}(\sigma) = \frac{\sqrt{N}\zeta^{\dal}}{\|\zeta\|}\,\sigma^{-1}\,.
\ee
All points on a 2-sphere can be mapped to each other by its rotational symmetry, so this exceptional divisor lies in a single $\SU(2)$ orbit. Consequently, all of these twistor lines get identified with each other. In Poincar\'e coordinates, they descend to a single one-real-dimensional line in the $\SU(2)$ quotient which reads
\be
x(\sigma) = 0\,,\qquad \vho(\sigma) = \frac{|\sigma|^2}{N}\,.
\ee
This line shoots straight across the Poincar\'e disk. 

\begin{figure}
    \centering
    \includegraphics[scale=0.35]{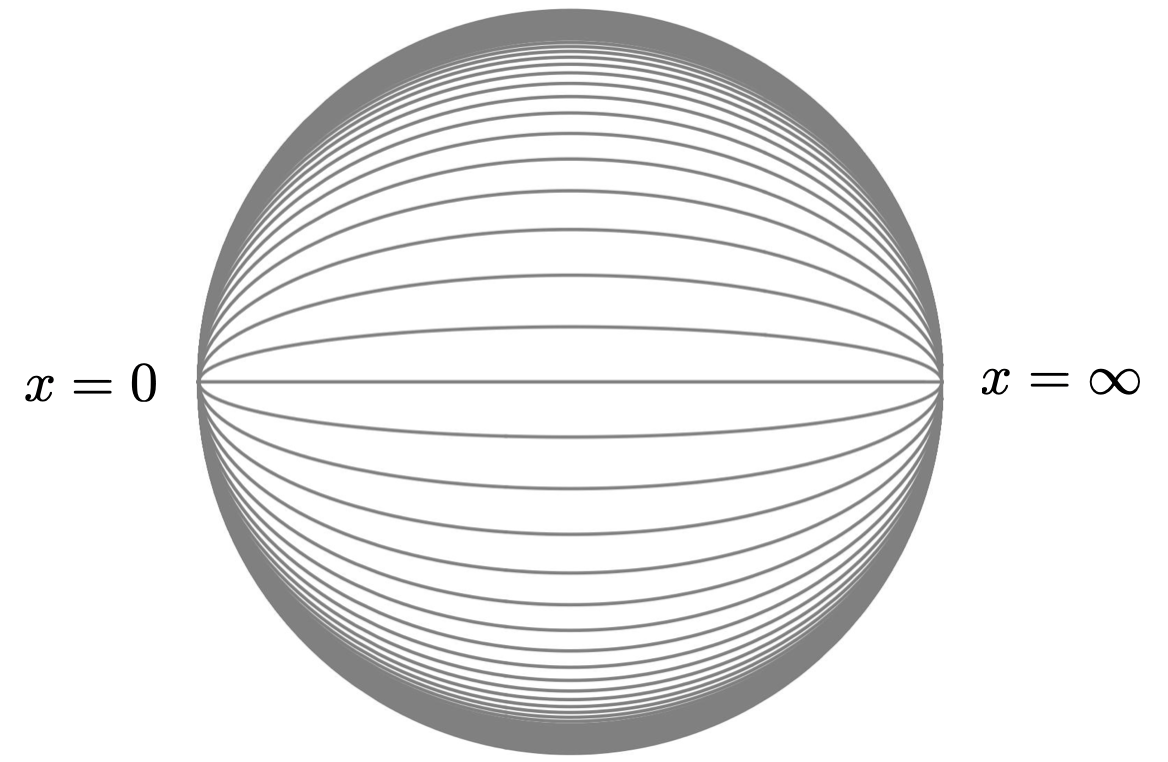}
    \caption{A foliation of the interior of AdS$_3$ by the twistor lines of Burns space in the Poincar\'e disk model. All curves stretch from $x=0$ on the left to the point $x=\infty$ on the right. The boundary of AdS$_3$ is \emph{not} a twistor line but is instead identified with $\CS^2$ via the transformation $x=z,\vho=0$.}
    \label{fig:onion}
\end{figure}

Putting these curves together, one finds a foliation of the entire interior of AdS$_3$. This creates the image of an onion-like foliation of the Poincar\'e disk as displayed in figure \ref{fig:onion}. The construction of our holographic duality consists of compactification along the twistor lines. This compactifies the interior of AdS$_3$ down to a copy of $\R_+\equiv(0,\infty)$ parametrized by $\|u\|$. Globally, the B-model on $\SL_2(\C)\simeq\AdS_3\times S^3$ compactifies down to a theory on $\til{\C}^2$. 
The final outcome is a duality between WZW$_4$ + Mabuchi gravity on Burns space, and our holographic chiral algebra living on the AdS$_3$ boundary that consists of the ``celestial sphere'' $\CS^2$.

We emphasize that this procedure is essentially the opposite of de Boer and Solodukhin's approach to flat holography. In \cite{deBoer:2003vf}, they conjectured that flat holography might arise from foliating Minkowski spacetime by AdS$_3$ and dS$_3$ slices and ``uplifting'' AdS$_3$ holography to embedding space. Our procedure instead starts with a relatively standard holographic duality in AdS$_3\times S^3$ and compactifies AdS$_3$ along a foliation by Riemann spheres. In this way, it compactifies AdS$_3$/CFT$_2$ to a celestial holographic duality. Nevertheless, in agreement with de Boer and Solodukhin, we have shown that the holographic dual of our four-dimensional theory does indeed live on a two-dimensional sphere.

Finally, to complete the construction of $\til\F/\SU(2)$, let us also mention that the divisors $\br D_0$ and $\br D_{\infty}$ project down to copies of $[0,\infty]$ that respectively live entirely at the points $x=0$ and $x=\infty$ on the AdS$_3$ boundary. They partially resemble a real blow-up of these boundary points, in the sense of replacing $x=0,\infty$ at the boundary with copies of $[0,\infty]$ in the above picture. For example, to zoom into $\br D_0$ in the quotient, we replace the quotient by the set of points $(x,\vho,t)\in\AdS_3\times[0,\infty]$ satisfying $|x|^2 = t\vho$. Away from $x=\vho=0$, this is diffeomorphic to AdS$_3-\{x=\vho=0\}$. But just as for a blow-up, the idea is that the point $x=\vho=0$ is replaced by a copy of $[0,\infty]$ with coordinate $t$. A similar analysis works for $\br D_\infty$ if one uses coordinates adapted to $x=\infty$. 

An important point is that before the $\SU(2)$ quotient, all twistor lines were disjoint. But from figure \ref{fig:onion}, the projections of the twistor lines to $\til\F/\SU(2)$ naively appear to all intersect at $x=0,x=\infty$. They are only seen to be disjoint in the quotient geometry after accounting for the projections of the divisors $\br D_0,\br D_\infty$. E.g., the projected twistor line \eqref{xyproj} intersects $x=\vho=0$ at $t=\lim_{\sigma\to0}|x(\sigma)|^2/\vho(\sigma) = \|u\|^2$. The extra divisors might support extra sources or intial data of interest, but we leave their study to future work. 


\section{The holographic dictionary on twistor space}
\label{sec:twistor_dictionary}

Celestial holography has greatly benefited from the study of extrapolate dictionaries in the past \cite{Pasterski:2021dqe,Donnay:2022sdg}. Before we start building a holographic dictionary between Burns space states and celestial operators, in this section we will flesh out an extrapolate dictionary directly on twistor space. This crucially uses the AdS$_3\times S^3$ geometry of the twistor space of Burns space that was described in the previous section. We will identify operators living on the boundary of AdS$_3$ with states in the bulk of twistor space. The twistor space states are always in one-to-one correspondence with spacetime states via the Penrose transform.

Quite dramatically, this will enable a proof that at tree level, the $ \SU(2)\times\U(1)$ isometries of Burns space get enhanced to a full $\SU(2)\times\SU(2)$ Lorentz symmetry (and in fact its complexification) in the scattering amplitudes of our bulk theory.  At loop level we do not provide a rigorous proof of this statement. However, we conjecture that it remains true at loop level, and our tree level argument provides very strong evidence.    This enhancement will be concretely observed in the amplitudes computed in section \ref{sec:tests}, but proving it by direct calculation on Burns space turns out to be nigh-on impossible.


\subsection{Relating two different compactifications}

Let us recall the holographic dictionary of \cite{CG}. The bulk geometry is   $\SL_2(\C)$, which we coordinatize as the set of pairs of vectors $v^{\dal}$, $w^{\dal}$ satisfying $[v\,w] = N$. 

The whole space has a symmetry of $\SL^+_2 \times \SL_2^-$, acting on the left and on the right. This  is a copy of the Lorentz group. If we build a four-vector $X^{\al\dal}$ with 
\begin{equation} 
	X^{1 \dal } =  v^{\dal}  \,, \quad X^{2 \dal} =   w^{\dal}  
\end{equation}
then $\SL_2^+$ rotates the dotted indices of $X^{\alpha \dot{\alpha}}$ and $\SL_2^-$ rotates the undotted indices. In terms of the group manifold $\SL_2(\C)$, $\SL_2^+$ acts by left multiplication and $\SL_2^-$ acts by right multiplication. The Cartan of $\SL_2^-$ acts on $v^{\dal}$ with weight $1$ and $w^{\dal}$ with weight $-1$.   

 In \cite{CG} the group manifold was compactified to the quadric 
\begin{equation} \label{SLbar}
	\br{\SL}_2(\C) =  \{[v^{\dal}, w^{\dal}, t] \mid  [v\,w]  = Nt^2\} \subset \CP^4.  
\end{equation}
$\SL_2(\C)$ is embedded as the patch $t=1$. The boundary divisor $t=0$ is a quadric in $\CP^3$, so is a $\CP^1_+ \times \CP^1_-$. The spheres $\CP^1_{\pm}$ are acted on by $\SL_2^{\pm}$. The boundary chiral algebra consists of operators which live at a point $z \in \CP^1_-$, and wrap $\CP^1_+$.  The holographic chiral algebra is defined to be the set of such boundary operators equipped with the OPE in the $z$-plane (i.e.\ the $\CP^1_-$ direction).

As explained in the previous section, the twistor construction leads us to a different compactification of $\SL_2(\C)$, that inside the blow-up $\til\F = \text{Bl}_I\F$ of the flag variety $\F$ along the line at infinity $I$. In that case, the complement of the interior locus $\SL_2(\C)$ consists of three divisors $\br D_0$, $\br D_\infty$, and $E$. The first two divisors are each a copy of $\til{\CP}{}^2$, the blow-up of $\CP^2$ at a point. They are not part of the holographic boundary but instead behave like surface defects. The third divisor $E=\CP^1_+\times\CP^1_- = I\times\CS^2$ is the holographic boundary. It admits an action of the $\SU(2)\times\U(1)$ isometries of Burns space, where $\SU(2)$ acts on $\CP^1_+=I$ and $\U(1)$ on $\CP^1_-=\CS^2$. The second sphere $\CP^1_-$ is the $z$-plane that supports the celestial CFT. 

There is a rational map\footnote{That is, a map from a dense open subset of $\F$ to $\br\SL_2(\C)$ that need not extend to all of $\F$.}  of algebraic varieties
\begin{equation} 
\F \to \br{\SL}_2(\C)  
\end{equation}
defined by 
\begin{equation} 
	\big([V^{\dal}, V^3] , [W_{\dal}, W_3 ]\big) \mapsto [V^{\dal} W_3, W^{\dal} V^3, V^3 W_3].  
	\label{eqn:rational_map} 
\end{equation}
This is an isomorphism on the interior $\SL_2(\C)$.  It is regular on the loci $\{V^3= 0, W_3\neq0\}$ and $\{W_3 = 0,V^3\neq0\}$. However, it has singularities when $V^3 = W_3 = 0$.  This is because on the flag variety $\F$ the coordinates $V^{\dal}$ and $W^{\dal}$ are defined up to separate rescalings by $\C^\times$, whereas on $\br{\SL}_2(\C)$ only one rescaling is used, scaling $V^{\dal}, W^{\dal}$ at the same time.  Thus, when $V^3, W_3 \to 0$, the relative coefficient of $V^{\dal}$ and $W^{\dal}$ in \eqref{eqn:rational_map} is undetermined.

This becomes a regular map (i.e.\ everywhere well-defined) if we lift it to the blow-up, giving us a map
\begin{equation} 
	\til{\F} \to \br{\SL}_2(\C)\,.  
\end{equation}
This maps the exceptional divisor $E$ to the boundary divisor $\CP^1_+\times\CP^1_-$ of $\br\SL_2(\C)$, while collapsing $\br{D}_0$ and $\br{D}_\infty$ down to copies of $\CP^1$. The fiber of the map $\til{\F} \to \br{\SL}_2(\C)$ is a point, except over the locus $z = 0$, $z = \infty$ in the boundary where the fiber is a copy of $\CP^1$.  

Using this map, we can modify the holographic dictionary of \cite{CG}.  This dictionary was defined by considering boundary conditions for the fields, and then modifying them on a curve $\CP^1 \times z$ in the boundary of $\br{\SL}_2(\C)$.  We can implement the same procedure on $\til{\F}$, again modifying the boundary conditions on a $\CP^1 \times z$.


\subsection{Building an extrapolate dictionary}

To do this in detail, we need to specify the boundary conditions on $\til{\F}$ and relate them to those on $\br{\SL}_2(\C)$.

Let us first discuss the open string field. This is required to vanish on the boundary divisor of $\br{\SL}_2(\C)$. If we take such   a $(0,1)$-form on $\br{\SL}_2(\C)$, and pull it back to $\til{\F}$, it still vanishes on the boundary divisors of $\til{\F}$.  So we can impose the same boundary condition on $\til{\F}$.  

For the closed string fields, things are slightly more tricky. The closed string field is a $(2,1)$ form, which in \cite{CG} was  allowed to have logarithmic poles on the boundary divisor. However, this was not entirely essential, and it is perhaps better to use the stronger boundary condition that requires that the closed string field is a regular $(2,1)$-form on the boundary.  This boundary condition was discussed in \cite{Costello:2021bah}.  If we use this boundary condition for the closed string field, then a field on $\br{\SL}_2(\C)$ obviously pulls back to one on $\til{\F}$.  

In this way, we can mimic the story of \cite{CG}, and define the holographic correlation functions to be the path integral of the theory where the boundary conditions are modified along a $\CP^1$ in the boundary. The $\CP^1$ must live in the divisor $\CP^1 \times \CP^1 = I\times\CS^2$, at a fixed value of $z\in\CS^2$. The modification of the boundary conditions encode the choice of single-trace local operators in the dual CFT. 

As discussed in \cite{CG}, sections 7 and 8, modifications of the boundary condition are the same as boundary operators placed along $\CP^1 \times z$.  On $\til\F$, these boundary operators (in the open string sector) are given by integrals of the holomorphic Chern-Simons gauge field $\cA \in \Omega^{0,1}(\til{\F}, \g \otimes \Oo(-D))$ along $\CP^1 \times z\subset E$, where $D = \br D_0\cup\br D_\infty\cup E$ is the total boundary divisor. In other words, they are holomorphic Wilson lines wrapping the line at infinity $I$ at a point $z\in\CS^2$.

\paragraph{Dictionary on $\br{\SL}_2(\C)$ for open string fields.} Before we build these, let us revisit the procedure of specifying such boundary operators on $\br{\SL}_2(\C)$. Focus on the patch $w^{\dot2}\neq0$ without loss of generality. We let $n=t/w^{\dot2}$ denote a coordinate normal to the boundary $\partial \br{\SL}_2(\C) = \CP^1 \times \CP^1$. Let $y,z$ be coordinates on this $\CP^1 \times \CP^1$, where $y = w^{\dot{1}} / w^{\dot{2}} $ and $z=-v^{\dot2}/w^{\dot2}$. The normal bundle to $\p\br{\SL}_2(\C)$ is $\CO(1,1)$. So the normal vector $\partial_n$ has a first order zero at $y = \infty$ and at $z = \infty$.    

Suppose we study the B-model on $\br{\SL}_2(\C)$. As a boundary condition, we assume that the holomorphic Chern-Simons gauge field $\mc{A} \in \Omega^{0,1}(\br{\SL}_2(\C), \mf{so}_8)$ vanishes at $n = 0$.  One can build boundary local operators by expressions like
\begin{equation} 
	\int_{\CP^1 \times z}  y^l\,  \partial_n^k \mc{A}\,\d y + \cdots \label{eqn:boundaryintegral0}  
\end{equation}
where $l \le k-2$ in order to ensure the measure has no pole at $y = \infty$.  The expression as written is invariant under linearized gauge transformations. The ellipsis indicates expressions we need to add on to ensure invariance under non-linear gauge transformations.  Because our boundary conditions force gauge transformations to vanish at $n = 0$, there are a finite number of such terms; they will not be relevant for our discussion.

The normal vector $\partial_n$ has spin $\frac{1}{2}$ under rotations of the $z$-plane.  Therefore the expression \eqref{eqn:boundaryintegral0} has spin $k/2$. Further, because $l$ ranges from $0$ to $k-2$, it lives in a representation of spin $(k-2)/2$ of the $\SU(2)$ rotating the dotted indices. In \cite{CG} the identification 
\begin{equation} 
	J[l, k-2-l] =	\int_{\CP^1 \times z}  y^l\,  \partial_n^k \mc{A}\, \d y + \cdots \label{eqn:boundaryintegral}  
\end{equation}
was made (up to normalization). The point is that both sides have the same quantum numbers. This provides an extrapolate dictionary for twisted holography on $\br\SL_2(\C)$. We wish to pull this back to a dictionary on $\til\F$.


\paragraph{Dictionary on $\til\F$ for open string fields.} How  does this analysis change when we consider the blow up $\til \F$?  To answer this question, we need to understand how the normal vector $\partial_n$ to the boundary divisor of $\br{\SL}_2(\C)$ behaves on $\til \F$. 

The boundary of $\til \F$ consists of three divisors. Two of them, $\br{D}_0$, $\br{D}_\infty$, come from the boundary of the flag variety $\F$. The third is the exceptional divisor of the blow up, which we called $E$.

The map 
\begin{equation} 
	\pi : \til \F \to \br{\SL}_2(\C) 
\end{equation}
has the feature that
\begin{equation} 
	\pi^{-1} \partial \br{\SL}_2(\C) = \br{D}_0 \cup \br{D}_\infty \cup E 
\end{equation}
where on the right hand side all three divisors appear with multiplicity one.  For the exceptional divisor $E$, this is clear, because $E$ maps isomorphically onto the boundary divisor of $\br{\SL}_2(\C)$.  For the other divisors, we note that according to \eqref{eqn:rational_map}, the coordinate $t$ in \eqref{SLbar} whose vanishing cuts out the boundary divisor of $\br{\SL}_2(\C)$ corresponds to the function $V^3 W_3$ on the flag variety $\F$, whose zero locus is $\br{D}_0 \cup \br{D}_\infty$ with multiplicity one.

Let us work in a local patch near the boundary of $\br{\SL}_2(\C)$ with coordinates $y,z,n$ as above, where we work in the region near $z = 0$, $n = 0$.  Similarly, on $\til{\F}$, we can work in a local patch near the exceptional divisor $E$, with coordinates $y,z,\til{n}$ where $\til{n} = 0$ is the exceptional divisor. Pulling back the $\br\SL_2(\C)$ coordinates $y=w^{\dot1}/w^{\dot2}$, $z=-v^{\dot2}/w^{\dot2}$ by the map \eqref{eqn:rational_map} yields the corresponding coordinates $y=W^{\dot1}/W^{\dot2}$, $z=-V^{\dot2}W_3/W^{\dot2}V^3$ on $\til\F$. Because $W^{\dot2}=-W_{\dot1}$, the latter coincides with the coordinate $z$ defined through the blow-up equations \eqref{Ftildef}. In these coordinates, the divisor $\br{D}_0$ is the locus $z = 0$.

The locus $n = 0$ on our patch of $\br{\SL}_2(\C)$ must correspond to the locus where either $\til{n}=0$ or $z=0$ on our patch of $\til{\F}$ (with multiplicity one). This is because the limit $\til n\to0$ at fixed $z$ picks out $E$, while $z\to0$ at fixed $\til n$ picks out $\br D_0$. Thus, the map between the varieties takes the form
\begin{equation} 
	n = \til{n} z\,. 
\end{equation}
Alternatively, we can take this as the definition of $\til n$. In homogeneous coordinates, one finds that $\til n = -V^3/V^{\dot2}$. In the patch $\{V^{\dot2}\neq0,W_{\dot1}\neq0\}$, sending $V^3\to0$ while keeping $z$ fixed requires simultaneously sending $W_3\to0$, which lands on exactly the divisor $E$.

On the locus $\til{n} = 0$, we therefore have
\begin{equation} 
	\partial_n = z^{-1} \partial_{\til{n}}\,. 
\end{equation}
Now consider the pullback of the expression for $J[m,n]$:
\begin{equation} 
	J[m,n](z) =	\int_{\CP^1 \times z} y^m z^{-m-n-2}\, \partial_{\til{n}}^{m+n+2} \mc{A}\,\d y + \cdots 
\end{equation}
Noting that our boundary conditions for holomorphic Chern-Simons give $\mc{A}$ a zero at $z = 0$, we see that $J[m,n](z)$ has a pole of order $m+n+1$ at $z = 0$, exactly as desired. 

\paragraph{Dictionary for closed string fields.} Let us now do the corresponding analysis for the closed string field, which is a closed $(2,1)$ form $\eta$.  In the coordinates $y,n,z$ above, $\eta$ has three components $\eta_{yn}$, $\eta_{nz}$, $\eta_{zy}$ related by
\begin{equation} 
	\partial_z \eta_{yn} +  \partial_y \eta_{nz} +  \partial_n \eta_{zy}  = 0. 
\end{equation}
This equation tells us that only two of the three components will give us independent operators -- say $\eta_{nz}$ and $\eta_{yn}$.  Quantities built by integrating $\eta_{zy}$ over the $y$-plane will generally\footnote{There is a single operator built from $\eta_{zy}$ which is independent, namely $\int_{y} \iota_{\partial_z} \eta$.  This is of spin $1$ and is $\SU(2)$ invariant. }  not be independent, as they will have an $n$-derivative and so can be transformed into expressions built from $\eta_{yn}$ and $\eta_{zy}$.

The boundary operators we can build are then the integrals
\begin{equation} 
	\begin{split} 
		E[r,s] &= \int_{y} y^{r} \,\iota_{\partial_n}\partial_n^{r+s-1} \eta + \cdots  \\
		F[r,s] &= \int_{y}  \d y\,y^{r}\,  \partial_n^{r+s+1}  \iota_{\partial_z}  \iota_{\partial_n} \eta + \cdots \label{eqn:boundary_closedstring}
	\end{split}
\end{equation}
where $\iota_\xi\eta\equiv\xi\ip\eta$ denotes interior product with a vector $\xi$. In each equation, we integrate over $n = 0$ at a fixed value of $z$.  In the expression for $E[r,s]$, we must have $r+s > 0$, which is consistent with $E[0,0]$ being the identity operator. The fact that $\iota_{\partial_n}\partial_n^{r+s-1} \eta$ is a $1$-form in the $y$-plane vanishing to order $r+s$ at $y = \infty$ guarantees that the integrand has no poles.  Because $\partial_n$ is of spin $\frac{1}{2}$, this expression has spin $\frac{1}{2}(r+s)$, giving us the correct quantum numbers for $E[r,s]$ as defined on the CFT side.

Similarly, the expression for $F[r,s]$ has spin $\frac{1}{2}(r+s) + 2$, and so has the same quantum numbers as the CFT operator.

As we see from equations \eqref{eqn:boundary_closedstring}, when we work on the flag variety and replace $\partial_n$ by $z^{-1} \partial_{\til{n}}$, $E[r,s]$ has a pole at $z = 0$ of order $r+s$, and $F[r,s]$ has a pole of order $r+s+2$, exactly as desired.

\subsection{Matching correlators on $\til{\F}$ and $\br{\SL}_2(\C)$}
\label{sec:enhance}

One can ask whether the holographic correlators defined using the compactification $\til\F$ coincide with those for the compactification $\br{\SL}_2(\C)$.  In this section we will prove this statement for all tree level holographic correlators, and argue that at loop level the statement remains true. 

Consider placing boundary operators such as those in \eqref{eqn:boundaryintegral} at points $z_i$ where $z_i \neq 0,\infty$.  These source linear fields in the topological string on $\SL_2(\C)$, and the CFT correlation function is given by the topological string scattering amplitude of these field configurations.

It turns out that the fields sourced by these operators are the same whether we construct it using the $\br{\SL}_2(\C)$ boundary condition or the $\til{\F}$ boundary condition. Let us phrase this result as a theorem:
\begin{theorem}
	There is a unique (up to gauge equivalence) field configuration on either $\til \F$ or $\br{\SL}_2(\C)$ sourced by any boundary operator $J[r,s]$, $E[r,s]$, $F[r,s]$ placed at $z \neq 0,\infty$ on $\til \F$. Further, the field configuration on $\til\F$ is gauge-equivalent to the field configuration pulled back from $\br{\SL}_2(\C)$. 
\end{theorem}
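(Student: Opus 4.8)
The plan is to characterize the sourced field configuration cohomologically and then transport the computation across the birational morphism $\pi:\til\F\to\br\SL_2(\C)$ of \eqref{eqn:rational_map}. First I would make precise what ``sourced by a boundary operator'' means. In the open-string case it is a field $\cA$ solving the linearized equation of motion $\dbar\cA=0$ in the bulk, obeying the vanishing boundary condition $\cA\in\Omega^{0,1}(\,\cdot\,,\g\otimes\Oo(-D))$ on the \emph{whole} boundary divisor except along the curve $\CP^1\times z$, where the operator $J[r,s]$ prescribes a fixed pole of the order recorded in the dictionary of the previous subsection; in the closed-string case $\eta$ is a $\p$-closed $(2,1)$-form with the regularity boundary condition of \cite{Costello:2021bah} and a prescribed pole encoding $E[r,s]$ or $F[r,s]$. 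With this framing, ``unique up to gauge equivalence'' becomes the statement that two solutions with the same prescribed pole differ by a source-free, boundary-vanishing, pole-free solution, i.e.\ by a class in $H^1(\,\cdot\,,\g\otimes\Oo(-D))$ for the open string (and the analogous hypercohomology group built from $\p$-closed $(2,1)$-forms modulo $\eta\sim\eta+\p\p\al$ for the closed string). Uniqueness is therefore the vanishing of that $H^1$, and existence can be supplied directly by pullback of the explicit CG solution, as explained below.

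Second, I would establish that $\pi$ identifies the relevant cohomology on the two compactifications. The geometric inputs, all available above, are that $\pi$ is an isomorphism on the interior $\SL_2(\C)$ and away from the fibers over $z=0,\infty$ (hence in a neighborhood of every admissible insertion $z\neq0,\infty$), and that $\pi^{-1}\Delta=\br D_0\cup\br D_\infty\cup E=D$ with each component of multiplicity one, where $\Delta$ denotes the boundary divisor of $\br\SL_2(\C)$. The multiplicity-one statement gives the line-bundle identity $\Oo_{\til\F}(-D)=\pi^*\Oo_{\br\SL_2(\C)}(-\Delta)$. Since $\br\SL_2(\C)$ is the \emph{smooth} quadric $\{[v,w,t]\mid[v\,w]=Nt^2\}$ and $\pi$ is a proper birational morphism of smooth projective varieties with rational fibers, one has $R\pi_*\Oo_{\til\F}=\Oo_{\br\SL_2(\C)}$; the projection formula then yields $R\pi_*\Oo_{\til\F}(-D)=\Oo_{\br\SL_2(\C)}(-\Delta)$ and hence an isomorphism on all the boundary-twisted cohomology groups. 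In particular the vanishing $H^1(\br\SL_2(\C),\g\otimes\Oo(-\Delta))=0$ that underlies the well-posedness of the CG dictionary \cite{CG} transports verbatim to $H^1(\til\F,\g\otimes\Oo(-D))=0$, and likewise for the closed-string groups.

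Third, I would assemble the statement. Existence on $\br\SL_2(\C)$ is the explicit CG construction of the field sourced by the operator. Its pullback $\pi^*\cA_{\br\SL}$ is $\dbar$-closed (pullback commutes with $\dbar$), vanishes to first order on all of $D$ by the line-bundle identity above, and carries exactly the prescribed pole along $\CP^1\times z$ because $\pi$ is an isomorphism there; it is thus a solution of the $\til\F$ problem with the correct source, proving existence on $\til\F$. By the transported vanishing $H^1(\til\F,\g\otimes\Oo(-D))=0$, any two such solutions differ by a gauge transformation, giving uniqueness on each space. Finally, the field $\cA_{\til\F}$ defined intrinsically on $\til\F$ and the pullback $\pi^*\cA_{\br\SL}$ both solve the same boundary-value problem, so uniqueness forces $\cA_{\til\F}\sim\pi^*\cA_{\br\SL}$. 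The identical chain applies to the closed-string towers $E[r,s],F[r,s]$, where pullback preserves $\p$-closedness of $\eta$, the potential relation $\eta=\p\gamma$ together with its redundancy $\gamma\sim\gamma+\p\al$, and the regularity of $\eta$ along the boundary.

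The principal obstacle is the cohomological step on the blow-up: ruling out field configurations that are ``invisible'' downstairs because they are supported near the exceptional geometry created by blowing up $I$. Concretely one must confirm that no new gauge-inequivalent, source-free solutions appear on $\til\F$ relative to $\br\SL_2(\C)$, which is precisely the content of $R\pi_*\Oo_{\til\F}=\Oo_{\br\SL_2(\C)}$ specialized to the collapsed $\CP^1$ fibers over $z=0,\infty$ and to the contraction of $\br D_0,\br D_\infty$; here the connectedness and rationality of the fibers (so that $H^1(\Oo_{\mathrm{fiber}})=0$) is what kills any spurious contribution. I expect the genuinely delicate bookkeeping to lie in the closed-string sector, where one must check that the regularity boundary condition on the $(2,1)$-form, the nonlocal BCOV kinetic term, and the dependence on the meromorphic volume form---which agree on the interior $\SL_2(\C)$ but differ near $z=0,\infty$ between the two compactifications---do not obstruct the pullback/pushforward identification for insertions at $z\neq0,\infty$.
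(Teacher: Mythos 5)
The open-string half of your argument is essentially the paper's proof: uniqueness and existence are reduced to the vanishing of $H^{1}$ and $H^{2}$ with coefficients in $\Oo(-D)$, and the transport of this vanishing to $\til\F$ via $R\pi_\ast\Oo_{\til\F}=\Oo_{\br{\SL}_2(\C)}$ together with the projection formula is exactly what the paper does in its appendix on vanishing cohomology. The only cosmetic difference is that you invoke the explicit CG solution for existence on $\br{\SL}_2(\C)$, whereas the paper gets existence abstractly from the vanishing of the obstruction group $H^{2}$; either works for the open string.

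The gap is in the closed-string sector, which you defer to ``delicate bookkeeping'' and where your blanket claim that the analogous cohomology groups vanish is false. For the closed string the paper works with the complex $\Omega^{2}\xrightarrow{\p}\Omega^{3}$: uniqueness does follow from $H^{1}(X,\Omega^{2}_{\mathrm{cl}})=H^{2,1}(X)\oplus H^{3,0}(X)=0$ (since $H^{3}(X)=0$), but the obstruction group is $H^{2}(X,\Omega^{2}_{\mathrm{cl}})=H^{2,2}(X)$, which is \emph{nonzero}: it equals $\C$ for $\br{\SL}_2(\C)$ (by Lefschetz and Poincar\'e duality, generated by the class of the insertion locus $\CP^{1}\times z$) and $\C^{3}$ for $\til\F$. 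So existence of the field sourced by $E[r,s]$ or $F[r,s]$ cannot be obtained by a vanishing theorem, and no pushforward identity will make $H^{2,2}$ go away. The paper closes this by an additional Hodge-theoretic step: the source term in $\dbar\eta=\p(\cdots)$ is manifestly in the image of the map $H^{2,1}(X)\xrightarrow{\p}H^{2,2}(X)$, and on a compact K\"ahler (or Moishezon) $X$ this map is zero by Hodge theory, so the obstruction class vanishes even though the group does not. Without this step your argument establishes uniqueness but not existence for the $E$ and $F$ towers, and hence does not prove the theorem in the closed-string sector.
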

\begin{proof}
	Let us first check this for the open string states. We will show that the field sourced by a boundary operator $J[r,s]$ on either $\br{\SL}_2(\C)$ or $\til\F$ is unique up to gauge equivalence. The ambiguity in finding a solution to the equations of motion with a given source term, up to gauge equivalence, is an element of Dolbeault $H^1$ with coefficient in the sheaf of functions vanishing on the boundary. The obstruction to finding a global field sourcing a given boundary operator is an element of Dolbeault $H^2$.  Thus,  it suffices to know that
\begin{equation} 
	\begin{split} 
		H^\ast_{\dbar}(\br{\SL}_2(\C), \Oo ( - \partial\br{\SL}_2(\C)) ) &= 0 \\
		H^\ast_{\dbar}(\til\F, \Oo ( - \br{D}_0 -\br{D}_\infty - E)) &= 0\,. 
	\end{split}
\end{equation}
We check this (easy) vanishing result in Appendix \ref{app:vanishing}.

	For the closed string fields, the argument is a little more complicated.  We let $\Omega^2_\text{cl}$ be the complex of sheaves $\Omega^2 \xrightarrow{\partial} \Omega^3$.  The closed string fields live in 
	\begin{equation} 
		H^1( X, \Omega^2_\text{cl} )  
	\end{equation}
	where $X$ is either $\til{\F}$ or $\br{\SL}_2(\C)$. This is the same as 
	\begin{equation} 
		H^{2,1}(X) \oplus H^{3,0}(X) 
	\end{equation}
	which is easily seen to be zero, because in either case $H^3(X) = 0$.  This tells us that, if a field sourced by a boundary operator exists, it is necessarily unique up to gauge transformations.

	The obstruction to existing lives in
	\begin{equation} 
		H^2(X, \Omega^{2}_\text{cl} ) = H^{2,2}(X) \oplus H^{3,1}(X)  
	\end{equation}
	(and $H^{3,1}(X) = 0$). By the Lefschetz hyperplane theorem and Poincar\'e duality, 
	\begin{equation} 
		H^{2,2}(\br{\SL}_2(\C)) = \C 
	\end{equation}
	with a basis given by the location $\CP^1 \times z$ where we insert a boundary operator. This group contains the obstruction to finding a field sourced by a boundary operator. Similarly, one can show that $H^{2,2}(\til\F) = \C^3$. 

	To complete the proof, we need to check that the obstruction vanishes for the boundary operators $E[r,s]$, $F[r,s]$ in equation \eqref{eqn:boundary_closedstring}.  Let us perform the analysis for $E[r,s]$; the $F[r,s]$ case is similar.    

	Using the kinetic term $\partial^{-1} \eta \wedge \dbar \eta$, we see that the field sourced by $E[r,s](z)$ in \eqref{eqn:boundary_closedstring} satisfies 
	\begin{equation} 
	\dbar \eta =    \partial \left( y^r \iota_{\partial_n} \partial_n^{r+s-1} \bar\delta_{\CP^1\times z}      \right)  \,.\label{eqn:closedsource}	\end{equation}
	This can be solved as long as the $(2,2)$ form on the right hand side is zero in $H^{2,2}(\br{\SL}_2(\C))$.  This is automatic, however, by Hodge theory: the map
	\begin{equation} 
		H^{2,1}(\br{\SL}_2(\C))  \xrightarrow{\partial}  H^{2,2}(\br{\SL}_2(\C))  
	\end{equation}
	is zero, and the right hand side of \eqref{eqn:closedsource} is in the image of this map. 

	This argument applies on $\til\F$ in the same way, so that in each case there is a unique field configuration up to gauge equivalence sourced by a given operator.  Uniqueness implies that the pullback of the field sourced by an operator on $\br{\SL}_2(\C)$ is (up to a gauge transformation) the field sourced by an operator on $\til \F$. 
\end{proof}

This almost immediately tells us that holographic correlators are the same whether computed on $\br{\SL}_2(\C)$ or on $\til{\F}$.  Let us explain this in detail for the open string sector; the argument in the closed string and mixed sectors is identical.    

Let us start with the holographic two-point function of $J[r,s](z_1)$ with $J[s,r](z_2)$.  Let $\mc{F}[r,s](z_1)$ be the field sourced by $J[r,s](z_1)$ on $\br{\SL}_2(\C)$.  Then, on $\br{\SL}_2(\C)$, the holographic two-point function is
\begin{equation} 
	\left\langle J[r,s](z_1) J[s,r](z_2) \right\rangle_{\br{\SL}_2(\C)} = \int_{\CP^1 \times z_2}\d y\; y^s\,  \partial_{n}^{s+m+2} \mc{F}[r,s](z_1)  \,.
\end{equation}
In this expression, we have inserted the field configuration $\mc{F}[r,s](z_1)$ into the linear term of the operator $J[s,r](z_2)$ defined in equation \eqref{eqn:boundaryintegral}.  (Here, we are not concerned with details of normalization: only the structure of the expression is needed to compare the results on $\til\F$ and on $\br{\SL}_2(\C)$).

The corresponding expression on $\til\F$ is almost exactly the same, except that everything has been pulled back along the map $\pi : \til\F \to \br{\SL}_2(\C)$. We get
\begin{equation} 
	\left\langle J[r,s](z_1) J[s,r](z_2) \right\rangle_{\til\F} = \int_{\CP^1 \times z_2} \d y\; y^s\,  \partial_{n}^{s+m+2} \pi^\ast \mc{F}[r,s](z_1)  \,.
\end{equation}
Since the map $\pi$ is an isomorphism on the $\CP^1 \times z_2$ we are considering, these expressions are the same.

Next, let us consider the $3$-point function of open string operators placed at $z_1,z_2,z_3$ where $z_i \neq 0$, $z_i \neq \infty$.  We can take the operators to be $J[r_i,s_i]$, as in equation \eqref{eqn:boundaryintegral}.  We let $\mc{F}[r_i,s_i](z_i)$ be the fields they source. These are field configurations on $\br{\SL}_2(\C)$: 
\begin{equation} 
	\mc{F}[r_i,s_i](z_i) \in \br{\Omega}^{0,1}(\br{\SL}_2(\C), \mf{so}_8 \otimes \Oo(-\partial\br{\SL}_2(\C)) )\,. 
\end{equation}
The notation $\br{\Omega}^{0,1}$ indicates that these are distributional $(0,1)$ forms. It is important that these distributions are regular away from $\CP^1 \times z_i$ in the boundary. These field configurations are uniquely specified by the gauge, linearized equations of motion, boundary conditions, and source term.

The holographic $3$-point function defined in $\br{\SL}_2(\C)$ is
\begin{multline}
	\left\langle J[r_1,s_1](z_1)J[r_2,s_2](z_2)J[r_3,s_3](z_3) \right\rangle _{\br{\SL}_2(\C) } \\= 	\int_{\br{\SL}_2(\C) }  \mc{F}[r_1,s_1](z_1)  \mc{F}[r_2,s_2](z_2)  \mc{F}[r_3,s_3](z_3)\, \Omega 	\label{eqn:3ptsl2} 
\end{multline}
where $\Omega$ is the meromorphic volume form on $\br{\SL}_2(\C)$. 

The holographic $3$-point function on $\til\F$ is defined in exactly the same way:
\begin{multline}
	\left\langle J[r_1,s_1](z_1)J[r_2,s_2](z_2)J[r_3,s_3](z_3)\right\rangle_{\til\F } \\= 	\int_{\til\F }  \pi^\ast \mc{F}[r_1,s_1](z_1) \pi^\ast  \mc{F}[r_2,s_2](z_2)  \pi^\ast\mc{F}[r_3,s_3](z_3)\,\pi^\ast \Omega \,.\label{eqn:3ptflag} 
\end{multline}
All quantitites in the integrand, including the meromorphic volume form, have been pulled back to $\til\F$.

It is almost evident that the expresions \eqref{eqn:3ptsl2} and \eqref{eqn:3ptflag} are the same.  If the $\mc{F}[r_i,s_i](z_i)$ were smooth, then both expressions are given by absolutely convergent integrals on the interior $\SL_2(\C)$, and so they must coincide.   

A slight subtlety is engendered by the fact that $\mc{F}[r_i,s_i](z_i)$ has distributional singularities in the boundary on the locus $\CP^1 \times z_i$.  This is not a serious problem: each integral can be computed on the open region 
\begin{equation} 
	U = \til\F - \br{D}_0 \cup \br{D}_\infty 
\end{equation}
where the locus $z = 0,\infty$ in the boundary has been removed. The map $\pi$ is an isomorphism from $U$ onto its image in $\br{\SL}_2(\C)$, and the integrands in \eqref{eqn:3ptflag} and \eqref{eqn:3ptsl2} are quantities that are smooth and absolutely convergent away from a compact set in $U$. It follows that the integrals agree.

Now let us generalize this to show that the holographic $n$-point functions at tree level coincide whether we use $\til\F$ or $\br{\SL}_2(\C)$, as long as the insertions are of operators with $z \neq 0,\infty$.   We will present the details for the $4$-point function; the general case is similar.  Consider operators $J[r_i,s_i](z_i)$ sourcing fields $\mc{F}[r_i,s_i](z_i)$ on $\br{\SL}_2(\C)$, as above.   The four-point function
\begin{equation} 
	\left\langle J[r_1,s_1](z_1)J[r_2,s_2](z_2)J[r_3,s_3](z_3) J[r_4, s_4](z_4) \right\rangle _{\br{\SL}_2(\C) } 
\end{equation}
has, as usual, $3$ channels.  We will focus on the contribution of one channel, and see that the computation of this channel on $\br{\SL}_2(\C)$ or $\til \F$ yields the same result.

In one channel, the four-point function is
\begin{equation} 
	\int_{\br{\SL}_2(\C) } \dbar^{-1} \bigl\{ \mc{F}[r_1,s_1](z_1) \mc{F}[r_2,s_2](z_2) \bigr\}\, \mc{F}[r_3,s_3](z_3) \mc{F}[r_4,s_4](z_4)\, \Omega\,. 
\end{equation}
The four-point function on $\til\F$ in the same channel is
\begin{equation} 
	\int_{\til\F} \dbar^{-1} \bigl\{ \pi^\ast \mc{F}[r_1,s_1](z_1) \pi^\ast \mc{F}[r_2,s_2](z_2) \bigr\}\, \pi^\ast\mc{F}[r_3,s_3](z_3) \pi^\ast \mc{F}[r_4,s_4](z_4) \,\pi^*\Omega\,.  
\end{equation}
The two integrals will coincide, as long as we can show that
\begin{equation} 
	\dbar^{-1} \left( \pi^\ast \mc{F}[r_1,s_1](z_1) \pi^\ast \mc{F}[r_2,s_2](z_2) \right) = \pi^\ast  \dbar^{-1} \left( \mc{F}[r_1,s_1](z_1) \mc{F}[r_2,s_2](z_2) \right) .
\end{equation}
Here, by $\dbar^{-1}$ we mean any solution of the $\dbar$ equation with a source; we have seen that any two solutions are gauge equivalent. 

This equation is easily seen to hold by applying $\dbar$ to both sides. The point is that $\dbar$ commutes with $\pi^\ast$, as long as $\pi^\ast$ is only applied to distributions whose singularities lie in the locus where the map $\pi$ is a diffeomorphism.

In this way we see that the four-point functions on $\br{\SL}_2(\C)$ and on $\til\F$ match channel by channel. It is straightforward to generalize this to show that holographic $n$-point functions at tree level coincide whether we use $\til\F$ or $\br{\SL}_2(\C)$, as long as the insertions are of operators with $z \neq 0,\infty$.  

It is also natural to conjecture:
\begin{conjecture}
	The holographic correlators for insertions at $z \neq 0,\infty$ for $\til\F$ and $\br{\SL}_2(\C)$ coincide at loop level as well as tree level. \label{conjecture:looplevel}
\end{conjecture}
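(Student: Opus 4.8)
The plan is to reduce the loop-level holographic correlators on each compactification to a sum over Feynman diagrams built from the $\dbar$-propagator (the Green's function $\dbar^{-1}$ that already appeared at tree level) and the cubic vertices of holomorphic Chern--Simons and BCOV theory, with the external source fields $\mc{F}[r_i,s_i](z_i)$ attached at the insertion points $z_i\neq0,\infty$. The goal is then to show that each diagram evaluates to the same number whether the internal vertices are integrated over $\til\F$ or over $\br{\SL}_2(\C)$. Since the external sources are supported---i.e.\ have their distributional singularities---only on the loci $\CP^1\times z_i$ with $z_i\neq0,\infty$, they live on the open set $U=\til\F-\br D_0\cup\br D_\infty$ on which $\pi$ is a biholomorphism, exactly as in the tree-level argument; the genuinely new feature at loop level is that the \emph{internal} vertices are now integrated over the whole $3$-fold and can therefore probe neighbourhoods of the exceptional loci $\br D_0$, $\br D_\infty$, $E$ where $\pi$ fails to be an isomorphism.

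The geometric engine of the proof is that $\pi:\til\F\to\br{\SL}_2(\C)$ is a proper birational morphism restricting to a biholomorphism $U\xrightarrow{\sim}\pi(U)$, whose complement $\br D_0\cup\br D_\infty$ is a proper analytic subset and hence of measure zero. Consequently, for any top-degree form $\mu$ on $\br{\SL}_2(\C)$ that is absolutely integrable one has $\int_{\til\F}\pi^\ast\mu=\int_{\br{\SL}_2(\C)}\mu$, the two spaces being identified in a measure-preserving way away from a null set. I would apply this identity diagram by diagram, so the whole statement reduces to two claims: (a) the renormalized integrand of each Feynman diagram on $\til\F$ is the pullback under $\pi$ of the corresponding integrand on $\br{\SL}_2(\C)$, and (b) these integrands are absolutely integrable on the interior $\SL_2(\C)$.

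For (a) the crucial input is compatibility of the propagators. Using the vanishing and obstruction results of the tree-level Theorem---$H^\ast_{\dbar}(\br{\SL}_2(\C),\Oo(-\partial\br{\SL}_2(\C)))=0$ and $H^\ast_{\dbar}(\til\F,\Oo(-\br D_0-\br D_\infty-E))=0$ in the open sector, and their closed-string analogues via $H^1(X,\Omega^2_{\mathrm{cl}})$---the Green's function solving the $\dbar$-equation with the prescribed boundary conditions is unique up to $\dbar$-exact (gauge) terms. One can therefore take the propagator on $\til\F$ to be $\pi^\ast$ of a fixed propagator on $\br{\SL}_2(\C)$, the two differing at most by a term supported on the exceptional locus, which drops out of every integral by the measure-zero argument; this is the loop-level analogue of the tree-level identity $\dbar^{-1}\pi^\ast=\pi^\ast\dbar^{-1}$, now invoked at each internal edge rather than only on external channels. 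Claim (b) follows from the boundary conditions of section \ref{sec:bmodel}: the open-string field and its gauge transformations vanish on $\br D_0\cup\br D_\infty\cup E$, the closed-string field $\eta$ is regular there, and these vanishings are tuned against the poles of the volume form $\Omega$ so that the vertex densities are smooth, whence a short power-counting check near each divisor gives the required integrability.

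The hard part will be renormalization. The coupled WZW$_4$+Mabuchi system is non-renormalizable by power counting and is rendered finite only by the twistorial, anomaly-free quantization of section \ref{sec:coupled}, whose counterterms are fixed by locality on twistor space. To upgrade the above sketch into a theorem one must verify that these counterterms are supported in the interior and depend only on the local geometry, so that the renormalized integrands on the two compactifications remain related by $\pi^\ast$ even after subtractions. The danger is that the UV subtraction scheme could interact with the global structure near $\br D_0,\br D_\infty$---where $\pi$ contracts a divisor to a curve---producing a finite mismatch invisible at tree level. Controlling this would require a BV/homotopy-RG analysis on each compactification together with a proof that the canonical quantization is functorial under the birational map $\pi$; the absence of such a functoriality statement is exactly why the result is presented as Conjecture \ref{conjecture:looplevel} rather than as a theorem.
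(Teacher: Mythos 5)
Your argument follows essentially the same route as the paper's own justification of this conjecture: both rest on the facts that the sourced fields agree under pullback by $\pi$, that the correlator is an integral over the common interior $\SL_2(\C)$ where $\pi$ is an isomorphism, and that the only genuine obstruction to promoting the statement to a theorem is verifying that the loop-level counterterms of the canonically defined quantization are compatible with the two compactifications. Your diagram-by-diagram elaboration via propagator uniqueness and the measure-zero exceptional locus is a reasonable fleshing-out of the paper's heuristic, and you correctly identify the same renormalization gap the authors name as the reason the result remains a conjecture.
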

One can argue for this as follows. As we have seen, the states sourced by boundary operators are the same whether we work on $\til \F$ or on $\br{\SL}_2(\C)$.  The holographic correlator is the scattering amplitude of these states on the non-compact geometry $\SL_2(\C)$.  This scattering amplitude should be independent of the boundary structure, except for the region of the boundary on which the states we are scattering localize.  Therefore, the scattering amplitude should be the same whether we use $\br{\SL}_2(\C)$ or $\til \F$. Of course, a careful proof of this conjecture would require one to find counterterms at loop level that respect such enhanced symmetries.

This conjecture has some quite remarkable consequences.  Firstly, by the Penrose transform, it is immediate that the holographic correlators on twistor space are the same as scattering amplitudes of WZW$_4$ plus Mabuchi gravity on Burns space.  After all, the Penrose transform of the states sourced by the boundary operators on twistor space are single-particle  states on Burns space, and the  topological string Lagrangian on twistor space corresponds as we have seen to the Lagrangian on Burns space. 
Now, on twistor space, if we use the compactification $\br{\SL}_2(\C)$, the holographic dual chiral algebra has \emph{no defects}.  It is simply the BRST reduction of certain symplectic bosons by $\Sp(N)$.  At infinite $N$, this algebra has no conformal blocks.\footnote{This is because all the operators $J[r,s]$, $E[r,s]$, $F[r,s]$ have positive spin. As we saw in Proposition \ref{prop:conformalblocks}, the full space of conformal blocks is the symmetric algebra of the space of possible one-point functions. For operators  of positive spin, one-point functions must vanish, as they have no poles and yet must vanish at $\infty$. }  This means correlation functions are uniquely defined.   

We see that conjecture \ref{conjecture:looplevel} now implies the following:
\begin{conjecture}
	At all orders in $1/N$, the scattering amplitudes for WZW$_4$ plus Mabuchi gravity on Burns space coincide with the (uniquely-defined) correlators for the large $N$ chiral algebra with no defects. 
\end{conjecture}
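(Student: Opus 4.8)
The plan is to derive the final statement as a corollary of the matching theorem just established in Section \ref{sec:enhance}, together with Conjecture \ref{conjecture:looplevel}, the Penrose transform, and the vanishing of conformal blocks at infinite $N$. The logical skeleton is already in place; what remains is to assemble the pieces carefully and to isolate precisely where a genuinely new input — control of the quantized theory at loop level — is required.

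First I would fix notation by working throughout with the non-compact interior geometry $\SL_2(\C)$, on which both the open (holomorphic Chern-Simons) and closed (BCOV) sectors of the type I topological string live, together with the two inequivalent compactifications $\br{\SL}_2(\C)$ and $\til\F$. By the theorem of Section \ref{sec:enhance}, the field configuration sourced by any boundary operator $J[r,s]$, $E[r,s]$, $F[r,s]$ inserted at $z\neq0,\infty$ is unique up to gauge equivalence and agrees, under pullback by $\pi:\til\F\to\br{\SL}_2(\C)$, on the two compactifications. I would then invoke the tree-level channel-by-channel argument already given: every holographic $n$-point function is an integral over $\SL_2(\C)$ of products of these sourced fields against the meromorphic volume form, with internal propagators $\dbar^{-1}$; since $\pi$ is a biholomorphism on the relevant open set $U=\til\F-\br D_0\cup\br D_\infty$ and $\dbar$ commutes with $\pi^\ast$ there, the integrals coincide. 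This establishes the statement at tree level, i.e.\ at leading order in $1/N$.

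Next I would upgrade to all orders using Conjecture \ref{conjecture:looplevel}, read as the assertion that the genus expansion of the topological string scattering amplitude of the sourced states on $\SL_2(\C)$ depends only on the states and not on the compactifying divisor at $z\neq0,\infty$. Granting this, two steps close the argument. (i) By the Penrose transform the sourced states on twistor space are precisely the single-particle scattering states of WZW$_4$ plus Mabuchi gravity on Burns space, and — because the topological string Lagrangian on $\SL_2(\C)$ reduces fiberwise to the WZW$_4$ plus Mabuchi action — the twistor-space holographic correlators equal the Burns-space scattering amplitudes order by order in $1/N$. (ii) On the $\br{\SL}_2(\C)$ side the dual chiral algebra has \emph{no defects}: it is the $\Sp(N)$ BRST reduction of the symplectic bosons $I,X$, all of whose single-trace generators $J[r,s],E[r,s],F[r,s]$ carry strictly positive spin. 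By Proposition \ref{prop:conformalblocks} the space of conformal blocks is the symmetric algebra on the one-point blocks, and positive-spin operators admit no one-point block (a function regular in the interior, free of poles on the boundary, and vanishing at $\infty$ must be zero). Hence at infinite $N$ the correlators are uniquely fixed, with no free data to choose, and this uniqueness is what identifies them with the bulk amplitudes.

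The main obstacle is Conjecture \ref{conjecture:looplevel} itself: promoting the tree-level matching to all loop orders. At tree level the matching followed from the purely classical fact that $\dbar^{-1}$ and $\pi^\ast$ commute away from $\br D_0\cup\br D_\infty$; at loop level one must instead argue that the quantized theory — \emph{including its counterterms} — can be defined on $\SL_2(\C)$ in a way insensitive to the boundary structure away from the insertion loci. Concretely, the hard part will be to exhibit a quantization whose counterterms are determined by data supported near the boundary divisors in a manner compatible with $\pi^\ast$, so that the loop integrals on $\br{\SL}_2(\C)$ and $\til\F$ continue to agree channel by channel. I expect the cleanest route is to appeal to the twistorial uplift: since the coupled theory admits an anomaly-free holomorphic completion on twistor space (the theorem of Section \ref{sec:coupled}, via the $\SO(8)$ Green--Schwarz mechanism of \cite{Costello:2019jsy}), the demand that all anomalies cancel fixes the counterterms uniquely, and one would then need to verify that this canonical quantization respects the $\SU(2)\times\SU(2)$ symmetry enhancement underlying compactification-independence. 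Making that verification rigorous — essentially constructing the loop-level counterterms and checking their symmetry — is the genuinely open step that keeps the all-orders statement at the level of a conjecture.
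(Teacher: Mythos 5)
Your proposal reproduces the paper's own derivation: the statement is obtained as a corollary of Conjecture \ref{conjecture:looplevel} combined with the Penrose transform identification of twistor-space holographic correlators with Burns-space amplitudes, and the observation that on $\br{\SL}_2(\C)$ the defect-free chiral algebra has only the trivial conformal block at infinite $N$ (via Proposition \ref{prop:conformalblocks} and the positivity of the spins of $J$, $E$, $F$), so its correlators are uniquely defined. You also correctly isolate the genuinely open input — loop-level counterterms respecting the enhanced symmetry — which is exactly why the paper leaves this as a conjecture rather than a theorem beyond tree level.
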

At tree level, this is a theorem.   This conjecture implies a remarkable simplicity in the scattering amplitudes on Burns space: after all, correlators of the large $N$ chiral algebra are essentially combinatorial in nature, since it is the BRST reduction of a free theory. 

At tree level, this result implies that all $+$ amplitudes for Yang-Mills theory on Burns space are precisely the correlators of open string states of the planar chiral algebra.

There are a few more important statements one can prove at tree level. In \cite{CG}, it was shown in the context of the type II topological B-model that one can match:
\begin{enumerate} 
	\item Single-trace bosonic modes in the planar chiral algebra which preserve the vacuum at $0$ and $\infty$.  This Lie algebra, sometimes called the wedge algebra, is the Lie algebra of modes which preserve all planar correlators.
	\item The Lie algebra $\op{Vect}_0(\SL_2(\C))$ of divergence free holomorphic vector fields on $\SL_2(\C)$.  These are infinitesimal diffeomorphisms of $\SL_2(\C)$ which preserve the complex structure and volume form. As such, they are symmetries of all holographic correlators at tree level. 
\end{enumerate}
In \cite{CG} it was shown that all planar correlation functions in both boundary and bulk theories are essentially completely determined by these symmetries.

A small variant of the arguments of \cite{CG} shows that, in the type I topological string, in both bulk and boundary systems, we have similar infinite dimensional symmetry algebras arising as symmetries of planar correlators. In the open string sector, it is the Lie algebra of holomorphic maps from $\SL_2(\C)$ to $\mf{so}_8$, and in the closed string sector it is again the Lie algebra of divergence-free holomorphic vector fields on $\SL_2(\C)$.  

This tells us that scattering amplitudes on Burns space have not just the $\SU(2) \times \SU(2)$ symmetry mentioned above, but an infinite-dimensional enhancement to $\op{Vect}_0(\SL_2(\C))$.


\section{States in the 4d theory}
\label{sec:dict}

To test our holographic duality in the planar limit, in the sections that follow, we will describe a correspondence between scattering states on Burns space and single trace, gauge invariant operators of the large $N$ chiral algebra. The former will be built as solutions of linearized field equations on Burns space, while the latter have been listed in section \ref{sec:dual} by means of BRST cohomology.


\subsection{States and propagator of WZW$_4$}
\label{sec:wzw4}

To compute scattering amplitudes, one starts with solutions of linearized free field equations to build a space of states of the bulk theory and proceeds to include interactions between the states perturbatively via propagators and interaction vertices. Such linearized solutions are commonly known as \emph{scattering wavefunctions}. To perform this analysis for the ``positive helicity gluon'' states of the WZW$_4$ model, we will solve the Laplace equation for the adjoint-valued scalar $\phi$ defined in \eqref{phidef}. Demanding plane wave asymptotics, we discover the remarkably compact solutions displayed in \eqref{burnsmom} below. The corresponding linearized spin 1 wavefunction may be found by applying $A = -\dbar\phi$. In the latter half of this section, we will also derive the Green's function of the Laplacian on Burns space, which acts as the scalar propagator for $\phi$. The result for this is equation \eqref{phiprop}.

\paragraph{Wavefunctions in WZW$_4$.} As seen from \eqref{phieom}, the linearized free field equation of the adjoint-valued scalar $\phi$ is $\omega\wedge\p\dbar\phi=0$. This is equivalent to the Laplace equation:
\be\label{kahlerlap}
\lap\phi = 4\im\,\frac{\omega\wedge\p\dbar\phi}{\omega^2} = 0\,,
\ee
having recognized the commonly used expression for the Laplacian on K\"ahler 4-manifolds. The Laplacian associated to the Burns metric \eqref{burns} has a simple explicit expression,
\be\label{burnslap}
\lap = \frac{2}{\sqrt{|g|}}\bigg(\eps^{\dal\dot\beta}+\frac{Nu^{\dal}\hat u^{\dot\beta}}{\|u\|^4}\bigg)\frac{\p^2}{\p u^{\dal}\p\hat u^{\dot\beta}}\,,
\ee
where $\sqrt{|g|}=1+N/\|u\|^2$ is the square root of the metric determinant. This inherits the $\U(2)$ symmetries of Burns space.

In what follows, our strategy will be to construct solutions $\phi$ that are expressible as formal Taylor series in $N$ with summands spanning an ``integer basis'' of polynomial solutions. Our polynomial solutions will be the spacetime analogue of the discrete basis of states of holomorphic Chern Simons theory on the deformed conifold found in \cite{CG}. We will start with constructing such a polynomial basis, then assemble infinite linear combinations of the polynomial solutions that resum into a wavefunction with plane-wave-like asymptotics. This is somewhat in the spirit of recent work on celestial holography \cite{Freidel:2022skz}, though our polynomial solutions won't correspond to soft modes by themselves. Instead, they provide a Burns space analogue of the integer basis of flat space states introduced in \cite{Cotler:2023qwh}, which were themselves analogous to quasinormal modes in de Sitter space.

Of course, in practice there are many ways to solve such a Laplace equation. For example, neglecting the prefactor of $\sqrt{|g|}$ that drops from the Laplace equation, the Burns space Laplacian \eqref{burnslap} splits cleanly into a flat space Laplacian plus an order $N$ deformation. Therefore, a more systematic approach would be to decompose $\phi$ into a flat space solution plus an $N$ dependent correction, and solve for the latter perturbatively in $N$ using the Green's function of the flat space Laplacian. This works in greater generality but can be computationally cumbersome and relatively unintuitive. Another alternative would be to apply the Penrose transform \cite{Hitchin:1980hp,Woodhouse:1985id,Ward:1990vs}, which we anticipate to be a powerful tool but leave to future work.

Let us also remark that the perturbative expansions in ``small'' $N$ that we employ below may appear unconventional at first, as we want to take the \emph{large} $N$ limit at the end of the day. However, they will generally either be finite polynomials in $N$, or will take the form of Taylor expansions in $N$ that can be explicitly resummed into expressions analytic in $N$. For instance, our brane backreaction was already seen to be polynomial in $N$, both at the level of the twistor complex structure \eqref{musol} and the Burns metric \eqref{burns}. Subsequently, the coordinate transformation $u^{\dal}\mapsto\sqrt{N}u^{\dal}$ mentioned in \eqref{utoNu} can be applied to systematically map our small $N$ perturbation theory to a $1/\sqrt N$ expansion. The fact that this works appears to be a simplifying characteristic of twisted holography -- as opposed to standard holography -- and has been key to the Koszul duality based determinations of backreaction effects in past work on AdS$_3$ \cite{Costello:2020jbh}. We will see this happen concretely when we come to the computation of scattering amplitudes.

\medskip

For now, let us start by looking for solutions that are polynomials in the complex coordinates $u^{\dal}, \hat u^{\dal}$, diagonalize the left-handed $\U(1)$ isometry generated by the Killing vector $u\cdot\p_u-\hat u\cdot\p_{\hat u}$, and transform in irreps of the unbroken right-handed $\SU(2)$ isometries. To discover such solutions, let us momentarily return to flat space. On $\R^4$, solutions of the Laplace equation with the above properties are given by the totally symmetrized polynomials
\be\label{polsolflat}
u^{(\dal_1}\cdots u^{\dal_k}\hat u^{\dot\beta_1}\cdots\hat u^{\dot\beta_l)}\,,\qquad k,l\in\N\,.
\ee
Since $u^{\dal}=x^{1\dal}$ and $\hat u^{\dal}=x^{2\dal}$ transform in the same spin $\frac12$ representation of the right-handed $\SU(2)$, this set of wavefunctions spans the spin $(k+l)/2$ irrep of $\SU(2)$. For identical values of $k+l$, solutions with differing $k,l$ are distinguished from each other by their eigenvalue $k-l$ under $u\cdot\p_u-\hat u\cdot\p_{\hat u}$. 

To verify that the expressions in \eqref{polsolflat} are annihilated by the flat space Laplacian
\be
\lap_0\equiv2\,\eps^{\dal\dot\beta}\frac{\p^2}{\p u^{\dal}\p\hat u^{\dot\beta}}\,,
\ee
it is useful to introduce an auxiliary spinor $\tilde\lambda_{\dal}\in\C^2$ and democratically contract it into all the $k+l$ indices of \eqref{polsolflat}. This yields a complex wavefunction
\be\label{flatkl}
[u\,\tilde\lambda]^k[\hat u\,\tilde\lambda]^l
\ee
which satisfies the flat space Laplace equation because $\lap_0([u\,\tilde\lambda]^k[\hat u\,\tilde\lambda]^l)\propto[\tilde\lambda\,\tilde\lambda]=0$. Unsurprisingly, it will turn out that this auxiliary spinor is the same as the spinor-helicity variable $\tilde\lambda_{\dal}$ that enters expressions for scattering amplitudes \cite{Elvang:2013cua}.

It is now a straightforward exercise to correct these order-by-order in the backreaction to obtain the analogous solutions on Burns space. We demand that the corrections transform in the same representations of $\U(2)$ as \eqref{flatkl}, so the corrections can necessarily only differ from \eqref{flatkl} by $\U(2)$ invariant factors. The most basic $\U(2)$ invariant is the Euclidean norm of the coordinates, $\|u\|^2 = \frac12\, x^2$, so the new solution should equal \eqref{flatkl} times a factor depending on $\|u\|$. 

These considerations -- along with a bit of hindsight -- motivate us to use the ansatz
\be\label{phiklF}
\phi_{k,l}[\tilde\lambda](x) = \msf{T}_a\,[u\,\tilde\lambda]^k[\hat u\,\tilde\lambda]^l F(\chi)\,,\qquad\chi \vcentcolon= \frac{\|u\|^2}{N}\,,
\ee
where $\msf{T}_a$ is a generator of the Lie algebra $\g$ occurring in WZW$_4$. Plugging this into the Laplace equation on Burns space, we successfully extract an ODE for $F(\chi)$ that is sufficient for our ansatz to work,
\be
\chi^2(\chi+1)\frac{\d^2F}{\d\chi^2} + \chi\big((k+l+2)(\chi+1)-1\big)\frac{\d F}{\d\chi} + klF = 0\,.
\ee
When $k,l\in\N$, this ODE can be solved in terms of Jacobi polynomials,
\be
F(\chi) = \begin{cases}
C\chi^{-l} P_l^{k-l,0}(1+2\chi)\,,\qquad k\geq l\,,\\
    \vspace{-1.2em}\\
C\chi^{-k} P_k^{l-k,0}(1+2\chi)\,,\qquad k<l\,,
\end{cases}
\ee
up to a constant of integration $C$. As we are working with a second order ODE, in principle we also find a second set of solutions. But they involve $\log\chi$, so are not well-behaved in the $\|u\|\to0$ limit. We suspect that they are Burns space analogues of the shadow transformed massless wavefunctions studied in \cite{Pasterski:2017kqt}. We leave their inclusion to future work. 

The Jacobi polynomials $P_n^{a,b}(t)$ are a well-known class of classical orthogonal polynomials. The ones occurring here are given by
\be\label{jackl}
\begin{split}
    P_l^{k-l,0}(1+2\chi) &= \sum_{j=0}^l\frac{(k+j)!}{(k-l+j)!}\,\frac{\chi^j}{j!(l-j)!}\\
    &= \sum_{j=0}^l\frac{(k+l-j)!}{j!(k-j)!(l-j)!}\,\chi^{l-j}\,,
\end{split}
\ee
for $k\geq l$, with the case $k<l$ obtained by exchanging $k$ and $l$. Jacobi polynomials famously enter the Wigner D-matrices of $\SU(2)$ representation theory \cite{Biedenharn:1981er}, so their appearance here is not a total surprise. Curiously, the specific polynomials $P_l^{k-l,0}$ are also related to the Zernike polynomials $R_p^q$ used in optical imaging, e.g.,
\be
R_{k+l}^{k-l}(\sqrt{-\chi}) = (-1)^l(-\chi)^{\frac{k-l}{2}}P_l^{k-l,0}(1+2\chi)\,.
\ee
It is tempting to speculate that Burns space must literally feel like a lens or a holographic medium for the positive helicity photons or gluons traversing through it!

To constrain the integration constant, we impose the boundary condition that our solution reduces to the flat space value \eqref{flatkl} as $\|u\|\to\infty$. At the level of $F(\chi)$, this is the boundary condition
\be
\lim_{\chi\to\infty}F(\chi) = 1\,.
\ee
It is easily solved for $C$ to find
\be
C = \frac{k!\,l!}{(k+l)!}\,,
\ee
valid for either $k\geq l$ or $k<l$. With this choice, the full solutions also happen to admit a uniform expression for all values of $k,l\in\N$,
\be\label{phiklburns}
\phi_{k,l}[\tilde\lambda](x) = \frac{k!\,l!}{(k+l)!}\sum_{j=0}^{\min(k,l)}\frac{(k+l-j)!}{j!(k-j)!(l-j)!}\left(\frac{N}{\|u\|^2}\right)^j[u\,\tilde\lambda]^k[\hat u\,\tilde\lambda]^l\,\msf{T}_a\,,
\ee
having reinstated $\chi=\|u\|^2/N$. As promised, this reduces to \eqref{flatkl} asymptotically as $\|u\|\to\infty$ or in the limit $N\to0$ of no backreaction.

This result gives us another example of the phenomenon where the backreaction turns out to be polynomial in $N$, at least at the level of such building blocks. We now come to more involved wavefunctions that are Taylor series in $N$ but asymptotically behave like the familiar momentum eigenstates from flat space.

\medskip

The most commonly used scattering wavefunctions in flat space are momentum eigenstates labeled by a null momentum $p_{\al\dal}$,
\be
\e^{\im p\cdot x}\,,\qquad p_{\al\dal} = \lambda_\al\tilde\lambda_{\dal}\,.
\ee
As shown here, the nullity constraint $p^2=0$ is trivialized in terms of a pair of constant spinors $\lambda_\al$, $\tilde\lambda_{\dal}$ of opposite chirality. These are known variously as \emph{momentum spinors} or \emph{spinor-helicity variables}. In Euclidean signature, any null momentum must be complex, so the spinor-helicity variables are also generically complex-valued. They are defined up to little group scaling,
\be\label{lgsc}
(\lambda_\al,\tilde\lambda_{\dal})\sim(s\lambda_\al,s^{-1}\tilde\lambda_{\dal})\,,\qquad s\in\C^\times.
\ee
When taken up to an overall energy scale, they act as homogeneous coordinates on the complexified celestial sphere. 

To compute scattering amplitudes on Burns space, we want to construct solutions that asymptote to such plane wave states.
We will do this by expanding $\e^{\im p\cdot x}$ as a double series in $[u\,\tilde\lambda]$ and $[\hat u\,\tilde\lambda]$ and turning on the backreaction term by term. Starting with
\be
\begin{split}
    p\cdot x &= p_{1\dal}x^{1\dal}+p_{2\dal}x^{2\dal}\\
    &= \lambda_1[u\,\tilde\lambda] + \lambda_2[\hat u\,\tilde\lambda]\,,
\end{split}
\ee
we can Taylor expand
\be
\e^{\im p\cdot x} = \sum_{k,l\geq0}\frac{\im^{k+l}\lambda_1^k\lambda_2^l}{k!\,l!}\,[u\,\tilde\lambda]^k[\hat u\,\tilde\lambda]^l\,.
\ee
To deform this into a wavefunction for the adjoint-valued scalar on Burns space, all we need to do is replace each factor of $[u\,\tilde\lambda]^k[\hat u\,\tilde\lambda]^l$ by the polynomial solution $\phi_{k,l}[\tilde\lambda]$ found in \eqref{phiklburns} and resum the resulting series,
\be
\phi(x) = \msf{T}_a\sum_{k,l\geq0}\frac{\im^{k+l}\lambda_1^k\lambda_2^l}{(k+l)!}\sum_{j=0}^{\min(k,l)}\frac{(k+l-j)!}{j!(k-j)!(l-j)!}\left(\frac{N}{\|u\|^2}\right)^j[u\,\tilde\lambda]^k[\hat u\,\tilde\lambda]^l\,.
\ee
Exchanging the sum over $j$ with the sums over $k,l$, then shifting $k\mapsto k+j$, $l\mapsto l+j$, we can reduce this to sums of hypergeometric type,
\be
\begin{split}
    \phi(x) &= \msf{T}_a\sum_{j\geq0}\frac{1}{j!}\biggl(-N\lambda_1\lambda_2\frac{[u\,\tilde\lambda][\hat u\,\tilde\lambda]}{\|u\|^2}\biggr)^j\sum_{k,l\geq0}\frac{(k+l+j)!}{(k+l+2j)!}\frac{(\im\lambda_1[u\,\tilde\lambda])^k(\im\lambda_2[\hat u\,\tilde\lambda])^l}{k!\,l!}\\
    &= \msf{T}_a\sum_{j\geq0}\frac{1}{j!}\biggl(-N\lambda_1\lambda_2\frac{[u\,\tilde\lambda][\hat u\,\tilde\lambda]}{\|u\|^2}\biggr)^j\sum_{n\geq0}\frac{(n+j)!}{(n+2j)!}\frac{(\im p\cdot x)^n}{n!}\,.
\end{split}
\ee
To get the second line, we've partially performed the $k,l$ sum using the binomial theorem and are left with an effective sum over the combination $n=k+l$. The sum over $n$ yields confluent hypergeometric functions,
\be\label{phi1f1}
\phi(x) = \msf{T}_a\sum_{j\geq0}\frac{1}{(2j)!}\biggl(-N\lambda_1\lambda_2\frac{[u\,\tilde\lambda][\hat u\,\tilde\lambda]}{\|u\|^2}\biggr)^j{}_1F_1(j+1,2j+1\,|\,\im p\cdot x)\,.
\ee
This is the form of the wavefunction that we will most commonly use for practical calculations. As $N\to0$, only the $j=0$ term survives which reduces to a plane wave due to the identity ${}_1F_1(1,1\,|\,\im p\cdot x) = \e^{\im p\cdot x}$.

As anticipated, our wavefunction has taken the form of a formal Taylor series in $N$. But we can keep going and completely resum this series to find a result that is analytic in $N$ as well as the spinor-helicity variables. To do this, we first recast the confluent hypergeometric functions occurring in \eqref{phi1f1} in terms of spherical Bessel functions,
\be
{}_1F_1\!\left(j+1,\,2j+1\,|\,\im p\cdot x\right) = \frac{1}{2}\,\frac{(2j)!}{j!}\,\frac{\e^{\frac{\im p\cdot x}{2}}}{(p\cdot x)^{j-1}}\left\{j_{j-1}\biggl(\frac{p\cdot x}{2}\biggr)+\im\,j_{j}\biggl(\frac{p\cdot x}{2}\biggr)\right\}\,.
\ee
This breaks \eqref{phi1f1} into a pair of sums,
\be\label{momstate}
\phi(x) = \msf{T}_a\,\e^{\frac{\im p\cdot x}{2}}\left(S_1+\im\,S_2\right)\,.
\ee
The first term in the brackets here is the sum
\be
S_1 = \frac{p\cdot x}{2}\sum_{j\geq0}\frac{1}{j!}\biggl(-\frac{N\lambda_1\lambda_2}{p\cdot x}\frac{[u\,\tilde\lambda][\hat u\,\tilde\lambda]}{\|u\|^2}\biggr)^j\,j_{j-1}\biggl(\frac{p\cdot x}{2}\biggr)\,.
\ee
Using the generating function of spherical Bessel functions,
\be\label{besgen}
\frac{1}{y}\,\cos\sqrt{y^2-2yt} = \sum_{j\geq0}\frac{t^j}{j!}\,j_{j-1}(y)\,,
\ee
we can resum $S_1$ to find a simple, closed-form expression
\be\label{S1}
S_1 = \cos\sqrt{\frac{(p\cdot x)^2}{4}+N\lambda_1\lambda_2\frac{[u\,\tilde\lambda][\hat u\,\tilde\lambda]}{\|u\|^2}}\,.
\ee
The second term is the sum
\be
S_2 = \frac{p\cdot x}{2}\sum_{j\geq0}\frac{1}{j!}\biggl(-\frac{N\lambda_1\lambda_2}{p\cdot x}\frac{[u\,\tilde\lambda][\hat u\,\tilde\lambda]}{\|u\|^2}\biggr)^j\,j_{j}\biggl(\frac{p\cdot x}{2}\biggr)\,.
\ee
Differentiating \eqref{besgen} with respect to $t$ generates the identity
\be\label{besgen1}
\frac{\sin\sqrt{y^2-2yt}}{\sqrt{y^2-2yt}} = \sum_{j\geq0}\frac{t^j}{j!}\,j_{j}(y)\,.
\ee
Using this, we can resum $S_2$ to find
\be\label{S2}
S_2 = \frac{p\cdot x}{2}\bigg(\frac{(p\cdot x)^2}{4}+N\lambda_1\lambda_2\frac{[u\,\tilde\lambda][\hat u\,\tilde\lambda]}{\|u\|^2}\bigg)^{-\frac{1}{2}}\sin\sqrt{\frac{(p\cdot x)^2}{4}+N\lambda_1\lambda_2\frac{[u\,\tilde\lambda][\hat u\,\tilde\lambda]}{\|u\|^2}}\;.
\ee
Note that neither $S_1$ nor $S_2$ have branch cuts since both $\cos(\theta)$ and $\sin(\theta)/\theta$ are even functions of $\theta$. So we can analytically continue to all values of the argument of the square root.

Putting \eqref{momstate}, \eqref{S1} and \eqref{S2} together gives us our ``quasi-momentum eigenstate'' in closed form,
\be\label{burnsmom}
\phi(x) = \msf{T}_a\,\e^{\frac{\im p\cdot x}{2}}\left\{\cos\biggl(\frac{\psi\, p\cdot x}{2}\biggr)+\frac{\im}{\psi}\,\sin\biggl(\frac{\psi\, p\cdot x}{2}\biggr)\right\}\,,
\ee
where we have introduced the dressing factor
\be\label{psi}
\psi(x) = \sqrt{1+\frac{4N\lambda_1\lambda_2}{(p\cdot x)^2}\frac{[u\,\tilde\lambda][\hat u\,\tilde\lambda]}{\|u\|^2}}
\ee
as a convenient abbreviation. As $N\to0$ or $\|u\|\to\infty$, we see that $\psi\to1$ and our state asymptotes to $\msf{T}_a\,\e^{\im p\cdot x}$. By substituting $x^{\al\dal}=(u^{\dal},\hat u^{\dal}) = (t\,\zeta^{\dal},\bar t\,\hat\zeta^{\dal})$, where $\zeta^{\dal}$ are homogeneous coordinates on $\CP^1$, it is easy to check that this solution is also finite in the $t\to0$ limit,
\be
\lim_{t\to0}\phi(x) = \msf{T}_a\cos\sqrt{\frac{N\lambda_1\lambda_2[\zeta\,\tilde\lambda][\hat\zeta\,\tilde\lambda]}{\|\zeta\|^2}}\,.
\ee
That is, it extends to $\widetilde{\C}^2$, the blow-up of $\C^2$ at the origin. The dressing \eqref{psi} and wavefunction \eqref{burnsmom} are also invariant under the little group scalings \eqref{lgsc}, so we have found a 3-complex parameter family of solutions analogous to plane waves in flat space.

These are the states that we will scatter on Burns space. They are the asymptotically flat analogues of AdS bulk-to-boundary propagators. We will also be able to establish a dictionary matching these states to gauge invariant operators in our dual chiral algebra. That we find such a simple, closed-form solution of the wave equation is a testament to the simplicity that comes with working on self-dual backgrounds; see \cite{Adamo:2020yzi,Adamo:2022mev,Bittleston:2023bzp} for other recent examples of this phenomenon. In fact, it would be desirable to find a simpler, group theoretic derivation of \eqref{burnsmom}, somewhat akin to Witten's derivation of AdS bulk-to-boundary propagators from conformal inversions of power-law wavefunctions \cite{Witten:1998qj}.

\paragraph{A puzzle regarding normalizable modes.} For completeness, let us also mention a curiosity associated to our solutions. Contrary to expectations, we have noticed that states like \eqref{burnsmom} are not the only solutions asymptotic to $\e^{\im p\cdot x}$. One can find further solutions of the Burns space Laplace equation that are not captured by our analysis. 

For example, we have discovered that \eqref{burnsmom} can actually be split into a linear combination of two ``simpler'' solutions,
\be
\phi = \phi_+ + \phi_-\,.
\ee
$\phi_\pm$ are found by replacing the sines and cosines by exponentials,
\begin{align}
    \phi_+(x) &= \msf{T}_a\,\frac{\psi+1}{2\psi}\;\e^{\frac{\im}{2}(1+\psi)p\cdot x}\,,\\
    \phi_-(x) &= \msf{T}_a\,\frac{\psi-1}{2\psi}\;\e^{\frac{\im}{2}(1-\psi)p\cdot x}\,.
\end{align}
It can be verified by direct calculation that both of these separately solve the Burns space Laplace equation.

The first mode $\phi_+$ asymptotes to $\msf{T}_a\,\e^{\im p\cdot x}$ as $\|u\|\to\infty$ or $N\to0$ and behaves like another Burns space analogue of a momentum eigenstate. But interestingly, the second solution $\phi_-$ decays to $0$ as $\|u\|\to\infty$ or $N\to0$. So we can take any solution $\phi$ and add to it any amount of $\phi_-$ without affecting its asymptotic boundary conditions. 

We will be unable to determine any operators dual to the $\phi_\pm$ modes in this work. The main reason for this is that unlike \eqref{burnsmom}, these states acquire a branch cut in the argument of the square root that enters $\psi$, thereby lacking analyticity and any clear twistorial interpretation. The branch cut only disappears in the combination $\phi_++\phi_-$. It would be interesting if the $\phi_-$ states were analogous to the normalizable modes encountered in AdS, although we do not expect them to arise from twistor space. More generally, it would be very interesting to complete such early considerations into a full-fledged harmonic analysis of linearized field equations on Burns space.

\paragraph{Green's function of the Laplacian.} There exists a beautiful trick to derive solutions and Green's functions of conformally invariant equations on Burns space: simply map the equations to $\CP^2$ using the conformal diffeomorphism $u^{\dal}\mapsto\sqrt{N}u^{\dal}/\|u\|^2$ discussed in \eqref{burnstocp2}. This allows us to recycle much of the literature on solving such PDEs on $\CP^2$ and apply it to Burns space. 

On any 4-manifold, the conformally covariant completion of the Laplacian is the \emph{Yamabe operator},
\be
\msf Y_g = \lap - \frac{R}{6}\,,
\ee
also known as the \emph{conformal Laplacian}. It satisfies the conformal transformation law
\be\label{Yamabetrans}
\msf Y_{f^2g}(f^{-1}\phi) = f^{-3}\msf Y_g\phi
\ee
for all positive conformal rescalings $g\mapsto f^2g$. On a scalar-flat geometry like the Burns metric, the Yamabe operator coincides with the ordinary Laplacian. On $\CP^2$, the scalar curvature of the Fubini-Study metric is $R=24$, so the Yamabe operator becomes $\lap-4$. Hence, solutions of $\lap\phi=0$ on Burns space can be found by instead solving $(\lap-4)\phi=0$ on $\CP^2$.

Denote the Burns and Fubini-Study metrics by $g_\text{B}$ and $g_\text{FS}$ respectively. We saw in \eqref{burnstocp2} that the inversion diffeomorphism $ u^{\dal}\mapsto\sqrt{N}u^{\dal}/\|u\|^2$ from $\widetilde\C^2$ to $\CP^2$ minus a point induces the conformal equivalence
\be\label{BFS}
g_\text{B} = f^2g_\text{FS}\,,\qquad f = \sqrt{2N}(1+\|u\|^{-2})
\ee
where $f$ has been expressed in the $u^{\dal}$ coordinates obtained after the inversion. These are affine coordinates on $\CP^2$, in which the Fubini-Study metric reads
\be\label{fubini}
g_\text{FS} = \frac{\|\d u\|^2+|[u\,\d u]|^2}{(1+\|u\|^2)^2}\,.
\ee
A systematic procedure to construct Green's functions of wave equations on projective spaces is described in \cite{Warner:1982fv}. The main idea is to express the Green's function in terms of the geodesic distance. Let us review this briefly for the case of the Yamabe operator.

Let $u_1^{\dal},u_2^{\dal}$ be two points on $\CP^2$. Using the Fubini-Study metric, the geodesic distance between them is calculated to be
\be
\text{dist}(u_1,u_2) = \cos^{-1}\sqrt{L}\,,
\ee
expressed in terms of the quantity
\be
    L = \frac{|1+\bar u^1_1u_2^{\dot1}+\bar u_1^2 u_2^{\dot2}|^2}{(1+\|u_1\|^2)(1+\|u_2\|^2)} = \frac{(1+[\hat u_1u_2])(1+[\hat u_2u_1])}{(1+\|u_1\|^2)(1+\|u_2\|^2)}
\ee
where $\hat u_i^{\dal} = (-\bar u_i^2,\bar u_i^1)$ as before. Motivated by this, if we use the ansatz
\be
G_\text{FS}(u_1,u_2) \equiv \mathcal{G}(L)
\ee
for the Green's function, then away from $u_1^{\dal}=u_2^{\dal}$ the conformally coupled Laplace equation $(\lap_{u_1}-4)G_\text{FS}(u_1,u_2)=0$ reduces to an ODE,
\be
L(L-1)\frac{\d^2\mathcal{G}}{\d L^2} + (3L-1)\frac{\d\mathcal{G}}{\d L} + \mathcal{G} = 0\,.
\ee
This has the general solution
\be
\mathcal{G}(L) = \frac{C_1}{1-L}+\frac{C_2\log L}{1-L}\,,
\ee
with $C_1,C_2$ being constants of integration. 

The only singularity that a genuine Green's function can have is the short distance singularity at $L=1$ when $u_1^{\dal}=u_2^{\dal}$. So we drop the logarithmic solution which is non-singular at $L=1$ but has an unphysical singularity at $L=0$. Choosing the normalization $C_1=-1/4\pi^2$ then ensures that it solves $(\lap_{u_1}-4)G_\text{FS}(u_1,u_2)=|g_\text{FS}|^{-\frac12}\delta^4(u_1-u_2)$. This gives us the Green's function of the Yamabe operator associated to the Fubini-Study metric,
\be
G_\text{FS}(u_1,u_2) = -\frac{1}{4\pi^2}\frac{1}{1-L}\,.
\ee
This was originally found in \cite{Hawking:1979pi} and used to study transition amplitudes of conformally coupled scalars traversing through spacetime foam built from $\CP^2$ vacuum bubbles. Due to conformal covariance, this happens to be very closely related to scattering of ordinary scalars on Burns space.

To return to Burns space, one simply dresses the conformally coupled scalar with the conformal factor $f(u)=\sqrt{2N}(1+\|u\|^{-2})$ in accordance with the transformation law \eqref{Yamabetrans}. Following this, one undoes the conformal diffeomorphism. Being an inversion, the inverse of this diffeomorphism is again $u^{\dal}\mapsto\sqrt{N}u^{\dal}/\|u\|^2$. So, to obtain the Green's function of the Laplacian on Burns space, we start with the conformally rescaled scalar solution
\be
\frac{G_\text{FS}(u_1,u_2)}{f(u_1)f(u_2)}
\ee
and map $u^{\dal}\mapsto\sqrt{N}u^{\dal}/\|u\|^2$. Dressing the result with a color-trace, we land on the propagator for the adjoint-valued scalar $\phi$ on Burns space,
\be\label{phiprop}
G_{ab}(u_1,u_2) = -\frac{\tr(\sT_a\sT_b)}{8\pi^2}\left(\|u_1-u_2\|^2+\frac{N|[u_1u_2]|^2}{\|u_1\|^{2}\|u_2\|^{2}}\right)^{-1}\,,
\ee
where the factor of $|[u_1u_2]|^2=[u_1u_2][\hat u_1\hat u_2]$ has been obtained by writing $\|u_i\|^2=[\hat u_i u_i]$ and applying Schouten's identity $[\hat u_1u_1][\hat u_2u_2]-[\hat u_1u_2][\hat u_2u_1] = [u_1u_2][\hat u_1\hat u_2]$. 

In the limit $u_1^{\dal}\to u_2^{\dal}$, \eqref{phiprop} has the expected short distance singularity
\be
G_{ab}(u_1,u_2) \sim
-\frac{\tr(\sT_a\sT_b)}{4\pi^2}\,\bigl(g_{\mu\nu}(x_2)x_{12}^\mu x_{12}^\nu\bigr)^{-1} + \text{non-singular in }x_{12}^\mu\,,
\ee
where $g$ now denotes the Burns metric, $x_{12}^\mu \vcentcolon= x_1^\mu - x_2^\mu$, and we have reinstated the original double null coordinates $x_i^{\al\dal}=(u_i^{\dal},\hat u_i^{\dal})$ using \eqref{utox}. Since it solves the Laplace equation everywhere other than $u_1^{\dal}=u_2^{\dal}$, this boundary condition is then sufficient to conclude that it solves
\be
\lap_{u_1}G_{ab}(u_1,u_2) = \frac{1}{\sqrt{|g|}}\,\delta^4(x_1-x_2)\,\tr(\sT_a\sT_b)
\ee
on Burns space, where the metric determinant can be evaluated at either point as usual.


\subsection{States of Mabuchi gravity}
\label{sec:gravity}

Similarly, we can determine the scattering states of our gravitational sector. The linearized field equation of the K\"ahler scalar perturbation $\rho$ follows from varying just the quadratic terms in \eqref{rhoac},
\be\label{rholin}
\lap^2\rho - \frac{8\im P\wedge\p\dbar\rho}{\omega^2} = 0\,.
\ee
This involves the \emph{Paneitz operator}: a conformally covariant fourth order operator which reduces to the squared Laplacian on $\R^4$ \cite{paneitz:2008}; see \cite{Bailey:1990qn} for a review. On a general 4-manifold with Riemannian metric $g$, it can be expressed in terms of the Laplacian $\lap$, the Levi-Civita connection $\nabla_\mu$, and the Ricci and scalar curvatures $R_{\mu\nu}$, $R$ as
\be
\mathsf{P}_g = \lap^2 + 2\,\nabla_\mu\biggl(R^{\mu\nu}-\frac{R}{3} \,g^{\mu\nu}\biggr)\nabla_\nu\,.
\ee
Under a conformal transformation $g\mapsto f^2g$, it obeys
\be\label{Paneitztrans}
\msf{P}_{f^2g}\rho = f^{-4} \msf{P}_g\rho\,.
\ee
The linearized equation \eqref{rholin} is precisely the conformally invariant equation
\be
\msf{P}_g\rho=0
\ee
specialized to a scalar-flat K\"ahler 4-manifold with K\"ahler form $\omega$, Ricci form $P$ and Laplacian $\lap$. 

\paragraph{Wavefunctions for K\"ahler perturbations.} In flat space, \eqref{rholin} reduces to $\lap_0^2\rho=0$. This is trivially solved by the usual momentum eigenstate $\rho=\e^{\im p\cdot x}$ when $p_{\al\dal}$ is null. But, being a fourth order equation, it also admits an additional solution $\rho=x^2\e^{\im p\cdot x}$. Our gravitational modes will be described by the analogous solutions on Burns space.

The procedure for solving \eqref{rholin} is technically identical to the case of the Laplace equation dealt with in the previous section. 
The Laplacian on Burns space was displayed in \eqref{burnslap}, while the K\"ahler and Ricci forms can be found in \eqref{omegaburns} and \eqref{ricciburns} respectively. But this presentation of the linearized field equation is somewhat cumbersome to work with. In fact, it will prove much wiser to start with the Paneitz operator on $\CP^2$ and conformally transform it back to Burns space to obtain a more tractable version of \eqref{rholin}.

Since the Fubini-Study metric \eqref{fubini} is Einstein, its Ricci curvature is pure trace
\be
R^{\mu\nu} = 6 g_\text{FS}^{\mu\nu}\,.
\ee
As mentioned before, the corresponding scalar curvature is $R=24$. Therefore, the Paneitz operator on $\CP^2$ is very simple and can be expressed in terms of its Yamabe operator,
\begin{align}
    &\msf{P}_{g_\text{FS}} = \lap^2-4\lap = (\lap-4)(\lap-4+4)\nonumber\\
    &\implies\msf{P}_{g_\text{FS}} = \msf{Y}_{g_\text{FS}}\bigl(\msf{Y}_{g_\text{FS}}+4\bigr)\,.\label{PtoY}
\end{align}
Following \eqref{BFS}, denote the conformal equivalence between the Burns and Fubini-Study metrics by $g_\text{B} = f^2 g_\text{FS}$. In the coordinates $u^{\dal}$ on $\CP^2$, we had $f = \sqrt{2N}(1+\|u\|^{-2})$. Under the inverse map $u^{\dal}\mapsto\sqrt{N}u^{\dal}/\|u\|^2$ from $\CP^2$ to Burns space, this maps to
\be\label{rescaling}
f = \sqrt{2N}\left(1+\frac{\|u\|^2}{N}\right)\,.
\ee
Next, the left hand side of \eqref{PtoY} can be mapped to Burns space using the conformal transformation law for the Paneitz operator, equation \eqref{Paneitztrans}, whereas its right hand side can be independently transformed using the transformation law of the Yamabe operator, equation \eqref{Yamabetrans}. This leads to the identity
\be
f^{4}\msf{P}_{g_\text{B}}\rho = f^3\msf{Y}_{g_\text{B}}\!\left\{f^{-1}\cdot f^3\msf{Y}_{g_\text{B}}(f^{-1}\rho) + 4f^{-1}\rho\right\}\,.
\ee
Since the Yamabe operator on Burns space is just the Laplacian, this produces a simpler, factorized expression for the free field equation \eqref{rholin}:
\be\label{rholin1}
\msf{P}_{g_\text{B}}\rho = \frac{1}{f}\,
\lap\bigg[f^2\lap\bigg(\frac{\rho}{f}\bigg) + \frac{4\rho}{f}\bigg] = 0
\ee
with $f$ given by \eqref{rescaling}. Of course, \eqref{rholin1} can also be directly verified to match \eqref{rholin}.

Generating solutions to this is now straightforward. We rewrite \eqref{rholin1} as a pair of second order PDEs in terms of $\rho$ and an auxiliary scalar field $\tilde\rho$,
\begin{align}
    &f^2\lap\bigg(\frac{\rho}{f}\bigg) + \frac{4\rho}{f} = \tilde\rho\,,\label{rhoeq1}\\
    &\lap\tilde\rho = 0\,.\label{rhoeq2}
\end{align}
To find quasi-momentum eigenstates for $\rho$, it is natural to take $\tilde\rho$ to be the solution of the Laplace equation discovered in \eqref{burnsmom},
\be
\tilde\rho(x) = 4\sqrt{2N}\,\e^{\frac{\im p\cdot x}{2}}\left\{\cos\biggl(\frac{\psi\, p\cdot x}{2}\biggr)+\frac{\im}{\psi}\,\sin\biggl(\frac{\psi\, p\cdot x}{2}\biggr)\right\}\,,
\ee
having dressed it with a judicious normalization while dropping the color factor $\sT_a$. Plugging this into the second order PDE \eqref{rhoeq1} should give rise to a pair of solutions for $\rho$. At least one of these can be found trivially: if $\tilde\rho$ solves the Laplace equation \eqref{rhoeq2}, then the profile $\rho=f\tilde\rho/4$ automatically solves \eqref{rhoeq1}. This shows that
\be\label{rhophi}
\rho(x) = 2\,\e^{\frac{\im p\cdot x}{2}}\left(\|u\|^2+N\right)\left\{\cos\biggl(\frac{\psi\, p\cdot x}{2}\biggr)+\frac{\im}{\psi}\,\sin\biggl(\frac{\psi\, p\cdot x}{2}\biggr)\right\}
\ee
is one of the solutions of \eqref{rholin} that we seek. As we send $\|u\|\to\infty$ with $N$ fixed, this solution asymptotes to $\rho\sim x^2\e^{\im p\cdot x}$. It also follows from \eqref{phi1f1} that this solution admits an expansion in $N$ with ${}_1F_1$ coefficients, which may be useful for future calculations.

We now look for a second solution that asymptotes to just $\e^{\im p\cdot x}$. Let us again use an ansatz of the form \eqref{phiklF}, now involving two undetermined functions of $\chi = \|u\|^2/N$,
\begin{align}
    \rho_{k,l}[\tilde\lambda](x) &= [u\,\tilde\lambda]^k[\hat u\,\tilde\lambda]^lF(\chi)\,,\\
    \tilde\rho_{k,l}[\tilde\lambda](x) &= [u\,\tilde\lambda]^k[\hat u\,\tilde\lambda]^l\tilde F(\chi)\,.
\end{align}
Substituting these in \eqref{rhoeq1}, \eqref{rhoeq2} yields a coupled system of second order ODEs,
\begin{align}
    &\chi^2(\chi+1)\frac{\d^2F}{\d\chi^2} + \chi\big((k+l)(\chi+1)+1\big)\frac{\d F}{\d\chi} + \big(kl-(k+l)\chi\big) F = \frac{\sqrt{2N}}{4}\,\chi\tilde F\,,\label{Feq}\\
    &\chi^2(\chi+1)\frac{\d^2\tilde F}{\d\chi^2} + \chi\big((k+l+2)(\chi+1)-1\big)\frac{\d \tilde F}{\d\chi} + kl\tilde F = 0\,.\label{Ftildeeq}
\end{align}
As mentioned in the previous section, the second of these is solved by the polynomial profiles
\be\label{Ftsol}
\tilde F(\chi) = \begin{cases}
\tilde C\chi^{-l} P_l^{k-l,0}(1+2\chi)\,,\qquad k\geq l\,,\\
    \vspace{-1.2em}\\
\tilde C\chi^{-k} P_k^{l-k,0}(1+2\chi)\,,\qquad k<l\,,
\end{cases}
\ee
with $\tilde C$ a constant of integration, and $P_l^{k-l,0}$ the Jacobi polynomials displayed in \eqref{jackl}.

When \eqref{Ftsol} is seeded into the right hand side of \eqref{Feq}, the resulting ODE for $F(\chi)$ can be solved by series expansions or the method of Green's functions. To be brief, we only quote the result. The general polynomial solution is found to be\footnote{Again, since we are really dealing with a fourth order equation, there also exist a pair of logarithmic solutions which we have neglected here to ensure finiteness as $\|u\|\to0$.}
\be\label{Fsol}
F(\chi) = \begin{cases}
\displaystyle C\chi^{-l}P_{l+1}^{k-l,-2}(1+2\chi) - \frac{\sqrt{2N}}{4k}\,\tilde C\chi^{-l} P_l^{k-l,-1}(1+2\chi)\,,\qquad k\geq l\,,\\
    \vspace{-1em}\\
\displaystyle C\chi^{-k}P_{k+1}^{l-k,-2}(1+2\chi)-\frac{\sqrt{2N}}{4l}\,\tilde C\chi^{-k} P_k^{l-k,-1}(1+2\chi)\,,\qquad k<l\,,
\end{cases}
\ee
where $C$ is another independent constant of integration that comes up in solving the homogeneous part of \eqref{Feq}. 
The Jacobi polynomials constituting these solutions are
\begin{align}
     P_l^{k-l,-1}(1+2\chi) &= k\sum_{j=0}^l\frac{(k+l-j-1)!}{j!(k-j)!(l-j)!}\,\chi^{l-j}\,,\label{P-1}\\
     P_{l+1}^{k-l,-2}(1+2\chi) &= k(k+1)\sum_{j=0}^{l+1}\frac{(k+l-j)!}{j!(k+1-j)!(l+1-j)!}\,\chi^{l+1-j}\,.\label{P-2}
\end{align}
We find the expected behaviors. If we set $C=0$, then $F(\chi)\sim \mathrm{O}(1)$ as $\chi\to\infty$. On the other hand, if we set $\tilde C=0$, then $F(\chi)\sim \mathrm{O}(\chi)$ in this limit (unless $k=l=0$).\footnote{When $k=l=0$, both solutions shown in \eqref{Fsol} degenerate to the constant solution, and by some simple guesswork a second non-logarithmic solution of \eqref{rholin} is independently found to be $\rho=\|u\|^2$.} The complete solutions then asymptote to $[u\,\tilde\lambda]^k[\hat u\,\tilde\lambda]^l$ and $\|u\|^2[u\,\tilde\lambda]^k[\hat u\,\tilde\lambda]^l$ respectively.

Let us focus on the former class of solutions, as they will allow us to deform $\e^{\im p\cdot x}$ into a solution for $\rho$ on Burns space. The boundary condition $\lim_{\chi\to\infty}F(\chi) = 1$ fixes
\be
C=0\,,\qquad \tilde C = -\frac{4}{\sqrt{2N}}\frac{k!\,l!}{(k+l-1)!}\,.
\ee
To be clear, we are choosing $\tilde C=0$ for $k=l=0$ by using $(k+l-1)!=\Gamma(k+l)$. Putting this together with \eqref{P-1} and \eqref{Fsol}, we obtain a uniform solution for the wavefunction asymptotic to $[u\,\tilde\lambda]^k[\hat u\,\tilde\lambda]^l$ for all $k,l\in\N$,
\be
\rho_{k,l}[\tilde\lambda](x) = \frac{k!\,l!}{(k+l-1)!}\sum_{j=0}^{\min(k,l)}\frac{(k+l-j-1)!}{j!(k-j)!(l-j)!}\left(\frac{N}{\|u\|^2}\right)^j[u\,\tilde\lambda]^k[\hat u\,\tilde\lambda]^l\,.
\ee
By our arguments, this solves the fourth order linearized field equation \eqref{rholin1}. Yet the difference between this and the solution \eqref{phiklburns} of the Laplace equation is remarkably minimal. Nevertheless, being careful here will have dramatic effects when we come to computing amplitudes involving gravity. 

Resumming these into a solution with plane wave asymptotics is also procedurally identical to the WZW$_4$ case. As a Taylor series in $N$, such a solution is given by
\be
\begin{split}
    \rho(x) &= \sum_{k,l\geq0}\frac{\im^{k+l}\lambda_1^k\lambda_2^l}{k!\,l!}\,\rho_{k,l}[\tilde\lambda](x)\\
    &= \sum_{k,l\geq0}\frac{\im^{k+l}\lambda_1^k\lambda_2^l}{(k+l-1)!}\sum_{j=0}^{\min(k,l)}\frac{(k+l-j-1)!}{j!(k-j)!(l-j)!}\left(\frac{N}{\|u\|^2}\right)^j[u\,\tilde\lambda]^k[\hat u\,\tilde\lambda]^l\,.
\end{split}
\ee
Exchanging the $k,l$ and $j$ sums, followed by the shift $k\mapsto k+j$, $l\mapsto l+j$, yields the rearrangement
\be
    \rho(x) = \sum_{j\geq0}\frac{1}{j!}\biggl(-N\lambda_1\lambda_2\frac{[u\,\tilde\lambda][\hat u\,\tilde\lambda]}{\|u\|^2}\biggr)^j\sum_{k,l\geq0}\frac{(k+l+j-1)!}{(k+l+2j-1)!}\frac{(\im\lambda_1[u\,\tilde\lambda])^k(\im\lambda_2[\hat u\,\tilde\lambda])^l}{k!\,l!}\,.
\ee
Summing over $k,l$ now yields an expansion in confluent hypergeometric functions,
\be\label{Erhoexp}
    \rho(x) = \e^{\im p\cdot x} + \sum_{j\geq1}\frac{2}{(2j)!}\biggl(-N\lambda_1\lambda_2\frac{[u\,\tilde\lambda][\hat u\,\tilde\lambda]}{\|u\|^2}\biggr)^j{}_1F_1(j,2j\,|\,\im p\cdot x)\,.
\ee
We have written out the $j=0$ term of the sum separately to avoid any notational confusion about the meaning of ${}_1F_1(0,0\,|\,\im p\cdot x)$. 

To perform the final sum over $j$, we use the formula
\be
{}_1F_1(j,2j\,|\,\im p\cdot x) = \frac{1}{2}\,\frac{(2j)!}{j!}\,\frac{\e^{\frac{\im p\cdot x}{2}}}{(p\cdot x)^{j-1}}\;j_{j-1}\biggl(\frac{p\cdot x}{2}\biggr)
\ee
to express the summands in terms of spherical Bessel functions:
\be
\rho(x) = \e^{\im p\cdot x} + \e^{\frac{\im p\cdot x}{2}}p\cdot x\sum_{j\geq1}\frac{1}{j!}\biggl(-\frac{N\lambda_1\lambda_2}{p\cdot x}\frac{[u\,\tilde\lambda][\hat u\,\tilde\lambda]}{\|u\|^2}\biggr)^j\;j_{j-1}\bigg(\frac{p\cdot x}{2}\bigg)\,.
\ee
Using the generating functional \eqref{besgen} of spherical Bessel functions, this resums to
\be
\rho(x) = \e^{\im p\cdot x} + 2\,\e^{\frac{\im p\cdot x}{2}}\biggl(\cos\sqrt{\frac{(p\cdot x)^2}{4}+N\lambda_1\lambda_2\frac{[u\,\tilde\lambda][\hat u\,\tilde\lambda]}{\|u\|^2}} - \cos\frac{p\cdot x}{2}\biggr)\,.
\ee
Writing $\cos(p\cdot x/2) = \frac{1}{2}(\e^{\im p\cdot x/2}+\e^{-\im p\cdot x/2})$ simplifies this to its final form
\be\label{Erhosol0}
\rho(x) = 2\,\e^{\frac{\im p\cdot x}{2}}\cos\biggl(\frac{\psi\, p\cdot x}{2}\biggr)-1\,,
\ee
where $\psi$ is the same dressing factor encountered in \eqref{psi}. As $N\to0$ or $\|u\|\to\infty$, the dressing $\psi\to1$ and this wavefunction limits to $\e^{\im p\cdot x}$. At the same time, setting $u^{\dal}=t\,\zeta^{\dal}$ and taking $t\to0$ yields a finite limit, confirming that this mode extends to the blow-up $\widetilde\C^2$.

Finally, recall that $\rho$ is only defined up to constant shifts. We choose to subtract an extra factor of $1$ from it and work with the solution
\be\label{Erhosol}
\rho(x) = 2\,\e^{\frac{\im p\cdot x}{2}}\cos\biggl(\frac{\psi\, p\cdot x}{2}\biggr)-2\,,
\ee
This asymptotes to $\e^{\im p\cdot x}-1$ which has no constant term in its ``soft'' expansion in $p\cdot x$. This will later be interpreted as the lack of an $E[0,0]$ current in the holographic dual. In total, the two solutions \eqref{Erhosol} and \eqref{rhophi} together capture linear perturbations of the scalar-flat K\"ahler geometry of Burns space.



\section{The holographic dictionary on spacetime}
\label{sec:ops}

Precision tests of any holographic duality require a \emph{holographic dictionary}: a map from states of the bulk theory to operators of the boundary dual. Once such a map is set up, holography tells us that scattering amplitudes of bulk states equal correlators of the boundary operators. In the previous sections, we have described both the gauge invariant local operators of our chiral algebra in the large $N$ limit, as well as wavefunctions of gluon and K\"ahler perturbations on Burns space. In this section, we will link them with a holographic dictionary.

\subsection{Open string modes} 

On one hand, string states of the B-model on twistor space descend to states of the WZW$_4$ model and Mabuchi gravity on spacetime. On the other hand, holography relates them to chiral algebra operators. Combining these facts, we can associate spacetime states with their dual operators.

Classically, the group-valued field $\sg=\e^\phi$ of the WZW$_4$ model encodes the gauge field $A=-\dbar\sg\,\sg^{-1}$ on spacetime. In linear theory, scalar wavefunctions for $\phi$ give rise to positive helicity gluon wavefunctions on Burns space, $A=-\dbar\phi + \mathrm{O}(\phi^2)$. In the flat space limit $N\to0$, they should reduce to the positive helicity gluon momentum eigenstates
\be
A = -\frac{\im\tilde\lambda_{\dal}\d\hat u^{\dal}}{\lambda_1}\,\e^{\im p\cdot x}\,\sT_a
\ee
where $p_{\al\dal}=\lambda_\al\tilde\lambda_{\dal}$ is a flat space null momentum, and $\sT_a$ is a color factor as before. Such a state scales with weight $-2$ under little group scalings \eqref{lgsc}, as desired from positive helicity gluons. The self-dual part of its linearized field strength is canonically normalized,
\be
\tilde F_{\dal\dot\beta} = \p_{\al(\dal}A^\al{}_{\dot\beta)}
 = \tilde\lambda_{\dal}\tilde\lambda_{\dot\beta}\,\e^{\im p\cdot x}\,\sT_a\,.
\ee
The flat space solution of $\lap_0\phi=0$ that gives rise to this spin 1 wavefunction via $A=-\dbar\phi$ has the profile
\be\label{flatphi}
\phi = \frac{\sT_a\,\e^{\im p\cdot x}}{\lambda_1\lambda_2}\,.
\ee
Therefore, on Burns space, we find it natural to normalize the quasi-momentum eigenstates such that they reduce to \eqref{flatphi} as $N\to0$ or $\|u\|\to\infty$:
\be\label{phidict}
\phi(x) = \phi_a(x|z,\tilde\lambda) \equiv \frac{\sT_a\,\e^{\frac{\im p\cdot x}{2}}}{\lambda_1\lambda_2}\left\{\cos\biggl(\frac{\psi\, p\cdot x}{2}\biggr)+\frac{\im}{\psi}\,\sin\biggl(\frac{\psi\, p\cdot x}{2}\biggr)\right\}\,,
\ee
where the dressing factor $\psi$ was written out in \eqref{psi}. Just as for $A$, these states for $\phi$ (in flat as well as Burns space) scale with little group weight $-2$, which provides a useful check of many amplitude calculations.

For our holographic dictionary, we propose to associate the gauge invariant currents $J_{rs}[k,l](z)$ displayed in \eqref{CFT} to elements in a soft expansion of \eqref{phidict}. It is easiest to describe this by fixing the little group freedom to set
\be
\lambda_\al = (1,z)\,,\qquad\tilde\lambda_{\dal} = \omega(1,\bz)\,.
\ee
Here, $z,\bz$ are independent complex variables in Euclidean signature (they only become complex conjugates of each other when analytically continued to Lorentzian signature). $\omega\in\C$ is the energy of the state. A soft expansion of \eqref{phidict} is a Taylor expansion in $\omega,\bz$ of the form\footnote{With our choice of little group fixing, there is no Weinberg soft pole of order $\omega^{-1}$. Instead, the expansion of \eqref{phidict} around $\omega=0$ starts at order $\omega^0$.}
\be\label{phisoft}
\begin{split}
    \phi_a(x|z,\tilde\lambda) &= \sum_{p=0}^\infty\omega^p\sum_{k=0}^p\frac{\bz^{p-k}}{k!(p-k)!}\,\phi[k,p-k](x|z)\,\sT_a\\
    &= \sum_{k,l=0}^\infty\frac{\tilde\lambda_{\dot1}{}^k\tilde\lambda_{\dot2}{}^l}{k!\,l!}\,\phi[k,l](x|z)\,\sT_a\,.
\end{split}
\ee
The wavefunctions $\phi[k,l](x|z)\,\sT_a$ are solutions of the Burns space Laplace equation and will be referred to as \emph{soft gluon currents} in analogy with positive helicity soft gluons in flat space. We will not obtain a general formula for these Taylor coefficients, but content ourselves with listing out a few leading examples
\be\label{softexamples}
\begin{split}
    &\phi[0,0] = \frac{1}{z}\,,\quad\phi[1,0] = \frac{\im}{z}\,(u^{\dot1}-z\bar u^{2})\,,\quad \phi[0,1] = \frac{\im}{z}\,(u^{\dot2}+z\bar u^{1})\,,\\
    &\phi[2,0] = z\,\phi[1,0]^2 + \frac{Nu^{\dot1}\bar u^2}{\|u\|^2}\,,\quad\phi[0,2] = z\,\phi[0,1]^2 - \frac{Nu^{\dot2}\bar u^1}{\|u\|^2}\,, \\
    &\phi[1,1] = z\,\phi[1,0]\,\phi[0,1] - \frac{N}{2}\,\frac{|u^{\dot1}|^2-|u^{\dot2}|^2}{\|u\|^2}\,,\quad\text{etc.}
\end{split}
\ee
More generally, one finds $\phi[k,l]=z^{k+l-1}\phi[1,0]^k\phi[0,1]^l + \text{corrections polynomial in $N$}$. 

These give a taste of how positive helicity soft gluons get deformed when backreaction is turned on. Twisted holography again ensures that the deformations of such soft modes are polynomial in $N$. It is also curious to note that deformations only start from the sub-subleading order. The leading and subleading soft gluon wavefunctions receive no backreaction!

Trading adjoint indices $a$ for composite indices $rs$ of the antisymmetric square of the fundamental of $\SO(8)$, we observe that the currents $J_{rs}(z) = I_r(X^{\dot1})^{(k}(X^{\dot2})^{l)}I_s$ are in one-to-one correspondence with these soft gluon wavefunctions. It then seems natural to identify the holographic dictionary to be
\be
\phi[k,l](x|z)\,\sT_{rs}\quad\overset{?}{\longleftrightarrow}\quad J_{rs}[k,l](z)\,.
\ee
This happens to be a bit naive, since we are allowing the currents on the right to have poles of order up to $k+l+1$ in a general 2d vacuum, whereas the wavefunctions on the left clearly only have simple poles in $z$. The latter are displayed in \eqref{softexamples}. 

A simple way to account for this is to cancel the poles on the right by multiplying with a factor of $z^{k+l}$. This leads us to propose the dictionary\footnote{If one works in homogeneous coordinates $\lambda_\al=(1,z)$, the current $J[k,l]$ has homogeneity $-k-l-2$ in $\lambda_\al$. Then one can also motivate multiplication with $(\lambda_1\lambda_2)^{k+l}=z^{k+l}$ by matching little group weights.}
\be
\phi[k,l](x|z)\,\sT_{rs}\quad\longleftrightarrow\quad z^{k+l}\,J_{rs}[k,l](z)\,.
\ee
We can also define a (non-conformal-primary) ``hard'' operator in analogy with the soft expansion \eqref{phisoft}
\be
\begin{split}
    J_{rs}(z,\tilde\lambda) &= \sum_{k,l=0}^\infty\frac{\tilde\lambda_{\dot1}{}^k\tilde\lambda_{\dot2}{}^l}{k!\,l!}\,z^{k+l}\,J_{rs}[k,l](z)\\
    &= I_r\e^{z[X\,\tilde\lambda]}I_s(z)\,,
\end{split}
\ee
having resummed it in terms of $[X\,\tilde\lambda]=X^{\dot1}\tilde\lambda_{\dot1}+X^{\dot2}\tilde\lambda_{\dot2}$ using the binomial theorem. The soft gluon dictionary then implies that this operator must be dual to the hard gluon wavefunction in the bulk:
\be\label{opendict}
\phi_{rs}(x|z,\tilde\lambda)\quad\longleftrightarrow\quad J_{rs}(z,\tilde\lambda)\,.
\ee
This dictionary will identify scattering amplitudes of hard states on the left with correlators of the operators on the right in the large $N$ limit. 

We expect that a very similar dictionary will continue to hold also at finite $N$, although the basis of gauge invariant operators at finite $N$ may need to be reevaluated to account for trace relations.

We also remark that what we have called soft gluon currents $J_{rs}[k,l]$ are not immediately the same as the $S$-algebra currents studied by Strominger \cite{Strominger:2021mtt}. There is an important distinction to be made with regards their conformal weights: the $S$-algebra currents have mostly negative conformal weights (see also \cite{Costello:2022wso}), whereas the $J[k,l]$ have positive weights. In our CCFT dual, the currents $J[k,l] = I(X^{\dot1})^{(k}(X^{\dot2})^{l)}I$ have weights $h=\frac12(k+l+2)$. In the presence of our defects, these can be mapped to currents $z^{k+l+1}J[k,l]$ that transform as primaries of weight $h=-\frac12(k+l+2)$ under the defect conformal group $z\mapsto az$, $a\in\C^\times$. It is these currents that enter the soft expansion of the hard modes, and it is these currents that can genuinely be identified as Burns space analogues of the $S$-algebra currents by a simple relabeling:
\be
S^p_m(z) = z^{2p-1}J[p-1+m,p-1-m](z)\,,\qquad p\in\left\{1,\frac{3}{2},2,\cdots\right\}\,,\;|m|\leq p-1\,.
\ee
The presence of the defects means that the OPEs of $S^p_m$ can now contain increasingly singular terms in $z_{ij}$, as long as we compensate for the scaling dimension by multiplying with appropriate factors of $z_i,z_j$ in the numerator. In particular, the chiral algebra of our currents will be able to display interesting central terms.

\subsection{Closed string modes} 

In previous sections, we have also determined the wavefunctions for K\"ahler perturbations of Mabuchi gravity. Their dual operators can be built out of the $E[k,l]$ and $F[k,l]$ currents of equation \eqref{CFT}. 

We choose to describe this dictionary in the case where we do not set the center of mass modes $E[1,0] = \text{Tr}\,X^{\dot1}$ and $E[0,1]=\text{Tr}\,X^{\dot2}$ to zero. As discussed below equation \eqref{tracefree}, this is analogous to working with $\U(N)$ rather than $\SU(N)$ gauge group in Maldacena's AdS$_5$/CFT$_4$ duality. Doing this will allow us to match three point functions like $\la E[1,0]J[0,1]J[0,0]\ra$, etc.\ to 4d amplitudes, thereby providing the simplest test of the gravitational sector of our duality in section \ref{sec:gravope}.

On $\R^4$, positive helicity metric perturbations in gravity are conventionally given little group weight $-4$. The associated K\"ahler potential perturbations must then also have the same little group weight. For example, the K\"ahler potential perturbation around flat space,
\be
\rho = \frac{\e^{\im p\cdot x}}{\lambda_1^2\lambda_2^2}\,,
\ee
solves $\lap_0^2\rho=0$ (and in fact $\lap_0\rho=0$) and gives rise to a standard positive helicity graviton wavefunction on $\R^4$,
\be
h = 2\,\frac{\p^2\rho}{\p u^{\dal}\p\hat u^{\dot\beta}}\,\d u^{\dal}\,\d\hat u^{\dot\beta} = -\frac{2\,\e^{\im p\cdot x}}{\lambda_1\lambda_2}\,[\tilde\lambda\,\d u][\tilde\lambda\,\d\hat u]\,.
\ee
This motivates us to normalize our quasi-momentum eigenstates for $\rho$ by a factor of $1/\lambda_1^2\lambda_2^2$. The reader may feel free to normalize them differently.

Tacking on this normalization, let us collect the two solutions \eqref{Erhosol} and \eqref{rhophi} below for the reader's convenience:
\begin{align}
    \rho^E(x|z,\tilde\lambda) &= \frac{2}{\lambda_1^2\lambda_2^2}\left\{\e^{\frac{\im p\cdot x}{2}}\cos\biggl(\frac{\psi\, p\cdot x}{2}\biggr)-1\right\}\,,\\
    \rho^F(x|z,\tilde\lambda) &= \frac{2\,\e^{\frac{\im p\cdot x}{2}}}{\lambda_1^2\lambda_2^2}\left(\|u\|^2+N\right)\left\{\cos\biggl(\frac{\psi\, p\cdot x}{2}\biggr)+\frac{\im}{\psi}\,\sin\biggl(\frac{\psi\, p\cdot x}{2}\biggr)\right\}\,.
\end{align}
We associate the currents $E[k,l]$ to soft modes of $\rho^E$, and $F[k,l]$ to the soft modes of $\rho^F$. Since $E[k,l]$ can have poles of order $k+l$, while $\rho^E$ by construction only has a second order pole in $z$, we postulate that $z^{k+l-2}E[k,l]$ are dual to soft modes of $\rho^E$. Similarly, $F[k,l]$ can have poles of order $k+l+2$, so we say that $z^{k+l}F[k,l]$ are dual to soft modes of $\rho^F$.

The operators dual to the corresponding hard states are then built from summing the soft operators:
\be\label{Ehard}
\begin{split}
    E(z,\tilde\lambda) &= -
    \frac{1}{z^2}+\sum_{k,l=0}^\infty\frac{\tilde\lambda_{\dot1}{}^k\tilde\lambda_{\dot2}{}^l}{k!\,l!}\,z^{k+l-2}E[k,l](z)\\
    &= \frac{1}{z^2}\left(\text{Tr}\,\bigl(\e^{z[X\,\tilde\lambda]}\bigr)-1\right)\,,
\end{split}
\ee
along with
\be\label{Fhard}
\begin{split}
    F(z,\tilde\lambda) &= \sum_{k,l=0}^\infty\frac{\tilde\lambda_{\dot1}{}^k\tilde\lambda_{\dot2}{}^l}{k!\,l!}\,z^{k+l}F[k,l](z)\\
    &= \text{Tr}\left([X\,\p X]\,\e^{z[X\,\tilde\lambda]}\right) + \text{terms with ghosts}\,.
\end{split}
\ee
The sum over $E[k,l]$ is designed to miss the trivial current $E[0,0]=1$. In terms of these, we postulate the gravitational holographic dictionary
\begin{align}
    &\rho^E(x|z,\tilde\lambda)\quad\longleftrightarrow\quad E(z,\tilde\lambda)\,,\\
    &\rho^F(x|z,\tilde\lambda)\quad\longleftrightarrow\quad F(z,\tilde\lambda)\,,
\end{align}
i.e., correlators of the gravitational operators on the right should reproduce amplitudes of Mabuchi gravity on Burns space.

In section \ref{sec:gravope}, we will content ourselves with testing this dictionary by computing an amplitude involving two gluons and one $\rho^E$ K\"ahler perturbation. The dictionary for $\rho^F$ is also a conjecture that we find natural, but tests of our duality that involve the $\rho^F$ states are left to another occasion.



\section{Tests of the duality}
\label{sec:tests}

Past work on celestial holography has produced a treasure trove of bottom-up correspondences between perturbative QFT on 4d asymptotically flat spacetimes and correlators of putative 2d CFTs. With our top-down model of holography on Burns space, we have completed this into a full-fledged duality by providing independent definitions of the bulk and boundary theories. In the sections that follow, we show that our proposed dual chiral algebra explicitly reproduces some simple low-lying amplitudes of WZW$_4$ + Mabuchi gravity on Burns space. 

Unlike in flat space, these amplitudes are non-vanishing on Burns space. They vanish on flat space because momentum eigenstates on flat space uplift to states on twistor space that are localized on different fibers of $\CO(1)\oplus\CO(1)\to\CP^1$ (see \cite{Adamo:2017qyl} for the explicit form of such states). Because they are separated in the $z$-plane, they cannot interact with each other through the local interactions present in the holomorphic Chern-Simons or BCOV twistor actions. In contrast, the twistor space of Burns space no longer fibers over $\CP^1$, and the states will scatter non-trivially for generic kinematics.

We begin by computing examples of 2-gluon tree amplitudes in WZW$_4$.  These give rise to a central extension for the celestial chiral algebra \cite{Guevara:2021abz, Strominger:2021mtt,Costello:2022wso} of positive helicity gluons on Burns space. We also compute collinear limits of 3-gluon amplitudes at leading and subleading order in the brane backreaction, and show that they match the OPE coefficients of the dual chiral algebra.\footnote{A direct calculation of the 3-gluon tree amplitude of WZW$_4$ on Burns space to leading order in the backreaction can also be found in the supplemental material to 
\cite{Costello:2022jpg}. We will not repeat it here, but its collinear limit agrees with the OPE that we report.} A similar calculation is also performed for the 3-point scattering of two gluons and one gravitational $E$-mode to leading order.


\subsection{2-point amplitudes in WZW$_4$}
\label{sec:2gluon}

Scattering amplitudes are the primary physical observables of perturbative gauge theory and gravity in asymptotically flat spacetimes. It is well-known that in Minkowski space, tree level amplitudes of massless particles are analytic functions of spinor-helicity variables. So they can be analytically continued to split signature, where the spinor-helicity variables $\lambda_\al,\tilde\lambda_{\dal}$ can be real and independent, or to Euclidean signature where they are necessarily complex due to the absence of real null momenta. As we are working on a Euclidean signature, self-dual spacetime, we will define scattering amplitudes on Burns space via similar analytic continuation. 

A preliminary understanding of amplitudes on Euclidean signature instanton backgrounds has been developed in \cite{HawkingYM, Hawking:1979hw, Hawking:1979pi}. Alternatively, one can think of our Euclidean signature amplitudes as the analogue of holographic correlators computed by summing Witten diagrams in Euclidean AdS$_3$, which is how such observables were originally computed in earlier examples of twisted holography \cite{CG, Costello:2020jbh}.

\paragraph{Boundary computation of 2-point amplitude.} The first test that we present is a match between the 2-point amplitude of the WZW$_4$ model on Burns space and a 2-point correlator in the boundary dual. Before we dive into the computation on Burns space, let us start by describing the calculation in the dual chiral algebra. This happens to be drastically simpler than the calculation in the bulk because the boundary dual is the BRST reduction of a free theory.

Let us work with the standard basis of $\mf{so}(k)$ provided by the antisymmetric matrices
\be
(\sT_{pq})_e{}^f = \delta_{pe}\delta_q{}^f - \delta_p{}^f\delta_{qe}\,,\qquad p,q,e,f\in\{1,\dots,k\}\,.
\ee
For us, the case of interest is $k=8$. But since we are working in the planar limit, our calculations will be agnostic to the value of $k$. The $\mf{so}(k)$ algebra is given by
\be
[\sT_{pq},\sT_{rs}] \equiv f_{pq,rs}{}^{tu}\sT_{tu} = \delta_{qr}\sT_{ps} - \delta_{pr}\sT_{qs} - \delta_{qs}\sT_{pr} + \delta_{ps}\sT_{qr}\,,
\ee
while the trace in its fundamental representation may be evaluated to find
\be\label{sotrace}
\tr(\sT_{pq}\sT_{rs}) = -2\left(\delta_{pr}\delta_{qs}-\delta_{ps}\delta_{qr}\right)\,.
\ee
These fix the conventions for our color factors.

The currents $J(z)\equiv J[0,0](z)$ form a non-unitary $\mf{so}(8)$ current algebra at level $-2N$,
\be
J_{pq}(z)\,J_{rs}(z') \sim -\frac{N\,\tr(\sT_{pq}\sT_{rs})}{(z-z')^2} + \frac{f_{pq,rs}{}^{tu}J_{tu}(z')}{z-z'}\,.
\ee
This is computed using the free field OPEs \eqref{IIope}. The first term on the right determines the 2-point function $\la J[0,0]J[0,0]\ra$. This is a Burns space analogue of the soft-gluon Kac-Moody algebra \cite{He:2015zea}. Its non-unitarity is consistent with general expectations in celestial holography. Similar central extensions of the soft gluon algebra have also been observed in other closely related theories and classical backgrounds \cite{Costello:2022wso,Melton:2022fsf}.

The 2-point functions of more general currents $J[k,l]$ may be computed along the same lines. Since we are working in the vacuum conformal block in which $X,I$ have vanishing one-point functions, the only terms that survive are those in which all the fields $X,I$ have been Wick contracted away. Hence, $J[k,l]$ can only have a 2-point function with $J[l,k]$. 

Also, we only keep planar contractions, since we wish to reproduce a tree level 2-point amplitude in the bulk. Each two point function $\la J_{pq}[k,l]J_{rs}[l,k]\ra$ then receives contributions from two planar orderings, one in which $I_p$ is contracted with $I_r$, and another in which $I_p$ is contracted with $I_s$. The rest of the contractions are then fixed by planarity. An example of such a configuration of planar contractions is displayed in figure \ref{contractions}.

\begin{figure}
\centering
\begin{subfigure}[t]{0.45\textwidth}
    \centering
    \includegraphics[scale=0.3]{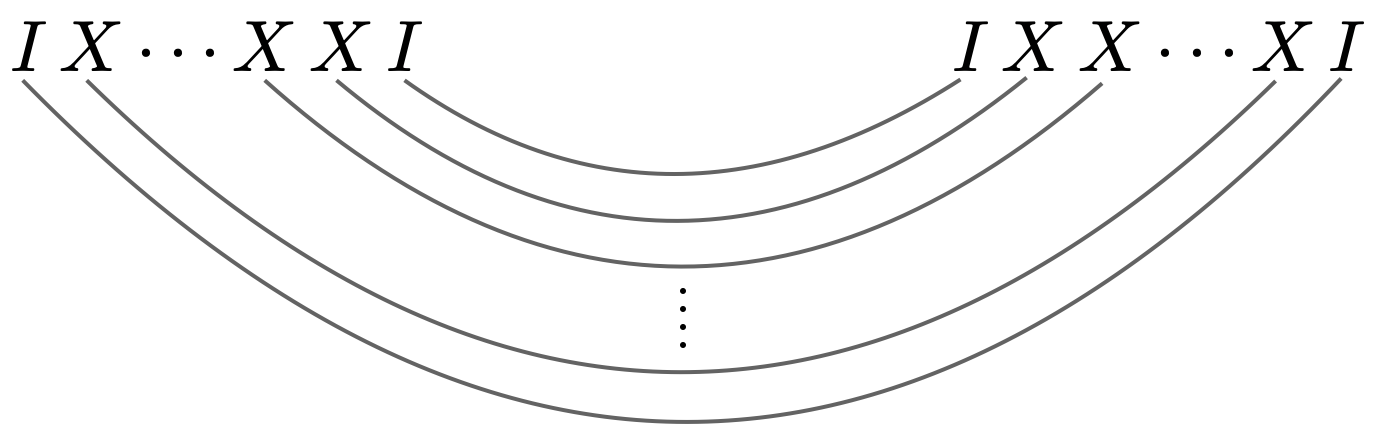}
    \caption{Wick contractions}
    \label{contractions}
\end{subfigure}
\hspace{2cm}
\begin{subfigure}[t]{0.25\textwidth}
    \centering
    \includegraphics[scale=0.3]{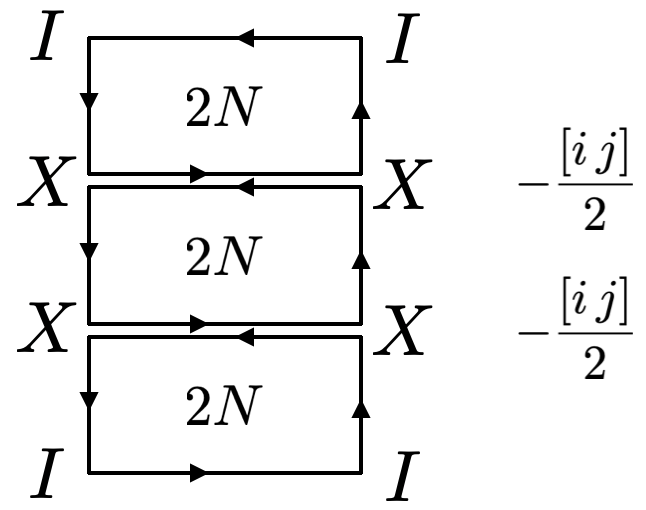}
    \caption{Planar diagrams}
    \label{planar}
\end{subfigure}
\caption{Planar Wick contractions of $\mf{so}(8)$ currents containing equal number of $X$'s.}
\end{figure}

To produce a prediction for the 2-point amplitude, we use the dictionary \eqref{opendict}. This says that the 2-point amplitude of two WZW$_4$ states $\phi_i\equiv\phi_{pq}(x|z_i,\tilde\lambda_i)$ and $\phi_j\equiv\phi_{rs}(x|z_j,\tilde\lambda_j)$ is given by the chiral algebra correlator
\be\label{2ptchiralcalc}
\begin{split}
    A(\phi_i,\phi_j) &= \bigl\la J_{pq}(z_i,\tilde\lambda_i)\,J_{rs}(z_j,\tilde\lambda_j)\bigr\ra\\
    &= \sum_{m,n\geq0}\frac{z_i^mz_j^n}{m!n!}\,\Bigl\la :\!I_p[X\,i]^mI_q\!:\!(z_i)\,:\!I_r[X\,j]^nI_s\!:\!(z_j)\Bigl\ra\\
    &= \sum_{n\geq0}\frac{(z_iz_j)^n}{(n!)^2}\,\Bigl\la :\!I_p[X\,i]^nI_q\!:\!(z_i)\,:\!I_r[X\,j]^nI_s\!:\!(z_j)\Bigl\ra
    \,.
\end{split}
\ee
In getting the third line, we have discarded terms for which $m\neq n$ as they do not contain non-zero 2-point functions.

Note that in our conventions
\be
\bigl\la [X^k{}_l\,i](z_i)\,[X^m{}_n\,j](z_j)\bigr\ra = \frac{1}{2}\,(\delta^k_n\delta^m_l-\omega^{km}\omega_{ln})\cdot\frac{-[i\,j]}{z_{ij}}
\ee
because $\eps^{\dal\dot\beta}\tilde\lambda_{i\dal}\tilde\lambda_{j\dot\beta}=-[i\,j]$. Each closed loop in a planar diagram generates a factor of $\omega^{kl}\omega_{lk}=\delta^k{}_k=2N$. The $n^\text{th}$ term in the sum \eqref{2ptchiralcalc} will receive contributions from planar diagrams with $n+1$ closed loops, like the ones in figure \ref{planar}. 
This generates a factor of $(2N)^{n+1}$. The $n$ $XX$ OPEs also generate a factor $(\frac{1}{2})^n$ through the antisymmetrizations $\frac{1}{2}(\delta^k_n\delta^m_l-\omega^{km}\omega_{ln})$, since only $\delta^k_n\delta^m_l$ contributes to planar diagrams. They also generate a factor of $(-[i\,j])^n$. So each summand in \eqref{2ptchiralcalc} comes with $(2N)^{n+1}(-[i\,j])^n/2^n = 2N(-N[i\,j])^n$.

Thus, performing the OPEs and dropping non-planar contributions generated in each summand, the 2-point correlator can be expressed in terms of a Bessel function of the first kind:
\be\label{2ptprediction}
\begin{split}
    A(\phi_i,\phi_j) &= 2N(\delta_{pr}\delta_{qs}-\delta_{ps}\delta_{qr})\sum_{n\geq0}\frac{1}{(n!)^2}\,\frac{(-Nz_iz_j[i\,j])^n}{z_{ij}^{n+2}}\\
    &= -\frac{N\,\tr(\sT_{pq}\sT_{rs})}{z_{ij}^2}\,J_0\bigg(\sqrt{4Nz_iz_j\frac{[i\,j]}{z_{ij}}}\bigg)\,.
\end{split}
\ee
Let's see how this prediction stacks up against a bulk computation of the 2-point amplitude of the WZW$_4$ model. In fact, we will see that the tree level computation in the bulk does not generate any defect singularities $z_i^{-1},z_j^{-1}$, so the match with the vacuum block will be perfect. This will act as a spacetime corroboration of the twistorial proof of symmetry enhancement described in section \ref{sec:enhance}.

\paragraph{Bulk computation of 2-point amplitude.} Let $\phi_i$ be solutions of the linearized field equations of a field theory described by an action $S[\phi]$. Such solutions represent scattering states of particle-like excitations, and $i$ is the particle label. Independently of signature, one can compute the tree level scattering amplitudes of these states from a suitably constructed on-shell action \cite{Arefeva:1974jv,Jevicki:1987ax}. This is also closely related to the perturbiner method \cite{Rosly:1996vr,Rosly:1997ap}. 

In this approach, one starts by constructing a solution of the form
\be
\phi = \sum_{i}\veps_i\phi_i + \mathrm{O}(\veps_i\veps_j)
\ee
to the fully nonlinear field equations of $S[\phi]$, where the formal variables $\veps_i$ are classical analogues of creation operators. Having found such a solution, one evaluates its on-shell action as a power series in the $\veps_i$. The amplitude for these states is then obtained by extracting the term in the on-shell action that is multilinear in the $\veps_i$:
\be
A(\{\phi_i\}) = \bigg(\prod_{j}\frac{\p}{\p\veps_j}\bigg)\,S[\phi]\,\bigg|_{\veps_i=0}\,.
\ee
In Lorentzian signature spacetimes with canonical notions of asymptotic infinities and positive vs negative frequency states, this reduces to the standard definition of the S-matrix. More generally, this is the prescription that is commonly employed in computing holographic correlators in AdS/CFT \cite{Witten:1998qj}.

The evaluation of 2-point amplitudes is particularly simple. Starting with the action \eqref{phiac} of the WZW$_4$ model, one simply evaluates its kinetic term on the linear combination
\be
\phi = \veps_i\phi_i + \veps_j\phi_j
\ee
of two solutions of the adjoint-valued Laplace equation. Following this, one extracts the amplitude as the coefficient of $\veps_i\veps_j$:
\be\label{2ptgluon}
A(\phi_i,\phi_j) = -\frac{\im}{8\pi^2}\int_M\omega\wedge\tr\left(\p\phi_i\wedge\dbar\phi_j+\p\phi_j\wedge\dbar\phi_i\right)\,.
\ee
Let us determine this for a pair of quasi-momentum eigenstates of the form \eqref{burnsmom} scattering on $M=\widetilde\C^2$ equipped with the Burns metric. 

Before embarking on the calculation, it is worth remarking that this 2-point amplitude has also been computed in a beautiful and simpler manner via the Klein-Gordon inner product on $\CP^2$ minus a point in \cite{Hawking:1979pi}. This is accomplished through analytic continuation to a Lorentzian slice equipped with the complexified Burns or Fubini-Study metric. This method trades the spacetime integral in \eqref{2ptgluon} for an integral over the conformal boundary, and hence is only applicable at 2 points. In contrast, the methods we follow below -- albeit more cumbersome -- will also generalize to the computation of higher point amplitudes.

\medskip

The K\"ahler form of the Burns metric was found in \eqref{omegaburns} and can be written as
\be
\omega = \im\left(\eps_{\dal\dot\beta}+\frac{Nu_{\dal}\hat u_{\dot\beta}}{\|u\|^4}\right)\d u^{\dal}\wedge\d\hat u^{\dot\beta}\,.
\ee
Let's begin by plugging this into the amplitude \eqref{2ptgluon}. To simplify the resulting expression, recall the definitions
\be
\p\phi_i = \frac{\p\phi_i}{\p u^{\dal}}\,\d u^{\dal}\,,\qquad \dbar\phi_i = \frac{\p\phi_i}{\p\hat u^{\dal}}\,\d\hat u^{\dal}\,.
\ee
We can use our orientation convention \eqref{orientation}, the definition of the quaternionic conjugate $\hat u^{\dal}=(-\bar u^2,\bar u^1)$, and standard spinor identities to show that
\be
\d u^{\dal}\wedge\d\hat u^{\dot\beta}\wedge\d u^{\dot\gamma}\wedge\d\hat u^{\dot\delta} = \eps^{\dal\dot\gamma}\eps^{\dot\beta\dot\delta}\,\d^4x\,.
\ee
Using these relations, the 2-gluon amplitude \eqref{2ptgluon} takes the form
\be\label{2ptgluon1}
A(\phi_i,\phi_j) = \frac{1}{8\pi^2}\int_{\R^4}\d^4x\;\bigg( \eps^{\dal\dot\beta}+\frac{Nu^{\dal}\hat u^{\dot\beta}}{\|u\|^4}\bigg)\,\tr\!\left(\frac{\p\phi_i}{\p u^{\dal}}\frac{\p\phi_j}{\p\hat u^{\dot\beta}} + (i\leftrightarrow j)\right)\,.
\ee
In writing this, we are computing the integral over $\widetilde\C^2$ as an integral over the coordinate patch $x^{\al\dal}=(u^{\dal},\hat u^{\dal})$. Strictly speaking, the integral is over $\C^2-0$ due to the singularities at the origin. But it can be regularized via standard procedures involved in computing Fourier transforms of Coulomb-type potentials. So we will continue to express it as an integral over $\C^2 = \R^4$.

The wavefunctions for the states $\phi_i,\phi_j$ were derived in section \ref{sec:wzw4}. Let us repeat them for convenience. A quasi-momentum eigenstate $\phi_i(x)$ that solves the adjoint-valued Laplace equation is given by the expansion in \eqref{phi1f1},
\be
\phi_i(x) = \frac{\msf{T}_{a_i}}{\lambda_{i1}\lambda_{i2}}\sum_{n=0}^\infty\frac{1}{(2n)!}\biggl(-N\lambda_{i1}\lambda_{i2}\frac{[u\,i][\hat u\,i]}{\|u\|^2}\biggr)^n{}_1F_1(n+1,2n+1\,|\,\im p_i\cdot x)\,,
\ee
where $[u\,i]\equiv[u\,\tilde\lambda_i]$, etc., and $\lambda_{i\al},\tilde\lambda_{i\dal}$ are spinor-helicity variables labelling the state. The null momentum $p_i$ has spinor components $p_{i\al\dal}=\lambda_{i\al}\tilde\lambda_{i\dal}$, so that its product with $x^{\al\dal}$ can be expressed as $p_i\cdot x = \lambda_{i1}[u\,i]+\lambda_{i2}[\hat u\,i]$. The state $\phi_j(x)$ is obtained by replacing the particle label $i$ with the label $j$.

Let us denote the coefficient of $\sT_{a_i}N^n$ in this wavefunction by $\phi_i^n(x)$:
\be\label{phin}
\phi_i^n(x) = \frac{1}{\lambda_{i1}\lambda_{i2}}\,\frac{1}{(2n)!}\biggl(-\lambda_{i1}\lambda_{i2}\frac{[u\,i][\hat u\,i]}{\|u\|^2}\biggr)^n{}_1F_1(n+1,2n+1\,|\,\im p_i\cdot x)\,.
\ee
The confluent hypergeometric function occurring here has a very useful expansion on flat space momentum eigenstates,
\be\label{1f1int}
{}_1F_1(n+1,2n+1\,|\,\im p_i\cdot x) = \frac{\Gamma(2n+1)}{\Gamma(n+1)\Gamma(n)}\int_0^1\d s_i\,s_i^n(1-s_i)^{n-1}\e^{\im s_i p_i\cdot x}\,.
\ee
For $n=0$, we recover $\e^{\im p_i\cdot x}$ by taking the integral to mean an $n\to0$ limit, understood via the following identity valid within integrals over $s_i\in(0,1)$,
\be\label{sdelta}
\lim_{n\to0}\frac{(1-s_i)^{n-1}}{\Gamma(n)} = \delta(1-s_i)\,.
\ee
This representation of the states will allow us to perform the spacetime integrals as standard Fourier integrals. 

Inserting the expansions
\be\label{phiijexp}
\phi_i = \sT_{a_i}\sum_{m=0}^\infty N^m\phi_i^m\,,\qquad \phi_j = \sT_{a_j}\sum_{n=0}^\infty N^n\phi_j^n
\ee
into \eqref{2ptgluon1} yields the amplitude as a formal power series in $N$,
\be\label{A2series}
A(\phi_i,\phi_j) = \tr(\sT_{a_i}\sT_{a_j})\sum_{m,n=0}^\infty N^{m+n}\left(N\,\cI_{m,n}+\cJ_{m,n}\right) + (i\leftrightarrow j)\,.
\ee
In writing this, we have split our spacetime integrals into two ``irreducible'' terms that do not contain any further dependence on $N$:
\begin{align}
    \cI_{m,n} &= \frac{1}{8\pi^2}\int\d^4x\;\frac{u^{\dal}\hat u^{\dot\beta}}{\|u\|^4}\,\frac{\p\phi_i^m}{\p u^{\dal}}\,\frac{\p\phi_j^n}{\p\hat u^{\dot\beta}}\,,\label{Imn}\\
    \cJ_{m,n} &= \frac{1}{8\pi^2}\int\d^4x\;\eps^{\dal\dot\beta}\,\frac{\p\phi_i^m}{\p u^{\dal}}\,\frac{\p\phi_j^n}{\p\hat u^{\dot\beta}}\,.\label{Jmn}
\end{align}
Both these integrals can be performed using very similar techniques. Let us see how to compute them one by one.

\paragraph{Computation of $\cI_{m,n}$.} We will compute the first of these in some detail, as it happens to be the simpler case. 

One of the reasons this is easier is because of the following identities
\be
u^{\dal}\frac{\p}{\p u^{\dal}}\frac{[u\,i][\hat u\,i]}{\|u\|^2} = \hat u^{\dal}\frac{\p}{\p\hat u^{\dal}}\frac{[u\,i][\hat u\,i]}{\|u\|^2} = 0\,.
\ee
These hold because the vector fields $u\cdot\p_u$ and $\hat u\cdot\p_{\hat u}$ are the Euler vector fields that extract homogeneities in $u^{\dal}$ and $\hat u^{\dal}$ respectively, and $[u\,i][\hat u\,i]/\|u\|^2 = [u\,i][\hat u\,i]/[\hat u\,u]$ has zero homogeneity in both of these variables. So the derivatives displayed in \eqref{Imn} will only act on the dependence in $\im p_i\cdot x$ and $\im p_j\cdot x$. These can be computed by using the formula for $\phi_i^n$ given in \eqref{phin}, combined with the integral representation \eqref{1f1int} for the ${}_1F_1$ factors,
\begin{align}
    u^{\dal}\frac{\p\phi_i^m}{\p u^{\dal}} &= \frac{1}{\lambda_{i1}\lambda_{i2}}\biggl(-\lambda_{i1}\lambda_{i2}\frac{[u\,i][\hat u\,i]}{\|u\|^2}\biggr)^m\int_0^1\frac{\d s_i\,s_i^{m}(1-s_i)^{m-1}}{m!(m-1)!}\,\e^{\im s_ip_i\cdot x}\times\im s_i\lambda_{i1}[u\,i]\,,\label{uduphi}\\
    \hat u^{\dot\beta}\frac{\p\phi_j^n}{\p\hat u^{\dot\beta}} &= \frac{1}{\lambda_{j1}\lambda_{j2}}\biggl(-\lambda_{j1}\lambda_{j2}\frac{[u\,j][\hat u\,j]}{\|u\|^2}\biggr)^n\int_0^1\frac{\d s_j\,s_j^{n}(1-s_j)^{n-1}}{n!(n-1)!}\,\e^{\im s_jp_j\cdot x}\times\im s_j\lambda_{j2}[\hat u\,j]\,.\label{uhduhphi}
\end{align}
In deriving these, we have remembered that $p_i\cdot x = \lambda_{i1}[u\,i]+\lambda_{i2}[\hat u\,i]$, etc., using which we found $u\cdot\p_u(p_i\cdot x) = \lambda_{i1}[u\,i]$ and so on.

Next note the extremely helpful intertwining relations
\be\label{intertwine}
\begin{split}
    \im [u\,i]\e^{\im s_ip_i\cdot x} &= \frac{1}{s_i}\frac{\p}{\p\lambda_{i1}}\,\e^{\im s_ip_i\cdot x}\,,\\
    \im [\hat u\,i]\e^{\im s_ip_i\cdot x} &= \frac{1}{s_i}\frac{\p}{\p\lambda_{i2}}\,\e^{\im s_ip_i\cdot x}\,.
\end{split}
\ee
We can use these to simplify our integrals by trading factors of $[u\,i]$, $[\hat u\,i]$, etc.\ for derivatives in the external data like the spinor-helicity variables. Doing this, \eqref{uduphi} and \eqref{uhduhphi} can be recast as
\begin{align}
    u^{\dal}\frac{\p\phi_i^m}{\p u^{\dal}} &= \frac{\lambda_{i1}^{m}\lambda_{i2}^{m-1}}{m!(m-1)!}\int_0^1\frac{\d s_i}{s_i^m}\,(1-s_i)^{m-1}\times\p_{\lambda_{i1}}^{m+1}\p_{\lambda_{i2}}^m\bigg(\frac{\e^{\im s_ip_i\cdot x}}{\|u\|^{2m}}\bigg)\,,\label{uduphi1}\\
    \hat u^{\dot\beta}\frac{\p\phi_j^n}{\p\hat u^{\dot\beta}} &= \frac{\lambda_{j1}^{n-1}\lambda_{j2}^{n}}{n!(n-1)!}\int_0^1\frac{\d s_j}{s_j^{n}}\,(1-s_j)^{n-1}\times\p_{\lambda_{j1}}^{n}\p_{\lambda_{j2}}^{n+1}\bigg(\frac{\e^{\im s_jp_j\cdot x}}{\|u\|^{2n}}\bigg)\,.\label{uhduhphi1}
\end{align}
With these representations for the derivatives in hand, we are finally ready to start integrating.

Substituting \eqref{uduphi1}, \eqref{uhduhphi1} into the definition \eqref{Imn} of $\cI_{m,n}$, we can segregate the $s_i,s_j$ integrals from the spacetime integrals,
\begin{multline}\label{Imn1}
    \cI_{m,n} = \frac{\lambda_{i1}^{m}\lambda_{i2}^{m-1}}{m!(m-1)!}\,\frac{\lambda_{j1}^{n-1}\lambda_{j2}^{n}}{n!(n-1)!}\int_0^1\frac{\d s_i}{s_i^m}\,(1-s_i)^{m-1}\int_0^1\frac{\d s_j}{s_j^{n}}\,(1-s_j)^{n-1}\\
    \times\p_{\lambda_{i1}}^{m+1}\p_{\lambda_{i2}}^m\p_{\lambda_{j1}}^{n}\p_{\lambda_{j2}}^{n+1}\mathcal{S}_{m+n+2}\,,
\end{multline}
where the final factor contains the most basic spacetime integral that we will repeatedly encounter:
\be\label{Smn}
\mathcal{S}_{\ell} = \int\frac{\d^4x}{8\pi^2}\,\frac{\e^{\im(s_ip_i+s_jp_j)\cdot x}}{(x^2/2)^{\ell}}\,.
\ee
In writing this, we have recalled the expression $\|u\|^2=x^2/2$ for the Euclidean norm in our conventions $u^{\dal}=\frac{1}{\sqrt 2}(x^0+\im x^3,x^2+\im x^1)$ for the complex coordinates. The spacetime integral has now manifestly reduced to a Coulomb-type Fourier transform in four dimensions. We will perform it by Fourier transforming with respect to real momenta and analytically continuing the result to generic complex momenta like $s_ip_i+s_jp_j$.

It is useful for future reference to set this up in some generality. Let $k_\mu$ be a real momentum and consider the Fourier integral
\be
\int\frac{\d^4x\;\e^{\im k\cdot x}}{(x^2/2)^\ell}\,.
\ee
Since $x^2>0$ in Euclidean signature, we can introduce the Schwinger parametrization
\be
\frac{1}{(x^2/2)^\ell} = \frac{1}{\Gamma(\ell)}\int_0^\infty\d t\,t^{\ell-1}\,\e^{-tx^2/2}\,,
\ee
where we are assuming that $\ell$ is not a negative integer. Doing this allows us to perform the $x^\mu$ integral as a Fourier transform of a Gaussian. This yields a slick identity for obtaining Coulomb-like Fourier transforms:
\be\label{fourierid}
\int\frac{\d^4x\;\e^{\im k\cdot x}}{(x^2/2)^\ell} = \frac{4\pi^2}{\Gamma(\ell)}\int_0^\infty\frac{\d t}{t^{\ell-1}}\,\e^{-t k^2/2}\,.
\ee
In arriving at this, we have inverted $t\mapsto 1/t$ after doing the $x^\mu$ integral. A useful sanity check is that when $\ell=1$, this gives the expected Fourier transform $-8\pi^2/k^2$ of a 4d Coulomb potential $2/x^2$. 

Analytically continuing to complex $k_\mu$ away from the singular locus $k^2=0$ and setting $k=s_ip_i+s_jp_j$ yields
\be\label{Sval}
\mathcal{S}_{\ell} = \frac{1}{2}\,\frac{1}{\Gamma(\ell)}\int_0^\infty\frac{\d t}{t^{\ell-1}}\,\e^{-ts_is_jp_i\cdot p_j}\,.
\ee
If $\ell$ is a positive integer greater than $1$, then this integral diverges. But this is precisely the case we're in: $\ell=m+n+2$ for \eqref{Imn1}! What will save this integral from diverging are the $\lambda_{i\al},\lambda_{j\al}$ derivatives present in \eqref{Imn1}. Plugging the value of $\mathcal{S}_{m+n+2}$ into \eqref{Imn1} and performing just the $\lambda_{i2}, \lambda_{j2}$ derivatives is enough to leave us with a convergent integral
\begin{multline}\label{Imn2}
    \cI_{m,n} = \frac{1}{2}\,\frac{1}{(m+n+1)!}\,\frac{\lambda_{i1}^{m}\lambda_{i2}^{m-1}}{m!(m-1)!}\,\frac{\lambda_{j1}^{n-1}\lambda_{j2}^{n}}{n!(n-1)!}\int_0^1\frac{\d s_i}{s_i^m}\,(1-s_i)^{m-1}\int_0^1\frac{\d s_j}{s_j^{n}}\,(1-s_j)^{n-1}\\
    \times\p_{\lambda_{i1}}^{m+1}\p_{\lambda_{j1}}^{n}\int_0^\infty\d t\,\e^{-ts_is_j\la i\,j\ra[i\,j]}\,(-s_is_j\lambda_{j1}[i\,j])^m(s_is_j\lambda_{i1}[i\,j])^{n+1}\,,
\end{multline}
having used $p_i\cdot p_j = \la i\,j\ra[i\,j]$. We are abbreviating $\la i\,j\ra\equiv\la\lambda_i\lambda_j\ra$, $[i\,j]\equiv[\tilde\lambda_i\tilde\lambda_j]$ as is common practice, and have computed the derivatives using
\be
\la i\,j\ra = \lambda_{i2}\lambda_{j1}-\lambda_{i1}\lambda_{j2}\implies \p_{\lambda_{i2}}\la i\,j\ra = \lambda_{j1}\,,\quad\p_{\lambda_{j2}}\la i\,j\ra = -\lambda_{i1}\,.
\ee
\eqref{Imn2} is now in a form amenable to straightforward integration.

We can integrate over $t$ by analytically continuing the standard Mellin integral
\be
\int_0^\infty\d t\,\e^{-tk^2/2} = \frac{2}{k^2}
\ee
to the non-vanishing complex value $k^2 = (s_ip_i+s_jp_j)^2 = 2s_is_j\la i\,j\ra[i\,j]$. After some simplifications, this produces
\begin{multline}\label{Imn3}
    \cI_{m,n} = \frac{1}{2}\,\frac{(-1)^m[i\,j]^{m+n}}{(m+n+1)!}\,\frac{\lambda_{i1}^{m}\lambda_{i2}^{m-1}}{m!(m-1)!}\,\frac{\lambda_{j1}^{n-1}\lambda_{j2}^{n}}{n!(n-1)!}\;\p_{\lambda_{i1}}^{m+1}\p_{\lambda_{j1}}^{n}\bigg(\frac{\lambda_{i1}^{n+1}\lambda_{j1}^m}{\la i\,j\ra}\bigg)\\
    \times\int_0^1\d s_i\,s_i^{n}\,(1-s_i)^{m-1}\int_0^1\d s_j\,s_j^m\,(1-s_j)^{n-1}\,.
\end{multline}
We can integrate over $s_i$ and $s_j$ using the Euler Beta integral to find
\be
\int_0^1\d s_i\,s_i^{n}\,(1-s_i)^{m-1}\int_0^1\d s_j\,s_j^m\,(1-s_j)^{n-1} = \frac{n!(m-1)!}{(m+n)!}\cdot\frac{m!(n-1)!}{(m+n)!}\,.
\ee
As a result, we are able to perform all the integrals in $\cI_{m,n}$ and conclude that
\be\label{Imn4}
\cI_{m,n} = \frac{1}{2}\,\frac{(-1)^m[i\,j]^{m+n}}{(m+n+1)!}\,\frac{\lambda_{i1}^{m}\lambda_{i2}^{m-1}\lambda_{j1}^{n-1}\lambda_{j2}^{n}}{((m+n)!)^2}\;\p_{\lambda_{i1}}^{m+1}\p_{\lambda_{j1}}^{n}\bigg(\frac{\lambda_{i1}^{n+1}\lambda_{j1}^m}{\la i\,j\ra}\bigg)\,.
\ee
In the edge cases $m=0$ or $n=0$, the same result is obtained by remembering the identity \eqref{sdelta}.

The last step is to compute the $\lambda_{i1}, \lambda_{j1}$ derivatives. Although this step looks deceptively hard, it will display a few more combinatorial simplifications that precipitate a delightfully compact final answer. In detail, we want to compute
\be
\p_{\lambda_{i1}}^{m+1}\p_{\lambda_{j1}}^{n}\bigg(\frac{\lambda_{i1}^{n+1}\lambda_{j1}^m}{\la i\,j\ra}\bigg) = \p_{\lambda_{i1}}^{m+1}\p_{\lambda_{j1}}^{n}\bigg(\frac{\lambda_{i1}^{n+1}\lambda_{j1}^m}{\lambda_{i2}\lambda_{j1}-\lambda_{i1}\lambda_{j2}}\bigg)\,.
\ee
Suppose first that $m\geq n$. Substituting $\lambda_{i1}^{n+1} = (\lambda_{i2}\lambda_{j1}-\la i\,j\ra)^{n+1}/\lambda_{j2}^{n+1}$ and expanding using the binomial theorem yields
\be
\p_{\lambda_{i1}}^{m+1}\p_{\lambda_{j1}}^{n}\bigg(\frac{\lambda_{i1}^{n+1}\lambda_{j1}^m}{\la i\,j\ra}\bigg) = \p_{\lambda_{i1}}^{m+1}\p_{\lambda_{j1}}^{n}\sum_{k=0}^{n+1}{n+1\choose k}\frac{\lambda_{i2}^{n+1-k}\lambda_{j1}^{m+n+1-k}}{\lambda_{j2}^{n+1}}\,(-1)^k\la i\,j\ra^{k-1}\,.
\ee
Since $\lambda_{i1}$ only enters the right hand side through $\la i\,j\ra$, the $\lambda_{i1}$ derivatives only act on the factors of $\la i\,j\ra$. Moreover, $\la i\,j\ra$ is linear in $\lambda_{i1}$. So in the case $m\geq n$, only the $k=0$ term of the sum survives the $m+1$ $\lambda_{i1}$ derivatives because $k-1$ is bounded above by $n$. 

This simplification allows us to compute the $\lambda_{i1}$ derivatives in closed-form:
\begin{align}
\p_{\lambda_{i1}}^{m+1}\p_{\lambda_{j1}}^{n}\bigg(\frac{\lambda_{i1}^{n+1}\lambda_{j1}^m}{\la i\,j\ra}\bigg) &= \frac{\lambda_{i2}^{n+1}}{\lambda_{j2}^{n+1}}\,\p_{\lambda_{i1}}^{m+1}\p_{\lambda_{j1}}^{n}\bigg(\frac{\lambda_{j1}^{m+n+1}}{\la i\,j\ra}\bigg)\nonumber\\
&= (m+1)!\,\frac{\lambda_{i2}^{n+1}}{\lambda_{j2}^{n-m}}\,\p_{\lambda_{j1}}^{n}\bigg(\frac{\lambda_{j1}^{m+n+1}}{\la i\,j\ra^{m+2}}\bigg)\,.
\end{align}
Distributing the remaining $\lambda_{j1}$ derivatives among the factors of $\lambda_{j1}^{m+n+1}$ and $1/\la i\,j\ra^{m+2}$ produces
\begin{align}
&\p_{\lambda_{i1}}^{m+1}\p_{\lambda_{j1}}^{n}\bigg(\frac{\lambda_{i1}^{n+1}\lambda_{j1}^m}{\la i\,j\ra}\bigg) = (m+1)!\,\frac{\lambda_{i2}^{n+1}}{\lambda_{j2}^{n-m}}\sum_{k=0}^n{n\choose k}\,\p_{\lambda_{j1}}^{n-k}\bigl(\lambda_{j1}^{m+n+1}\bigr)\,\p_{\lambda_{j1}}^{k}\bigg(\frac{1}{\la i\,j\ra^{m+2}}\bigg)\nonumber\\
&= (m+1)!\,\frac{\lambda_{i2}^{n+1}}{\lambda_{j2}^{n-m}}\sum_{k=0}^n{n\choose k}\,\frac{(m+n+1)!}{(m+k+1)!}\,\lambda_{j1}^{m+k+1}\times\frac{(m+k+1)!}{(m+1)!}\,\frac{(-\lambda_{i2})^k}{\la i\,j\ra^{m+k+2}}\nonumber\\
&= (m+n+1)!\,\frac{\lambda_{i2}^{n+1}\lambda_{j1}^{m+1}\lambda_{j2}^{m-n}}{\la i\,j\ra^{m+2}}\sum_{k=0}^n{n\choose k}\,\frac{(-\lambda_{i2}\lambda_{j1})^k}{\la i\,j\ra^{k}}\,.
\end{align}
Resumming the last line using the binomial theorem lands us on the identity
\begin{align}\label{derid}
\p_{\lambda_{i1}}^{m+1}\p_{\lambda_{j1}}^{n}\bigg(\frac{\lambda_{i1}^{n+1}\lambda_{j1}^m}{\la i\,j\ra}\bigg) = (-1)^n\,(m+n+1)!\,\frac{\lambda_{i1}^n\lambda_{i2}^{n+1}\lambda_{j1}^{m+1}\lambda_{j2}^m}{\la i\,j\ra^{m+n+2}}\,.
\end{align}
One obtains the same result for the case $0\leq m<n$ by initially eliminating the factor of $\lambda_{j1}^m$ instead of $\lambda_{i1}^{n+1}$.

With this identity, we can finish off the calculation of $\cI_{m,n}$,
\be\label{Imnmain}
\cI_{m,n} = \frac{1}{2}\,\frac{1}{((m+n)!)^2}\,\frac{(-\lambda_{i1}\lambda_{i2}\lambda_{j1}\lambda_{j2}[i\,j])^{m+n}}{\la i\,j\ra^{m+n+2}}\qquad \forall\; m,n\geq 0\,.
\ee
It is very interesting to note that this only depends on the sum $m+n$, which will come in handy when resumming the series expansion \eqref{A2series} of the 2-point amplitude. Before we turn to resummation, let us also similarly evaluate $\cJ_{m,n}$.

\paragraph{Computation of $\cJ_{m,n}$.} The second integral occurring in the 2-gluon amplitude was defined in \eqref{Jmn}. In the case $m=n=0$, upon symmetrization in $i,j$, $\cJ_{0,0}$ reduces simply to the flat space 2-point amplitude:
\begin{align}
    &\cJ_{0,0} + (i\leftrightarrow j) = \frac{1}{8\pi^2}\int\d^4x\;\eps^{\dal\dot\beta}\,\frac{\p\phi_i^0}{\p u^{\dal}}\,\frac{\p\phi_j^0}{\p\hat u^{\dot\beta}} + (i\leftrightarrow j)\nonumber\\
    &= \frac{1}{8\pi^2}\int\d^4x\;\eps^{\dal\dot\beta}\,\frac{(\im\lambda_{i1}\tilde\lambda_{i\dal})(\im\lambda_{j2}\tilde\lambda_{j\dot\beta})}{\lambda_{i1}\lambda_{i2}\lambda_{j1}\lambda_{j2}}\,\e^{\im(p_i+p_j)\cdot x} + (i\leftrightarrow j)\nonumber\\
    &= -\frac{\la i\,j\ra[i\,j]\,\delta^4(p_i+p_j)}{8\pi^2\lambda_{i1}\lambda_{i2}\lambda_{j1}\lambda_{j2}} = 0\,,
\end{align}
which vanishes for generic momenta due to $0=(p_i+p_j)^2 = 2\la i\,j\ra[i\,j]$ on the support of the delta function. For all other values of $m,n$, we expect to get a nontrivial answer arising from working on the Burns background.

Previously in \eqref{uduphi}, \eqref{uhduhphi}, we obtained expressions for $u\cdot\p_u\phi_i^m$ and $\hat u\cdot\p_{\hat u}\phi_j^n$. To evaluate $\cJ_{m,n}$, we need the more general derivatives $\p_{u^{\dal}}\phi_i^m$, $\p_{\hat u^{\dot\beta}}\phi_j^n$. To find these, one first establishes the derivatives
\be
\frac{\p}{\p u^{\dal}}\frac{[u\,i][\hat u\,i]}{\|u\|^2} = \frac{[\hat u\,i]^2u_{\dal}}{\|u\|^4}\,,\qquad\frac{\p}{\p\hat u^{\dot\beta}}\frac{[u\,j][\hat u\,j]}{\|u\|^2} = -\frac{[u\,j]^2\hat u_{\dot\beta}}{\|u\|^4}\,,
\ee
simplified using $\|u\|^2=[\hat u\,u]$ and Schouten's identities like $[\hat u\,u]\tilde\lambda_{i\dal}+[u\,i]\hat u_{\dal}=[\hat u\,i]u_{\dal}$, etc. Along with the by now familiar expressions $\p_{u^{\dal}}(p_i\cdot x) = \lambda_{i1}\tilde\lambda_{i\dal}$ and $\p_{\hat u^{\dot\beta}}(p_j\cdot x) = \lambda_{j2}\tilde\lambda_{j\dot\beta}$, these can be used to determine
\be\label{duphi}
\begin{split}
    \frac{\p\phi_i^m}{\p u^{\dal}} = -\left(-\lambda_{i1}\lambda_{i2}\frac{[u\,i][\hat u\,i]}{\|u\|^2}\right)^{m-1}\int_0^1&\frac{\d s_i\,s_i^{m}(1-s_i)^{m-1}}{m!(m-1)!}\,\e^{\im s_ip_i\cdot x}\\
    &\times\frac{[\hat u\,i]}{\|u\|^2}\,\bigg(\im s_i\lambda_{i1}[u\,i]\tilde\lambda_{i\dal} + \frac{m[\hat u\,i]u_{\dal}}{\|u\|^2}\bigg)
\end{split}
\ee
and similarly
\be\label{duhphi}
\begin{split}
    \frac{\p\phi_j^n}{\p\hat u^{\dot\beta}} = -\left(-\lambda_{j1}\lambda_{j2}\frac{[u\,j][\hat u\,j]}{\|u\|^2}\right)^{n-1}\int_0^1&\frac{\d s_j\,s_j^{n}(1-s_j)^{n-1}}{n!(n-1)!}\,\e^{\im s_jp_j\cdot x}\\
    &\times\frac{[u\,j]}{\|u\|^2}\,\bigg(\im s_j\lambda_{j2}[\hat u\,j]\tilde\lambda_{j\dot\beta} - \frac{n[u\,j]\hat u_{\dot\beta}}{\|u\|^2}\bigg)\,.
\end{split}
\ee
A useful check on these expressions is the fact that contracting them with $u^{\dal}$ and $\hat u^{\dot\beta}$ respectively leads to \eqref{uduphi} and \eqref{uhduhphi}. Another good check is that as $m\to0$ or $n\to0$, application of the delta function identity \eqref{sdelta} implies that these reduce to derivatives of the plane waves $\e^{\im p_i\cdot x}$ or $\e^{\im p_j\cdot x}$. Since we are working with spinor-helicity variables, covariance under the little group scalings \eqref{lgsc} also remains available as a consistency check at each step of the calculation.

The next step is to substitute \eqref{duphi} and \eqref{duhphi} into the definition \eqref{Jmn} of $\cJ_{m,n}$, which involves a straightforward contraction using $\eps^{\dal\dot\beta}$. Following this, one can once again use the intertwining relations \eqref{intertwine} to replace factors of $[u\,i],[\hat u\,i]$, etc.\ by derivatives in the external spinor-helicity data. When the dust settles, one obtains
\begin{multline}\label{Jmn1}
    \cJ_{m,n} = \frac{(\lambda_{i1}\lambda_{i2})^{m-1}}{m!(m-1)!}\,\frac{(\lambda_{j1}\lambda_{j2})^{n-1}}{n!(n-1)!}\int_0^1\frac{\d s_i}{s_i^{m}}\,(1-s_i)^{m-1}\int_0^1\frac{\d s_j}{s_j^{n}}\,(1-s_j)^{n-1}\\
    \hspace{-3cm}\times\Big\{s_is_j\lambda_{i1}\lambda_{j2}[i\,j]\,\p_{\lambda_{i1}}^m\p_{\lambda_{i2}}^m\p_{\lambda_{j1}}^n\p_{\lambda_{j2}}^n\cS_{m+n}\\
     - \big(n\lambda_{i1}\p_{\lambda_{i1}}+m\lambda_{j2}\p_{\lambda_{j2}}+mn\big)\,\p_{\lambda_{i1}}^{m-1}\p_{\lambda_{i2}}^{m+1}\p_{\lambda_{j1}}^{n+1}\p_{\lambda_{j2}}^{n-1}\cS_{m+n+1}\Big\}
\end{multline}
where the integral $\cS_\ell$ was defined in \eqref{Smn}. Now we are in the same ballpark as \eqref{Imn1}. Actually, the edge cases $m=0, n\geq1$ and $n=0, m\geq1$ are somewhat delicate, because they will require nontrivial cancellations between divergences of the $s_i,s_j$ integrals. So we will first work out $\cJ_{m,n}$ for $m,n\geq1$ and deal with these edge cases separately at the end.

The spacetime integral $\cS_\ell$ was converted to an integral over a single Schwinger parameter $t$ in \eqref{Sval}. If we plug this into \eqref{Jmn1} for the values $\ell=m+n$ and $\ell=m+n+1$, take all the $\lambda_{i2}$ derivatives, and take $n-1$ of the $\lambda_{j2}$ derivatives in both terms, the integrals over $t, s_i,s_j$ become convergent for $m,n\geq1$ and can be performed to give
\begin{align}\label{Jmn2}
    \cJ_{m,n} = \frac{1}{2mn}\,&\frac{(-1)^m[i\,j]^{m+n-1}}{(m+n)!}\,\frac{(\lambda_{i1}\lambda_{i2})^{m-1}(\lambda_{j1}\lambda_{j2})^{n-1}}{((m+n-1)!)^2}\nonumber\\
    &\times\bigg\{(m+n)\lambda_{i1}\lambda_{j2}\p_{\lambda_{j2}}\p_{\lambda_{i1}}^m\p_{\lambda_{j1}}^n\bigg(\frac{\lambda_{i1}^{n-1}\lambda_{j1}^m}{\la i\,j\ra}\bigg)\nonumber\\
    &\qquad+\, \big(n\lambda_{i1}\p_{\lambda_{i1}}+m\lambda_{j2}\p_{\lambda_{j2}}+mn\big)\,\p_{\lambda_{i1}}^{m-1}\p_{\lambda_{j1}}^{n+1}\bigg(\frac{\lambda_{i1}^{n-1}\lambda_{j1}^{m+1}}{\la i\,j\ra}\bigg)\bigg\}\,.
\end{align}
This has the same structure as the corresponding result \eqref{Imn4} for $\cI_{m,n}$. Thankfully, we will not need to repeat the computations of the derivatives in any great detail, as the identity \eqref{derid} suffices to simplify \eqref{Jmn2} as well.

Rearranging $\la i\,j\ra = \lambda_{i2}\lambda_{j1}-\lambda_{i1}\lambda_{j2}$ gives the relation
\be
\lambda_{j1} = \frac{\lambda_{i1}\lambda_{j2}}{\lambda_{i2}} + \frac{\la i\,j\ra}{\lambda_{i2}}
\ee
which can be used to reduce the derivatives in the last line of \eqref{Jmn2} to
\be\label{derstep}
\p_{\lambda_{i1}}^{m-1}\p_{\lambda_{j1}}^{n+1}\bigg(\frac{\lambda_{i1}^{n-1}\lambda_{j1}^{m+1}}{\la i\,j\ra}\bigg) = \frac{\lambda_{j2}}{\lambda_{i2}}\,\p_{\lambda_{i1}}^{m-1}\p_{\lambda_{j1}}^{n+1}\bigg(\frac{\lambda_{i1}^{n}\lambda_{j1}^{m}}{\la i\,j\ra}\bigg) + \frac{1}{\lambda_{i2}}\,\p_{\lambda_{i1}}^{m-1}\p_{\lambda_{j1}}^{n+1}\big(\lambda_{i1}^{n-1}\lambda_{j1}^{m}\big)\,.
\ee
For $m,n\geq1$, the second term on the right vanishes. For the $\lambda_{j1}$ derivatives to give a non-vanishing answer, we would need $m\geq n+1$. But this implies that $n-1\leq m-2<m-1$, whence the $\lambda_{i1}$ derivatives necessarily vanish. This leaves us with the first term on the right hand side of \eqref{derstep}. This is a special case of \eqref{derid} obtained by exchanging $(i,m)\leftrightarrow(j,n)$ and shifting $m\mapsto m-1$. Therefore,
\be
\p_{\lambda_{i1}}^{m-1}\p_{\lambda_{j1}}^{n+1}\bigg(\frac{\lambda_{i1}^{n-1}\lambda_{j1}^{m+1}}{\la i\,j\ra}\bigg) = (-1)^{n+1}(m+n)!\,\frac{\lambda_{i1}^{n+1}\lambda_{i2}^{n-1}\lambda_{j1}^{m-1}\lambda_{j2}^{m+1}}{\la i\,j\ra^{m+n+1}}\,.
\ee
Acting on this with $\big(n\lambda_{i1}\p_{\lambda_{i1}}+m\lambda_{j2}\p_{\lambda_{j2}}+mn\big)$ yields the third line of \eqref{Jmn2},
\begin{multline}\label{Jmnl3}
\big(n\lambda_{i1}\p_{\lambda_{i1}}+m\lambda_{j2}\p_{\lambda_{j2}}+mn\big)\,\p_{\lambda_{i1}}^{m-1}\p_{\lambda_{j1}}^{n+1}\bigg(\frac{\lambda_{i1}^{n-1}\lambda_{j1}^{m+1}}{\la i\,j\ra}\bigg)\\
= (-1)^{n+1}(m+n)!\,\frac{\lambda_{i1}^{n+1}\lambda_{i2}^{n-1}\lambda_{j1}^{m-1}\lambda_{j2}^{m+1}}{\la i\,j\ra^{m+n+2}}\\
\times\Bigl(\bigl[m(m+1)+n(n+1)+mn\bigr]\,\lambda_{i2}\lambda_{j1} + mn\lambda_{i1}\lambda_{j2}\Bigr)\,.
\end{multline}
Similarly, to evaluate the derivatives in the second line of \eqref{Jmn2}, we can use the identity
\be
\p_{\lambda_{i1}}^{m-1}\p_{\lambda_{j1}}^n\bigg(\frac{\lambda_{i1}^{n-1}\lambda_{j1}^m}{\la i\,j\ra}\bigg) = (-1)^{n}(m+n-1)!\,\frac{\lambda_{i1}^{n}\lambda_{i2}^{n-1}\lambda_{j1}^{m-1}\lambda_{j2}^{m}}{\la i\,j\ra^{m+n}}
\ee
that follows by exchanging $(i,m)\leftrightarrow(j,n)$ and shifting $(m,n)\mapsto(m-1,n-1)$ in \eqref{derid}. Acting on this with $(m+n)\lambda_{i1}\lambda_{j2}\p_{\lambda_{i1}}\p_{\lambda_{j2}}$ generates the second line of \eqref{Jmn2},
\begin{multline}\label{Jmnl2}
(m+n)\lambda_{i1}\lambda_{j2}\p_{\lambda_{j2}}\p_{\lambda_{i1}}^m\p_{\lambda_{j1}}^n\bigg(\frac{\lambda_{i1}^{n-1}\lambda_{j1}^m}{\la i\,j\ra}\bigg)
= (-1)^n(m+n)!\,\frac{\lambda_{i1}^{n}\lambda_{i2}^{n-1}\lambda_{j1}^{m-1}\lambda_{j2}^{m}}{\la i\,j\ra^{m+n+2}}\\
\times\Bigl(mn\,\bigl(\lambda_{i1}^2\lambda_{j2}^2+\lambda_{i2}^2\lambda_{j1}^2\bigr) + \bigl[m(m+1)+n(n+1)\bigr]\,\lambda_{i1}\lambda_{i2}\lambda_{j1}\lambda_{j2}\Bigr)\,.
\end{multline}
With these in hand, we have collected all the ingredients we need.

The sum of \eqref{Jmnl3} and \eqref{Jmnl2} collapses to
\be\label{Jmnsimple}
(-1)^n(m+n)!\,mn\,\frac{(\lambda_{i1}\lambda_{i2})^n(\lambda_{j1}\lambda_{j2})^m}{\la i\,j\ra^{m+n+1}}\,.
\ee
This showed some pretty dramatic cancellations! Plugging this into \eqref{Jmn2} reduces $\cJ_{m,n}$ down to
\be\label{Jmn3}
\cJ_{m,n} = -\frac{1}{2}\,\frac{1}{((m+n-1)!)^2}\,\frac{(-\lambda_{i1}\lambda_{i2}\lambda_{j1}\lambda_{j2}[i\,j])^{m+n-1}}{\la i\,j\ra^{m+n+1}}\qquad \forall\; m,n\geq 1\,.
\ee
Yet again, this happens to depend only on the sum $m+n$. Comparing this with the corresponding formula for $\cI_{m,n}$ found in \eqref{Imnmain} shows that the two are related by a shift $m+n\mapsto m+n-1$, but only up to an overall minus sign. This sign will be crucial for further cancellations to come.

Lastly, we come to the edge cases $\cJ_{m,0}$ and $\cJ_{0,n}$. At first glance, they appear to result in ill-defined expressions of the form $0\times\infty$. Indeed, $\cJ_{m,0}$ can be naively evaluated by repeating the steps used to obtain \eqref{Jmn1} but with $n=0$ chosen from the start. This yields 
\begin{multline}\label{Jm0}
    \cJ_{m,0} = \frac{(\lambda_{i1}\lambda_{i2})^{m-1}}{m!(m-1)!}\,\frac{1}{\lambda_{j1}}\int_0^\infty\frac{\d s_i}{s_i^{m}}\,(1-s_i)^{m-1}\\
    \times\Big\{s_i\lambda_{i1}[i\,j]\,\p_{\lambda_{i1}}^m\p_{\lambda_{i2}}^m\cS_{m}- m\,\p_{\lambda_{i1}}^{m-1}\p_{\lambda_{i2}}^{m+1}\p_{\lambda_{j1}}\cS_{m+1}\Big\}\,.
\end{multline}
Substituting for $\cS_m$ and $\cS_{m+1}$ from \eqref{Sval}, by now standard manipulations show that
\begin{align}
    \p_{\lambda_{i1}}^m\p_{\lambda_{i2}}^m\cS_m &= \frac{m(m+1)}{2}\,\frac{(-\lambda_{j1}\lambda_{j2})^m(s_i[i\,j])^{m-2}}{\la i\,j\ra^{m+2}}\,,\\
    \p_{\lambda_{i1}}^{m-1}\p_{\lambda_{i2}}^{m+1}\p_{\lambda_{j1}}\cS_{m+1} &= \frac{m+1}{2}\,\frac{\lambda_{i1}(-\lambda_{j1}\lambda_{j2})^m(s_i[i\,j])^{m-1}}{\la i\,j\ra^{m+2}}\,.
\end{align}
From these, it becomes immediately clear that both terms in the integrand of \eqref{Jm0} lead to divergent $s_i$ integrals, but also that they cancel against each other. So we are led to try and make sense of the result $\cJ_{m,0}=0\times\int_0^1\d\log s_i = 0\times\infty$. It turns out that if one naively sets this to $0$, 
one finds an answer that matches Hawking et al.'s calculation \cite{Hawking:1979pi} as well as the chiral algebra prediction, but only up to an overall sign error!

It so happens that our computations suggest a natural regularization procedure to overcome this hurdle.\footnote{There may exist other more systematic regularization schemes which we leave for future analyses.} We notice that the first line of \eqref{Jmn2} has an apparent pole at $mn=0$. But the residue at that pole vanishes because the coefficient of the pole simplifies to \eqref{Jmnsimple}: an expression with a zero at $mn=0$. This cancels the pole and results in \eqref{Jmn3} which is finite and nonvanishing as $n\to0$ with $m\neq0$ or vice versa. This resolves our $0\times\infty$ singularity.

More precisely, one analytically continues the order $N^n$ term \eqref{phin} to arbitrary $n$ (away from non-negative integers) by replacing factorials with gamma functions:
\be
\phi_j^n(x) = \frac{1}{\lambda_{j1}\lambda_{j2}}\biggl(-\lambda_{j1}\lambda_{j2}\frac{[u\,j][\hat u\,j]}{\|u\|^2}\biggr)^n\frac{1}{\Gamma(n+1)\Gamma(n)}\int_0^1\d s_j\,s_j^n(1-s_j)^{n-1}\e^{\im s_j p_j\cdot x}\,.
\ee
$n$ itself now takes up the role of a regulator for $\cJ_{m,0}$. Using \eqref{sdelta}, we understand $\phi_j^0 = \e^{\im p_j\cdot x}/\lambda_{j1}\lambda_{j2}$ as its $n\to0$ limit. Plugging $\phi_i^m$ and $\phi_j^n$ in $\cJ_{m,n}$ results in an expression analogous to \eqref{Jmn1}, but the $n^\text{th}$ order derivatives are no longer well-defined for non-integral $n$. Instead, one employs Mellin identities of the form
\be
[u\,j]^a = \frac{1}{\Gamma(-a)}\int_0^\infty\frac{\d\omega}{\omega^{a+1}}\,\e^{-\omega[u\,j]}\,,\qquad a\not\in\Z_{\leq0}\,,
\ee
to bring factors like $[u\,j]^a,[\hat u\,j]^b$ in the exponential. The evaluation then proceeds as before using the Fourier identity \eqref{fourierid}. 

But since the integrand is analytic in $n$ and the integrals are convergent, we do not need to repeat the calculation. To find the answer, one simply analytically continues \eqref{Jmn3} to non-integral $n$ by replacing $(m+n-1)!\mapsto\Gamma(m+n)$. Following which, one takes the limit $n\to0$. This results in a finite limit when $m\neq0$. So we conclude that, with this choice of regularization, $\cJ_{m,0}$ coincides with the limit of $\cJ_{m,n}$ as $n\to0$. Hence,
\be\label{Jmn4}
\cJ_{m,n} = -\frac{1}{2}\,\frac{1}{((m+n-1)!)^2}\,\frac{(-\lambda_{i1}\lambda_{i2}\lambda_{j1}\lambda_{j2}[i\,j])^{m+n-1}}{\la i\,j\ra^{m+n+1}}\qquad \forall\; m,n\neq0\,.
\ee
This expression also subsumes the case $\cJ_{0,0}=0$ because of $1/(-1)!=1/\Gamma(0)=0$.

\paragraph{Matching the 2-point amplitude.} For the reader's convenience, let's reprise the fruits of our calculation in the bulk. The final results for $\cI_{m,n}$ and $\cJ_{m,n}$ are
\be
\begin{split}
    \cI_{m,n} &= \frac{1}{2}\,\frac{1}{((m+n)!)^2}\,\frac{(-\lambda_{i1}\lambda_{i2}\lambda_{j1}\lambda_{j2}[i\,j])^{m+n}}{\la i\,j\ra^{m+n+2}}\,,\\
    \cJ_{m,n} &= -\frac{1}{2}\,\frac{1}{((m+n-1)!)^2}\,\frac{(-\lambda_{i1}\lambda_{i2}\lambda_{j1}\lambda_{j2}[i\,j])^{m+n-1}}{\la i\,j\ra^{m+n+1}}\,,
\end{split}
\ee
Both of these have turned out to be trivially symmetric in $m$ and $n$.

With these expressions, we can resum the series \eqref{A2series} for the 2-gluon amplitude. Since $\cJ_{m,n}$ is only non-zero for $m,n\geq1$, we first rewrite this series expansion as
\be
    A(\phi_i,\phi_j) = \,\tr(\sT_{a_i}\sT_{a_j})\,\bigg\{2\sum_{m,n=0}^\infty N^{m+n+1}\cI_{m,n} + 2\sum_{m,n=0}^\infty N^{m+n}\cJ_{m,n}\bigg\}\,.
\ee
The factors of $2$ arise from symmetrizing over $i$ and $j$. The sums can be performed analytically. The first sum yields
\begin{align}\label{sumImn}
    &2\sum_{m,n=0}^\infty N^{m+n+1}\cI_{m,n} = \sum_{m,n=0}^\infty\frac{N}{((m+n)!)^2}\,\frac{(-N\lambda_{i1}\lambda_{i2}\lambda_{j1}\lambda_{j2}[i\,j])^{m+n}}{\la i\,j\ra^{m+n+2}}\nonumber\\
    &= \frac{N}{\la i\,j\ra^2}\,J_0\bigg(\sqrt{4N\lambda_{i1}\lambda_{i2}\lambda_{j1}\lambda_{j2}\frac{[i\,j]}{\la i\,j\ra}}\bigg) \nonumber\\
    &\hspace{3cm}- \frac{N}{\la i\,j\ra^{5/2}}\,\sqrt{N\lambda_{i1}\lambda_{i2}\lambda_{j1}\lambda_{j2}[i\,j]}\,J_1\bigg(\sqrt{4N\lambda_{i1}\lambda_{i2}\lambda_{j1}\lambda_{j2}\frac{[i\,j]}{\la i\,j\ra}}\bigg)\,.
\end{align}
where $J_0$, $J_1$ are Bessel functions of the first kind. Similarly, the second sum produces
\begin{align}\label{sumJmn}
    &2\sum_{m,n=0}^\infty N^{m+n}\cJ_{m,n} = -\sum_{m,n=0}^\infty\frac{N}{((m+n-1)!)^2}\,\frac{(-N\lambda_{i1}\lambda_{i2}\lambda_{j1}\lambda_{j2}[i\,j])^{m+n-1}}{\la i\,j\ra^{m+n+1}}\nonumber\\
    &= -\frac{2N}{\la i\,j\ra^2}\,J_0\bigg(\sqrt{4N\lambda_{i1}\lambda_{i2}\lambda_{j1}\lambda_{j2}\frac{[i\,j]}{\la i\,j\ra}}\bigg) \nonumber\\
    &\hspace{3cm}+ \frac{N}{\la i\,j\ra^{5/2}}\,\sqrt{N\lambda_{i1}\lambda_{i2}\lambda_{j1}\lambda_{j2}[i\,j]}\,J_1\bigg(\sqrt{4N\lambda_{i1}\lambda_{i2}\lambda_{j1}\lambda_{j2}\frac{[i\,j]}{\la i\,j\ra}}\bigg)\,. 
\end{align}
The importance of finding opposite signs in $\cI_{m,n}$ and $\cJ_{m,n}$ now becomes manifest.

Adding \eqref{sumImn} to \eqref{sumJmn} engenders a final cancellation between the $J_1$ functions. One thus finds the total 2-point amplitude
\be
    A(\phi_i,\phi_j) = -\frac{N}{\la i\,j\ra^2}\,J_0\bigg(\sqrt{4N\lambda_{i1}\lambda_{i2}\lambda_{j1}\lambda_{j2}\frac{[i\,j]}{\la i\,j\ra}}\bigg)\,\tr(\sT_{a_i}\sT_{a_j})\,,
\ee
This amplitude has little group weight $-2$ in each of the two particles, which is consistent with $A(\phi_i,\phi_j)$ being an amplitude of two positive helicity gluons. It agrees with the result of Hawking et al.\ \cite{Hawking:1979pi} if one divides their scalar amplitude by a factor of $1/\lambda_{i1}\lambda_{i2}\lambda_{j1}\lambda_{j2}$ (that came from the gluon polarizations) and dresses it with a color trace.

Finally, fix the little group freedom by setting $\lambda_{i\al}=(1,z_i)$, $\lambda_{j\al}=(1,z_j)$. This yields $\la i\,j\ra=z_{ij}$. If one also identifies $\sT_{a_i}\leftrightarrow\sT_{pq}$, $\sT_{a_j}\leftrightarrow\sT_{rs}$, then this result for the amplitude coincides perfectly with the prediction \eqref{2ptprediction} from celestial CFT. This is one of the main results of this paper. It is notable how drastically simpler the chiral algebra calculation was, perhaps proving the merit of the search for more celestial CFTs!


\subsection{Celestial OPE in WZW$_4$}
\label{sec:gluonope}

Combinatorially, higher-point amplitudes in curved backgrounds like Burns space are controlled by the same Feynman diagrammatics encountered in flat space (for example, Witten diagrams arise as Feynman diagrams for AdS). But calculating them in closed-form is much harder than the calculation of 2-point amplitudes. For this reason, in holographic setups, one often resorts to computing only their collinear limits. These are the limits in which boundary insertions approach each other. So they can be matched onto operator product expansions in the dual theory, providing another independent verification of the holographic duality.

\paragraph{Celestial OPE from the chiral algebra.} Let's start by obtaining the chiral algebra prediction for the $\phi\phi$ collinear limit of WZW$_4$. It will again turn out to be a dramatically simpler calculation on the 2d side, and will take up a good chunk of time to describe on the 4d side.

We will compute the OPE of $J_{pq}(z_i,\tilde\lambda_i)$ and $J_{rs}(z_j,\tilde\lambda_j)$ to zeroth and first order in $N$. It is the latter that corresponds to a genuine backreaction effect. The contribution at zeroth order in $N$ comes from single contractions. In the planar limit, it follows from the ADHM constraint \eqref{adhm} that we are free to reorder the factors of $[X\,i]$ and $[X\,j]$ -- viewed as $N\times N$ matrices -- inside the normal ordering. So one finds, up to regular terms,
\be
\begin{split}
    J_{pq}(z_i,\tilde\lambda_i)\,J_{rs}(z_j,\tilde\lambda_j) &\sim \frac{f_{pq,rs}{}^{tu}}{z_{ij}}\,:\!I_t\e^{z_j([X\,i]+[X\,j])}I_u\!:\!(z_j)
\end{split}
\ee
This follows because only contractions between the $I$'s survive in the planar limit at order $1/z_{ij}$, and a typical contribution looks like the contraction in figure \ref{opeN0}.
\begin{figure}
\centering
\begin{subfigure}[t]{0.4\textwidth}
    \centering
    \includegraphics[scale=0.3]{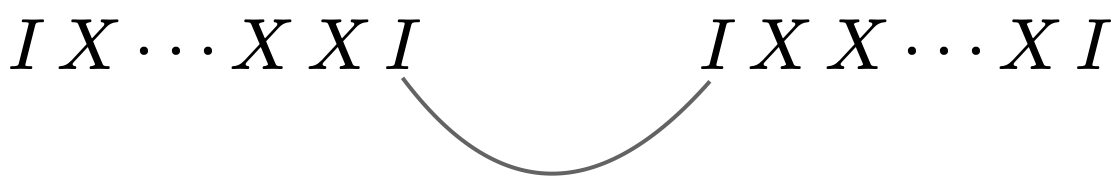}
    \caption{Single contraction}
    \label{opeN0}
\end{subfigure}
\hspace{1.5cm}
\begin{subfigure}[t]{0.4\textwidth}
    \centering
    \includegraphics[scale=0.3]{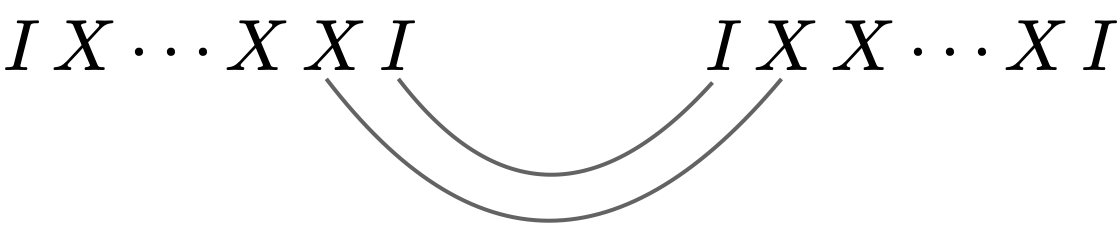}
    \caption{Adjacent double contractions}
    \label{opeN1}
\end{subfigure}
\caption{Planar contractions contributing to the OPE of two gluon currents at zeroth and first order in the backreaction strength $N$ respectively.}
\end{figure}
Abbreviating the composite indices $pq,rs,\dots$ by adjoint indices $a,b,\dots$, this reduces to
\be\label{jjN0}
J_{a}(z_i,\tilde\lambda_i)\,J_{b}(z_j,\tilde\lambda_j) \sim \frac{f_{ab}{}^{c}}{z_{ij}}\,J_c(z_j,\tilde\lambda_i+\tilde\lambda_j)
\ee
which is the momentum space version of the like-helicity gluon celestial OPE that one encounters in flat space \cite{Costello:2022wso}.

At the next order in $N$, one Wick contracts a pair of $I$'s, then further contracts a pair of adjacent $X$'s as shown in figure \ref{opeN1}. The result may be written as a double sum
\begin{multline}
    J_{pq}(z_i,\tilde\lambda_i)\,J_{rs}(z_j,\tilde\lambda_j) \\
    \sim -\frac{Nz_iz_j[i\,j]f_{pq,rs}{}^{tu}}{z_{ij}^2}\sum_{m,n\geq1}\frac{1}{m!n!}\,:\!I_t\bigl(z_i)(z_i[X(z_i)\,i]\bigr)^{m-1}\bigl(z_j[X(z_j)\,j]\bigr)^{n-1}I_u(z_j)\!:
\end{multline}
where we have suppressed the $1/z_{ij}$ term for clarity. The right hand side is no longer a single hard gluon state. But using the identity
\be
\int_0^1\d\omega_i\int_0^1\d\omega_j\,\e^{\omega_ix+\omega_jy} = \sum_{m,n\geq1}\frac{x^{m-1}y^{n-1}}{m!n!}\,,
\ee
it can be expressed as a smearing over hard gluon states and their descendants,
\begin{multline}\label{jjN1full}
    J_{pq}(z_i,\tilde\lambda_i)\,J_{rs}(z_j,\tilde\lambda_j) \sim -\frac{Nz_j^2[i\,j]f_{pq,rs}{}^{tu}}{z_{ij}^2}\int_0^1\d\omega_i\int_0^1\d\omega_j:\!I_t\e^{z_j\omega_i[X\,i]+z_j\omega_j[X\,j]}I_u\!:\!(z_j)\\
    - \frac{Nz_j[i\,j]f_{pq,rs}{}^{tu}}{z_{ij}}\int_0^1\d\omega_i\int_0^1\d\omega_j:\!\p_{z_j}\Bigl(z_jI_t\e^{z_j\omega_i[X\,i]}\Bigr)\e^{z_j\omega_j[X\,j]}I_u\!:\!(z_j)\,.
\end{multline}
The second line of this expression may be expressed as a combination of conformal and Kac-Moody descendants along the lines of \cite{Adamo:2022wjo,Ren:2023trv}. The integration variables $\omega_i$ are reminiscent of ``Feynman parameters''. We will soon see how they arise as genuine Feynman parameters when computing the celestial OPE through bulk Feynman diagrams.

The first term in \eqref{jjN1full} is the more interesting primary term. It can be succinctly summarized as
\be\label{jjN1}
J_{a}(z_i,\tilde\lambda_i)\,J_{b}(z_j,\tilde\lambda_j) \sim -\frac{Nz_j^2[i\,j]f_{ab}{}^{c}}{z_{ij}^2}\int_0^1\d\omega_i\int_0^1\d\omega_j\,J_c(z_j,\omega_i\tilde\lambda_i+\omega_j\tilde\lambda_j)\,.
\ee
This is what we will spend most of our time reproducing from the bulk. The second term in \eqref{jjN1full} is proportional to $[i\,j]/z_{ij}$, so it is easily disentangled from the contributions \eqref{jjN0}, \eqref{jjN1} that sit at orders $1/z_{ij}$ and $[i\,j]/z_{ij}^2$ respectively. Reproducing it from the bulk is left to future work.

\paragraph{Celestial OPE from collinear limits in the bulk.} The bulk quantity that computes the celestial OPE is what is known as (in the scattering amplitudes literature) the \emph{tree level splitting function}.  The tree level splitting function tells us the collinear singularities in any amplitude or, more generally, in any form factor. Its relevance for celestial holography was realized in \cite{Fan:2019emx,Pate:2019lpp}. It can also be thought of as a Berends-Giele current \cite{Berends:1987me}.

Since WZW$_4$ is a gauge-fixed version of self-dual Yang-Mills, we can identify (for rather formal reasons) the tree level splitting function of WZW$_4$ with the terms in the tree level splitting function of self-dual Yang-Mills theory which only involve positive helicity states.  One very concrete statement of one aspect of  our duality is then:
\begin{conjecture}
	The tree level splitting function of self-dual Yang-Mills on Burns space, for gauge group $\SO(8)$, is precisely the OPE of the large $N$ chiral algebra. 
\end{conjecture}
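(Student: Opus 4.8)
The plan is to derive the statement from the tree-level holographic theorem of section~\ref{sec:enhance} together with a microscopic identification of collinear limits with chiral OPEs. The logical chain has three links. First, because the gauge field is reconstructed from the WZW$_4$ scalar by $A=-\dbar\sg\,\sg^{-1}$ with $\sg=\e^{\phi}$, the tree-level splitting function of WZW$_4$ coincides, leg by leg, with the restriction of the self-dual Yang–Mills splitting function to external positive-helicity states; this is the ``formal'' equivalence already invoked, and I would make it precise by noting that both splitting functions are extracted from the same cubic vertex of holomorphic Chern–Simons theory on $Z_N$ after fiber integration over the twistor lines. Second, the tree-level theorem of section~\ref{sec:enhance}, sharpened by the dictionary \eqref{opendict}, identifies the WZW$_4$ tree amplitudes on Burns space with vacuum correlators $\langle J_{rs}(z_1,\tilde\lambda_1)\cdots\rangle$ of the large-$N$ chiral algebra. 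Third, in the little-group frame $\lambda_{i\al}=(1,z_i)$ the collinear limit $p_i\parallel p_j$ is exactly the coincidence limit $z_{ij}\to0$, which on the chiral-algebra side is precisely the regime governed by the OPE. Combining these, the coefficient of the collinear singularity of the amplitude must equal the singular OPE coefficient, which is the assertion.

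I would organize the explicit work on the twistor side, where the computation is cleanest. The states sourced by $J_{rs}(z_i,\tilde\lambda_i)$ localize on $\CP^1\times z_i$ in the boundary divisor, and their collinear product is controlled by the local cubic interaction of $\cA$ together with the backreaction Beltrami differential $\mu$ of \eqref{musol}, organized in powers of $N$. At order $N^0$ the geometry is undeformed $\PT$, and the cubic vertex evaluated through the coupling \eqref{Jcoup} returns $f_{ab}{}^{c}/z_{ij}$ times a single current with shifted argument $\tilde\lambda_i+\tilde\lambda_j$, reproducing \eqref{jjN0}. At order $N^1$ one inserts a single factor of $\mu$, whose Bochner–Martinelli kernel supplies the $[i\,j]\,z_j^2/z_{ij}^2$ prefactor of \eqref{jjN1}; the parameter integrals $\int_0^1\d\omega_i\,\d\omega_j$ should then emerge from integrating the collinear gluon data against the propagator on $Z_N$, realizing as genuine Feynman parameters the objects anticipated below \eqref{jjN1full}. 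Each order in $N$ should be a finite computation, since the backreaction is polynomial in $N$ at the level of these building blocks.

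A cleaner route, valid non-perturbatively in $N$, invokes Koszul duality directly: the chiral algebra is by construction (section~\ref{sec:chiralbdy}) the universal system coupling gauge-invariantly to the bulk holomorphic theory on $Z_N$ through \eqref{Jcoup}, so its OPE is determined by, and is the shadow of, the bulk cubic structure maps. From this viewpoint the splitting function and the OPE are two presentations of the same tree-level structure, and the theorem follows once one checks that the Koszul-dual product reproduces the collinear factorization kinematics. I would present both the order-by-order match and this structural argument, using the former as an explicit verification of the latter and to fix normalizations.

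The main obstacle is matching the \emph{full} singular OPE rather than only its leading primary term. The right-hand side of \eqref{jjN1full} contains, beyond the primary \eqref{jjN1}, a tower of conformal and Kac–Moody descendants smeared against $\omega_i$, whereas the naive splitting function is a strictly local two-particle object. Reproducing this descendant and smeared structure requires treating the splitting function as a Berends–Giele current on the curved background and showing that its $z_{ij}$-expansion generates precisely these integrated descendants, uniformly in $N$; establishing this to all orders in the collinear expansion is the crux. A secondary difficulty is the careful regularization of the edge cases (the $0\times\infty$ ambiguities met in section~\ref{sec:2gluon}) so that bulk and chiral normalizations agree on the nose, including the relative sign that proved decisive in the two-point match of section~\ref{sec:gluonope}.
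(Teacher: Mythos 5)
You should first be aware that the paper does not prove this statement: it is stated as a conjecture, and what the paper supplies is (i) the same ``formal'' identification of the WZW$_4$ splitting function with the positive-helicity sector of the self-dual Yang--Mills splitting function that you open with, and (ii) an explicit verification of the leading singular terms at orders $N^0$ and $N^1$ --- equations \eqref{jjN0} and \eqref{jjN1} --- obtained by a spacetime perturbiner computation on Burns space using the Green's function \eqref{scalarprop}, Schwinger/Feynman parametrizations and Bessel-function expansions. Your first two links (splitting function of WZW$_4$ equals the restriction of the SDYM one; tree amplitudes equal chiral correlators via the theorem of section \ref{sec:enhance} and the dictionary \eqref{opendict}) track the paper's own reasoning. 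Where you genuinely diverge is in proposing to do the order-by-order check on twistor space, inserting powers of the Beltrami differential \eqref{musol} into the cubic vertex through the coupling \eqref{Jcoup}; the paper works instead directly on Burns space. The twistor route is plausibly cleaner and is morally what the symmetry-enhancement argument suggests, but it is not carried out in the paper, so you would still have to supply the propagator on $Z_N$ and show that the parameter integrals $\int_0^1\d\omega_i\,\d\omega_j$ of \eqref{jjN1full} actually emerge; the paper obtains them from a change of variables in the $s,s_i$ Feynman parameters of a spacetime integral, not from the Bochner--Martinelli kernel.

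Two cautions. First, your ``cleaner, non-perturbative'' Koszul-duality argument does not close the proof as stated: section \ref{sec:chiralbdy} is explicit that this chiral algebra is \emph{not} the Koszul dual of the bulk operator algebra (the universal coupled algebra would admit additional poles at $z=0,\infty$), and Koszul duality characterizes which algebras can couple consistently to the bulk rather than asserting that bulk collinear factorization equals the boundary OPE --- that equality is exactly what the tree-level theorem plus the Penrose transform are invoked to establish, and only for vacuum correlators of on-shell states. Second, you correctly identify the crux --- matching the full singular OPE, including the smeared descendant tower in \eqref{jjN1full}, uniformly in $N$ and to all orders in the collinear expansion --- but this is precisely the step that neither you nor the paper establishes; it is why the statement remains a conjecture rather than a corollary of the tree-level theorem. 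As written, your argument reproduces (modulo the deferred twistor-space computation) the same partial evidence the paper already presents, not a proof of the conjecture itself.
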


Let us now explain how to compute the celestial OPE (or splitting function) from the bulk. The 3-point interactions of WZW$_4$ on a scalar-flat K\"ahler manifold $M$ were displayed in \eqref{phiac}. If one starts with three solutions $\phi_i,\phi_j,\phi_k$ of the adjoint-valued Laplace equation, their scattering amplitude in WZW$_4$ is found by evaluating the on-shell 3-point vertex
\be
\frac{\im}{24\pi^2}\int_{M}\omega\wedge\tr\left(\phi\,[\p\phi,\dbar\phi]\right)
\ee
on the combination $\phi=\veps_i\phi_i+\veps_j\phi_j+\veps_k\phi_k$ and extracting the coefficient of $\veps_i\veps_j\veps_k$. Recall that we are using the notation $[\p\phi,\dbar\phi]\equiv\p\phi\wedge\dbar\phi + \dbar\phi\wedge\p\phi$ as is appropriate for the Lie bracket of 1-forms. Thus, the 3-point amplitude reads
\be\label{3ptgluon}
A(\phi_i,\phi_j,\phi_k) = \frac{\im}{24\pi^2}\int_{M}\omega\wedge\tr\,\Bigl(\phi_k\bigl([\p\phi_i,\dbar\phi_j]+[\p\phi_j,\dbar\phi_i]\bigr) + \text{cyclic}\Bigr)\,.
\ee
Instead of trying to determine this 3-point amplitude in all its glory, we confine ourselves to computing its expansion in small $\la i\,j\ra$. This defines the so-called \emph{holomorphic collinear limit}. 

We also remark that in past work, we actually have computed this 3-point amplitude to leading order in the backreaction, and the result matches the correlator of three leading soft gluon currents $J[0,0]$ in the dual chiral algebra. We will omit repeating the details of that approach here, but the interested reader is referred to the supplemental materials of \cite{Costello:2022jpg} for the computation. Alternatively, an analogous computation involving two gluons $\phi_i,\phi_j$ and a gravitational $E$-type mode $\rho^E_k$ will be performed in the next section.

What we will compute now is the \emph{holographic OPE}. The holographic OPE of two states $\phi_i\equiv\phi_{a_i}(x|\lambda_i,\tilde\lambda_i)$, $\phi_j$ is defined to be the ``off-shell state'' $\phi_{ij}$ such that the linear combination
\be
\veps_i\phi_i+\veps_j\phi_j + \veps_i\veps_j\phi_{ij}
\ee
solves the nonlinear field equation \eqref{phieom} to first order in the product $\veps_i\veps_j$. To this order, the field equation reads
\be\label{opepde}
\omega\wedge\p\dbar\phi_{ij} = -\frac{1}{2}\,\omega\wedge\left([\p\phi_i,\dbar\phi_j]+[\p\phi_j,\dbar\phi_i]\right)\,.
\ee
From this, one observes that the ``2-point amplitude'' of an on-shell state $\phi_k$ and the off-shell state $\phi_{ij}$, a priori defined to be
\be\label{A3from2}
A(\phi_{ij},\phi_k) = -\frac{\im}{8\pi^2}\int_M\omega\wedge\tr\left(\p\phi_{ij}\wedge\dbar\phi_{k}+\p\phi_{k}\wedge\dbar\phi_{ij}\right)\,,
\ee
computes the on-shell 3-point amplitude \eqref{3ptgluon} on integrating by parts and summing over cyclic permutations of the particle labels $i,j,k$. 
This idea is again adapted from the perturbiner formalism \cite{Rosly:1996vr,Rosly:1997ap}. In the flat space limit $N\to0$, the OPE $\phi_{ij}$ reduces to the usual notion of a (holomorphic) collinear splitting function.

In what follows, we will solve the field equation for $\phi_{ij}$ order-by-order in the backreaction. Crucially, this takes the form of a series in $\la i\,j\ra$ whose coefficients can be expressed in terms of \emph{on-shell} states. Plugging this expansion into \eqref{A3from2} and summing over permutations, one obtains an expansion of the 3-point amplitude $A(\phi_i,\phi_j,\phi_k)$ as a series in $\la i\,j\ra$ whose coefficients are on-shell 2-point amplitudes. This is exactly the way a holographic correlator is supposed to behave. And indeed, following this computation, we will apply our holographic dictionary to convert this into a celestial OPE. An independent computation in the chiral algebra will then match this celestial OPE on the nose. 

\medskip

Remembering the expression \eqref{kahlerlap} for the Laplacian on K\"ahler manifolds, along with the normalization convention $\omega^2=-2\sqrt{|g|}\,\d^4x$, one can invert \eqref{opepde} to find the solution
\be\label{opeinvert}
\phi_{ij}(x) = \im\int_{M}\omega\wedge\left([\p\phi_i,\dbar\phi_j]+[\p\phi_j,\dbar\phi_i]\right)\Bigr|_{x'}\,G(x,x')\,,
\ee
where the integral is over spacetime points $x'\in M$, and $G(x,x')$ is the Green's function of the Laplacian satisfying
\be
\lap_{x}G(x,x') = \frac{1}{\sqrt{|g|}}\,\delta^4(x-x')\,.
\ee
We anticipate that the result of the integral \eqref{opeinvert} will be expandable on universal singularities in $\la i\,j\ra=z_{ij}$, with coefficients that are on-shell states. This ``splitting'' idea is depicted in figure \ref{splitting}, with the right hand side showing the rough structure of the holographic OPE.

\begin{figure}
    \centering
    \includegraphics[scale=0.4]{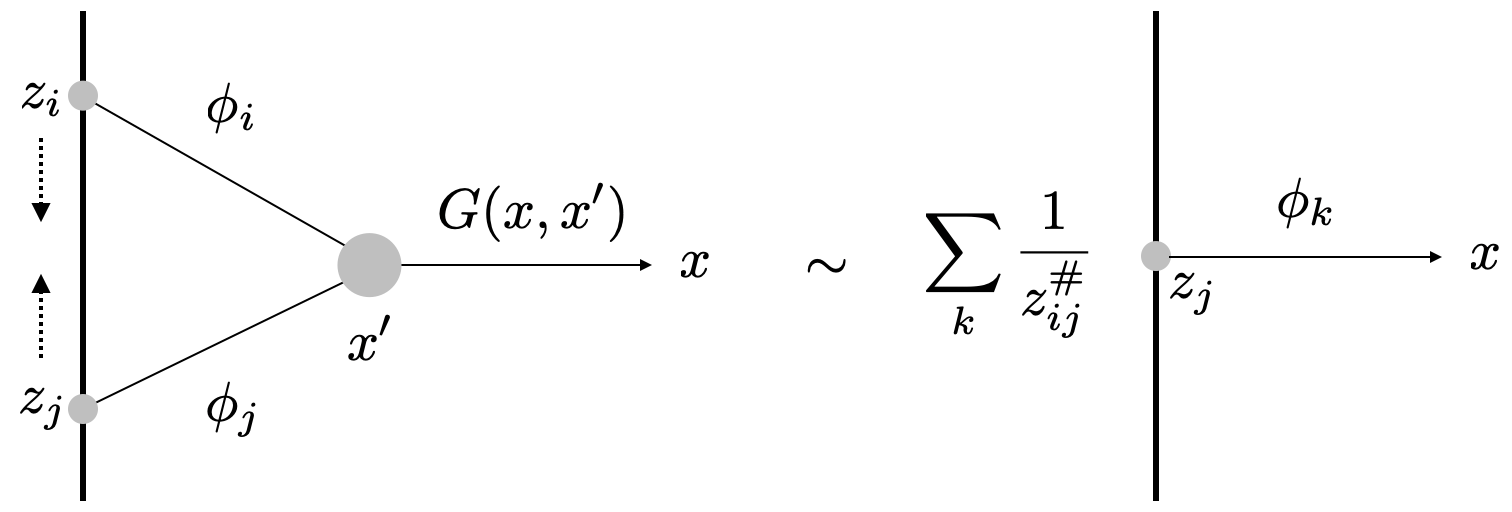}
    \caption{The holographic OPE as the collinear expansion of a partially off-shell 3-point vertex. The vertical line represents the celestial sphere.}
    \label{splitting}
\end{figure}

To be clear, the Green's function here is just the scalar propagator and does not contain any color factors. On Burns space $M=\widetilde\C^2$, an explicit expression for this propagator can be read off by stripping the color factor from the adjoint-valued propagator that we arrived at in \eqref{phiprop}:
\be\label{scalarprop}
G(x,x') = -\frac{1}{8\pi^2}\left(\|u-u'\|^2+\frac{N|[u\,u']|^2}{\|u\|^{2}\|u'\|^{2}}\right)^{-1}\,,
\ee
where $x^{\al\dal}=(u^{\dal},\hat u^{\dal})$ and $x'^{\al\dal}=(u'^{\dal},\hat u'^{\dal})$, etc. This is the Green's function of the Laplacian on Burns space. We will use this in conjunction with formula \eqref{opeinvert} to determine our holographic OPE.

In the integrand of \eqref{opeinvert}, partition the K\"ahler form \eqref{omegaburns} into its flat space part and the backreaction:
\be
\omega = \omega_0+N\omega_1\,,\qquad \omega_0 = \im\,\d u_{\dal}\wedge\d\hat u^{\dal}\,,\qquad\omega_1 = \frac{\im\,[u\,\d u]\wedge[\hat u\,\d\hat u]}{\|u\|^4}\,.
\ee
At the same time, the Green's function admits the expansion
\be\label{greenexp}
G(x,x') = \sum_{n=0}^\infty N^nG_n(x,x')\,,\qquad G_n(x,x') = G_0(x,x')\left(\frac{-1}{\|u-u'\|^2}\frac{|[u\,u']|^2}{\|u\|^2\|u'\|^2}\right)^n\,,
\ee
with $G_0(x,x')$ being the flat space propagator
\be
G_0(x,x') = -\frac{1}{8\pi^2}\frac{1}{\|u-u'\|^2} = -\frac{1}{4\pi^2}\frac{1}{(x-x')^2}\,.
\ee
Also recall the expansions \eqref{phiijexp} of the quasi-momentum eigenstates, with coefficients given by \eqref{phin}. 

Using these, let's expand $\phi_{ij}(x)$ as a power series in $N$,
\be
\phi_{ij}(x) = [\sT_{a_i},\sT_{a_j}]\sum_{p=0}^\infty N^p\phi_{ij}^{p}(x)\,.
\ee
Matching $\phi_{ij}^p$ with the term of order $N^p$ on right hand side of \eqref{opeinvert} yields
\begin{multline}\label{opeterm}
\phi_{ij}^p(x) = \sum_{\ell+m+n=p}\im\int_{\widetilde\C^2}\omega_0\wedge\left(\p\phi_i^m\wedge\dbar\phi_j^n -\p\phi^n_j\wedge\dbar\phi^m_i\right)\Bigr|_{x'}\,G_{\ell}(x,x')\\
+ \sum_{\ell+m+n=p-1}\im\int_{\widetilde\C^2}\omega_1\wedge\left(\p\phi_i^m\wedge\dbar\phi_j^n -\p\phi^n_j\wedge\dbar\phi^m_i\right)\Bigr|_{x'}\,G_{\ell}(x,x')\,,
\end{multline}
where the sums run over non-negative integers $\ell,m,n$ as usual. We will content ourselves with working these out for $p=0,1$ and matching the result with predictions from the holographic dual.

\paragraph{Flat space celestial OPE.} The leading term $\phi^0_{ij}$ should reproduce the momentum space celestial OPE of positive helicity hard gluons that one obtains in flat space \cite{Fan:2019emx}. Equivalently, its collinear expansion should reduce to the flat space tree level splitting function.\footnote{A useful exposition of collinear limits of flat space amplitudes may be found in \cite{Taylor:2017sph}.}

Let us verify this. To this order, only the terms $\phi_i^0 = \e^{\im p_i\cdot x}/\lambda_{i1}\lambda_{i2}$ contribute. Setting $p=0$ in \eqref{opeterm}, one can drop the second sum, whereas the first sum reduces to a single Fourier integral
\be\label{phiij0}
\phi^0_{ij} = -\frac{(\lambda_{i1}\lambda_{j2}+\lambda_{i2}\lambda_{j1})[i\,j]}{\lambda_{i1}\lambda_{i2}\lambda_{j1}\lambda_{j2}}\int\d^4x'\,\e^{\im(p_i+p_j)\cdot x'}G_0(x,x')\,.
\ee
Because the flat space propagator solves $(\lap_{0})_{x}G_0(x,x') = \delta^4(x-x')$, one notes that this is just the solution of a sourced flat space Laplace equation,
\be
\lap_0\phi^0_{ij} = -\frac{(\lambda_{i1}\lambda_{j2}+\lambda_{i2}\lambda_{j1})[i\,j]}{\lambda_{i1}\lambda_{i2}\lambda_{j1}\lambda_{j2}}\,\e^{\im(p_i+p_j)\cdot x}\,,
\ee
where $\lap_0 = \p^{\al\dal}\p_{\al\dal}$ is the Laplacian on $\R^4$. Solving this does not actually require performing the integral in \eqref{phiij0}, as one can easily guess the value of $\lap_0^{-1}\e^{\im(p_i+p_j)\cdot x}$:
\begin{align}
\phi^0_{ij} &= -\frac{(\lambda_{i1}\lambda_{j2}+\lambda_{i2}\lambda_{j1})[i\,j]}{\lambda_{i1}\lambda_{i2}\lambda_{j1}\lambda_{j2}}\,\frac{\e^{\im(p_i+p_j)\cdot x}}{\im^2(p_i+p_j)^2}\nonumber\\
&= \frac{(\lambda_{i1}\lambda_{j2}+\lambda_{i2}\lambda_{j1})}{2\lambda_{i1}\lambda_{i2}\lambda_{j1}\lambda_{j2}}\,\frac{\e^{\im(p_i+p_j)\cdot x}}{\la i\,j\ra}\nonumber\\
&\sim \frac{\lambda_{j1}}{\lambda_{i1}}\,\frac{1}{\la i\,j\ra}\,\frac{1}{\lambda_{j1}\lambda_{j2}}\,\exp\left(\im x^{\al\dal}\lambda_{j\al}\bigg(\tilde\lambda_{j\dal}+\frac{\lambda_{i1}}{\lambda_{j1}}\tilde\lambda_{i\dal}\bigg)\right)\,.\label{phi0ijsol}
\end{align}
The second line is the exact result. The third line displays the singular term in its holomorphic collinear expansion around $\la i\,j\ra=0$. It has been obtained by eliminating $\lambda_{i2}$ using $\lambda_{i2}=(\la i\,j\ra + \lambda_{i1}\lambda_{j2})/\lambda_{j1}$, then expanding in $\la i\,j\ra$. The weird factors of $\lambda_{i1}/\lambda_{j1}$ in various places are present purely to ensure correct little group scalings. We also remark that, as the Fourier transform of $1/(x-x')^2$ is well-known, one can also obtain \eqref{phi0ijsol} by direct integration of \eqref{phiij0} for real $p_i^\mu+p_j^\mu$ and analytic continuation to complex momenta.

To recast this as an OPE expansion of a 2d CFT, it is useful to fix the little group scalings by setting $\lambda_{i\al}=(1,z_i)$, $\lambda_{j\al}=(1,z_j)$. Expanding in small $z_{ij}$ gives the singular part of the leading OPE
\be\label{N0ope}
\phi_{ij} \sim \frac{[\sT_{a_i},\sT_{a_j}]}{z_{ij}}\,\phi^0(x|z_j,\tilde\lambda_i+\tilde\lambda_j) + \mathrm{O}(N)\,.
\ee
We have denoted by $\phi^0(x|z,\tilde\lambda)$ the flat space momentum eigenstate $\e^{\im\la\lambda|x|\tilde\lambda]}/\lambda_1\lambda_2$ for the little group fixing $\lambda_\al=(1,z)$. This is the expected Kac-Moody OPE of two positive helicity hard gluons. It encodes the celestial OPE of infinitely many soft gluons, as was first realized in \cite{Guevara:2021abz}. Having obtained consistency with the bottom-up expectations of celestial holography, we are emboldened to embark on computing some backreaction effects!

\paragraph{Absence of branch cuts in the celestial OPE.}
Before we embark on the rather challenging computation of the OPE in the backreacted geometry, let us explain why the celestial OPE must always yield a rational functional of the kinematic variables $[i\,j]$, $\la i\,j\ra$.  From the CFT point of view this is of course obvious.  However, certain intermediate steps in the OPE computation can lead to logarithms, and it is useful to have a general conceptual argument for why they must all cancel in the end.

Recall that the celestial OPE is defined by 
\be
\phi_{ij}(x) = \im\int_{M}\omega\wedge\left([\p\phi_i,\dbar\phi_j]+[\p\phi_j,\dbar\phi_i]\right)\Bigr|_{x'}\,G(x,x')\,,
\ee
We then take the limit as $p_i \cdot p_j \to 0$ and retain the singular part.

However, we don't have to use the full Green's function to compute the celestial OPE.  The Green's function can be written as an integral of the heat kernel: 
\begin{equation} 
	G(x,x') = \int_{0}^\infty \d t\, K_t (x,x')  
\end{equation}
(This is a version of the Schwinger parameterization).  We can consider the Green's function with an infrared cutoff:
\begin{equation} 
	G_T(x,x') =   \int_{0}^T \d t\, K_t (x,x')  
\end{equation}
Unlike the actual Green's function, the IR cutoff Green's function has exponential decay for $\norm{x - x'}$ large:
\begin{equation} 
	G_T(x,x') \sim e^{-\norm{x-x'}^2 / T}  
\end{equation}
We will show that the celestial OPE defined using the infra-red cutoff is the same as the OPE defined using the Green's function.  The OPE is only sensitive to the UV singularities in the Green's function.

Although we present the argument with the heat-kernel cutoff, it will be clear that any reasonable IR cutoff can be used.

The celestial OPE with the IR cutoff is
\begin{equation} 
	\phi^T_{ij}(x) = \im\int_{M}\omega\wedge\left([\p\phi_i,\dbar\phi_j]+[\p\phi_j,\dbar\phi_i]\right)\Bigr|_{x'}\,G_T(x,x')\,. \label{OPE_cutoff} 
\end{equation}
If send $T \to \infty$, obviously it becomes our original expression for the celestial OPE. On the other hand, we can differentiate it with respect to $T$ to get
\begin{equation} 
	\partial_T	 \phi^T_{ij}(x) = \im\int_{M}\omega\wedge\left([\p\phi_i,\dbar\phi_j]+[\p\phi_j,\dbar\phi_i]\right)\Bigr|_{x'}\, K_T(x,x')   \,. 
\end{equation}
The heat kernel $K_T(x,x')$ has rapid decay when $\norm{x-x'}$ is large, and (unlike the Green's function) has no singularities when $\norm{x-x'}$ is small. This implies that the integral
\begin{equation} 
	\int_x F(x) K_T(x,x') 
\end{equation}
is absolutely convergent for any function $F$ which grows like $P(x) e^{C \norm{x}}$ for any polynomial $P$ and constant $C$. In particular, the integral
\begin{equation} 
	 \im\int_{M}\omega\wedge\left([\p\phi_i,\dbar\phi_j]+[\p\phi_j,\dbar\phi_i]\right)\Bigr|_{x'}\, K_T(x,x')   \,, 
\end{equation}
converges absolutely even in the collinear limit $p_i \cdot p_j \to 0$. 

This tells us that the \emph{singular part} of the celestial OPE, defined with the IR cutoff propagator $G_T(x,x')$, is independent of $T$. It remains the same in the $T\to\infty$ limit. The singular part is, of course, what we are ultimately interested in.

Let us now verify that the celestial OPE (defined using the IR cutoff propagator) is an entire analytic function of the complex momenta $p_i$, and so has no branch cuts. Let us look again at equation \eqref{OPE_cutoff}.   For any value of the momenta, the term depending on the $\phi_i$ has growth bounded by an exponential: 
\begin{equation} 
	\norm { [\p\phi_i,\dbar\phi_j]+[\p\phi_j,\dbar\phi_i] }  \le P(x) e^{C \norm{x}}  
\end{equation}
for some polynomial $P$ and constant $C$, which both depend on the momenta $p_i$.   

For  large $\norm{x-x'}$, we have
\begin{equation} 
	G_T(x,x') \le C e^{ - \norm{x-x'}^2 / T} 
\end{equation}
for some constant $C$.  Therefore, the integral \eqref{OPE_cutoff} converges absolutely in the IR region $\norm{x - x'} \gg 0$.  It also converges in the UV region, simply because $G_T(x,x')$ is a distribution.  This convergence guarantees analyticity of \eqref{OPE_cutoff} as a function of the momenta.  

This argument is a version of the familiar Paley-Weiner theorem, in its distributional incarnation proved by Schwartz.  This result states that the Fourier transform of a distribution (in this case $G_T(x,x')$) with sufficiently rapid decay is always an entire analytic function of the momenta.   

\paragraph{Celestial OPE with backreaction.} At $p=1$, one encounters three terms in the sum \eqref{opeterm}. Let us list them in order of increasing complexity that we will encounter when evaluating them,
\begin{align}
    \phi^p_{ij} = \mathcal{I}^1_{ij} + \mathcal{I}^2_{ij} + \mathcal{I}^3_{ij}\,.
\end{align}
The first of these comes from setting $\ell=m=n=p-1=0$ in the second line of \eqref{opeterm},
\be
\mathcal{I}^1_{ij} = -\int_{\R^4}\frac{u'^{\dal}\hat u'^{\dot\beta}}{\|u'\|^4}\,\bigg(\frac{\p\phi_i^0}{\p u'^{\dal}}\frac{\p\phi_j^0}{\p\hat u'^{\dot\beta}} - \frac{\p\phi_j^0}{\p u'^{\dal}}\frac{\p\phi_i^0}{\p\hat u'^{\dot\beta}}\bigg)\,G_{0}(x,x')\,\d^4x'\,.
\ee
The $\phi_i^m, \phi_j^n$ are all implicitly evaluated at $u'$. The integral has again been written as an integral over $\R^4$, meant in the sense of appropriate regularizations at the origin. The second term arises by setting $\ell=1,m=n=0$ in the first line of \eqref{opeterm},
\be
\mathcal{I}^2_{ij} = -\int_{\R^4}\eps^{\dal\dot\beta}\,\bigg(\frac{\p\phi_i^0}{\p u'^{\dal}}\frac{\p\phi_j^0}{\p\hat u'^{\dot\beta}} - \frac{\p\phi_j^0}{\p u'^{\dal}}\frac{\p\phi_i^0}{\p\hat u'^{\dot\beta}}\bigg)\,G_{1}(x,x')\,\d^4x'\,.
\ee
The final term is the most involved to compute, but also the term that will give rise to the actual backreaction effects in the OPE,
\be
\mathcal{I}^3_{ij} = -\int_{\R^4}\eps^{\dal\dot\beta}\,\bigg(\frac{\p\phi_i^1}{\p u'^{\dal}}\frac{\p\phi_j^0}{\p\hat u'^{\dot\beta}} - \frac{\p\phi_j^0}{\p u'^{\dal}}\frac{\p\phi_i^1}{\p\hat u'^{\dot\beta}} - (i\leftrightarrow j)\bigg)\,G_{0}(x,x')\,\d^4x'\,.
\ee
We will evaluate all three as series in $\la i\,j\ra$. Let us also recall the leading and subleading terms in the scattering wavefunctions that we will need:
\be
\phi_i^0(x) = \frac{\e^{\im p_i\cdot x}}{\lambda_{i1}\lambda_{i2}}\,,\qquad\phi_i^1(x) = -\frac{[u\,i][\hat u\,i]}{\|u\|^2}\int_0^1\d s_i\,s_i\,\e^{\im s_ip_i\cdot x}
\ee
and similarly for $\phi_j^0,\phi_j^1$.

In the course of evaluating these terms, we will encounter spacetime integrals of the form
\be\label{Smnkx}
\cS_{m,n}(k,x) = \frac{1}{8\pi^2}\int\frac{\d^4x'\,\e^{\im k\cdot x'}}{[(x'-x)^2/2]^m(x'^2/2)^n}\,,\qquad m,n\in\Z_+
\ee 
For instance, plugging in the values of $\phi_i^0$, $\phi_j^0$ in $\cI^1_{ij}$, and simplifiying using $\|u'\|^2 = x'^2/2$, $\|u'-u\|^2=(x'-x)^2/2$, along with the intertwining relations $\p_{\lambda_{i1}}\e^{\im p_i\cdot x'} = \im [u'i]\e^{\im p_i\cdot x'}$, etc., we find that
\be\label{I1ij1}
\cI^1_{ij} = \frac{1}{\lambda_{i1}\lambda_{i2}\lambda_{j1}\lambda_{j2}}\left(\lambda_{i1}\lambda_{j2}\p_{\lambda_{i1}}\p_{\lambda_{j2}}-\lambda_{i2}\lambda_{j1}\p_{\lambda_{i2}}\p_{\lambda_{j1}}\right)\cS_{1,2}(p_i+p_j,x)
\ee
So it is worthwile computing $\cS_{m,n}(k,x)$ in some generality.

The integration technique that we apply to compute it is a 4d adaptation of methods developed in \cite{PhysRevA.39.5062} to Fourier transform products of 3d Coulomb potentials with differing centers. It is culturally appealing to see that, in this way, our calculations happen to be souped-up cousins of standard calculations performed in quantum chemistry.

One begins by performing a Feynman parametrization to combine the two propagator denominators:
\be
\cS_{m,n}(k,x) = \frac{1}{8\pi^2}\,\frac{\Gamma(m+n)}{\Gamma(m)\Gamma(n)}\int_0^1\d s\,s^{m-1}\,(1-s)^{n-1}\int\frac{\d^4x'\,\e^{\im k\cdot(x'+s x)}}{[(x'^2+s(1-s)x^2)/2]^{m+n}}\,,
\ee
having shifted $x'\mapsto x'+sx$ for further simplification. A further Schwinger parametrization allows for the evaluation of the spacetime integral as a Fourier transform of a Gaussian:
\begin{align}
    &\cS_{m,n}(k,x) = \frac{1}{8\pi^2}\,\frac{1}{\Gamma(m)\Gamma(n)}\int_0^1\d s\,s^{m-1}\,(1-s)^{n-1}\,\e^{\im sk\cdot x}\nonumber\\
    &\hspace{4cm}\times\int_0^\infty\d t\,t^{m+n-1}\int\d^4x'\,\e^{\im k\cdot x'-t(x'^2+s(1-s)x^2)/2}\nonumber\\
    &= \frac{1}{2\,\Gamma(m)\Gamma(n)}\int_0^1\d s\,s^{m-1}\,(1-s)^{n-1}\,\e^{\im sk\cdot x}\int_0^\infty\d t\,t^{m+n-3}\,\e^{-ts(1-s)x^2/2}\,\e^{-k^2/2t}\,.\label{Smnform}
\end{align}
In practice, this is the formula for $\cS_{m,n}$ that we use when evaluating the $\lambda_{i\al},\lambda_{j\al}$ derivatives in expressions like \eqref{I1ij1}.

The $t$ integral can also be performed explicitly using the Schl\"afli integral formula for modified Bessel functions of the second kind,
\be\label{Kschlafli}
K_\nu(z) = \frac{1}{2}\,\bigg(\frac{z}{2}\bigg)^\nu\int_0^\infty\frac{\d t}{t^{\nu+1}}\,\e^{-t-z^2/4t}\,,
\ee
valid if the real part of $z^2$ is positive. This yields
\be
    \cS_{m,n}(k,x) = \frac{(k^2/x^2)^{\frac{m+n-2}{2}}}{\Gamma(m)\Gamma(n)}\int_0^1\d s\,s^{\frac{m-n}{2}}\,(1-s)^{\frac{n-m}{2}}\,\e^{\im sk\cdot x}K_{m+n-2}\!\left(\sqrt{s(1-s)k^2x^2}\right)
\ee
where the $t$ integral has been performed for $x^2>0,k^2>0$. As always, we analytically continue the final result to complex $k^2$, at least as long as the $s$ integral converges. It does not seem possible to perform the $s$ integral in closed form, but it is amenable to a series expansion in $k^2$ (which is what we ultimately seek).

To find \eqref{I1ij1}, we need $\cS_{1,2}(p_i+p_j,x)$. Reading this off from \eqref{Smnform} and evaluating the derivatives, one finds
\begin{align}
    &\cI^1_{ij} = \frac{1}{2\lambda_{i1}\lambda_{i2}\lambda_{j1}\lambda_{j2}}\int_0^1\d s\,(1-s)\,\e^{\im s(p_i+p_j)\cdot x}\int_0^\infty\d t\,\e^{-ts(1-s)\|u\|^2}\,\e^{-\la i\,j\ra[i\,j]/t}\nonumber\\
    &\times\left[s^2\bigl(\lambda_{i2}\lambda_{j1}[\hat u\,i][u\,j]-\lambda_{i1}\lambda_{j2}[u\,i][\hat u\,j]\bigr) - \frac{\la i\,j\ra[i\,j]^2}{{t^2}}\,(\lambda_{i1}\lambda_{j2}+\lambda_{i2}\lambda_{j1})\right.\nonumber\\
    &\left. +\; \frac{[i\,j]}{t}\,\Bigl(\lambda_{i1}\lambda_{j2}\bigl(\im s\lambda_{i1}[u\,i]+\im s\lambda_{j2}[\hat u\,j]+1\bigr) + \lambda_{i2}\lambda_{j1}\bigl(\im s\lambda_{i2}[\hat u\,i]+\im s\lambda_{j1}[u\,j]+1\bigr)\Bigr)\right]\,.
\end{align}
Integrating over $t$ using the integral representation \eqref{Kschlafli} converts this to
\begin{align}\label{I1ij2}
    &\cI^1_{ij} = \frac{1}{\lambda_{i1}\lambda_{i2}\lambda_{j1}\lambda_{j2}}\int_0^1\d s\,(1-s)\,\e^{\im s(p_i+p_j)\cdot x}\nonumber\\
    &\times\left[s^2\bigl(\lambda_{i2}\lambda_{j1}[\hat u\,i][u\,j]-\lambda_{i1}\lambda_{j2}[u\,i][\hat u\,j]\bigr)\sqrt{\frac{\la i\,j\ra[i\,j]}{s(1-s)\|u\|^2}}\,K_1\bigl(\sqrt{4s(1-s)\|u\|^2\la i\,j\ra[i\,j]}\bigr)\right.\nonumber\\
    &- \la i\,j\ra[i\,j]^2\,(\lambda_{i1}\lambda_{j2}+\lambda_{i2}\lambda_{j1})\sqrt{\frac{s(1-s)\|u\|^2}{\la i\,j\ra[i\,j]}}\,K_1\bigl(\sqrt{4s(1-s)\|u\|^2\la i\,j\ra[i\,j]}\bigr)\nonumber\\
    & + [i\,j]\,\Bigl(\lambda_{i1}\lambda_{j2}\bigl(\im s\lambda_{i1}[u\,i]+\im s\lambda_{j2}[\hat u\,j]+1\bigr) + \lambda_{i2}\lambda_{j1}\bigl(\im s\lambda_{i2}[\hat u\,i]+\im s\lambda_{j1}[u\,j]+1\bigr)\Bigr)\nonumber\\
    &\left.\hspace{4cm}\times\; K_0\bigl(\sqrt{4s(1-s)\|u\|^2\la i\,j\ra[i\,j]}\bigr)\right]\,.
\end{align}
As this cannot be reduced any further, we must resort to series expanding in small $\la i\,j\ra$.

At integer arguments, the leading terms in the series expansions of $K_0(z)$, $K_1(z)$ and $K_2(z)$ for small $z$ are given by
\begingroup
\allowdisplaybreaks
\begin{align}
    K_0(z) &=  - \gamma_E -\log\frac{z}{2}  + \frac{z^2}{4}\left(1-\gamma_E-\log\frac{z}{2}\right) + \mathrm{O}(z^3)\label{K0exp}\\
    K_1(z) &= \frac{1}{z} - \frac{z}{4}\left(1-2\gamma_E-2\log\frac{z}{2}\right) + \mathrm{O}(z^3)\label{K1exp}\\
    K_2(z) &= \frac{2}{z^2} - \frac{1}{2} + \frac{z^2}{32}\left(3-4\gamma_E-4\log\frac{z}{2}\right) + \mathrm{O}(z^3)\,,\label{K2exp}
\end{align}
\endgroup
where $\gamma_E$ denotes the Euler-Mascheroni constant. Terms like $z^n\log z$ with $n>0$ vanish in the $z\to0$ limit, so will not contribute to the singular part of our OPE computation. Terms like $\log z$ will give rise to $\log\,\la i\,j\ra$ singularities, but they are destined to cancel among themselves to ensure the absence of branch cuts.\footnote{Although this holds quite generally due to our heat kernel arguments, it is good to see that very similar evaluations of the celestial OPE on Eguchi-Hanson display similar cancellations of the logs \cite{Bittleston:2023bzp}. We thank Roland Bittleston for bringing this to our attention.} The remaining contributions of interest will be generated by the terms containing non-logarithmic poles. 

Let's use these expansions to study $\cI^1_{ij}$. If we eliminate $\lambda_{i2}$ in favor of $\la i\,j\ra$ by substituting $\lambda_{i2} = (\la i\,j\ra+\lambda_{i1}\lambda_{j2})/\lambda_{j1}$, and expand around $\la i\,j\ra=0$, we find that the second and third lines of \eqref{I1ij2} do not generate any singular terms in $\la i\,j\ra$. On the other hand, the third line only generates a mild logarithmic singularity, leading to the asymptotics
\begin{align}
    \cI^1_{ij} &\sim -\frac{1}{2}\,\frac{[i\,j]\log\,\la i\,j\ra}{\lambda_{i1}\lambda_{j2}}\int_0^1\d s\,(1-s)\,\e^{\im s p_{ij}\cdot x}\left(\im s p_{ij}\cdot x + 2\right)\nonumber\\
    &= -\frac{1}{2}\,\frac{[i\,j]\log\,\la i\,j\ra}{\lambda_{i1}\lambda_{j2}}\,{}_1F_1(1,2\,|\,\im p_{ij}\cdot x) \sim \mathrm{O}(\log\,\la i\,j\ra)\,,\label{I1fin}
\end{align}
where we have only displayed terms that are singular in the $\la i\,j\ra\to0$ limit. As a matter of abbreviation, we have also introduced an effective null momentum
\be
p_{ij,\al\dal} = \lambda_{j\al}\left(\tilde\lambda_{j\dal}+\frac{\lambda_{j1}}{\lambda_{i1}}\tilde\lambda_{i\dal}\right)
\ee
that any state on the right hand side of our OPEs would carry. It equals the exchanged momentum $p_i+p_j$ modulo terms of order $\la i\,j\ra$.

Since it only contains a logarithmic singularity, this result for $\cI^1_{ij}$ does not contribute to the celestial OPE. This is because, by our heat kernel arguments earlier, the final answer for the celestial OPE cannot have branch cuts. And more generally, we expect the logarithmic singularities to not contribute to on-shell observables. Hence, we choose to neglect terms involving $\log\,\la i\,j\ra$ and only focus on rational singularities. The twistorial arguments of section \ref{sec:enhance} show that any branch cuts must also drop out from actual 3-point amplitudes, but it will be interesting to verify this explicitly.

\medskip

The second term to compute is $\cI^2_{ij}$. Plugging in the plane wave states corresponding to $\phi_i^0,\phi_j^0$, one observes that it is just a Fourier transform of $G_1(x,x')$:
\be
\cI^2_{ij} = -\frac{(\lambda_{i1}\lambda_{j2}+\lambda_{i2}\lambda_{j1})[i\,j]}{\lambda_{i1}\lambda_{i2}\lambda_{j1}\lambda_{j2}}\int\d^4x'\,\e^{\im(p_i+p_j)\cdot x'}G_1(x,x')\,.
\ee
The first order in $N$ term in the Green's function is read off from \eqref{greenexp} to be
\be
G_1(x,x') = \frac{1}{8\pi^2}\,\frac{[u\,u'][\hat u\,\hat u']}{\|u-u'\|^4\|u\|^2\|u'\|^2}\,,
\ee
having remembered that $|[u\,u']|^2 = [u\,u']\overline{[u\,u']} = [u\,u'][\hat u\,\hat u']$. We would like to replace the factor of $[u\,u'][\hat u\,\hat u']$ in the numerator with derivatives in external data. To this end, let us expand $u^{\dal}$, $u'^{\dal}$, $\hat u^{\dal}$, $\hat u'^{\dal}$ in the natural constant basis of spinors $\tilde\lambda_i^{\dal},\tilde\lambda_j^{\dal}$. This allows us to rewrite the numerator as
\be
[u\,u'][\hat u\,\hat u'] = \frac{([u\,i][u'j]-[u\,j][u'i])([\hat u\,i][\hat u'j]-[\hat u\,j][\hat u'i])}{[i\,j]^2}\,.
\ee
Making the replacements $[u'i]\leftrightarrow -\im\p_{\lambda_{i1}}, [\hat u'i]\leftrightarrow-\im\p_{\lambda_{i2}}$, etc., we then get
\be
\cI^2_{ij} = \frac{(\lambda_{i1}\lambda_{j2}+\lambda_{i2}\lambda_{j1})}{\lambda_{i1}\lambda_{i2}\lambda_{j1}\lambda_{j2}[i\,j]\|u\|^2}\,\bigl([u\,i]\p_{\lambda_{j1}}-[u\,j]\p_{\lambda_{i1}}\bigr)\bigl([\hat u\,i]\p_{\lambda_{j2}}-[\hat u\,j]\p_{\lambda_{i2}}\bigr)\cS_{2,1}(p_i+p_j,x)\,.
\ee
The rest of the calculation is similar to $\cI^1_{ij}$, so we will be brief.

Reading off the value of $\cS_{2,1}(p_i+p_j,x)$ from \eqref{Smnform}, computing the derivatives, and integrating over $t$, one finds
\begin{align}
    \cI^2_{ij} &= -\frac{(\lambda_{i1}\lambda_{j2}+\lambda_{i2}\lambda_{j1})[i\,j]}{\lambda_{i1}\lambda_{i2}\lambda_{j1}\lambda_{j2}}\int_0^1\d s\,s\,\e^{\im s(p_i+p_j)\cdot x}\bigg[K_0\bigl(\sqrt{4s(1-s)\|u\|^2\la i\,j\ra[i\,j]}\bigr)\nonumber\\
    &\hspace{2cm}+\bigl(\lambda_{i2}[u\,i]+\lambda_{j2}[u\,j]\bigr)\bigl(\lambda_{i1}[\hat u\,i]+\lambda_{j1}[\hat u\,j]\bigr)\sqrt{\frac{s(1-s)}{\la i\,j\ra[i\,j]\|u\|^2}}\nonumber\\
    &\hspace{5cm}\times\,K_1\bigl(\sqrt{4s(1-s)\|u\|^2\la i\,j\ra[i\,j]}\bigr)\bigg]\,.
\end{align}
Expanding the integrand in small $\la i\,j\ra$ by eliminating $\lambda_{i2}$ as before, and integrating over $s$, we obtain a relevant singularity only from the $K_1$ Bessel function (neglecting terms containing $\log\,\la i\,j\ra$ by invoking the absence of branch cuts)
\be\label{I2fin}
    \cI^2_{ij} 
    \sim - \frac{1}{\la i\,j\ra}\,\frac{1}{\lambda_{i1}\lambda_{j2}}\,\frac{1}{2}\,\frac{(u^{\dal}p_{ij,1\dal})(\hat u^{\dot\beta}p_{ij,2\dot\beta})}{\|u\|^2}\,{}_1F_1(2,3\,|\,\im p_{ij}\cdot x) + \mathrm{O}(\log\,\la i\,j\ra)\,.
\ee
Comparing with \eqref{phin}, this is easily recognized to be $1/\la i\,j\ra$ times the order $N$ term in a quasi-momentum eigenstate of momentum $p_{ij}$. It combines with the leading flat space OPE \eqref{phi0ijsol} to inject backreaction into the state accompanying the $1/\la i\,j\ra$ singularity. But this is still just the leading OPE. The nontrivial OPE at order $1/\la i\,j\ra^2$ only gets generated through the third term $\cI^3_{ij}$.

\medskip

Substituting the expressions for $\phi_i^1,\phi_j^0$, etc., let's write out $\cI^3_{ij}$ in full:
\begin{multline}
    \cI^3_{ij} = \frac{1}{\lambda_{j1}\lambda_{j2}}\int_0^1\d s_i\,\bigg\{(\lambda_{i1}\lambda_{j2}+\lambda_{i2}\lambda_{j1})[i\,j]\,\p_{\lambda_{i1}}\p_{\lambda_{i2}}\cS_{1,1}(s_ip_i+p_j,x)\\
    + \frac{1}{s_i}\bigl(\lambda_{j1}\p_{\lambda_{i1}}^2\p_{\lambda_{j2}}-\lambda_{j2}\p_{\lambda_{i2}}^2\p_{\lambda_{j1}}\bigr)\cS_{1,2}(s_ip_i+p_j,x)\bigg\} - (i\leftrightarrow j)\,.
\end{multline}
Plugging in the values of $\cS_{1,1}$ and $\cS_{1,2}$ produces the following combination of Bessel functions
\begingroup
\allowdisplaybreaks
\begin{align}
    &\cI^3_{ij} = -\int_0^1\d s\int_0^1\d s_i\,\e^{\im s(s_ip_i+p_j)\cdot x}\nonumber\\
    &\times\left\{\frac{[i\,j]^2}{\la i\,j\ra}s^2(1-s) s_i\|u\|^2(\lambda_{i1}\lambda_{j2}+\lambda_{i2}\lambda_{j1})K_2\bigl(\sqrt{4s(1-s)\|u\|^2s_i\la i\,j\ra[i\,j]}\bigr)\right.\nonumber\\
    &+ \im s s_i[i\,j]^2\bigg(\frac{s_i\la i\,j\ra}{\lambda_{j1}\lambda_{j2}}\bigl(\lambda_{j1}[u\,i]+\lambda_{j2}[\hat u\,i]\bigr) + 2ss_i\bigl(\lambda_{i1}[u\,i]-\lambda_{i2}[\hat u\,i]\bigr)+ (1-s)\bigl(\lambda_{j1}[u\,j]-\lambda_{j2}[\hat u\,j]\bigr)\bigg)\nonumber\\
    &\hspace{3cm}\times\sqrt{\frac{s(1-s)\|u\|^2}{s_i\la i\,j\ra[i\,j]}}K_1\bigl(\sqrt{4s(1-s)\|u\|^2s_i\la i\,j\ra[i\,j]}\bigr)\nonumber\\
    &+ \frac{ss_i[i\,j]}{\lambda_{j1}\lambda_{j2}}\bigg(ss_i[u\,i][\hat u\,i](\lambda_{i1}\lambda_{j2}+\lambda_{i2}\lambda_{j1})+ s(1-s)\bigl(s_i\lambda_{i1}\lambda_{j1}[u\,i]^2+s_i\lambda_{i2}\lambda_{j2}[\hat u\,i]^2\bigr)\nonumber\\
    &\qquad\qquad+2s(1-s)\lambda_{j1}\lambda_{j2}\bigl([u\,i][\hat u\,j]+[u\,j][\hat u\,i]\bigr)-2\im(1-s)\bigl(\lambda_{j1}[u\,i]+\lambda_{j2}[\hat u\,i]\bigr) \bigg)\nonumber\\
    &\hspace{3cm}\times K_0\bigl(\sqrt{4s(1-s)\|u\|^2s_i\la i\,j\ra[i\,j]}\bigr)\nonumber\\
    &\left.+\; \frac{\im s^3(1-s)s_i(\lambda_{j1}[u\,i]^2[\hat u\,j]-\lambda_{j2}[\hat u\,i]^2[u\,j])}{\lambda_{j1}\lambda_{j2}}\sqrt{\frac{s_i\la i\,j\ra[i\,j]}{s(1-s)\|u\|^2}}K_1\bigl(\sqrt{4s(1-s)\|u\|^2s_i\la i\,j\ra[i\,j]}\bigr)\right\}\nonumber\\
    &-(i\leftrightarrow j)\,.\label{I3ishorrible}
\end{align}
\endgroup
The main difference between this term and the previous two terms is that now the derivatives have produced a Bessel $K_2$. It is this $K_2$ function whose series expansion will finally reproduce the anticipated $1/\la i\,j\ra^2$ singularity in the OPE.

To show this, use the series expansion $K_2(z)\sim 2/z^2$ around $z=0$ mentioned in \eqref{K2exp}. It is clear from \eqref{I3ishorrible} that the terms containing $K_1$ or $K_0$ would expand into singularities of the type $[i\,j]/\la i\,j\ra$ or $\log\,\la i\,j\ra$. As we are only interested in extracting the $[i\,j]/\la i\,j\ra^2$ and $1/\la i\,j\ra$ singularities, we suppress the rest to find the small $\la i\,j\ra$ expansion
\begin{multline}
\cI^3_{ij} \sim -\frac{\lambda_{i1}\lambda_{j2}[i\,j]}{\la i\,j\ra^2}\left\{\int_{[0,1]^2}\d s_i\,\d s\,s\,\e^{\im s(s_ip_i+p_j)\cdot x}+\int_{[0,1]^2}\d s_j\,\d s\,s\,\e^{\im s(p_i+s_jp_j)\cdot x}\right\}\\
+ \mathrm{O}\bigg(\frac{[i\,j]}{\la i\,j\ra},\,\log\,\la i\,j\ra\bigg)\,.
\end{multline}
Performing the change of variables $\omega_i = ss_i,\omega_j=s$ in the first integral, and $\omega_i = s,\omega_j=ss_j$ in the second integral, one can combine the two integrals to find
\be\label{I3fin}
\cI^3_{ij} \sim -\frac{\lambda_{i1}\lambda_{j2}[i\,j]}{\la i\,j\ra^2}\int_{[0,1]^2}\d\omega_i\,\d\omega_j\,\e^{\im (\omega_ip_i+\omega_jp_j)\cdot x}
+ \mathrm{O}\bigg(\frac{[i\,j]}{\la i\,j\ra},\,\log\,\la i\,j\ra\bigg)\,.
\ee
Substituting the little group fixed spinor-helicity variables $\lambda_{i\al}=(1,z_i)$, etc., this reduces to the expected collinear singularity
\be
\cI^3_{ij} \sim -\frac{z_j^2[i\,j]}{z_{ij}^2}\int_{[0,1]^2}\d\omega_i\,\d\omega_j\,\frac{\e^{\im (\omega_ip_i+\omega_jp_j)\cdot x}}{z_j}
+ \mathrm{O}\bigg(\frac{[i\,j]}{z_{ij}},\,\log\,\la i\,j\ra\bigg)\,.
\ee
Since the result \eqref{I1fin} for $\cI^1_{ij}$ only contained a log singularity, all that's left is to put $\cI^2_{ij}$ and $\cI^3_{ij}$ together.

Combining the result \eqref{N0ope} for the order $N^0$ OPE with the results \eqref{I2fin} and \eqref{I3fin}, and fixing the little group scalings, we obtain the backreacted celestial OPE,
\be
\phi_{ij} \sim \frac{f_{a_ia_j}{}^b}{z_{ij}}\,\phi_b(x|z_j,\tilde\lambda_i+\tilde\lambda_j) - \frac{Nz_j^2[i\,j]f_{a_ia_j}{}^b}{z_{ij}^2}\int_0^1\d\omega_i\int_0^1\d\omega_j\,\phi_b(x|z_j,\omega_i\tilde\lambda_i+\omega_j\tilde\lambda_j)\,.
\ee
valid up to terms of order $N^2$ or terms proportional to $[i\,j]/z_{ij}$. Comparing this with the OPEs \eqref{jjN0}, \eqref{jjN1} obtained from 2d Wick contractions, we find a perfect match with the predictions of our celestial CFT.


\subsection{Tests involving gravitational modes}
\label{sec:gravope}

Up till now, we have only computed observables involving gluon degrees of freedom $\phi$. A similar calculation of two and three point amplitudes involving only the K\"ahler perturbations $\rho$ is tough, so for the purposes of this work, we focus on 3-point scattering of two gluon modes against an $E$-type gravitational mode. 

In what follows, we will compute the OPE of a gravitational mode $\rho^E_i$ with a gluon mode $\phi_j$ using the dual chiral algebra. We will show that this matches the OPE computed from solving the equations of motion of WZW$_4$ coupled to Mabuchi gravity, although we will not venture beyond the simple pole $1/z_{ij}$ for now. Following this, we will explicitly compute the 3-point amplitude to order $N$ and show that it takes the expected form.

\paragraph{$\rho\phi\to\phi$ OPE and 3-point amplitude from the chiral algebra.} The operator dual to $\rho_i^E$ is
\be
E(z_i,\tilde\lambda_i) = \frac{1}{z_i^2}\bigl\{\text{Tr}\exp\!\left(z_i[X(z_i)\,i]\right)-1\bigr\}\,.
\ee
Its OPE with a hard gluon operator $J(z_j,\tilde\lambda_j)$ to zeroth order in $N$ is computed by performing a single $XX$ Wick contraction,
\be
E(z_i,\tilde\lambda_i)\,J_{rs}(z_j,\tilde\lambda_j) \sim \frac{z_iz_j}{z_i^2}\,\frac{[i\,j]}{z_{ij}}:\!I_r(z_j)\e^{z_i[X(z_i)\,i]+z_j[X(z_j)\,j]}I_s(z_j)\!: + \cdots\,.
\ee
An example of this is depicted in double-line notation in figure \ref{JEtoJ}.
\begin{figure}
    \centering
    \includegraphics[scale=0.3]{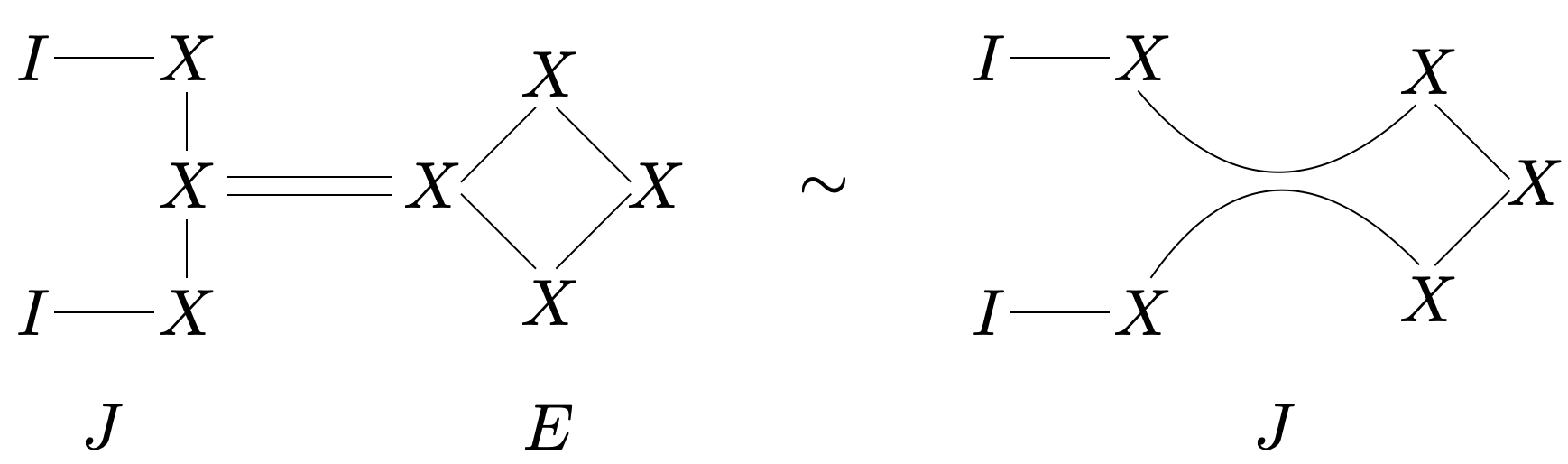}
    \caption{Scattering gluons off Mabuchi gravitons: a view from the boundary. Such $XX$ Wick contractions contribute to the $E\,J\sim J$ OPE. Each line denotes a contraction on the $\Sp(N)$ indices. There are no closed loops or factors of $N$ generated at the displayed order.}
    \label{JEtoJ}
\end{figure}

In simplifying this, we have freely commuted factors of $[X(z_i)\,i]$ past factors of $[X(z_j)\,j]$ inside the normal ordering. This uses the fact that $X^{\dot1}, X^{\dot2}$ -- viewed as $N\times N$ matrices -- commute in the large $N$ limit due to the ADHM constraint \eqref{adhm}. Expanding around $z_{ij}=0$ yields the OPE
\be
E(z_i,\tilde\lambda_i)\,J_{rs}(z_j,\tilde\lambda_j) \sim \frac{[i\,j]}{z_{ij}}\,J_{rs}(z_j,\tilde\lambda_i+\tilde\lambda_j) + \mathrm{O}(N)\,.
\ee
One can also compute the OPE to higher order in $N$, but reproducing it from the bulk promises to be a daunting task.

It is a straightforward matter to evaluate the 3-point amplitude $A(\rho_i^E,\phi_j,\phi_k)$ to leading order in $N$ using such Wick contractions. Since we subtracted off $E[0,0]=1$ when defining $E(z,\tilde\lambda)$ in \eqref{Ehard}, the 3-point correlator again starts at first order in $N$,
\begin{multline}
    \bigl\la E(z_i,\tilde\lambda_i)\,J_{pq}(z_j,\tilde\lambda_j)\,J_{rs}(z_k,\tilde\lambda_k)\bigr\ra \\
    = \frac{z_j}{z_i}\,\bigl\la\text{Tr}\,[X\,i](z_i)\,:\!I_p[X\,j]I_q\!:\!(z_j)\,:\!I_rI_s\!:\!(z_k)\bigr\ra\\
    + \frac{z_k}{z_i}\,\bigl\la\text{Tr}\,[X\,i](z_i)\,:\!I_pI_q\!:\!(z_j)\,:\!I_r[X\,k]I_s\!:\!(z_k)\bigr\ra + \mathrm{O}(N^2)\,.
\end{multline}
Performing maximal Wick contractions using the OPE \eqref{XXcom} that keeps the center of mass modes in play, one finds the following prediction for the amplitude:
\be\label{rhophi2pred}
A(\rho_i^E,\phi_j,\phi_k) = -\frac{N\,\tr(\sT_{a_j}\sT_{a_k})}{z_{jk}^2}\left\{\frac{[i\,j]}{z_{ij}}\,\frac{z_j}{z_i}+\frac{[i\,k]}{z_{ik}}\,\frac{z_k}{z_i}\right\} + \mathrm{O}(N^2)\,,
\ee
having identified $\sT_{a_j}\leftrightarrow\sT_{pq}$, $\sT_{a_k}\leftrightarrow\sT_{rs}$, and used \eqref{sotrace} to identify the color factor as a trace. This brings us to the bulk computation.

\paragraph{$\rho\phi\to\phi$ OPE from the bulk.} To see this OPE emerging from the collinear splitting function in the bulk, we simply repeat the strategy of the previous section. As was explained in section \ref{sec:closed}, the WZW$_4$ model couples to K\"ahler perturbations by the shift $\omega\mapsto\omega+\im\,\p\dbar\rho$:
\be
S_{\text{WZW}_4}[\phi,\rho] = -\frac{\im}{8\pi^2}\int\left(\omega+\im\,\p\dbar\rho\right)\wedge\tr\left(\p\phi\wedge\dbar\phi\right) + \mathrm{O}(\phi^3) \,,
\ee
where $\omega$ is the K\"ahler form of Burns space. 
The equation of motion of $\phi$, including the $\rho\phi^2$ coupling, now takes the form
\be
\omega\wedge\p\dbar\phi = -\im\,\p\dbar\rho\wedge\p\dbar\phi + \mathrm{O}(\phi^2)\,.
\ee
The right hand side gives rise to the $\rho\phi$ OPE.

The holographic OPE $\phi_{ij}$ of a WZW$_4$ state $\phi_j$ with a K\"ahler perturbation $\rho^E_i$ is defined by demanding that the linear combinations
\be
\rho = \veps_i\rho_i^E\,,\qquad\phi = \veps_j\phi_j+\veps_i\veps_j\phi_{ij}
\ee
solve the $\phi$ equation of motion at order $\veps_i\veps_j$. This leads to the PDE
\be
\lap\phi_{ij} = -2\,\frac{\p^2\rho^E_i}{\p u^{\dal}\p\hat u^{\dot\beta}}\frac{\p^2\phi_j}{\p u_{\dal}\p\hat u_{\dot\beta}}\,,
\ee
which can be solved iteratively in $N$ as in the previous section.

We will compute the OPE only to zeroth order in $N$, so we are free to replace $\lap$ and $\rho_i^E,\phi_j$ by their flat space counterparts. Plugging in
\be
\rho^E_i = \frac{\e^{\im p_i\cdot x}}{\lambda_{i1}^2\lambda_{i2}^2}\,,\qquad \phi_j = \frac{\sT_{a_j}\e^{\im p_j\cdot x}}{\lambda_{j1}\lambda_{j2}}\,,
\ee
one immediately finds the solution
\be
\phi_{ij} = \frac{[i\,j]}{\la i\,j\ra}\,\frac{\sT_{a_j}\e^{\im(p_i+p_j)\cdot x}}{\lambda_{i1}\lambda_{i2}}\,.
\ee
Setting $\lambda_{i\al}=(1,z_i)$, $\lambda_{j\al}=(1,z_j)$, and expanding in small $z_{ij}$, one finds the holographic OPE
\be
\phi_{ij} \sim \frac{[i\,j]}{z_{ij}}\,\phi_{a_j}(x|z_j,\tilde\lambda_i+\tilde\lambda_j)\,.
\ee
This matches the prediction of the celestial chiral algebra. 

More singular terms in $z_{ij}$ will enter at higher powers of $N$ (with each term receiving its own set of $1/N$ loop corrections as usual). Their determination will be a very useful test of our duality, but they lie beyond the scope of this paper. In the rest of this section, we instead focus on obtaining this OPE in a different way: by directly computing the order $N$ term in the 3-point $\rho\phi\phi$ amplitude.

\paragraph{$\rho\phi\phi$ 3-point amplitude.} Using the perturbiner formalism as before, the amplitude for scattering a gravitational mode $\rho^E_i$ against two gluon modes $\phi_j,\phi_k$ is given by the symmetrized on-shell interaction vertex \cite{Adamo:2017nia}
\be
A(\rho^E_i,\phi_j,\phi_k) = \frac{1}{8\pi^2}\int_{\widetilde\C^2}\p\dbar\rho^E_i\wedge\tr\left(\p\phi_j\wedge\dbar\phi_k+\p\phi_k\wedge\dbar\phi_j\right)\,.
\ee
If we evaluate this as a series in $N$, we find that just like the 3-point $\phi\phi\phi$ amplitude, the zeroth order term is the flat space 3-point amplitude which is distributional (see the supplemental material to \cite{Costello:2022jpg}). It only finds support at $z_i=z_j=z_k$. We will work at generic kinematics $z_i\neq z_j\neq z_k$ and drop this term.\footnote{As always with distributional terms, it is unclear how to obtain it as a local chiral algebra correlator.} The first term of interest is then the term of order $N$. This is what we will compute below.

Expand $\rho_i^E = \rho_i^{E,0}+N\rho_i^{E,1}+\mathrm{O}(N^2)$, $\phi_j = \sT_{a_j}(\phi_j^0 + N\phi_j^1+\mathrm{O}(N^2))$, etc. For the K\"ahler perturbation, the coefficients are read off from \eqref{Erhoexp} (up to the usual subtraction by $1$ to remove $E[0,0]$),
\be\label{rhoE01}
\rho_i^{E,0}(x) = \frac{\e^{\im p_i\cdot x}-1}{\lambda_{i1}^2\lambda_{i2}^2}\,,\qquad\rho_i^{E,1}(x) = -\frac{1}{\lambda_{i1}\lambda_{i2}}\,\frac{[u\,i][\hat u\,i]}{\|u\|^2}\,\int_0^1\d s_i\,\e^{\im s_ip_i\cdot x}\,.
\ee
The first is the flat space wavefunction, while the second is the leading order backreaction. The ${}_1F_1(1,2\,|\,\im p_i\cdot x)$ factor in the latter has been expressed through its integral representation as usual. For the reader's convenience, let us also repeat the corresponding terms in the $\phi$ wavefunctions,
\be\label{phi01}
\phi_j^0(x) = \frac{\e^{\im p_j\cdot x}}{\lambda_{j1}\lambda_{j2}}\,,\qquad\phi_j^1(x) = -\frac{[u\,j][\hat u\,j]}{\|u\|^2}\int_0^1\d s_j\,s_j\,\e^{\im s_jp_j\cdot x}\,.
\ee
Similar expressions for $\phi_k^0,\phi_k^1$ are obtained by replacing $j$ with $k$.

The 3-point amplitude can be broken down into two types of terms,
\be\label{Arhophi2}
A(\rho^E_i,\phi_j,\phi_k) = N\,\tr(\sT_{a_j}\sT_{a_k})\,\bigl\{\cI_{ijk} + \cJ_{ijk} + (j\leftrightarrow k)\bigr\} + \mathrm{O}(N^2)\,.
\ee
The first term is an integral containing the backreaction associated to $\rho^E_i$,
\be\label{rhoI}
\cI_{ijk} = \frac{1}{8\pi^2}\int_{\widetilde\C^2}\p\dbar\rho^{E,1}_i\wedge\p\phi_j^0\wedge\dbar\phi_k^0\,.
\ee
The second term encodes the backreaction coming from the WZW$_4$ fields,
\be\label{rhoJ}
\cJ_{ijk} = \frac{1}{8\pi^2}\int_{\widetilde\C^2}\p\dbar\rho^{E,0}_i\wedge\left(\p\phi_j^1\wedge\dbar\phi_k^0 + \p\phi_k^0\wedge\dbar\phi_j^1\right)\,.
\ee
As we have already developed quite a bit of technology while computing the 2-point amplitude, evaluating these integrals proves to be relatively straightforward.

Let us start with \eqref{rhoI}. On the coordinate patch $u^{\dal}\in\C^2-0$, it expands out to
\be
\cI_{ijk} = \frac{1}{8\pi^2}\int\d^4x\,\frac{\p^2\rho_i^{E,1}}{\p u^{\dal}\p\hat u^{\dot\beta}}\,\frac{\p\phi_j^0}{\p u_{\dal}}\,\frac{\p\phi_k^0}{\p\hat u_{\dot\beta}}\,.
\ee
Plugging in the various wavefunctions from \eqref{rhoE01} and \eqref{phi01}, and remembering the intertwining trick \eqref{intertwine}, this can be reduced to
\begin{multline}
    \cI_{ijk} = \frac{1}{\lambda_{i1}\lambda_{i2}\lambda_{j2}\lambda_{k1}}\int_0^1\d s_i\left\{\lambda_{i1}\lambda_{i2}[i\,j][i\,k]\p_{\lambda_{i1}}\p_{\lambda_{i2}}\cS_1(P) + \frac{\lambda_{i1}[i\,j]}{s_i}\,\p_{\lambda_{i1}}^2\p_{\lambda_{k2}}\cS_2(P) \right.\\
    \left.- \frac{\lambda_{i2}[i\,k]}{s_i}\,\p_{\lambda_{i2}}^2\p_{\lambda_{j1}}\cS_2(P) - \frac{2}{s_i^2}\,\p_{\lambda_{i1}}\p_{\lambda_{i2}}\p_{\lambda_{j1}}\p_{\lambda_{k2}}\cS_3(P) \right\}\biggr|_{P=s_ip_i+p_j+p_k}\,.
\end{multline}
We have again abbreviated the spacetime integrals by defining
\be
\cS_\ell(P) = \frac{1}{8\pi^2}\int\frac{\d^4x\;\e^{\im P\cdot x}}{(x^2/2)^\ell} = \frac{1}{2\,\Gamma(\ell)}\int_0^\infty\frac{\d t}{t^{\ell-1}}\,\e^{-t P^2/2}\,.
\ee
Their value has been obtained using \eqref{fourierid}.

In our case, $\frac12P^2=\frac12(s_ip_i+p_j+p_k)^2 = s_i\la i\,j\ra[i\,j]+s_i\la i\,k\ra[i\,k]+\la j\,k\ra[j\,k]$. As in the previous sections, the $t$ integral is understood in the sense of analytic continuation: evaluated for $P^2>0$, then continued to complex $P^\mu$ away from the singular locus $P^2=0$. Thereby performing the spacetime integrals as well as the $t$ and $s_i$ integrals, one finds a simple if somewhat obscure expression
\be\label{Iijkval}
    \cI_{ijk} = -\frac{(\lambda_{j1}[i\,j]+\lambda_{k1}[i\,k])(\lambda_{j2}[i\,j]+\lambda_{k2}[i\,k])}{\lambda_{i1}\lambda_{i2}\,\la j\,k\ra^2}\,\frac{(p_i+p_j+p_k)^2+(p_j+p_k)^2}{(p_i+p_j+p_k)^4}\,.
\ee
We will soon see that it drastically simplifies when combined with $\cJ_{ijk}$.

The second term \eqref{rhoJ} can also be written in coordinates as
\be
\cJ_{ijk} = \frac{1}{8\pi^2}\int\d^4x\,\frac{\p^2\rho_i^{E,0}}{\p u^{\dal}\p\hat u^{\dot\beta}}\bigg(\frac{\p\phi_j^1}{\p u_{\dal}}\,\frac{\p\phi_k^0}{\p\hat u_{\dot\beta}} + \frac{\p\phi_k^0}{\p u_{\dal}}\,\frac{\p\phi_j^1}{\p\hat u_{\dot\beta}}\bigg)\,.
\ee
With the aid of \eqref{rhoE01}, \eqref{phi01}, this is found to reduce to
\begin{multline}
    \cJ_{ijk} = \frac{1}{\lambda_{i1}^2\lambda_{i2}^2\lambda_{k1}\lambda_{k2}}\int_0^1\d s_j\,\bigg\{[i\,j][i\,k](\lambda_{j1}\lambda_{k2}+\lambda_{j2}\lambda_{k1})\p_{\lambda_{j1}}\p_{\lambda_{j2}}\cS_1(P)\\
    + \frac{[i\,k]}{s_j}\big(\lambda_{k2}\p_{\lambda_{i1}}\p_{\lambda_{j2}}^2-\lambda_{k1}\p_{\lambda_{i2}}\p_{\lambda_{j1}}^2\big)\cS_2(P)\bigg\}\biggr|_{P=p_i+s_jp_j+p_k}\,.
\end{multline}
Performing all the integrals, we are left with
\be\label{Jijkval}
\cJ_{ijk} = -\frac{4\,[i\,k]\,(\lambda_{i1}[j\,i]+\lambda_{k1}[j\,k])(\lambda_{i2}[j\,i]+\lambda_{k2}[j\,k])}{\lambda_{i1}\lambda_{i2}\,\la i\,k\ra\,(p_i+p_j+p_k)^4}\,.
\ee
With this, we have all the ingredients we need.

Substituting \eqref{Iijkval} and \eqref{Jijkval} into \eqref{Arhophi2}, and symmetrizing over $j\leftrightarrow k$ as instructed, one finds dramatic cancellations. The final result for the 3-point amplitude collapses to
\be\label{Arhoijval}
A(\rho^E_i,\phi_j,\phi_k) = -\frac{N\,\tr(\sT_{a_j}\sT_{a_k})}{\la j\,k\ra^2}\left\{\frac{[i\,j]}{\la i\,j\ra}\frac{\lambda_{j1}\lambda_{j2}}{\lambda_{i1}\lambda_{i2}} + \frac{[i\,k]}{\la i\,k\ra}\frac{\lambda_{k1}\lambda_{k2}}{\lambda_{i1}\lambda_{i2}}\right\} + \mathrm{O}(N^2)\,.
\ee
An easy check is that this has little group weight $-4$ in $i$, and weight $-2$ in $j,k$ each. Yet again, this matches the chiral algebra 3-point prediction \eqref{rhophi2pred} if one sets $\lambda_{i\al}=(1,z_i)$, etc.

It is remarkable that the 3-particle singularities $1/(p_i+p_j+p_k)^4$ completely drop out, leaving one with only 2-particle singularities that a CFT can generate. Of course, one also finds natural singularities at the defect loci $z_i=0,\infty$, but at this stage they result purely from dressing the boundary operators with factors of $z_i$ to ensure the correct scaling dimensions.\footnote{The expression \eqref{Arhoijval} is reminiscent of Hodges' formula \cite{Hodges:2012ym} for MHV graviton amplitudes in flat space, except here the reference spinors are the defect locations, and they do not drop out because we no longer have momentum conservation.} It would be interesting to find genuine defect effects by matching nontrivial defect conformal blocks of the chiral algebra to form-factors of operators wrapping the core of Burns space.




\section*{Acknowledgements}

We are grateful to Roland Bittleston, Eduardo Casali, Maciej Dunajski, Joel Fine, Davide Gaiotto, Manki Kim, Claude LeBrun, Lionel Mason, Walker Melton, Rashmish Mishra, Prahar Mitra, Sujay Nair, Shruthi Narayanan, Sabrina Pasterski, David Skinner and Andy Strominger for enlightening discussions. K.C. is supported by the NSERC Discovery Grant program and by the Perimeter Institute for Theoretical Physics. Research at Perimeter Institute is supported by the Government of Canada through Industry Canada and by the Province of Ontario through the Ministry of Research and Innovation.  NP is supported by the University of Washington and the DOE Early Career Research Program under award DE-SC0022924. AS is supported by a Black Hole Initiative fellowship, funded by the Gordon and Betty Moore Foundation and the John Templeton Foundation. He has also received support from the ERC grant GALOP ID: 724638 during earlier stages of this work.


\begin{appendix}

\section{More on curved twistor spaces}
\label{app:twistor}

\subsection{Twistors for self-dual spacetimes}

Let $M$ be an oriented 4-manifold equipped with a Riemannian metric $g$. Take $\theta^a$, $a=0,1,2,3$, to be a local frame of the cotangent bundle of $M$ satisfying $g = \delta_{ab}\theta^a\theta^b$. We define its spinor equivalent $\theta^{\al\dal}$, $\al=1,2$, $\dal=\dot1,\dot2$, as follows
\be
\theta^{\al\dal} = \frac{1}{\sqrt2}\begin{pmatrix}\theta^0+\im\,\theta^3&&\theta^2+\im\,\theta^1\\-\theta^2+\im\,\theta^1&&\theta^0-\im\,\theta^3\end{pmatrix}\,.
\ee
In terms of $\theta^{\al\dal}$, the metric can be expressed as
\be
g = \eps_{\al\beta}\,\eps_{\dal\dot\beta}\,\theta^{\al\dal}\,\theta^{\beta\dot\beta}
\ee
where $\eps_{\al\beta}$, $\eps_{\dal\dot\beta}$ are $2\times2$ Levi-Civita symbols. Since $\theta^a$ are real, we observe that 
\be\label{realtheta}
\theta^{2\dot1}=-\overline{\theta^{1\dot2}}\,,\qquad\theta^{2\dot2}=\overline{\theta^{1\dot1}}\,.
\ee
The dual frame of vector fields will be denoted $e_{\al\dal}$, so that $e_{\al\dal}\ip\theta^{\beta\dot\beta}=\delta_\al^\beta\delta_{\dal}^{\dot\beta}$. Let $\Lambda^-$ and $\Lambda^+$ be the rank 3 bundles of 2-forms on $M$ that are respectively ASD or SD with respect to $g$. Denote by $\Sigma^{\al\beta}\equiv\Sigma^{(\al\beta)}$ and $\tilde\Sigma^{\dal\dot\beta}\equiv\tilde\Sigma^{(\dal\dot\beta)}$ the following choice of local frames for $\Lambda^-$ and $\Lambda^+$ respectively:
\be\label{sigmabasis}
\Sigma^{\al\beta} = \eps_{\al\beta}\,\theta^{\al\dal}\wedge\theta^{\beta\dot\beta}\,,\qquad\tilde\Sigma^{\dal\dot\beta} = \eps_{\dal\dot\beta}\,\theta^{\al\dal}\wedge\theta^{\beta\dot\beta}\,.
\ee
This choice is standard in the literature \cite{Capovilla:1991qb}. These 2-forms obey the reality conditions $\overline{\Sigma^{12}}=-\Sigma^{12}$, $\overline{\Sigma^{22}}=\Sigma^{11}$, etc.

The Levi-Civita connection $\Gamma_{\al\beta}\equiv\Gamma_{(\al\beta)}$ on $\Lambda^-$ obeys the structure equation
\be\label{sigmastruc}
\d\Sigma^{\al\beta} = 2\,\Gamma^{(\al}{}_\gamma\wedge\Sigma^{\beta)\gamma}\,,
\ee
where spinor indices are raised or lowered using the conventions \eqref{conventions}. It is called the ASD spin connection. Rotations of the frame $\Sigma^{\al\beta}$ induce $\SU(2)$ gauge transformations on $\Gamma_{\al\beta}$. The curvature of the ASD spin connection is given by
\be
R_{\al\beta} = \d\Gamma_{\al\beta} + \Gamma_\al{}^\gamma\wedge\Gamma_{\gamma\beta}\,.
\ee
This computes the ASD part of the Riemann tensor 2-form. It admits a well-known decomposition into irreps of $\Spin(4)\simeq\SU(2)\times\SU(2)$:
\be
R_{\al\beta} = \Psi_{\al\beta\gamma\delta}\Sigma^{\gamma\delta} + \Phi_{\al\beta\dal\dot\beta}\tilde\Sigma^{\dal\dot\beta} + \frac{R}{12}\,\Sigma_{\al\beta}
\ee
where $\Psi_{\al\beta\gamma\delta}$ is the spinor equivalent of the ASD Weyl curvature, $\Phi_{\al\beta\dal\dot\beta}$ the trace-free Ricci tensor, and $R$ the Ricci scalar.  

Similarly, one can also define the SD spin connection $\tilde\Gamma_{\dal\dot\beta}$ via the structure equation $\d\tilde\Sigma^{\dal\dot\beta} = 2\,\tilde\Gamma^{(\dal}{}_{\dot\gamma}\wedge\tilde\Sigma^{\dot\beta)\dot\gamma}$, and its curvature tensors are defined analogously. Putting $\Gamma_{\al\beta}$ and $\tilde\Gamma_{\dal\dot\beta}$ together, one obtains the full Levi-Civita connection $\Gamma_{\al\dal\beta\dot\beta} = \eps_{\al\beta}\tilde\Gamma_{\dal\dot\beta}+\eps_{\dal\dot\beta}\Gamma_{\al\beta}$. This fits into Cartan's structure equation for the coframe, $\d\theta^{\al\dal} = -\Gamma^{\al\dal}{}_{\beta\dot\beta}\wedge\theta^{\beta\dot\beta}$. We will work chirally and mainly use the ASD spin connection for the construction of twistor spaces.

\medskip

Twistor space $Z$ can be defined as the projective bundle of undotted (left-handed) Weyl spinors on $M$. This is well-defined even if $M$ is not spin, because spinor representations of $\Spin(4)$ descend to well-defined projective representations of $\SO(4)$. Alternatively, it is diffeomorphic to the sphere bundle $S(\Lambda^-)$ of the bundle of ASD 2-forms on $M$.

Let $\sigma_\al$ be homogeneous coordinates on the $\CP^1$ fibers of the projection $\pi:Z\to M$. At a given $x\in M$, a point $\sigma_\al$ on its twistor line $L_x = \pi^{-1}(\{x\})$ labels the almost complex structure on $T^*_xM\otimes\C$ whose $(1,0)$-forms are spanned by $\sigma_\al\theta^{\al\dal}$. It is straightforward to verify that this is  metric and orientation compatible. Also introduce the quaternionic conjugates $\hat\sigma_\al = (-\overline{\sigma_2},\overline{\sigma_1})$ that transform in the same $\SU(2)$ representation as $\sigma_\al$. They satisfy the useful property $\la\hat\sigma\,\sigma\ra=\|\sigma\|^2\equiv|\sigma_1|^2+|\sigma_2|^2$. The map $\sigma_\al\mapsto\hat\sigma_\al$ is just the familiar antipodal map at the level of the $\CP^1$ fibers. It lifts to $Z$ as an involution without fixed points, thereby inducing a real structure on $Z$.

Now, whereas horizontal forms on $Z$ are canonically defined to be the span of pullbacks of $\theta^{\al\dal}$, we need a connection on $Z$ to define vertical 1-forms. This is provided by the ASD spin connection, in terms of which we take
\be
\nabla\sigma^\al = \d\sigma^\al-\Gamma^\al{}_\beta\sigma^\beta\,,\qquad\nabla\hat\sigma^\al = \d\hat\sigma^\al-\Gamma^\al{}_\beta\hat\sigma^\beta
\ee
as a frame for the bundle of vertical 1-forms. The reality conditions on $\Gamma^{\al}{}_{\beta}$ resulting from those on $\Sigma^{\al\beta}$ ensure that $\nabla\hat\sigma^\al = (\widehat{\nabla\sigma})^\al$. The Atiyah-Hitchin-Singer almost complex structure is defined by declaring
\be\label{10basis}
\tau = \sigma_\al\nabla\sigma^\al\,,\qquad\theta^{\dal} = \sigma_\al\theta^{\al\dal}
\ee
to be a local basis of $(1,0)$-forms $\Omega^{1,0}(Z)$ and
\be\label{01basis}
\bar\tau = \frac{\hat\sigma_\al\nabla\hat\sigma^\al}{\|\sigma\|^4}\,,\qquad\bar\theta^{\dal} = -\frac{\hat\sigma_\al\theta^{\al\dal}}{\|\sigma\|^2}
\ee
as a standard local basis of the complex conjugate $(0,1)$-forms $\Omega^{0,1}(Z)$. They are designed to have zero scaling weight in $\hat\sigma_\al$. In this almost complex structure, the canonical bundle $K_Z=\Omega^{3,0}(Z)$ admits the local frame
\be
\tau\wedge\theta^{\dot1}\wedge\theta^{\dot2}\,.
\ee
This has scaling weight $+4$ in the homogeneous coordinates $\sigma_\al$. As a result, local sections of $K_Z$ are given by $\varphi\,\tau\wedge\theta^{\dot1}\wedge\theta^{\dot2}$ for smooth functions $\varphi(x,\sigma,\hat\sigma)$ of weight $-4$ in $\sigma_\al$ and $0$ in $\hat\sigma_\al$. In particular, $K_Z$ restricts to the bundle $\CO(-4)\to\CP^1$ on every twistor line.

Let $\pi_{p,q}$ denote the projection of a $(p+q)$-form onto its $(p,q)$ part. Starting with the exterior derivative $\d$ on $Z$, the holomorphic and antiholomorphic exterior derivatives of smooth forms $\al\in\Omega^{p,q}(Z)$ are defined to be the projections $\p\al\vcentcolon=\pi_{p+1,q}(\d\al)$, $\dbar\al\vcentcolon=\pi_{p,q+1}(\d\al)$. The almost complex structure is said to be integrable if $\d=\p+\dbar$. This happens if and only if $\tau,\theta^{\dal}$ span a closed differential ideal, or equivalently when the kernel of $\tau,\theta^{\dal}$ forms an involutive distribution of vector fields. It is a textbook calculation to verify that
\begin{align}
    \d\theta^{\dal} &= \tau\wedge\bar\theta^{\dal}+\tilde\Gamma^{\dal}{}_{\dot\beta}\wedge\theta^{\dot\beta} + \frac{\la\hat\sigma\,\nabla\sigma\ra}{\|\sigma\|^2}\wedge\theta^{\dal}\nonumber\\
    &\equiv 0\quad\text{mod span}\{\tau,\theta^{\dal}\}\,,\\
    \d\tau &= \sigma^\al\sigma^\beta R_{\al\beta} + 2\,\frac{\la\hat\sigma\,\nabla\sigma\ra}{\|\sigma\|^2}\wedge\tau\nonumber\\
    &\equiv \sigma^\al\sigma^\beta\sigma^\gamma\sigma^\delta\Psi_{\al\beta\gamma\delta}\,\bar\theta_{\dal}\wedge\bar\theta^{\dal}\quad\text{mod span}\{\tau,\theta^{\dal}\}\,.
\end{align}
Hence, the Atiyah-Hitchin-Singer almost complex structure is integrable if and only if the ASD Weyl curvature vanishes \cite{Penrose:1976js,Atiyah:1978wi}, i.e., $(M,g)$ is self-dual. See \cite{Mason:1991rf,Dunajski:2010zz,Ward:1990vs} for helpful reviews.


\subsection{Applications to scalar-flat K\"ahler geometry}

Next, let $g$ be scalar-flat K\"ahler, or equivalently self-dual and K\"ahler. The associated K\"ahler form $\omega$ gives a classic example of a globally defined, nowhere vanishing ASD 2-form (in our conventions). With the choice of frame \eqref{sigmabasis} for $\Lambda^-$, it can be expanded as
\be\label{omegaexp}
\omega = \im\,\omega_{\al\beta}\Sigma^{\al\beta}\,.
\ee
In our orientation, this should satisfy $\frac{1}{2!}\,\omega^2 = -\vol_g$. Using $\Sigma^{\al\beta}\wedge\Sigma^{\gamma\delta} = 4\eps^{\al(\gamma}\eps^{\delta)\beta}\vol_g$, this constrains the determinant of the coefficient matrix $\omega_{\al\beta}$ to be
\be
\det\omega_{\al\beta} = \omega_{11}\omega_{22}-\omega_{12}^2 = -\frac14\,.
\ee
Demanding $\d\omega=0$ on $M$ yields the twistor equation (see lemma 1.1 in \cite{pontecorvo})
\be\label{tweq}
\nabla_{(\gamma|\dot\gamma}\omega_{|\al\beta)}=0\,,
\ee
where $\nabla_{\gamma\dot\gamma}\omega_{\al\beta} = e_{\gamma\dot\gamma}\omega_{\al\beta} + 2\,\omega_{\delta(\al}\Gamma^\delta{}_{\beta)\gamma\dot\gamma}$ is the covariant derivative. We can pull back $\omega_{\al\beta}$ to $Z$ and, following \cite{pontecorvo}, use it to define a section of the bundle $K_Z^{-1/2}$,
\be\label{omegacech}
\check{\omega} = \omega_{\al\beta}\sigma^\al\sigma^\beta\,.
\ee
As expected, this is a section of $\CO(2)$ on every twistor line $L_x$. Computing its antiholomorphic exterior derivative in the complex structure of $Z$ yields
\be\label{dbaromegacech}
    \dbar\check\omega = \sigma^\al\sigma^\beta\,\pi_{0,1}(\nabla\omega_{\al\beta})= \sigma^\al\sigma^\beta\sigma^\gamma\nabla_{\gamma\dot\gamma}\omega_{\al\beta}\,\bar\theta^{\dot\gamma} = 0\,,
\ee
where $\nabla\omega_{\al\beta}\equiv\theta^{\gamma\dot\gamma}\,\nabla_{\gamma\dot\gamma}\omega_{\al\beta}$. In this way, the K\"ahler form on $M$ lifts to a holomorphic global section of $K_Z^{-1/2}$. Due to the reality condition $\overline\omega=\omega$, this section is pure imaginary under the real structure of $Z$, i.e., it satisfies $\overline{\check{\omega}(x,\sigma)} = -\omega(x,\hat\sigma)$. 

Since $\check{\omega}$ and $\tau\wedge\theta^{\dot1}\wedge\theta^{\dot2}$ are both holomorphic, we conclude that scalar-flat K\"ahler metrics give rise to a meromorphic 3-form on their twistor spaces:
\be
\Omega = \frac{\tau\wedge\theta^{\dot1}\wedge\theta^{\dot2}}{\check{\omega}^2}\,.
\ee
This is manifestly invariant under scalings of $\sigma_\al$, so is projectively well-defined. Since $\check{\omega}$ is a global section, $\Omega$ is globally defined on $Z-\{\check{\omega}=0\}$. Its only singularities are double poles at the locations of the zeroes of $\check{\omega}$. Let us verify that we can recover the K\"ahler form \eqref{omegaexp} from this $3$-form by contour integrating around one of these poles as in \eqref{omegacirc}. 
Introduce affine coordinates $\sigma_\al=(1,\sigma)$. At fixed $x\in M$, we can factorize 
\be\label{omegapol}
\check{\omega} = \omega_{11}\sigma^2-2\omega_{12}\sigma+\omega_{22} = \omega_{11}(\sigma-\sigma_+)(\sigma-\sigma_-)
\ee
in terms of $x$-dependent roots $\sigma_\pm(x)$ given by
\be
\sigma_\pm = \frac{\omega_{12}\pm\sqrt{\omega_{12}^2-\omega_{11}\omega_{22}}}{\omega_{11}}\,.
\ee
Similarly, we can write $\theta^{\dal} = \theta^{1\dal}+\sigma\theta^{2\dal}$ and $\tau = \d\sigma+\Gamma_{11}\sigma^2-2\Gamma_{12}\sigma+\Gamma_{22}$. Performing a contour integral around the pole $\sigma=\sigma_-$, we obtain
\be
\frac{1}{2\pi}\oint_{|\sigma-\sigma_-|=\veps}\Omega = \frac{\im\,[\Sigma^{11}+(\sigma_++\sigma_-)\Sigma^{12}+\sigma_+\sigma_-\Sigma^{22}]}{\omega_{11}^2(\sigma_+-\sigma_-)^3} = \frac{\im\,\omega_{\al\beta}\Sigma^{\al\beta}}{(-4\det\omega_{\al\beta})^{3/2}}\,.
\ee
Remembering that $\det\omega_{\al\beta}=-\frac14$, this reduces precisely to \eqref{omegaexp}. This also fixes the overall normalization $\omega_{11}$ to be $\omega_{11}=(\sigma_+-\sigma_-)^{-1}$.

Geometrically, the roots $\sigma_\pm(x)$ define almost complex structures on $M$ with $(1,0)$-forms spanned by $\theta^{1\dal}+\sigma_\pm\theta^{2\dal}$. The reality condition $\overline{\check{\omega}(x,\sigma)}=-\check{\omega}(x,\hat\sigma)$ requires that $\sigma_+=-1/\overline{\sigma_-}$, i.e., $\sigma_+$ is antipodal to $\sigma_-$. So the two almost complex structures are conjugate to each other. As a result,
\be\label{omega+-}
\omega = \frac{\im\,\eps_{\dal\dot\beta}\,(\theta^{1\dal}+\sigma_-\theta^{2\dal})\wedge(\theta^{1\dot\beta}+\sigma_+\theta^{2\dot\beta})}{(\sigma_+-\sigma_-)}
\ee
is of type $(1,1)$ in either of these almost complex structures. Without loss of generality, we can then identify $\sigma_-$ with the complex structure with respect to which $\omega$ is the K\"ahler form on $M$. The associated $(0,1)$-vector fields on $M$ are spanned by $\sigma_-e_{1\dal}-e_{2\dal}$ and define the antiholomorphic exterior derivative of smooth functions $f$ on $M$,
\be\label{dbarM}
\dbar f = \frac{(\theta^{2\dal}-\overline{\sigma_-}\theta^{1\dal})}{1+|\sigma_-|^2}\,(e_{2\dal}-\sigma_-e_{1\dal})f\,.
\ee
The holomorphic exterior derivative $\p$ on $M$ is defined by complex conjugation.

\medskip

The construction of $\check{\omega}$ from $\omega$ is a special case of the Penrose transform relating solutions of the twistor equation \eqref{tweq} to elements of $H^0(Z,K_Z^{-1/2})$ \cite{Hitchin:1980hp}. In particular, the converse of this correspondence also holds \cite{lebrun1992twistors}: any holomorphic global section $\check{\omega}\in H^0(Z,K_Z^{-1/2})$ vanishing at exactly two distinct points on every $L_x$ and satisfying $\overline{\check{\omega}(x,\sigma)} = -\omega(x,\hat\sigma)$ gives rise to a K\"ahler metric on $M$ that is conformal to its self-dual metric $g$.

Indeed, since $\check{\omega}|_{L_x}$ is a global section of $\CO(2)\to\CP^1$, one can use Liouville's theorem to conclude that it is a degree 2 polynomial in $\sigma$ of the form $\check{\omega}|_{L_x} = \omega_{\al\beta}(x)\sigma^\al\sigma^\beta$. Let $\sigma_\al=(1,\sigma_\pm(x))$ be the roots of $\check{\omega}|_{L_x} = 0$, and use them to define a normalized global section of the projective spinor bundle of $M$,
\be
\psi_\al = \frac{(1,\sigma_-)}{\sqrt{1+|\sigma_-|^2}}\,,\qquad\la\hat\psi\,\psi\ra = \|\psi\|^2 = 1\,.
\ee
The reality condition $\sigma_+=-1/\overline{\sigma_-}$ allows us to express the coefficients $\omega_{\al\beta}$ as
\be
\omega_{\al\beta} = \varphi\psi_{(\al}\hat\psi_{\beta)}
\ee
where $\varphi(x)$ is a real and nowhere vanishing normalization factor that will play the role of conformal scale. 

As seen via \eqref{dbaromegacech}, the holomorphicity of $\check{\omega}$ is equivalent to the twistor equation \eqref{tweq} for $\omega_{\al\beta}$. Contracting the latter with $\psi^\al\psi^\beta\psi^\gamma$ yields
\be\label{covpsi}
\psi^\al\psi^\beta\psi^\gamma\nabla_{\al\dal}\omega_{\beta\gamma} = 0\implies\psi^\al\psi^\beta\nabla_{\al\dal}\psi_\beta = 0\,.
\ee
At the same time, a short calculation shows that
\be
\begin{split}
    \d(\psi_\al\theta^{\al\dal}) &= \nabla_{\beta\dot\beta}\psi_\al\,\theta^{\beta\dot\beta}\wedge\theta^{\al\dal} + \tilde\Gamma^{\dal}{}_{\dot\beta}\wedge\psi_\beta\theta^{\beta\dot\beta}\\
    &\equiv \psi^\al\psi^\beta\nabla_{\beta\dot\beta}\psi_\al\,\hat\psi_{\delta}\theta^{\delta\dot\beta}\wedge\hat\psi_{\gamma}\theta^{\gamma\dal}\quad\text{mod span}\{\psi_\al\theta^{\al\dal}\}\,.
\end{split}
\ee
Hence, \eqref{covpsi} is sufficient to ensure that the 1-forms $\psi_\al\theta^{\al\dal}$ span a closed differential ideal and give rise to a complex structure on $M$. The spacetime 2-form $\omega=\im|\varphi|\psi_\al\hat\psi_\beta\Sigma^{\al\beta}$ is then closed by virtue of \eqref{tweq}, positive, and is of type $(1,1)$ in this complex structure, so it yields a K\"ahler form on $M$. The associated K\"ahler metric is
\be
\begin{split}
    2\,\omega_{\al\beta}\eps_{\dal\dot\beta}\,\theta^{\al\dal}\theta^{\beta\dot\beta} &= 2|\varphi|\cdot\psi_\al\theta^\al{}_{\dal}\cdot\hat\psi_\beta\theta^{\beta\dal} \\
    &= |\varphi|\,\la\hat\psi\,\psi\ra\,\theta^{\al\dal}\theta_{\al\dal} = |\varphi|\, g\,,
\end{split}
\ee
which is conformal to the SD metric $g$ that we originally used to construct twistor space. Choosing the conformal scale to be $\varphi=1$ is then equivalent to normalizing the K\"ahler form to satisfy $\omega^2 = -2\,\vol_g$. Lastly, scalar-flatness follows from self-duality of $g$, as proven for instance in \cite{lebrun1986topology}.


\section{Reducing twistor actions to spacetime}
\label{app:twac}

In sections \ref{sec:open} and \ref{sec:closed}, we mentioned certain holomorphic theories on twistor space $Z$ and the 4-dimensional integrable field theories that they give rise to on a scalar-flat K\"ahler spacetime $(M,g)$. In this appendix, we provide more details on reducing our twistor actions to spacetime via compactification along the $\CP^1$ fibers of the twistor fibration $\pi:Z\to M$. In performing such reductions, we will use the local presentation of twistor spaces described in appendix \ref{app:twistor}.


\subsection{Holomorphic Chern-Simons theory}
\label{app:hcsred}

Let $E\to Z$ be a complex vector bundle with structure group $G$, and let $\cA$ be a partial connection on it as in section \ref{sec:open}. We start by studying the holomorphic Chern-Simons action given in \eqref{hcs}. Its reduction to spacetime will be performed by generalizing the methods of \cite{Mason:2005zm} to twistor actions on curved self-dual backgrounds.

The partial connection $\cA$ can be decomposed into vertical and horizontal $(0,1)$-forms on $Z$ in the basis \eqref{01basis}:
\be
\cA = \cA_0\bar\tau + \cA_{\dal}\bar\theta^{\dal}\,.
\ee
The vertical part $\cA_0\bar\tau$ acts as a partial connection on the restriction of $E$ to the twistor lines $L_x$. $\cA$ is weightless in the homogeneous coordinates $\sigma_\al$, so $\cA_0$ must have weight $+2$ and $\cA_{\dal}$ must have weight $+1$. Similarly, all of these have weight $0$ in the complex conjugates $\hat\sigma_\al$. A short computation using $\Psi_{\al\beta\gamma\delta}=R=0$ establishes that
\be\label{dbara}
\dbar \cA = \left(\dbar_0\cA_{\dal}-\dbar_{\dal}\cA_0\right)\bar\tau\wedge\bar\theta^{\dal} + \frac{1}{2}\left(\dbar_{\dal}\cA^{\dal} + \sigma^\al\tilde\Gamma_{\dal\dot\beta\al}{}^{\dot\beta} \cA^{\dal}\right)\bar\theta_{\dot\gamma}\wedge\bar\theta^{\dot\gamma}\,,
\ee
written in terms of a basis of $(0,1)$-vector fields on $Z$,
\be
\dbar_0 = -\|\sigma\|^2\,\sigma_\al\frac{\p}{\p\hat\sigma_\al}\,,\qquad\dbar_{\dal} = \sigma^\al\bigg(e_{\al\dal} + \sigma^\beta\Gamma_{\beta\gamma\al\dal}\frac{\p}{\p\sigma_\gamma}+\hat\sigma^\beta\Gamma_{\beta\gamma\al\dal}\frac{\p}{\p\hat\sigma_\gamma}\bigg)\,.
\ee
This can be further simplified by noticing that $\sigma^\al e_{\al\dal}\cA^{\dal} + \sigma^\al\tilde\Gamma_{\dal\dot\beta\al}{}^{\dot\beta} \cA^{\dal} = \sigma^\al\nabla_{\al\dal}\cA^{\dal}$, where $\nabla_{\al\dal}$ is the Levi-Civita connection of $g$. This is useful when performing manipulations involving integration-by-parts in actions like \eqref{hcs}.

Plugging \eqref{dbara} in the action \eqref{hcs} shows that $\cA_{\dal}$ occurs quadratically. Integrating out $\cA_{\dal}$ is then equivalent to imposing its equation of motion, which is seen to be
\be\label{adaleq}
\dbar_0\cA_{\dal} - \dbar_{\dal}\cA_0 + [\cA_0,\cA_{\dal}] = 0\,.
\ee
This is the gauge constraint associated to the vertical part of the gauge symmetry \eqref{asym}. To eliminate it, we start by defining a frame $H\in\text{Maps}(Z,G)$ on the vector bundle $E\to Z$ that trivializes the partial connection $\cA$ on every twistor line,
\be\label{avH}
\cA_0 = H^{-1}\dbar_0 H\,.
\ee
This is possible because $E$ was assumed to be trivial along the twistor lines. Note that $H$ has homogeneity $0$ in $\sigma_\al,\hat\sigma_\al$. In this frame, \eqref{adaleq} allows us to show that
\be
\dbar_0\!\left(H(\dbar_{\dal}+\cA_{\dal})H^{-1}\right) = H[\dbar_0,\dbar_{\dal}]H^{-1} = 0\,,
\ee
having used the fact $[\dbar_0,\dbar_{\dal}]=0$ that can be verified by direct computation by remembering that $\Gamma_{\al\beta\gamma\dot\gamma}=\Gamma_{\beta\al\gamma\dot\gamma}$. Hence, $H(\dbar_{\dal}+\cA_{\dal})H^{-1}$ are a pair of weight $1$, globally holomorphic functions on each $\CP^1$ fiber. By Liouville's theorem, they must be linear in $\sigma^\al$, so that
\be\label{ahH}
H(\dbar_{\dal}+\cA_{\dal})H^{-1}=\sigma^\al A_{\al\dal}
\ee
for some Lie algebra valued covector $A_{\al\dal}$ on $M$. 

Putting \eqref{avH} and \eqref{ahH} together, we obtain the partial solution
\be\label{asol}
\cA = H^{-1}(\dbar+\pi_{0,1}A)H
\ee
of the equations of motion of holomorphic Chern-Simons theory. Here, 
\be
\pi_{0,1}A = \sigma^\al A_{\al\dal}\bar\theta^{\dal}
\ee
is the $(0,1)$ part of the pullback of $A = A_{\al\dal}\theta^{\al\dal}$ to $Z$, and 
\be
\dbar H = \dbar_0H\,\bar\tau + \dbar_{\dal}H\,\bar\theta^{\dal}
\ee
can be obtained from the definition of $\dbar$. To physically interpret $A$, one constructs a vector bundle $E_M\to M$ from the bundle $E\to Z$ satisfying $\pi^*E_M=E$.\footnote{Again, such a bundle exists because $E$ is trivial along the fibers of $\pi:Z\to M$. The fiber of $E_M$ over $x\in M$ is taken to be the vector space of sections $\Gamma(L_x,E)$ that are holomorphic with respect to $(\dbar+a)|_{L_x}$.} Then $A$ provides a connection on $E_M$. 

The boundary conditions \eqref{abdry} on $\cA$ can be used to further constrain $A$. To do this, let us specialize to affine coordinates $\sigma_\al=(1,\sigma)$. In \eqref{omegapol}, we denoted the zeroes of $\check{\omega}$ by $\sigma=\sigma_\pm$. We gauge fix $\cA$ by imposing the conditions \cite{Bittleston:2020hfv,Penna:2020uky}
\begin{itemize}
    \item $H$ depends on $\sigma$ only through $|\sigma|$ and is independent of arg$(\sigma)$,
    \item $H = \mathbbm{1}$ in a small neighborhood of $\sigma=\sigma_+$ on each $L_x$,
    \item $H = \sg$ in a small neighborhood of $\sigma=\sigma_-$ on each $L_x$,
\end{itemize}
where $\sg(x)$ is a Lie algebra valued function on $M$ containing the dynamical degree of freedom in our theory. Evaluating \eqref{asol} at $\sigma=\sigma_\pm$ in this gauge, and using the boundary conditions $\cA|_{\sigma=\sigma_\pm}=0$, we can solve for the spacetime gauge field in terms of $\sg$ to find
\be
A = -\dbar\sg\,\sg^{-1}\,,
\ee
where the spacetime antiholomorphic derivative $\dbar\sg$ is defined as in \eqref{dbarM} by setting $f=\sg$.

Substituting \eqref{asol} in the holomorphic Chern-Simons action \eqref{hcs} and using the fact that $\pi_{0,1}A$ is purely horizontal, we are left with
\be\label{hcs1}
S_\text{hCS}[\cA] = \frac{1}{2(2\pi\im)^3}\int_Z\Omega\wedge\dbar\,\tr\!\left(\dbar H\,H^{-1}\wedge \pi_{0,1}A\right)-\frac{1}{6(2\pi\im)^3}\int_Z\Omega\wedge\tr\!\left(H^{-1}\dbar H\right)^3\,.
\ee
Integrating the $\dbar$ in the first term by parts generates a factor of
\be
    \dbar\Omega 
    = -2\pi\im\,(\sigma_+-\sigma_-)^2\left\{\frac{\p_\sigma\delta(\sigma-\sigma_-)}{(\sigma-\sigma_+)^2} + \frac{\p_\sigma\delta(\sigma-\sigma_+)}{(\sigma-\sigma_-)^2}\right\}(1+|\sigma|^2)^2\,\bar\tau\wedge\tau\wedge\theta^{\dot1}\wedge\theta^{\dot2}\,,
\ee
having used $\dbar_0=(1+|\sigma|^2)^2\,\p_{\bar\sigma}$, the identity $\p_{\bar\sigma}(\sigma-z)^{-1}=2\pi\im\,\delta(\sigma-z)$, and the normalization condition $\omega_{11}(\sigma_+-\sigma_-)=\sqrt{-4\det\omega_{\al\beta}}=1$ encountered in appendix \ref{app:twistor}. Using this to simplify the first term of \eqref{hcs1} after integration by parts, we see that the contribution localized at $\sigma=\sigma_+$ vanishes in the aforementioned gauge. We are left with the contribution localized at $\sigma=\sigma_-$, which quickly reduces to
\begin{align}
    &\frac{1}{2(2\pi\im)^3}\int_Z\dbar\Omega\wedge\tr\!\left(\dbar H\,H^{-1}\wedge \pi_{0,1}A\right)\nonumber\\
    &= -\frac{1}{8\pi^2}\int\frac{\vol_g}{\sigma_+-\sigma_-}\,\tr\left\{\sg^{-1}(e_{2\dal}\sg-\sigma_+e_{1\dal}\sg)\,\sg^{-1}(e_{2\dal}\sg-\sigma_-e_{1\dal}\sg)\right\}\nonumber\\
    &= -\frac{\im}{8\pi^2}\int\omega\wedge\tr\left(\sg^{-1}\p\sg\wedge\sg^{-1}\dbar\sg\right)\,,
\end{align}
where we have used $(1+|\sigma|^2)^2\,\bar\tau\wedge\tau\wedge\theta^{\dot1}\wedge\theta^{\dot2}\wedge\bar\theta^{\dot1}\wedge\bar\theta^{\dot2} = -\d\bar\sigma\wedge\d\sigma\wedge\vol_g$ for performing the $\sigma$ integral. In going from the second to the third line, we have used the identity $\vol_g = -\frac12\,\omega^2$, the expression \eqref{omega+-} for the K\"ahler form, and the definition \eqref{dbarM} for the spacetime antiholomorphic derivative $\dbar\sg$ (and its complex conjugate $\p\sg$).

The second term in \eqref{hcs1} reduces to a five-dimensional Wess-Zumino term. Firstly, because $\Omega$ is a $(3,0)$-form, we can trivially replace $H^{-1}\dbar H$ by the de Rham differential $H^{-1}\d H$ on twistor space. Since $H$ is gauge fixed to be independent of the phase of $\sigma$, the integral over $\sigma$ now factorizes into a 5d integral times a contour integral of $\Omega$ over $\arg(\sigma)\in S^1$,
\be
-\frac{1}{6(2\pi\im)^3}\int_Z\Omega\wedge\tr\!\left(H^{-1}\d H\right)^3 = \frac{-\im}{48\pi^3}\int_{M\times\R_+}\bigg(\oint_{S^1}\Omega\bigg)\wedge\tr(H^{-1}\d H)^3\,.
\ee
The contour integral extracts the residue at one of the two poles of $\Omega$. Recalling \eqref{omegacirc}, this reproduces $-2\pi\omega$ (the sign arises from a choice of orientation). The integral over $\R_+$ can be converted to one over the interval $[0,1]$ by the change of variable $|\sigma|=-\log t$. Relabeling $H\equiv\tilde\sg$, one reproduces the Wess-Zumino term
\be
\frac{\im}{24\pi^2}\int_{M\times[0,1]}\omega\wedge\tr(\tilde\sg^{-1}\d\tilde\sg)^3
\ee
displayed in \eqref{Swzw}.


\subsection{BCOV theory}
\label{app:bcovred}

The reduction of gravitational theories like BCOV theory to spacetime is much harder in practice. Usually, one argues equivalence of BCOV to Mabuchi gravity by showing that their equations of motion coincide. The normalization of the spacetime action is then evaluated in terms of the normalization of the twistor action by working in the free limit and KK reducing just the kinetic terms along the twistor lines. 

In the rest of this appendix, we provide a brief summary of the on-shell equivalence. The interested reader is referred to \cite{Costello:2021bah} for more details, as well as for a determination of the normalization from a Green-Schwarz mechanism.\footnote{This is done in a flat space background, but the normalization of the action is of course background independent.} A similar but more explicit calculation may also be found in \cite{Mason:2005zm}, where the off-shell reduction of the twistor action of self-dual conformal gravity is worked out. This is relevant to Mabuchi gravity because Mabuchi gravity is just a gauge fixed K\"ahler subsector of self-dual conformal gravity.

Let $Z$, with Dolbeault operator $\dbar$, be the twistor space of a scalar-flat K\"ahler 4-manifold $(M,g)$ as usual. BCOV theory \cite{Bershadsky:1993cx} deals with complex structure deformations $\dbar\mapsto\dbar+\mu$ that are integrable (on-shell) and divergence-free (via an off-shell constraint):
\be
\dbar\mu + \frac{1}{2}\,[\mu,\mu] = 0\,,\qquad\p(\mu\ip\Omega)=0\,,
\ee
where $\Omega=\check{\omega}^{-2}\tau\wedge\theta^{\dot1}\wedge\theta^{\dot2}$ as before. Assuming that $\mu$ is on-shell, we can construct a deformed K\"ahler potential $K\mapsto K+\rho$ as follows.

Let $L_x$ be the twistor line of a point $x\in M$ in the undeformed twistor space $Z$. Let $L_x[\mu]$ denote the deformed twistor line. This is constructed as a rational curve in $Z$ that is holomorphic with respect to $\dbar+\mu$. If $\mu$ is small (but finite), then $L_x[\mu]$ will act as a small perturbation of $L_x$. In particular, if $L_x$ has normal bundle $\CO(1)\oplus\CO(1)$, then so will $L_x[\mu]$. In this case, $M$ continues to act as the moduli space of the deformed twistor lines.

Additionally, as is standard in the theory of deformations of Calabi-Yau structures, one can construct a deformed volume form from $\Omega$ using the Tian-Todorov lemma \cite{Bershadsky:1993cx,todorov1989weil,Tian1987SmoothnessOT,Barannikov:1997dwc}:
\be
\Omega[\mu] = \e^{\mu}\!\ip\Omega\,.
\ee
This is holomorphic with respect to $\dbar+\mu$, away from its poles. The polar divisor of $\Omega[\mu]$ continues to be the same as the polar divisor of $\Omega$: the quadric $\check{\omega}=0$. This is because $\mu$ has second order zeroes at $\check{\omega}=0$ and is regular everywhere else.

This is precisely the data that goes into the construction of a scalar-flat K\"ahler metric on $M$. Following Pontecorvo's theorem, the deformed K\"ahler form may be found from a contour integral of the deformed volume form around one of the zeroes of $\check{\omega}|_{L_x[\mu]}$:
\be
\varpi(x) = \frac{1}{2\pi}\oint_{\Gamma\subset L_x[\mu]}\Omega[\mu]\,.
\ee
Since the polar quadric $\check{\omega}$ is preserved, the complex structure $\p,\dbar$ on spacetime is undeformed. Only the K\"ahler structure gets deformed: $\omega\mapsto\varpi$. This allows us to introduce a scalar field $\rho$ that captures the deformation of the K\"ahler potential, $K\mapsto\cK=K+\rho$. That is, one sets
\be
\varpi = \im\,\p\dbar\cK = \omega + \im\,\p\dbar\rho
\ee
where $\omega=\im\,\p\dbar K$ is the background K\"ahler form.

The deformed K\"ahler metric is scalar-flat by the Penrose transform \cite{pontecorvo,lebrun1992twistors}. Hence, we conclude that on-shell Beltrami differentials $\mu$ map to K\"ahler perturbations $\rho$ on spacetime that give rise to scalar-flat metrics. The latter are precisely the solutions of the equations of motion of Mabuchi gravity. This confirms, at least on-shell, that BCOV theory reduces to Mabuchi gravity on spacetime. Unlike WZW$_4$, it remains an important open problem to prove this directly at the level of the off-shell BCOV action. As of now, going beyond the preliminary results of \cite{Costello:2021bah} in this direction has been prohibitively difficult.


\section{BRST cohomology at large $N$} \label{app:BRST_computation}
Here we will verify that the BRST cohomology of the chiral algebra at large $N$ has a basis given by normally ordered products of modes of the currents $J[r,s]$, $E[r,s]$, $F[r,s]$ described before.

This computation is only concerned with the size of the BRST cohomology, and says nothing about the operator product.  As such, we can compute the BRST cohomology using a spectral sequence whose second page computes the \emph{classical} BRST cohomology.   The classical BRST cohomology is the Lie algebra cohomology of a certain infinite dimensional Lie algebra.

First, we build a graded Lie algebra we call $\mf{g}_N$ which encodes the gauge algebra and the symplectic reduction. This algebra is non-zero in degrees $0,1,2$:  
\begin{equation} 
	\mf{g}_N = \begin{cases} 
		\mf{sp}_{N}  &\text{ in degree } 0 \\
		\C^{8} \otimes F \oplus \wedge^2_0 F \otimes \C^2  & \text{ in degree } 1\\
		\mf{sp}_{N} & \text{ in degree } 2.
	\end{cases} 
\end{equation}
The copy of $\mf{sp}_{N}$ in degree zero acts on everything in the natural way, and the bracket of two degree $1$ elements is the moment map viewed as an element of $\mf{sp}_{N}$ in degree $2$. All other brackets are zero.

Then, the classical BRST cohomology is the relative Lie algebra cohomology\footnote{Here, and below, when we discuss Lie algebra homology or cohomology of an algebra like $\mf{g}_N[[z]]$, we mean in the topological sense.  Sometimes this is called Gelfand-Fuchs cohomology.}  of 
\begin{equation} 
	C^\ast(\mf{g}_N[[z]] \mid \mf{sp}_{N} ). 
\end{equation}
The use of relative Lie algebra cohomology corresponds to the fact that in the BRST complex, we should not include the $\c$-ghost, only its derivatives; invariance under constant gauge transformations in $\Sp(N)$ is imposed by hand.

At large $N$, this Lie algebra cohomology can be readily computed using a variant of the techniques developed by Loday-Quillen-Tsygan \cite{loday1984cyclic, tsygan1983homology}. The underlying cochain complex of the relative Lie algebra cohomology is
\begin{equation} 
	\Sym^\ast \left(   (\mf{g}_N[[z]] / \mf{sp}_{N} )^\vee[-1] \right)^{\Sp(N)} . 
\end{equation}
Classical invariant theory tells us that at large $N$, the ring of invariants under $\Sp(N)$ is freely generated by the open strings (products of fields starting and ending with $I$), and the closed strings. 

The differential does not necessarily preserve the number of open and closed strings.   This is because, inside any open or closed string expression, one might have the operator $\b$, and the BRST differential can turn this into $:\!II\!:$.   This is the only term in the differential which can change the number of open and closed strings.    This tells us that the different possibilities are:
\begin{equation} 
	\begin{split} 
		\text{open string} &\mapsto \text{one or two open strings} \\
		\text{closed string} & \mapsto \text{a closed string or an open string}.
	\end{split} 
\end{equation}
We can filter the complex by the number of open strings.  By taking the spectral sequence associated to this filtration,  we bring ourselves to a complex where the differential preserves (separately) the number of open and closed strings.

The cohomology of this complex is the graded symmetric algebra on the subcomplex consisting of states which are either a single open string, or a single closed string.   We will compute the cohomology of this complex and find that the open string part is spanned by the operator $J[r,s]$ and derivatives, and the closed string part by $E[r,s]$, $F[r,s]$ and derivatives.  

This will complete the proof. Indeed, all these expressions are in ghost number zero so that any further pages of the spectral sequences we have used must have zero differential. 

The computation of the cohomology for a single open or closed string uses the techniques of \cite{loday1984cyclic, tsygan1983homology}. We let $A$ be the graded commutative algebra
\begin{equation} 
	A = \C[[z,\eps_1,\eps_2]]
\end{equation}
where $\eps_1$, $\eps_2$ are anti-commuting variables of ghost number $1$. 

We let $\iota: A \to A$ be the  $\Z/2$ action which sends $\eps_i \to -\eps_i$.     Define an involution
\begin{equation} 
	\begin{split} 
		\rho : \mf{gl}_{2N} \otimes A & \mapsto \mf{gl}_{2N} \otimes A\\ 
		 Z & \mapsto -\iota ( Z^{t}	) 
	\end{split}	
\end{equation}
where $M^t$ is the symplectic transpose of $M$, defined so that $\mf{sp}(N)$ is the subalgebra $M + M^t = 0$.

Let
\begin{equation} 
	A_N \subset A \otimes \mf{gl}_{2N} 
\end{equation}
be the subspace of elements $Z \in \mf{gl}_{2N} \otimes A$ satisfying $\rho(Z) = Z$.  The space of such elements is
\begin{equation} 
	\begin{split} 
		\mf{sp}_{N}[[z]] & \text{ in degree } 0 \\
		\wedge^2 F \otimes \C^2 [[z]] & \text{ in degree } 1 \\ 
		\mf{sp}_{N}[[z]] & \text{ in degree } 2  
	\end{split}
\end{equation}
Thus, these elements match the $\b$, $X$ and $\c$ fields, except that we have not removed the trace from the $X$ fields.

We will use this description to compute the closed string part of the BRST cohomology, which is essentially the same as the single-trace part of the Lie algebra cohomology of $A_N$.  A small variant of the results of \cite{loday1984cyclic, tsygan1983homology} on dihedral homology tells us the following.
\begin{lemma}
	Let $HC^\ast(A)$ be the cyclic cohomology of $A$.   The single closed string part of the Lie algebra homology of $A_N$, at large $N$, is the $\Z/2$ invariants in $HC^\ast(A)[-1]$, where the $\Z/2$ action comes from an action on cyclic cochains described in the proof. 
\end{lemma}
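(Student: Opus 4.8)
The plan is to adapt the Loday--Quillen--Tsygan (LQT) theorem to the symplectic-type Lie algebra $A_N$, running the argument equivariantly with respect to the involution $\rho$. Recall that LQT computes the stable ($N\to\infty$) Lie algebra homology of $\mathfrak{gl}(A)$ for an associative (here graded-commutative) coefficient algebra $A$: the trace map induces an isomorphism of Hopf algebras $H_\ast^{\mathrm{Lie}}(\mathfrak{gl}(A)) \cong \Lambda^\ast\big(HC_{\ast-1}(A)\big)$, under which the primitive, single-trace (single closed string) part is precisely the cyclic homology $HC_{\ast-1}(A)$ with its degree shift; dually, the single closed string part of the relative Lie algebra cohomology is $HC^{\ast}(A)[-1]$. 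First I would recall this decomposition and, using the spectral sequence of the main text that separates the number of open and closed strings, identify the single-closed-string cochain complex with the cyclic cochain complex: a single closed string of length $m$ is a cyclic word $\op{Tr}(Z_1 z^{k_1}\cdots Z_m z^{k_m})$ in the fields $Z_i$, and the Chevalley--Eilenberg differential restricted to single strings is Connes' $B$-operator together with the internal differential of $A$.

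The essential new ingredient is the involution. Since $A_N\subset \mathfrak{gl}_{2N}\otimes A$ is the $+1$ eigenspace of $\rho(Z)=-\iota(Z^t)$, where $Z^t$ is the symplectic transpose and $\iota$ flips the signs of the ghost variables $\eps_1,\eps_2$, the relevant classical invariant theory is that of $\op{Sp}(N)$ rather than of the general linear group. The second step is therefore to invoke the first and second fundamental theorems of invariant theory for the symplectic group at large $N$: the $\op{Sp}(N)$-invariant functionals on tensor powers of the fundamental are generated by contractions with the symplectic form $\omega^{ij}$, and at large $N$ the indecomposable single-trace invariants are again cyclic words, but the symplectic pairing now identifies a word with its reverse. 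Concretely, $\op{Tr}(Z_1\cdots Z_m) = \pm\,\op{Tr}\big(\rho(Z_m)\cdots\rho(Z_1)\big)$, so that cyclic-word reversal composed with $\rho$ (hence with $\iota$ and the Koszul signs of the anticommuting $\eps_i$) is an involution on the space of closed strings; passing to $\op{Sp}(N)$-invariants retains exactly the $\mathbb{Z}/2$-fixed combinations.

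The third step is to check that this $\mathbb{Z}/2$-action commutes with the cyclic differential, so it descends to $HC^\ast(A)$, and that taking invariants commutes with taking cohomology. Both hold in characteristic zero, because the action is by an involution of a $\mathbb{Q}$-vector space, so the invariants are a direct summand cut out by the projector $\tfrac12(1+\rho)$ and cohomology commutes with this projector. Combining with the LQT identification of the single-string cohomology with $HC^\ast(A)[-1]$ yields precisely the asserted $\big(HC^\ast(A)[-1]\big)^{\mathbb{Z}/2}$. This is the symplectic (dihedral) refinement of LQT due to Loday; the content here is only to run it with the graded coefficient algebra $A=\mathbb{C}[[z,\eps_1,\eps_2]]$ and the specific involution $\rho$.

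I expect the main obstacle to be the careful bookkeeping of signs in the $\mathbb{Z}/2$-action, since three sources of signs must be tracked simultaneously: the Koszul signs from permuting the odd ghost variables $\eps_i$ when reversing a word, the sign in $\rho(Z)=-\iota(Z^t)$, and the sign inherent in Connes' $B$ and the cyclic permutation operator. Getting their interplay exactly right is what determines which cyclic classes survive and, ultimately, whether the surviving generators are the $E[r,s]$ and $F[r,s]$ towers. The large-$N$ stabilization --- that trace relations and lower-order multi-string corrections to the invariant theory may be neglected --- is standard, but must be invoked explicitly, via the $\op{Sp}$-analogue of the stability range used in LQT, to justify restricting to single cyclic words.
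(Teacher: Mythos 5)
Your proposal is correct and follows essentially the same route as the paper: both run the Loday--Quillen--Tsygan trace map from cyclic cochains into $C^\ast(A_N)^{\Sp(N)}$, use large-$N$ symplectic invariant theory to show surjectivity onto single-trace expressions, and identify the kernel via the dihedral symmetry $\op{tr}(\mathcal{X}_{1}\cdots\mathcal{X}_{n}) = \pm\,\op{tr}(\mathcal{X}_{n}\cdots\mathcal{X}_{1})$, yielding the $\Z/2$-fixed part of $HC^\ast(A)[-1]$ with the action given by reversal, the sign $(-1)^n$, and $\eps_i\mapsto-\eps_i$. The only cosmetic difference is that the paper phrases the result in terms of coinvariants rather than invariants, which you correctly note coincide in characteristic zero.
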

\begin{proof}
	We let $\op{Cyc}(A)$ be the cyclic cochain complex of $A$.  As a graded vector space, without the differential, this is defined as follows. 

	First, we let $\op{Cyc}_n(A)$ be the quotient of the space of linear functionals	
	\begin{equation} 
		\phi : A^{\otimes n } \to \C 
	\end{equation}
	be the relation
	\begin{equation} 
		\phi = (-1)^n \sigma \phi 
	\end{equation}
	where $\sigma$ is a cyclic permutation.  Then,
	\begin{equation} 
		\op{Cyc}(A) = \prod_{n \ge 1} \op{Cyc}_n(A)[-n]. 
	\end{equation}
	We will not need right now the explicit formula for the differential.

Loday-Quillen-Tsygan defined a cochain map
	\begin{equation} 
		\op{Cyc}(A) \to C^\ast(A \otimes \mf{gl}_{2N})^{\GL(2N)} 
	\end{equation}
	which at large $N$ is an isomorphism onto the single trace part.  The explicit formula is the following. If $\phi : A^{\otimes n} \to \C$, then we define	
	\begin{equation} 
		\what{\phi} \in C^n(A \otimes \mf{gl}_{2N})^{\GL(2N) } 
	\end{equation}
	by the expression
	\begin{equation} 
		\what{\phi} ( (M_1 \otimes a_1) \dots (M_n \otimes a_n) ) =\frac{1}{n!} \sum_{\sigma \in S_n} (-1)^{\op{sign}(\sigma)}  \op{tr}(M_{\sigma_1} \dots M_{\sigma_n}) \phi(a_{\sigma_1} \otimes \dots \otimes a_{\sigma_n}). \label{eqn:whatphi}  
	\end{equation}
	Clearly $\what{\phi}$ only depends on $\phi \in \op{Cyc}_n(A)$.

	The Lie algebra map $A_N \to A \otimes \mf{gl}_{2N}$ gives rise to a map
	\begin{equation} 
		C^\ast(A \otimes \mf{gl}_{2 N})^{\GL(2N)}  \to C^\ast(A_N)^{\Sp(N)} .\label{eqn:cochain_map } 
	\end{equation}
	By composing these two maps we get a cochain map
	\begin{equation} 
		\op{Cyc}(A) \to C^\ast(A_N)^{\Sp(N)}.\label{eqn:cochain_map2} 
	\end{equation}
	Explicitly, this is given by the formula \eqref{eqn:whatphi}, where the $M_i$ are in $\mf{sp}_{N}$ if $a_i \in A^{even}$ and in $\wedge^2 F$ if $a_i \in A^{odd}$.

	We need to characterize the image and kernel of this map at large $N$, on single-trace expressions. 
	By considering invariant theory for the group $\Sp(N)$, $N$ large, it is clear that the map \eqref{eqn:cochain_map2} is surjective on single-trace expressions.

	However, it is not an isomorphism, because contractions of indices under $\Sp(N)$ have a dihedral symmetry not present for $\GL(2N)$. Consider a collection of elements $
 \mathcal{X}_{i,(r)} \in \mf{gl}_{2N}$, where the index $r$ takes value $0$ or $1$, and $\mathcal{X}_{i,(r)}$ satisfies
	\begin{equation} 
		\mathcal{X}_{i,(r)} + (-1)^r \mathcal{X}_{i,(r)}^t = 0 
	\end{equation}
	where $\mathcal{X}^t$ is the symplectic transpose. Thus,  $\mathcal{X}_{i,(r)}$ is in $\mf{sp}(N)$ if $r = 0$ and in $\wedge^2 F$ if $r = 1$. Then, 
	\begin{equation} 
		\op{tr}(\mathcal{X}_{1,r_1} \dots \mathcal{X}_{n,r_n}) = (-1)^{n + \sum r_i } \op{tr}(\mathcal{X}_{n,r_n} \dots \mathcal{X}_{1,r_1}) 
	\end{equation}
	This is the only relation (on top of cyclic (anti)-symmetry) we need to describe single trace expressions in $A_N$.  

	From this we find that the single trace part of $C^\ast(A_N)^{\Sp(N)}$ is isomorphic to the coinvariants of the cyclic cochain complex under the $\Z/2$ action which, on $\op{Cyc}_n(A)$, reverse the order of the elements, acts by $(-1)^n$, and also applies the automorphism of $A$ sending $\eps_i \to - \eps_i$. 
\end{proof}
The single trace part of the cyclic cohomology of $A$ was computed in \cite{CG}: it is given by the single-trace local operators of the vertex algebra with $\mf{gl}_N$ BRST reduction considered there.  There it was found that it is given by four towers: the two bosonic towers $E[r,s]$, $F[r,s]$ we have discussed before, and two new fermionic towers
\begin{equation} 
	\begin{split} 
		B[r,s]  &= \op{Tr} (  \b X_{\dot1}^{(r} X_{\dot2}^{s)} )  \\			C[r,s]  &= \op{Tr}(\partial c X_{\dot1}^{(r} X_{\dot2}^{s)} )   
	\end{split}
\end{equation}
It remains to check that the $B,C$ towers are odd under the $\Z/2$ action on cyclic cohomology.  This is not hard to verify.

The final thing to do is to compute the open string states.  We can do this by a similar argument.  Let $A$ be the algebra as above,  and let $M$ be the graded $A$-module
\begin{equation} 
	M = \C^8 \otimes \C[[z]] 
\end{equation}
Define the bar complex by
\begin{equation} 
	\begin{split} 
		\op{Bar}_n(M,A,M) = \op{Hom} (M \otimes A^{\otimes n} \otimes M   ,\C) \\
		\op{Bar}(M,A,A) = \prod_{n \ge 0} \op{Bar}_n(M,A,M) [-n]
	\end{split}
\end{equation}
where $[-n]$ indicates a cohomological shift up by $n$. When equipped with the standard bar differential, we have
\begin{equation} 
	 \op{Bar}_n(M,A,M)  \simeq  \op{Hom} ( M \otimes^{\mbb{L}}_{A} M, \C)  
\end{equation}
where on the right hand side we have the dual of the derived tensor product.  Thus, the cohomology of the bar complex can be computed as the dual of the Tor groups of $M$ with itself over $A$.

There is a natural map
\begin{equation} 
	\op{Bar}(M,A,M) \to C^\ast(\mf{g}_N)  
\end{equation}
which at infinite $N$ maps surjetively onto the single-string open string states.   The expression is the obvious one: an element of $\op{Bar}_n(M,A,M)$ maps to an open string expression starting with $I$, with $n$ elements of $A^\vee$ (representing $X,\b,\c$ and derivatives) and ending with $I$.

If we were working with a $\mf{gl}_N$ gauge theory, this map would be an isomorphism onto the open string states.  It is not, however, for the same reason we saw when discussing closed string states: when working with a $\mf{sp}_{N}$ gauge theory, there is a symmetry which reverses the order in a trace (and introduces a sign).

This $\Z/2$ action acts on $\op{Bar}_n(M,A,M)$ by reversing the order,  sending $\eps_i \to -\eps_i$, and introducing a sign of $(-1)^{n+1}$.

We find, then, that the single open string states are the dual of the $\Z/2$ invariants in 
\begin{equation} 
	\op{Tor}_{A}(M,M).  
\end{equation}

The next step is to compute these Tor groups.  This is done by writing down a free resolution for $M$ as an $A$ module. The resolution is
\begin{equation} 
	F = \C[[z,\eps_i, \gamma_i]] \otimes \C^8 
\end{equation}
with differential
\begin{equation} 
	\d = \eps_i \frac{\partial}{\partial \gamma_i}.   
\end{equation}
The $\gamma_i$ are bosonic variables in ghost number $0$.  At the level of cohomology, the $\eps_i$ and $\gamma_i$ cancel and $F$ becomes $M$.

The Tor group is then the cohomology of the complex
\begin{equation} 
	F \otimes_{A} M 
\end{equation}
Tensoring over $A$ with $M$ has the effect of setting the $\eps_i$ to $0$ in $F$, so that this tensor product is $\C[[z,\gamma_i]] \otimes \C^8 \otimes \C^8$ with zero differential.  

This tells us that the cohomology of the bar complex is spanned by $z$-derivatives of expressions like
\begin{equation} 
	I_k X_{\dot1}^{(r} X_{\dot2}^{s)} I_l 
\end{equation}
where the indices $k,l$ run over a basis of $\C^8$, and thee expression is not anti-symmetric in $i,j$. The integers that $r,s$ that appear here corresond to the powers of $\gamma_1,\gamma_2$ in the dual vector space $\C[[z,\gamma_i]]$.  Passing to $\Z/2$ invariants means that we only retain the expressions which are anti-symmetric in $k,l$, as desired. 

\subsection{BRST cohomology in the presence of the boundary condition}\label{app:BRST_boundary}
Now let us repeat the analysis in the case that we have the boundary condition discussed in section \ref{sec:bdydef}. Recall that with this boundary condition, $X_{\dal}$ have first order poles, $\b$ has a second order pole and $I$ has a branch cut and a pole of order $\frac{1}{2}$. 

The poles in the $X$ and $\b$ field can be engineered very easily in the framework of the previous section, simply by multiplying the fermionic variables $\eps_i$ by $z^{-1}$.  This does not change the algebra in any way, except by changing the spin of the $\eps_i$.  Thus, the analysis goes through without any change.

For the open string sector, the analysis is also almost identical.  The module $z^{-\frac{1}{2}} \C[[z]] \otimes \C^8$ associated to giving $I$ a pole of order $\frac{1}{2}$ is isomorphic to the module $\C[[z]] \otimes \C^8$ we used in the case without boundary condition.  Thus, the same argument applies.

The result is that a basis of single-trace closed string operators at $z = 0$, in the presence of the boundary condition, is given by the same expressions we used before, but with extra powers of $z$:
\begin{equation} 
	\begin{split} 
		z^{r+s+2} 	\op{tr}	\left( \eps^{\dal\dot\beta} X_{\dal}\p X_{\dot\beta} X_{\dot1}^{(r} X_{\dot2}^{s)} \right) \\
z^{r+s} 	\op{tr}	\left( X_{\dot1}^{(r} X_{\dot2}^{s)} \right) \\
	\end{split}
\end{equation}
and the open string expressions are similar:
\begin{equation} 
	z^{r+s+1} I_k X_{\dot1}^{(r} X_{\dot2}^{s)} I_l  
\end{equation}
(Of course, we also must include derivatives of these expressions).

\section{A basis for the boundary module (when defined as a quotient of the mode algebra)} \label{app:boundary_size}
We have given two definitions of the boundary module: one in terms of boundary conditions for the fundamental fields in section \ref{sec:bdydef}, where we then take BRST cohomology; and another, where we simply declare what poles the currents $E,F,J$ have at $z = 0$ and $z = \infty$ as in section \ref{sec:chiralbdy}.

In this section, for completeness, we will verify that the second definition leads to a boundary module of the correct size.  
The boundary module is defined by the left ideal generated by 
\begin{equation} 
	\begin{split} 
		J_l[m,n] &  \text { for } 2l + m + n < 0 \\
		E_l[m,n] & \text { for } 2l +  m + n < 0 \\
		F_l[m,n] & \text { for } 2l  + m + n < 0.
 \label{eqn:vaccuum_ideal_app}
	\end{split}
\end{equation}
We will prove that this module has a basis given by ordered products of the modes
\begin{equation} 
	\begin{split} 
		J_l[m,n] & \text { for } 2l +  m + n \ge  0 \\
		E_l[m,n]   &  \text { for } 2l +  m + n \ge 0 \\
		F_l[m,n]   &  \text { for } 2l  + m + n \ge 0.
		\label{eqn:vaccuum_module_app}
	\end{split} 
\end{equation}

Before we check this, we need to introduce some notation.  Given a single-trace mode $J_l[m,n]$, $E_l[m,n]$, $F_l[m,n]$ we say its \emph{length} is $(m+n)/2$.  The length labels which $\SL_2(\C)$ representation the mode lives in.  

We define the \emph{weight} of a single-trace mode by the formula
\begin{equation} 
	\text{weight } = 2\text{ spin } + 2 \text{ length}. 
\end{equation}
Then the ideal defining the module is the left ideal generated by those single-trace modes of weight $<0$.

We say a product of single-trace modes $\mc{O}_1 \dots \mc{O}_n$ of weights $w_1,\dots,w_n$  satisfies the \emph{modified normal ordering} $w_1 > w_2 > \dots > w_n$.  Let us also choose arbitrarily a normal ordering perscription on those modes of the same weight.  Then, products of single-trace modes in modified normal ordering form a topological basis for the mode algebra. 

In this basis, we say the modified normally ordered product $\mc{O}_1 \dots \mc{O}_n$ is of weight $w_1 + \dots + w_n$ (where $w_i$ is the weight of $\mc{O}_i$).    

We have the following lemma.
\begin{lemma} 
	If $\mc{O}_1$, $\mc{O}_2$ are any two states of weight $w_1,w_2$, then the commutator $[\mc{O}_1, \mc{O}_2]$ is a sum of states of weight $\le w_1 + w_2$.  
\end{lemma}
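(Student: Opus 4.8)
The plan is to establish the weight bound by tracking how the weight behaves under both the operator product expansion and the passage to the modified normal ordering. The key observation is that ``weight'' as defined here, $\text{weight} = 2\,\text{spin} + 2\,\text{length}$, is additive under normal-ordered products and respects the grading of the chiral algebra in a controlled way. The single-trace currents $J[m,n]$, $E[m,n]$, $F[m,n]$ are built as composites of the fundamental fields $X^{\dal}$ and $I$, and the OPE of two such composites is governed by Wick contractions together with the $\Sp(N)$ structure. Each contraction removes fields from the two operators, and I would argue that every contraction either preserves or lowers the total weight, never raising it.

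First I would reduce the statement about the commutator $[\mc{O}_1,\mc{O}_2]$ to a statement about the singular part of the OPE $\mc{O}_1(z)\,\mc{O}_2(w)$, since the commutator of modes is extracted from the poles of this OPE by the usual contour-integral formula. Writing each single-trace current as a composite of width $l_i$ in the fundamental fields (recall from the proof of Proposition \ref{prop:conformalblocks} that $J[m,n]$, $F[m,n]$ have width $m+n+2$ and $E[m,n]$ has width $m+n$), I would note that each Wick contraction in the OPE lowers the total width by $2$. The crucial bookkeeping step is then to check that when two fields are contracted, the net change in $2\,\text{spin}+2\,\text{length}$ of the resulting composite is non-positive: contracting two fields $X^{\dal}$ or $I$ produces a factor of $1/(z-w)$ accompanied by possible derivatives $\p^k$, and I would verify that each such pole (with its associated spin shift of $+k$ from the derivative and $-$ contributions from the lost fields) does not increase the weight when one passes to a definite mode. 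The length manifestly drops by $2$ per contraction (two fewer $X$'s or $I$'s), while the spin can only increase through derivatives, and I expect the derivative count to be bounded by precisely the amount needed to saturate, not exceed, the weight.

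Next I would handle the interplay between the grading and the modified normal ordering. Since the weight of a modified-normally-ordered product is defined as the sum of the weights of its factors, and the OPE produces composites that may need to be re-expanded in the chosen basis, I would argue that re-ordering two factors of weights $w_a, w_b$ into modified normal order only generates correction terms of strictly lower total weight — this is itself an instance of the lemma applied inductively to shorter or lower-weight products, so I would set up the argument as an induction on total weight (or equivalently on width, as in Proposition \ref{prop:conformalblocks}). The base case, where $\mc{O}_1,\mc{O}_2$ are themselves fundamental or of minimal width, is a direct computation from the free-field OPEs \eqref{IIope}, \eqref{XXope}, and \eqref{XXcom}.

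The main obstacle I anticipate is the careful treatment of the ghost-dependent corrections to the $F[m,n]$ currents, which were suppressed in \eqref{CFT}, together with the double-trace terms generated by the ADHM constraint \eqref{adhm}. The ghost terms carry their own spin and ghost-number data, and I would need to assign them a weight consistent with the definition and confirm that their contractions also respect the bound; similarly, the double-trace terms arising when commuting $X$'s past each other must be checked to have total weight no larger than $w_1+w_2$. Since these double-trace contributions are suppressed at large $N$ but the lemma is stated for the full mode algebra, I cannot simply discard them, so the genuinely delicate part of the proof will be a uniform weight estimate that covers the single-trace, double-trace, and ghost sectors simultaneously. I expect that once the weight is recognized as a grading for which every term in the BRST current and every OPE contraction is weight-non-increasing, the lemma follows, but verifying this for the ghost and double-trace terms is where the real work lies.
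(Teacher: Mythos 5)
Your proposal is not yet a proof: it is a plan whose hardest steps (the ghost-dependent corrections to $F[m,n]$, the double-trace terms from the ADHM constraint, and the uniform weight estimate across all these sectors) are explicitly deferred, and those are precisely the places where a contraction-by-contraction argument would have to do real work. More importantly, the route through Wick contractions misses the observation that makes the lemma immediate. The weight is $2\,\text{spin}+2\,\text{length}$, and these two quantities are controlled by symmetry alone. The spin is the grading by rotations of the $z$-plane (the $\C^\times$ factor of the symmetry group), and a commutator of modes of spins $s_1$ and $s_2$ has spin exactly $s_1+s_2$ --- this is exact additivity, with no need to count derivatives or poles. The length labels which irreducible $\SL_2(\C)$ representation a single-trace mode sits in ($J[m,n]$, $E[m,n]$, $F[m,n]$ with $m+n$ fixed span the spin-$(m+n)/2$ irrep). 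Since the commutator is $\SL_2(\C)$-equivariant, $[\mc{O}_1,\mc{O}_2]$ decomposes into irreps contained in the tensor product of the spin-$l_1$ and spin-$l_2$ representations, all of which have spin at most $l_1+l_2$. Hence the weight of every term in the commutator is at most $2(s_1+s_2)+2(l_1+l_2)=w_1+w_2$.

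Once the lemma is phrased this way, all of the issues you flag as obstacles evaporate: the ghost corrections, the double-trace terms, and the precise form of the OPE are irrelevant, because the argument uses only that the mode algebra carries an action of $\SL_2(\C)\times\C^\times$ under which the bracket is equivariant and the single-trace generators have the stated spins and lengths. I would encourage you to look for this kind of symmetry argument before committing to a microscopic computation: the definition of weight as a sum of a $\C^\times$-weight and an $\SL_2(\C)$-spin is exactly engineered so that Clebsch--Gordan does the work.
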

\begin{proof} 
	This is immediate from the representation theory. 	Let $\mc{O}_i$ be of spin $s_i$ and length $l_i$, where $w_i = 2 s_i + 2 l_i$.  Then, $[\mc{O}_1,\mc{O}_2]$ is of spin $s_1 + s_2$.  It also transforms in some representations of $\SL_2(\C)$ which are contained in the tensor product of the representation of spin $l_1$ and that of spin $l_2$.  Such a representation is of spin at most $l_1 + l_2$. 
\end{proof}

\begin{proposition}
	A basis for the ideal defining the module is given by  modified  normally-ordered products of single-trace generators $\mc{O}_1 \dots \mc{O}_n$ where $\mc{O}_n$ has weight $< 0$.  
\end{proposition}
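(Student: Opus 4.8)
The plan is to show that the left ideal $I$ generated by the single-trace modes of weight $<0$ coincides with the span $V_{<0}$ of modified normally-ordered monomials whose last (equivalently, minimum-weight) factor has weight $<0$; linear independence of these monomials is then automatic, since they form part of the PBW-type basis of the mode algebra $\mc{A}$ furnished by modified normal ordering. First I would record the vector-space decomposition $\mc{A}=V_{<0}\oplus V_{\ge 0}$, where $V_{\ge 0}$ is the span of those monomials all of whose factors have weight $\ge 0$. This splitting is immediate once one knows that modified normally-ordered monomials are a basis, which in turn follows from the commutator Lemma: it guarantees that the straightening relations lower a suitable complexity, so the ordered monomials are neither over- nor under-counted.

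Next I would reduce the statement to two claims: (a) $V_{<0}\subseteq I$, and (b) $V_{<0}$ is a left ideal of $\mc{A}$. Claim (a) is trivial, since a modified normally-ordered monomial $\mc{O}_1\cdots\mc{O}_n$ with $\mc{O}_n$ of weight $<0$ is literally $(\mc{O}_1\cdots\mc{O}_{n-1})\cdot\mc{O}_n$, a left multiple of a generator of $I$. For (b), because the single generators of weight $<0$ already lie in $V_{<0}$ and $\mc{A}$ is unital, $I=\mc{A}\cdot G\subseteq V_{<0}$ would follow, giving $I=V_{<0}$. Since the single-trace modes generate $\mc{A}$ as an algebra and multiplication is associative, (b) reduces further to showing that left multiplication by a single single-trace mode $\mc{O}$ maps a modified normally-ordered monomial $m=\mc{O}_1\cdots\mc{O}_n\in V_{<0}$ back into $V_{<0}$.

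The heart of the argument is therefore to straighten $\mc{O}\,\mc{O}_1\cdots\mc{O}_n$ into modified normally-ordered form while preserving the property that every monomial appearing has at least one factor of weight $<0$ (equivalently, once sorted, its last factor has weight $<0$). I would do this by the usual bubble-straightening: each adjacent swap of an out-of-order pair $XY$ (with $w_X<w_Y$) produces the reordered term $YX$ together with a commutator term in which $XY$ is replaced, by the Lemma, by a sum of modes of weight $\le w_X+w_Y$. Reordered terms keep the same multiset of factors and so inherit the invariant. The only danger is that a commutator consumes the unique negative-weight factor and returns a factor of weight $\ge 0$. The structural observation that rules this out is that the \emph{minimum}-weight factor is never the left member of an out-of-order pair: anything that can ever sit to its right has strictly smaller weight (a factor only bubbles past it if it is itself more negative), so the global minimum belongs at the far right and is only ever merged as the right member $Y$ of a pair with $w_X<w_Y<0$, forcing the resulting commutator to have weight $\le w_X+w_Y<0$. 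Thus the witness of negativity is either left untouched or replaced by a fresh negative-weight factor, and the invariant propagates to all terms.

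I expect the main obstacle to be exactly this last step: converting the informal ``track the minimum-weight factor'' argument into a rigorous, terminating induction. Making it precise would require fixing a confluent straightening order (for instance, always reducing the leftmost out-of-order adjacency) and inducting simultaneously on the number of inversions and on a measure controlled by the weight Lemma, while checking that the commutator terms---which may themselves be out of order and demand further straightening---never escape the invariant. The bookkeeping that the commutator of single-trace modes is again a weight-bounded sum of modes, to which the same reduction applies, is what guarantees the process closes; once it does, every terminal modified normally-ordered monomial has last factor of weight $<0$, completing the proof that $I=V_{<0}$.
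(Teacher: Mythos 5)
Your proposal is correct and takes essentially the same route as the paper: both identify the ideal with the span of modified normally-ordered words whose rightmost factor has weight $<0$, and both prove the nontrivial inclusion by straightening, using the commutator lemma to show that any commutator which consumes the negative-weight witness has total weight $<0$ and therefore regenerates a negative-weight factor (since a sorted word of negative total weight must have a negative rightmost factor). The one organizational difference worth noting is that the paper appends the negative generator on the \emph{right} of an already-sorted word and moves it leftward as a single block, so that the only commutators ever produced involve factors all of weight $\le w_n < 0$ and the delicate witness-tracking you flag as the main obstacle in your bubble-sort version simply never arises.
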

\begin{proof} 
	Clearly these expressions are in the ideal.  However, \emph{a priori}, the ideal could be larger.  It is spanned by elements of the form $(\mc{O}_1 \dots \mc{O}_{n-1} )  \mc{O}_n$, where $\mc{O}_i$ are single-trace operators of weights $w_i$, where $w_n < 0$ and $\mc{O}_1 \dots \mc{O}_{n-1}$ are in modified normal order $w_1 \ge \dots \ge w_{n-1}$.  

	We need to show that we can re-order an expression like this to make the whole expression in modified normal order.  Suppose that we need to move $\mc{O}_{k} \dots \mc{O}_{n-1}$ to the right of $\mc{O}_n$ to place the expression in modified normal order. This can only happen if $w_i \le w_n$ for $i = k,\dots,n-1$.  When we re-order in this way, we pick up the commutator
	\begin{equation} 
		[\mc{O}_n, \mc{O}_{k} \dots \mc{O}_{n-1} ]. 
	\end{equation}
	By the previous lemma, this commutator is a sum of expressions in modified normal order of total weight which is at most $w_k + \dots + w_n$.  Since $w_n < 0$ and $w_i \le w_n$ for $i = k,\dots,n-1$, this total weight is less than $0$.   Any expression in modified normal order of total weight $ < 0$ has rightmost element of weight $< 0$.   We conclude that we can re-write any element of the ideal as a sum of expressions in modified normal order where the rightmost element has weight $< 0$, as desired.
\end{proof}
\begin{corollary}
	A basis for the module which is a quotient of the mode algebra by the ideal described above is given by products of single-trace modes in modified normal order
	\begin{equation} 
		\mc{O}_{1} \dots\mc{O}_n 
	\end{equation}
	where the weights $w_i$ are $\ge 0$.
\end{corollary}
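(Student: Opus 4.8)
The plan is to deduce this directly from the Proposition together with the fact, recorded just before the definition of modified normal ordering, that modified normally-ordered products of single-trace modes form a topological basis of the full mode algebra $\mc{A}$. First I would fix notation: let $\mc{I}\subset\mc{A}$ denote the left ideal generated by the single-trace modes of negative weight, so that the module in question is $\mc{A}/\mc{I}$. The Proposition supplies a topological basis $B_{<0}$ of $\mc{I}$, consisting of those modified normally-ordered products $\mc{O}_1\cdots\mc{O}_n$ whose rightmost factor $\mc{O}_n$ has weight $w_n<0$.

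Next I would partition the basis of $\mc{A}$. Because a product is in modified normal order precisely when its weights are non-increasing, $w_1\ge w_2\ge\cdots\ge w_n$, the rightmost factor always carries the minimal weight. Hence $w_n<0$ is equivalent to ``some factor has negative weight,'' and its negation $w_n\ge0$ is equivalent to ``every factor has weight $\ge0$.'' Writing $B_{\ge0}$ for the set of modified normally-ordered products with all weights $\ge0$ (including the empty product, i.e.\ the vacuum), I would observe that $B_{\ge0}$ and $B_{<0}$ are disjoint and that $B_{\ge0}\sqcup B_{<0}$ is exactly the full topological basis of $\mc{A}$.

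The conclusion then follows from the elementary linear-algebra fact that if $B_{\ge0}\sqcup B_{<0}$ is a basis of a vector space $\mc{A}$ and $B_{<0}$ is a basis of a subspace $\mc{I}$, then the images of $B_{\ge0}$ form a basis of the quotient $\mc{A}/\mc{I}$: they span because $B_{\ge0}\sqcup B_{<0}$ spans and $B_{<0}$ maps to zero, and they are independent because any relation among them lying in $\mc{I}=\operatorname{span}(B_{<0})$ would be a nontrivial linear dependence inside the basis $B_{\ge0}\sqcup B_{<0}$. Applying this with $\mc{A}$ the mode algebra and $\mc{I}$ the defining ideal yields the stated basis.

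The only real care needed is the topological bookkeeping: all of $\mc{A}$, $\mc{I}$, and the quotient are infinite products/completions, so ``basis'' means topological basis, and the spanning and independence statements should be read in that completed sense, exactly as elsewhere in this appendix. With that understood, the argument is purely formal; the genuine content --- that $B_{<0}$ really spans the ideal, which rests on the reordering and commutator-weight estimate of the preceding Lemma --- has already been established in the Proposition, so no further obstruction arises here.
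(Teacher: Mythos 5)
Your proof is correct and follows the same route as the paper: the paper's one-line proof simply observes that the stated basis is the complement, within the modified normally-ordered basis of the mode algebra, of the basis of the ideal supplied by the Proposition. Your expansion — noting that the rightmost factor carries the minimal weight, so "rightmost weight $<0$" negates to "all weights $\ge 0$", and then invoking the standard complement-of-a-subspace-basis argument for the quotient — just makes explicit what the paper leaves implicit.
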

\begin{proof} 
	This is exactly the complement, in our chosen basis, of those words in modified normal order whose rightmost element has weight $< 0$.  
\end{proof}

\section{Vanishing of some Dolbeault cohomology groups }
\label{app:vanishing}
In this section we will show that
\begin{equation} 
	\begin{split} 
		H^\ast_{\dbar}(\br{\SL}_2(\C), \Oo ( - \partial\br{\SL}_2(\C)) ) &= 0 \\
		H^\ast_{\dbar}(\til\F, \Oo ( - \br{D}_0 -\br{D}_\infty - E) &= 0. 
	\end{split}
\end{equation}
These cohomology groups are easily seen to vanish.  Let us first check it for $\br{\SL}_2(\C)$, which is a quadric in $\CP^4$.  The boundary divisor in $\br{\SL}_2(\C)$ is the pull-back of $\Oo(-1)$. There is an exact sequence in the category of sheaves on $\CP^4$ 
\begin{equation} 
	0 \to \Oo_{\CP^4} (-3) \to \Oo_{\CP^4}(-1) \to \Oo_{\br{\SL}_2(\C)}  (-1) \to 0. 
\end{equation}
The coherent sheaf cohomology of the first two terms vanishes, therefore that of the last term vanishes too.

To prove the corresponding result for $\til\F$, we first note that 
\begin{equation} 
	R \pi_\ast \Oo_{\til{\F}} = \Oo_{\br{\SL}_2(\C) } 
\end{equation}
where $R \pi_\ast$ is the derived push-forward along the map $\pi : \til{\F} \to \br{\SL}_2(\C)$.   This holds essentially because all fibers of the map $\pi$ are either a point or $\CP^1$.  Then, since 
\begin{equation} 
	\pi^\ast \Oo_{\br{\SL}_2(\C)} (- \partial \br{\SL}_2(\C) ) =  \Oo_{\til\F} ( - \br{D}_0 -\br{D}_\infty - E) 
\end{equation}
we also have
\begin{equation} 
	 R \pi_\ast \Oo_{\til\F} ( - \br{D}_0 -\br{D}_\infty - E)  =  \Oo_{\br{\SL}_2(\C)} (- \partial \br{\SL}_2(\C) ) 
\end{equation}
from which the result follows.

\end{appendix}

\bibliographystyle{JHEP}
\bibliography{twist}

\end{document}